\documentclass[11pt, reqno]{amsart}

\usepackage{amsmath, amsthm, amssymb}
\usepackage{array}
\usepackage{enumitem}
\usepackage{pdflscape}
\usepackage{caption}
\usepackage{bm}

\usepackage{ifpdf}
\ifpdf
\usepackage[pdftex]{graphicx}
\else
\usepackage[dvips]{graphicx}
\fi
\usepackage{tikz}
 	 \usetikzlibrary{arrows,backgrounds}
   \usetikzlibrary{decorations.pathmorphing}
   \usetikzlibrary{knots}

\usepackage[all]{xy}

\usepackage{multicol}
\usepackage{mathtools} 
\usepackage{tocvsec2}
\usepackage{bbm}
\usepackage{stmaryrd} 

\input xy
\xyoption{all}

\usepackage[pdftex,plainpages=false,hypertexnames=false,pdfpagelabels]{hyperref}
\newcommand{\arxiv}[1]{\href{http://arxiv.org/abs/#1}{\tt arXiv:\nolinkurl{#1}}}
\newcommand{\arXiv}[1]{\href{http://arxiv.org/abs/#1}{\tt arXiv:\nolinkurl{#1}}}

\newcommand{\googlebooks}[1]{(preview at \href{http://books.google.com/books?id=#1}{google books})}

\newcommand{\RN}[1]{\uppercase\expandafter{\romannumeral #1\relax}}

\usepackage{xcolor}
\definecolor{dark-red}{rgb}{0.7,0.25,0.25}
\definecolor{dark-blue}{rgb}{0.15,0.15,0.55}
\definecolor{medium-blue}{rgb}{0,0,.8}
\definecolor{DarkGreen}{RGB}{0,150,0}
\definecolor{darkgreen}{rgb}{0,150,0}
\definecolor{rho}{named}{red}
\hypersetup{
   colorlinks, linkcolor={purple},
   citecolor={medium-blue}, urlcolor={medium-blue}
}

\usepackage{longtable}
\usepackage{fullpage}

\setlength\topmargin{-.25in}
\setlength\headheight{0in}
\setlength\headsep{.2in}
\setlength\textheight{9in}
\setlength\parindent{0.25in}

\theoremstyle{plain}
\newtheorem{thm}{Theorem}[section]
\newtheorem*{thm*}{Theorem}

\newtheorem{cor}[thm]{Corollary}

\newtheorem*{cor*}{Corollary}

\newtheorem*{conj*}{Conjecture}
\newtheorem{lem}[thm]{Lemma}

\newtheorem{prop}[thm]{Proposition}

\newtheorem*{quest*}{Question}
\newtheorem*{claim*}{Claim}

\theoremstyle{definition}
\newtheorem{defn}[thm]{Definition}

\newtheorem{sub-ex}[thm]{Sub-Example}
\newtheorem{rem}[thm]{Remark}
\newtheorem*{rem*}{Remark}

\DeclareMathOperator{\Absorb}{Absorb}

\DeclareMathOperator{\coev}{coev}

\DeclareMathOperator{\End}{End}
\DeclareMathOperator{\Fib}{Fib}
\DeclareMathOperator{\ev}{ev}

\DeclareMathOperator{\ONB}{ONB}

\DeclareMathOperator{\spann}{span}

\DeclareMathOperator{\id}{id}
\DeclareMathOperator{\Tube}{Tube}

\DeclareMathOperator{\im}{im}
\DeclareMathOperator{\Irr}{Irr}
\DeclareMathOperator{\loc}{loc}

\DeclareMathOperator{\tr}{tr}


\newcommand{\comment}[1]{}

\newcommand{\be}{\begin{enumerate}[label=(\arabic*)]}
\newcommand{\ee}{\end{enumerate}}

\newcommand{\Z}{\mathbb{Z}}

\newcommand{\C}{\mathbb{C}}

\newcommand{\set}[2]{\left\{#1 \middle| #2\right\}}

\renewcommand{\a}{\mathfrak{a}}

\newcommand{\PreTubeAC}{\widetilde{\Tube}_A(\mathcal{X})}
\newcommand{\Cl}{C_{\redl}}
\newcommand{\redl}{\textcolor{red}{\ell}}


\def\semicolon{;}
\def\applytolist#1{
    \expandafter\def\csname multi#1\endcsname##1{
        \def\multiack{##1}\ifx\multiack\semicolon
            \def\next{\relax}
        \else
            \csname #1\endcsname{##1}
            \def\next{\csname multi#1\endcsname}
        \fi
        \next}
    \csname multi#1\endcsname}

\def\calc#1{\expandafter\def\csname c#1\endcsname{{\mathcal #1}}}
\applytolist{calc}QWERTYUIOPLKJHGFDSAZXCVBNM;
\def\bbc#1{\expandafter\def\csname bb#1\endcsname{{\mathbb #1}}}
\applytolist{bbc}QWERTYUIOPLKJHGFDSAZXCVBNM;
\def\bfc#1{\expandafter\def\csname bf#1\endcsname{{\mathbf #1}}}
\applytolist{bfc}QWERTYUIOPLKJHGFDSAZXCVBNM;
\def\sfc#1{\expandafter\def\csname s#1\endcsname{{\sf #1}}}
\applytolist{sfc}QWERTYUIOPLKJHGFDSAZXCVBNM;
\def\fc#1{\expandafter\def\csname f#1\endcsname{{\mathfrak #1}}}
\applytolist{fc}QWERTYUIOPLKJHGFDSAZXCVBNM;

\newcommand{\Rep}{{\sf Rep}}

\renewcommand{\Vec}{{\sf Vec}}

\newcommand{\Hilb}{{\sf Hilb}}
\newcommand{\fdHilb}{{\sf Hilb_{fd}}}

\newcommand{\noshow}[1]{}
\renewcommand{\MR}[1]{}

\newcommand{\Cstar}{{\rm C^*}}

\usetikzlibrary{shapes}
\usetikzlibrary{cd}
\usetikzlibrary{backgrounds}
\usetikzlibrary{decorations,decorations.pathreplacing,decorations.markings}
\usetikzlibrary{fit,calc,through}
\usetikzlibrary{external}
\usetikzlibrary{arrows}
\tikzset{vertex/.style = {shape=circle,draw,fill=black,inner sep=0pt,minimum size=5pt}}
\tikzset{edge/.style = {->,> = latex', bend right}}
\tikzset{
	super thick/.style={line width=3pt}
}
\tikzset{
    quadruple/.style args={[#1] in [#2] in [#3] in [#4]}{
        #1,preaction={preaction={preaction={draw,#4},draw,#3}, draw,#2}
    }
}
\tikzstyle{shaded}=[fill=red!10!blue!20!gray!30!white]
\tikzstyle{unshaded}=[fill=white]
\tikzstyle{empty box}=[circle, draw, thick, fill=white, opaque, inner sep=2mm]
\tikzstyle{annular}=[scale=.7, inner sep=1mm, baseline]
\tikzstyle{rectangular}=[scale=.75, inner sep=1mm, baseline=-.1cm]
\tikzstyle{late>}=[decoration={markings, mark=at position 0.75 with {\arrow{>}}}, postaction={decorate}]
\tikzstyle{late<}=[decoration={markings, mark=at position 0.75 with {\arrow{<}}}, postaction={decorate}]
\tikzstyle{mid>}=[decoration={markings, mark=at position 0.5 with {\arrow{>}}}, postaction={decorate}]
\tikzstyle{mid<}=[decoration={markings, mark=at position 0.5 with {\arrow{<}}}, postaction={decorate}]
\tikzstyle{over}=[double, draw=white, super thick, double=]
\tikzstyle{box} = [rectangle,draw,rounded corners=5pt,very thick]

\tikzstyle{knot}=[preaction={super thick, white, draw}]

\newcommand{\roundNbox}[6]{
	\draw[rounded corners=5pt, very thick, #1] ($#2+(-#3,-#3)+(-#4,0)$) rectangle ($#2+(#3,#3)+(#5,0)$);
	\coordinate (ZZa) at ($#2+(-#4,0)$);
	\coordinate (ZZb) at ($#2+(#5,0)$);
	\node at ($1/2*(ZZa)+1/2*(ZZb)$) {#6};
}

\newcommand{\basisCircle}[3][(0,0)]{
 \filldraw[fill=#3] #1 circle (#2);
}

\newcommand{\mBasisCircle}[2]{
 \tikzmath{\basisCircle{#1}{#2}}
}

\newcommand{\tvttt}{\tikzmath{\draw(0:0)--(30:.4);\draw(0:0)--(150:.4);\draw(0:0)--(-90:.4);}}
\newcommand{\tvtti}{\tikzmath{\draw[dotted](0:0)--(30:.4);\draw(0:0)--(150:.4);\draw(0:0)--(-90:.4);}}
\newcommand{\tvitt}{\phi^{-1/2}\tikzmath{\draw(0:0)--(30:.4);\draw(0:0)--(150:.4);\draw[dotted](0:0)--(-90:.4);}}

\newcommand{\tikzmath}[2][]
     {\vcenter{\hbox{\begin{tikzpicture}[#1]#2
                     \end{tikzpicture}}}
     }

\newcommand{\halfDottedEllipse}[4][]{
	\draw[#1] #2 arc(-180:0:{#3} and {#4});
	\draw[#1, dotted] ($ #2 + 2*(#3,0)$) arc(0:180:{#3} and {#4});
}

\newcommand{\nhex}[5][]{
\coordinate (center) at (#2, #3);
\ifthenelse{\equal{#1}{}}{}{
\coordinate (pointD) at (canvas polar cs:angle=-30,radius=.866*#4cm);
\coordinate (pointE) at (canvas polar cs:angle=-30,radius=.288*#4cm);
\draw[#1] ($(center)+(pointD)$) -- +($#4*(-.333, .333)$);
}%
\foreach \hexSideCounter in {1,2,3,4,5,6} {
\coordinate (pointB) at (canvas polar cs:angle={60*\hexSideCounter},radius=#4cm);
\coordinate (pointA) at (canvas polar cs:angle={60*(\hexSideCounter - 1)},radius=#4cm);
\draw[#5] ($(center)+(pointA)$) -- ($(center)+(pointB)$);
}
}

\newcommand{\levinHex}[5][]{
\coordinate (center) at (#2, #3);
\ifthenelse{\equal{#1}{}}{}{
\coordinate (pointD) at (canvas polar cs:angle=-30,radius=.866*#4cm);
\coordinate (pointE) at (canvas polar cs:angle=-30,radius=.288*#4cm);
\draw[#1] ($(center)+(pointD)$) -- +($#4*(-.333, .333)$);
}
\foreach \hexSideCounter in {1,2,3,4,5,6} {
\coordinate (pointA) at (canvas polar cs:angle={60*(\hexSideCounter - 1)},radius=#4cm);
\coordinate (pointB) at (canvas polar cs:angle={60*\hexSideCounter},radius=#4cm);
\coordinate (pointC) at (canvas polar cs:angle={60*(\hexSideCounter - 1)},radius=1.5*#4cm);
\draw[#5] ($(center)+(pointA)$) -- ($(center)+(pointB)$); \draw[#5] ($(center)+(pointA)$) -- ($(center)+(pointC)$);
}
}

\newcommand{\levinHexOpen}[6][]{
\coordinate (center) at (#2, #3);
\ifthenelse{\equal{#1}{}}{}{
\coordinate (pointD) at (canvas polar cs:angle=-30,radius=.866*#4cm);
\coordinate (pointE) at (canvas polar cs:angle=-30,radius=.288*#4cm);
\draw[#1] ($(center)+(pointD)$) -- +($#4*(-.333, .333)$);
}
\foreach \hexSideCounter in {1,2,3,4,5,6} {
\coordinate (pointA) at (canvas polar cs:angle={60*(\hexSideCounter - 1)},radius=#4cm);
\coordinate (pointB) at (canvas polar cs:angle={60*\hexSideCounter},radius=#4cm);
\coordinate (pointC) at (canvas polar cs:angle={60*(\hexSideCounter - 1)},radius=1.5*#4cm);
\ifthenelse{#6=\hexSideCounter}{}{\draw[#5] ($(center)+(pointA)$) -- ($(center)+(pointB)$);}
\draw[#5] ($(center)+(pointA)$) -- ($(center)+(pointC)$);
}
}

\newcommand{\levinHexOpenShort}[7][]{
\coordinate (center) at (#2, #3);
\ifthenelse{\equal{#1}{}}{}{
\coordinate (pointD) at (canvas polar cs:angle=-30,radius=.866*#4cm);
\coordinate (pointE) at (canvas polar cs:angle=-30,radius=#7*#4cm);
\draw[#1] ($(center)+(pointD)$) -- ($(center)+(pointE)$);
}
\foreach \hexSideCounter in {1,2,3,4,5,6} {
\coordinate (pointA) at (canvas polar cs:angle={60*(\hexSideCounter - 1)},radius=#4cm);
\coordinate (pointB) at (canvas polar cs:angle={60*\hexSideCounter},radius=#4cm);
\coordinate (pointC) at (canvas polar cs:angle={60*(\hexSideCounter - 1)},radius=1.5*#4cm);
\ifthenelse{#6=\hexSideCounter}{}{\draw[#5] ($(center)+(pointA)$) -- ($(center)+(pointB)$);}
\draw[#5] ($(center)+(pointA)$) -- ($(center)+(pointC)$);
}
}

\newcommand{\arHex}[7][0]
{
\coordinate (center) at (#2, #3);
\foreach \hexSideCounter in {1,2,3,4,5,6} {
\coordinate (pointA) at (canvas polar cs:angle={60*(\hexSideCounter - 1)},radius=#4cm);
\coordinate (pointB) at (canvas polar cs:angle={60*\hexSideCounter},radius=#4cm);
\ifthenelse{#1=\hexSideCounter}{}{ 
\ifthenelse{#6=\hexSideCounter}
{\ifthenelse{#7=1} 
{\draw[#5] ($(center)+(pointB)$) -- node[sloped,allow upside down,pos=0.5] {\arrowIn[#5]} ($(center)+(pointA)$);} 
{\draw[#5] ($(center)+(pointA)$) -- node[sloped,allow upside down,pos=0.5] {\arrowIn[#5]} ($(center)+(pointB)$);} 
}
{\draw[#5] ($(center)+(pointA)$) -- ($(center)+(pointB)$);} 
}
}
}

\newcommand{\arHexKnot}[7][0]
{
\coordinate (center) at (#2, #3);
\foreach \hexSideCounter in {1,2,3,4,5,6} {
\coordinate (pointA) at (canvas polar cs:angle={60*(\hexSideCounter - 1)},radius=#4cm);
\coordinate (pointB) at (canvas polar cs:angle={60*\hexSideCounter},radius=#4cm);
\ifthenelse{#1=\hexSideCounter}{}{ 
\ifthenelse{#6=\hexSideCounter}
{\ifthenelse{#7=1} 
{\draw[#5,knot] ($(center)+(pointB)$) -- node[sloped,allow upside down,pos=0.5] {\arrowIn[#5]} ($(center)+(pointA)$);} 
{\draw[#5,knot] ($(center)+(pointA)$) -- node[sloped,allow upside down,pos=0.5] {\arrowIn[#5]} ($(center)+(pointB)$);} 
}
{\draw[#5,knot] ($(center)+(pointA)$) -- ($(center)+(pointB)$);} 
}
}
}

\newcommand{\levinHexRow}[6][]{
\foreach \hexColumnCounter in {1,...,#4}
{
\levinHex[#1]{{#2 + (1.5 * #5)*(\hexColumnCounter -1)}}{{#3 + (.866 * #5)*(\hexColumnCounter -1)}}{#5}{#6}
}
}
\newcommand{\levinHexGrid}[7][]{
\foreach \a in {1,...,#4}
{
\levinHexRow[#1]{#2 + (1.5 * #6)*(\a - 1)}{#3 - (.866 * #6)*(\a - 1)}{#5}{#6}{#7}
}
}

\newcommand{\arrowIn}[1][black]{
\tikz \draw[-stealth,#1] (-1pt,0) -- (1pt,0);
}

\newcommand{\arrowInR}[1][black]{
\tikz \draw[-stealth,#1] (1pt,0) -- (-1pt,0);
}


\begin{document}
\title{A lattice model for condensation in Levin-Wen systems}
\author{Jessica Christian, David Green, Peter Huston, and David Penneys}
\date{\today}
\begin{abstract}
 Levin-Wen string-net models provide a construction of (2+1)D topologically ordered phases of matter with anyonic localized excitations described by the {Drinfeld} center of a unitary fusion category.
 Anyon condensation is a mechanism for phase transitions between (2+1)D topologically ordered phases.
 We construct an extension of Levin-Wen models in which tuning a parameter implements anyon condensation.
 We also describe the classification of anyons in Levin-Wen models via representation theory of the tube algebra, and use a variant of the tube algebra to classify low-energy localized excitations in the condensed phase.
\end{abstract}
\maketitle

\section{Introduction}
Since their introduction in \cite{PhysRevB.71.045110}, string-net lattice models \cite{MR3204497,PhysRevB.103.195155} have been used as tractable examples of systems exhibiting (2+1)D topological order.
A (2+1)D string-net model is determined by the data of a unitary fusion category (UFC) $\mathcal{X}$, and exhibits $Z(\mathcal{X})$-topological order, in the sense that the modular tensor category $Z(\mathcal{X})$ classifies the types of quasiparticle excitations and gives their fusion and braiding statistics \cite{MR2942952}.

Systems exhibiting (2+1)D topological order can include domain walls between regions in different topological phases.
One source of topological domain walls is anyon condensation, where topological order is described by the unitary modular tensor category (UMTC) $\mathcal{C}$ on one side of the wall, while a bosonic condensable algebra $A\in\mathcal{C}$ is condensed on the other side \cite{PhysRevB.79.045316,MR2516228,MR3246855}.
The condensed algebra now plays the role of the vacuum, and anyons $s$ from the region where $A$ is not condensed can split as domain wall excitations, which may either be confined to the domain wall or able to pass into the condensed region, according to the structure of the fusion channels between $s$ and $A$.
Wall excitations are then described by the fusion category $\mathcal{C}_A$ of $A$-modules in $\mathcal{C}$, while the UMTC $\mathcal{C}_A^{\loc}$ of local $A$-modules describes anyons in the condensed phase \cite{MR3246855,MR3039775}.

Anyon condensation also gives rise to phase transitions between topological phases, which can be thought of as topological Wick rotations \cite{1912.01760} of the spatial domain walls.
In this article, we will describe a class of modified Levin-Wen models, due to Corey Jones, in which a chosen condensable algebra $A\in Z(\mathcal{X})$ may be condensed by tuning a parameter, driving a system with $Z(\mathcal{X})$ topological order through a phase transition to $Z(\mathcal{X})_A^{\loc}$ topological order.
In particular, a domain wall of the form described in \cite{MR3246855} can be created by choosing different values of the parameter on each side of the wall.
Our models closely track some existing constructions, such as that of \cite{PhysRevB.84.125434}, where models for the condensation of an Abelian plaquette excitation were constructed and analyzed, \cite{PhysRevB.94.235136}, which describes a procedure for ungauging a symmetry that is equivalent to our model for condensing an algebra of the form $\C^G$, or \cite{2209.12750}, which analyzes in great detail the case of condensing an algebra in $Z(\mathsf{Ising})$ to create a spatial boundary to $\mathbb{Z}/2$-toric code.
However, our models will allow for an arbitrary choice of UFC $\mathcal{X}$ and condensable algebra $A\in Z(\cX)$, and the modifications to the Hamiltonian come directly from the data of the condensable algebra.
In particular, $Z(\mathcal{\cX})$ and $A$ may be non-Abelian, and the fusion rules of $\mathcal{X}$ can have multiplicity.

The structure of this paper is as follows.
In Section \ref{sec:levin-wen}, we review Levin-Wen models in detail, including a description of string operators, hopping operators, and how the type of a topological excitation can be determined locally via representations of the tube algebra \cite{MR1782145,MR1966525}.
In Section \ref{sec:condensationModels}, we describe a class of models, parameterized by a unitary fusion category $\mathcal{
 X}$ and a condensable algebra $A\in Z(\mathcal{X})$, which permit the condensation of $A$ via tuning a parameter $t$.
When $t=0$, these models will reduce to the usual Levin-Wen Hamiltonian associated to $\mathcal{X}$, and when $t=1$, the algebra $A$ is condensed.
We describe a variant tube algebra of local operators and string operators in the condensed phase, and adapt the analysis from Section \ref{sec:levin-wen} to show that anyons in the condensed phase are described by the UMTC $Z(\cX)_A^{\loc}$, as argued in \cite{MR3246855}.
We also discuss the effect of the phase transition on the space of ground states in Section \ref{ssec:GSD}.
Finally, Section \ref{sec:examples} contains additional examples of the models described in \S~\ref{sec:condensationModels}, including the non-Abelian example of condensing the Lagrangian algebra in $Z(\Fib)$.

\textbf{Note added} Shortly before completing this work, we became aware of \cite{2303.07291}, which studies anyon condensation in the case of Abelian bosons.

\subsection*{Acknowledgements}
This project began as an undergraduate research project for Jessica Christian in Summer 2020 led by Peter Huston and David Green.
It then evolved into a chapter of Peter Huston's PhD thesis from 2022.
The authors would like to thank Corey Jones for suggesting this project and for many important ideas.
The authors would also like to thank 
Dave Aasen,
Maissam Barkeshli,
Jacob Bridgeman,
Fiona Burnell, and
Yuan-Ming Lu
for helpful comments and discussions.
All the authors were all supported by NSF grant DMS 1654159.
David Green and David Penneys were additionally supported by NSF grant DMS 2154389.

\section{String-net models in (2+1)D}
\label{sec:levin-wen}
In this section, we investigate (2+1)D string-net models for topological order, which were introduced in \cite{PhysRevB.71.045110}.
We follow the treatment of \cite{MR2942952,MR3204497,PhysRevB.103.195155,YanbaiZhang,0907.2204}.
We begin by introducing the commuting projector local Hamiltonian of a string-net model associated to the UFC $\cX$, in \S~\ref{ssec:lwModel}.
The goal of our analysis is to identify the space of states containing an isolated topological excitation at a particular location as a representation of the tube algebra $\Tube(\cX)$, extending the work of \cite{PhysRevB.97.195154}.
We accomplish this goal in \S~\ref{ssec:tubeImplementation}.
As setup, we introduce in \S~\ref{ssec:stringOperators} notions of string operators and hopping operators for anyons in $Z(\cX)$ which are slightly more general than those that appear in \cite{PhysRevB.97.195154,PhysRevB.103.195155}, so that string operators can realize all elements of the $\Tube(\cX)$ representation.
Aside from providing details on well-known properties of string-net models, the exposition in this section provides the blueprint for our analysis of the condensed phase in \S~\ref{sec:condensationModels}.

\subsection{Background: The Levin-Wen system}
\label{ssec:lwModel}
We begin by explicitly describing the string-net model associated to a unitary fusion category $\cX$.
Whenever possible, we suppress notation such as tensor products, associators, and unitors.

As in \cite{MR3204497}, we use a regular hexagonal 2D lattice which we view as being oriented \emph{left to right}, although this choice of geometry is not necessary.
$$
\tikzmath{
\levinHex{0}{0}{.5}{black}
\levinHex{.75}{.433}{.5}{black}
\levinHex{.75}{-.433}{.5}{black}
}
$$
We assign a Hilbert space to each vertex of the lattice, where there are two different types of vertices:
\begin{align*}
\tikzmath{
\draw (0:.5cm) node[below, xshift=.1cm] {$\scriptstyle v$} -- (60:.5cm);
\draw (0:.5cm) -- (-60:.5cm);
\draw (0:.5cm) -- (0:1cm);
}
\qquad
&\longleftrightarrow
\qquad
\cH_v := \bigoplus_{x,y,z\in \Irr(\cX)}\cX(x y \to z)
\\
\tikzmath{
\draw (0:0) node[below, xshift=-.1cm] {$\scriptstyle v$} -- (-60:.5cm);
\draw (0:0) -- (180:.5cm);
\draw (0:0) -- (60:.5cm);
}
\qquad
&\longleftrightarrow
\qquad
\cH_v := \bigoplus_{x,y,z\in \Irr(\cX)}\cX(x\to y  z)
\end{align*}
Spaces of morphisms, such as $\cX(x\to y z)$, carry several different inner products.
There are two which we consider in this work.
The first is the \textit{isometry inner product},
determined by the formula
\begin{equation}
 \label{eq:isometryIP}
\langle f,g\rangle \id_a
:=
g^\dag\circ f
\qquad\qquad
\forall f,g\in \cX(x\to y z)
\end{equation}
This gives a canonical identification of $\cX(y z \to x)$ with the dual Hilbert space $\overline{\cX(x\to y z)}$.

The isometry inner product appears naturally when computing compositions of morphisms in $\cX$.
However, the isometry inner product is ill-behaved in the sense that the isomorphisms $\bigoplus_{x,y,z}\cX(x\to y z)\to\bigoplus_{x,y,z}\cX(y z\to x)$ coming from pivotality of $\cX$ are not unitary; this inner product is not rotationally invariant.
Therefore, for the Hilbert spaces $\cH_v$, we choose a different inner product $\langle\cdot|\cdot\rangle_v$, where the inner product on the summand $\cX(x\to yz)$ is given by
\begin{equation}
 \label{eq:vertexIP}
 \langle g|f\rangle_v=\sqrt{\frac{d_x}{d_yd_z}}\langle g|f\rangle.
\end{equation}
In this paper, most inner products that are computed are actually the isometry inner product \eqref{eq:isometryIP}, because they arise from the comparison of operators defined in terms of the graphical calculus of $\cX$.
The importance of using the rotationally invariant inner product \eqref{eq:vertexIP} for the lattice Hilbert spaces is that the plaquette term $B_p$ which we define below will actually be self-adjoint.

\begin{rem}
 \label{rem:whyIPV}
 We can see that the inner product \eqref{eq:vertexIP} is rotationally invariant by relating it to the pivotal trace.
 Since $\cX$ is a unitary fusion category, $\cX$ has a canonical unitary spherical structure \cite{MR1444286,MR2091457,MR4133163}, giving a pivotal trace $\tr$ such that $\tr(\id_x)=d_x$.
 Therefore, $\langle g|f\rangle_v=\frac{1}{\sqrt{d_xd_yd_z}}\tr(g^\dag f)$.
 Since $d_x=d_{\overline{x}}$, the scalar $\sqrt{d_xd_yd_z}$ is obviously rotationally invariant; so is the value of $\tr$.
\end{rem}

We will sometimes depict a vector $f\in\mathcal{X}(x y\to z)\subseteq\mathcal{H}_v$ as a picture where $v$ is labeled by the morphism $f$, and the links incident to $v$ are labeled by $x$, $y$, and $z$:
\[\tikzmath{
\draw (0:0) -- (0:.7cm);
\draw (0:0) -- (120:.7cm);
\draw (0:0) -- (-120:.7cm);
\roundNbox{fill=white}{(0,0)}{.3}{0}{0}{$\scriptstyle f$};
\node at (0:1cm) {$\scriptstyle z$};
\node at (120:1cm) {$\scriptstyle x$};
\node at (-120:1cm) {$\scriptstyle y$};
}\]
States on a finite chunk of the lattice where all links are assigned the same object in $\Irr(\cX)$ by all incident vertices can therefore be interpreted as linear combinations of string diagrams in $\cX$, read from left to right.

Sometimes, it will be more convenient to use other orientations of links, so we adopt the convention that
\[\tikzmath{
 \draw (0,0) -- (.5,0);
 \node at (.25,.2) {$\scriptstyle x$};
}=\tikzmath{
 \draw[mid>] (0,0) -- (.5,0);
 \node at (.25,.2) {$\scriptstyle x$};
}=\tikzmath{
 \draw[mid<] (0,0) -- (.5,0);
 \node at (.25,.2) {$\scriptstyle\overline{x}$};
}\]
For example, on a trivalent vertex corresponding to a hom space, we have
$$
\tikzmath{
\draw[mid<] (0:.5cm) -- node[left] {$\scriptstyle x$} (60:.5cm);
\draw[mid<] (0:.5cm) -- node[left] {$\scriptstyle y$} (-60:.5cm);
\draw[mid>] (0:.5cm) -- node[above] {$\scriptstyle z$} (0:1cm);
}
=
\cX(xy\to z)
\qquad\qquad
\tikzmath{
\draw[mid<] (0:.5cm) -- node[left] {$\scriptstyle x$} (60:.5cm);
\draw[mid<] (0:.5cm) -- node[left] {$\scriptstyle y$} (-60:.5cm);
\draw[mid<] (0:.5cm) -- node[above] {$\scriptstyle z$} (0:1cm);
}
=
\cX(xy\to \overline{z}).
$$

The Hamiltonian has two terms: link and plaquette.
The link term $A_\ell$ for a link $\ell$ connecting vertices $u$ and $v$ projects onto the subspace of $\cH_u\otimes \cH_v$ where the labels assigned to the link $\ell$ match.
Thus, $A_\ell$ terms commute with one another, and the ground states of $-\sum_\ell A_\ell$ can be locally interpreted as linear combinations of string diagrams in $\cX$ on the 1-skeleton of our lattice.
Note, however, that two string diagrams which give the same morphism in $\cX$ may be distinct as ground states of $-\sum_\ell A_\ell$.

Following \cite{MR3204497}, for $s\in\Irr(\mathcal{C})$, we define an operator $B_p^s$ which glues a closed $s$-loop into the plaquette $p$.
\begin{equation}
 \label{eq:Bps}
 \tikzmath{
  \pgfmathsetmacro{\hexIn}{.5};
  \pgfmathsetmacro{\hexOut}{.75};
  \draw[mid>] (0:\hexIn) -- (0:\hexOut);
  \draw[mid<] (0:\hexIn) -- (60:\hexIn);
  \draw[mid>] (60:\hexIn) -- (60:\hexOut);
  \draw[mid<] (60:\hexIn) -- (120:\hexIn);
  \draw[mid<] (120:\hexIn) -- (120:\hexOut);
  \draw[mid<] (120:\hexIn) -- (180:\hexIn);
  \draw[mid<] (180:\hexIn) -- (180:\hexOut);
  \draw[mid>] (180:\hexIn) -- (240:\hexIn);
  \draw[mid<] (240:\hexIn) -- (240:\hexOut);
  \draw[mid>] (240:\hexIn) -- (300:\hexIn);
  \draw[mid>] (300:\hexIn) -- (300:\hexOut);
  \draw[mid>] (300:\hexIn) -- (360:\hexIn);
  \node at (0:1cm) {$\scriptstyle a_1$};
  \node at (60:1cm) {$\scriptstyle a_2$};
  \node at (120:1cm) {$\scriptstyle a_3$};
  \node at (180:1cm) {$\scriptstyle a_4$};
  \node at (240:1cm) {$\scriptstyle a_5$};
  \node at (300:1cm) {$\scriptstyle a_6$};
  \node at (30:.7cm) {$\scriptstyle c_1$};
  \node at (90:.7cm) {$\scriptstyle c_2$};
  \node at (150:.7cm) {$\scriptstyle c_3$};
  \node at (-150:.7cm) {$\scriptstyle c_4$};
  \node at (-90:.7cm) {$\scriptstyle c_5$};
  \node at (-30:.7cm) {$\scriptstyle c_6$};
 }\mapsto
 \tikzmath{
  \pgfmathsetmacro{\hexIn}{.5};
  \pgfmathsetmacro{\hexOut}{.75};
  \draw[mid>] (0:\hexIn) -- (0:\hexOut);
  \draw[mid<] (0:\hexIn) -- (60:\hexIn);
  \draw[mid>] (60:\hexIn) -- (60:\hexOut);
  \draw[mid<] (60:\hexIn) -- (120:\hexIn);
  \draw[mid<] (120:\hexIn) -- (120:\hexOut);
  \draw[mid<] (120:\hexIn) -- (180:\hexIn);
  \draw[mid<] (180:\hexIn) -- (180:\hexOut);
  \draw[mid>] (180:\hexIn) -- (240:\hexIn);
  \draw[mid<] (240:\hexIn) -- (240:\hexOut);
  \draw[mid>] (240:\hexIn) -- (300:\hexIn);
  \draw[mid>] (300:\hexIn) -- (300:\hexOut);
  \draw[mid>] (300:\hexIn) -- (360:\hexIn);
  \node at (0:1cm) {$\scriptstyle a_1$};
  \node at (60:1cm) {$\scriptstyle a_2$};
  \node at (120:1cm) {$\scriptstyle a_3$};
  \node at (180:1cm) {$\scriptstyle a_4$};
  \node at (240:1cm) {$\scriptstyle a_5$};
  \node at (300:1cm) {$\scriptstyle a_6$};
  \node at (30:.7cm) {$\scriptstyle c_1$};
  \node at (90:.7cm) {$\scriptstyle c_2$};
  \node at (150:.7cm) {$\scriptstyle c_3$};
  \node at (-150:.7cm) {$\scriptstyle c_4$};
  \node at (-90:.7cm) {$\scriptstyle c_5$};
  \node at (-30:.7cm) {$\scriptstyle c_6$};
  \draw[mid>] (0,0) circle (.3cm);
  \node at (.1,0) {$\scriptstyle s$};
 }
\end{equation}
The operator $B_p^s$ is only defined on the ground states of the $A_\ell$ terms for links $\ell$ of $p$; if $A_\ell|\phi\rangle\neq|\phi\rangle$ for one of those links, we define $B_p^s|\phi\rangle=0$.
We interpret \eqref{eq:Bps} as an operator on our Hilbert space, as described in \cite[Appendix C]{PhysRevB.71.045110}, using the following relation.
\begin{equation}
 \label{eq:fusionDecomp}
 \id_{xs}
 =
 \tikzmath{
 \draw (0,0) -- node[left]{$\scriptstyle x$} (.5,.5);
 \draw (1,0) -- node[right]{$\scriptstyle s$} (.5,.5);
 \draw (.5,.5) -- node[right]{$\scriptstyle y$} (.5,1);
 \draw (.5,1) -- node[left]{$\scriptstyle x$} (0,1.5);
 \draw (.5,1) -- node[right]{$\scriptstyle s$} (1,1.5);
 \filldraw[fill=red] (.5,.5) circle (.05cm);
 \filldraw[fill=red] (.5,1) circle (.05cm);
}
:=
 \sum_{\substack{
 y\in \Irr(\cC)
 \\
 \alpha \in \ONB(xs \to y)
 }}
 \tikzmath{
 \draw (0,0) -- node[left]{$\scriptstyle x$} (.5,.5);
 \draw (1,0) -- node[right]{$\scriptstyle s$} (.5,.5);
 \draw (.5,.5) -- node[right]{$\scriptstyle y$} (.5,1);
 \draw (.5,1) -- node[left]{$\scriptstyle x$} (0,1.5);
 \draw (.5,1) -- node[right]{$\scriptstyle s$} (1,1.5);
 \filldraw[fill=red] (.5,.5) circle (.05cm) node[below]{$\scriptstyle \alpha$};
 \filldraw[fill=red] (.5,1) circle (.05cm) node[above, yshift=.1]{$\scriptstyle \alpha^\dag$};
}.
\end{equation}
Here, we adapt the notation from \cite[Eq.~(3)]{MR3663592} and write a pair of nodes labelled by $\tikzmath{\draw[fill=red] (0,0) circle (.05cm);}$ to denote summing over an isometry orthonormal basis of $\bigoplus_{y\in \Irr(\cX)}\mathcal{X}(xs\to y)$ and its adjoint; the sum is independent of the choice of basis.
Applying equation \eqref{eq:fusionDecomp} six times allows us to rewrite $B_p^s$ as
\begin{equation}
\label{eq:bpExpansion}
\tikzmath{
\pgfmathsetmacro{\hexIn}{.5};
\pgfmathsetmacro{\hexOut}{.75};
\draw[mid>] (0:\hexIn) -- (0:\hexOut);
\draw[mid<] (0:\hexIn) -- (60:\hexIn);
\draw[mid>] (60:\hexIn) -- (60:\hexOut);
\draw[mid<] (60:\hexIn) -- (120:\hexIn);
\draw[mid<] (120:\hexIn) -- (120:\hexOut);
\draw[mid<] (120:\hexIn) -- (180:\hexIn);
\draw[mid<] (180:\hexIn) -- (180:\hexOut);
\draw[mid>] (180:\hexIn) -- (240:\hexIn);
\draw[mid<] (240:\hexIn) -- (240:\hexOut);
\draw[mid>] (240:\hexIn) -- (300:\hexIn);
\draw[mid>] (300:\hexIn) -- (300:\hexOut);
\draw[mid>] (300:\hexIn) -- (360:\hexIn);
\node at (0:1cm) {$\scriptstyle a_1$};
\node at (60:1cm) {$\scriptstyle a_2$};
\node at (120:1cm) {$\scriptstyle a_3$};
\node at (180:1cm) {$\scriptstyle a_4$};
\node at (240:1cm) {$\scriptstyle a_5$};
\node at (300:1cm) {$\scriptstyle a_6$};
\node at (30:.7cm) {$\scriptstyle c_1$};
\node at (90:.7cm) {$\scriptstyle c_2$};
\node at (150:.7cm) {$\scriptstyle c_3$};
\node at (-150:.7cm) {$\scriptstyle c_4$};
\node at (-90:.7cm) {$\scriptstyle c_5$};
\node at (-30:.7cm) {$\scriptstyle c_6$};
\draw[mid>] (0,0) circle (.3cm);
\node at (.1,0) {$\scriptstyle s$};
}
\mapsto
\tikzmath{
\pgfmathsetmacro{\hexIn}{.5};
\pgfmathsetmacro{\hexOut}{.75};
\draw[mid>] (0:\hexIn) -- (0:\hexOut);
\draw[mid>] (0:\hexIn) -- (60:\hexIn);
\draw[mid>] (60:\hexIn) -- (60:\hexOut);
\draw[mid>] (60:\hexIn) -- (120:\hexIn);
\draw[mid>] (120:\hexIn) -- (120:\hexOut);
\draw[mid>] (120:\hexIn) -- (180:\hexIn);
\draw[mid>] (180:\hexIn) -- (180:\hexOut);
\draw[mid>] (180:\hexIn) -- (240:\hexIn);
\draw[mid>] (240:\hexIn) -- (240:\hexOut);
\draw[mid>] (240:\hexIn) -- (300:\hexIn);
\draw[mid>] (300:\hexIn) -- (300:\hexOut);
\draw[mid>] (300:\hexIn) -- (360:\hexIn);
\node at (0:1cm) {$\scriptstyle a_1$};
\node at (60:1cm) {$\scriptstyle a_2$};
\node at (120:1cm) {$\scriptstyle \overline{a_3}$};
\node at (180:1cm) {$\scriptstyle \overline{a_4}$};
\node at (240:1cm) {$\scriptstyle \overline{a_5}$};
\node at (300:1cm) {$\scriptstyle a_6$};
\node at (30:.7cm) {$\scriptstyle \overline{c_1}$};
\node at (90:.7cm) {$\scriptstyle \overline{c_2}$};
\node at (150:.7cm) {$\scriptstyle \overline{c_3}$};
\node at (-150:.7cm) {$\scriptstyle c_4$};
\node at (-90:.7cm) {$\scriptstyle c_5$};
\node at (-30:.7cm) {$\scriptstyle c_6$};
\draw[mid>] (0,0) circle (.3cm);
\node at (.1,0) {$\scriptstyle s$};
}
\mapsto
\tikzmath{
\pgfmathsetmacro{\hexIn}{1};
\pgfmathsetmacro{\hexOut}{1.5};
\draw[mid>] (0:\hexIn) -- (0:\hexOut);
\draw[mid>] (0:\hexIn) -- (60:\hexIn);
\draw[mid>] (60:\hexIn) -- (60:\hexOut);
\draw[mid>] (60:\hexIn) -- (120:\hexIn);
\draw[mid>] (120:\hexIn) -- (120:\hexOut);
\draw[mid>] (120:\hexIn) -- (180:\hexIn);
\draw[mid>] (180:\hexIn) -- (180:\hexOut);
\draw[mid>] (180:\hexIn) -- (240:\hexIn);
\draw[mid>] (240:\hexIn) -- (240:\hexOut);
\draw[mid>] (240:\hexIn) -- (300:\hexIn);
\draw[mid>] (300:\hexIn) -- (300:\hexOut);
\draw[mid>] (300:\hexIn) -- (360:\hexIn);
\draw[draw=none, mid>] (0:\hexIn) -- (20:.88cm);
\draw[draw=none, mid>] (40:.88cm) -- (60:\hexIn);
\draw[draw=none, mid>] (60:\hexIn) -- (80:.88cm);
\draw[draw=none, mid>] (100:.88cm) -- (120:\hexIn);
\draw[draw=none, mid>] (120:\hexIn) -- (140:.88cm);
\draw[draw=none, mid>] (160:.88cm) -- (180:\hexIn);
\draw[draw=none, mid>] (180:\hexIn) -- (200:.88cm);
\draw[draw=none, mid>] (220:.88cm) -- (240:\hexIn);
\draw[draw=none, mid>] (240:\hexIn) -- (260:.88cm);
\draw[draw=none, mid>] (280:.88cm) -- (300:\hexIn);
\draw[draw=none, mid>] (300:\hexIn) -- (320:.88cm);
\draw[draw=none, mid>] (340:.88cm) -- (360:\hexIn);
\draw[mid>] (40:.88cm) arc (-60: -180:.35cm);
\draw[mid>] (100:.88cm) arc (0: -120:.35cm);
\draw[mid>] (160:.88cm) arc (60: -60:.35cm);
\draw[mid>] (220:.88cm) arc (120: 0:.35cm);
\draw[mid>] (280:.88cm) arc (180: 60:.35cm);
\draw[mid>] (340:.88cm) arc (240: 120:.35cm);
\filldraw[fill=red] (20:.88cm) circle (.05cm);
\filldraw[fill=red] (40:.88cm) circle (.05cm);
\filldraw[fill=orange] (80:.88cm) circle (.05cm);
\filldraw[fill=orange] (100:.88cm) circle (.05cm);
\filldraw[fill=yellow] (140:.88cm) circle (.05cm);
\filldraw[fill=yellow] (160:.88cm) circle (.05cm);
\filldraw[fill=DarkGreen] (-160:.88cm) circle (.05cm);
\filldraw[fill=DarkGreen] (-140:.88cm) circle (.05cm);
\filldraw[fill=blue] (-100:.88cm) circle (.05cm);
\filldraw[fill=blue] (-80:.88cm) circle (.05cm);
\filldraw[fill=purple] (-40:.88cm) circle (.05cm);
\filldraw[fill=purple] (-20:.88cm) circle (.05cm);
\node at (0:1.7cm) {$\scriptstyle a_1$};
\node at (60:1.7cm) {$\scriptstyle a_2$};
\node at (120:1.7cm) {$\scriptstyle \overline{a_3}$};
\node at (180:1.7cm) {$\scriptstyle \overline{a_4}$};
\node at (240:1.7cm) {$\scriptstyle \overline{a_5}$};
\node at (300:1.7cm) {$\scriptstyle a_6$};
\node at (0:.45cm) {$\scriptstyle s$};
\node at (60:.45cm) {$\scriptstyle s$};
\node at (120:.45cm) {$\scriptstyle s$};
\node at (180:.45cm) {$\scriptstyle s$};
\node at (240:.45cm) {$\scriptstyle s$};
\node at (300:.45cm) {$\scriptstyle s$};
\node at (12:1.2cm) {$\scriptstyle \overline{c_1}$};
\node at (30:1.2cm) {$\scriptstyle \overline{d_1}$};
\node at (48:1.2cm) {$\scriptstyle \overline{c_1}$};
\node at (72:1.2cm) {$\scriptstyle \overline{c_2}$};
\node at (90:1.2cm) {$\scriptstyle \overline{d_2}$};
\node at (108:1.2cm) {$\scriptstyle \overline{c_2}$};
\node at (132:1.2cm) {$\scriptstyle \overline{c_3}$};
\node at (150:1.2cm) {$\scriptstyle \overline{d_3}$};
\node at (168:1.2cm) {$\scriptstyle \overline{c_3}$};
\node at (192:1.2cm) {$\scriptstyle c_4$};
\node at (210:1.2cm) {$\scriptstyle d_4$};
\node at (228:1.2cm) {$\scriptstyle c_4$};
\node at (252:1.2cm) {$\scriptstyle c_5$};
\node at (270:1.2cm) {$\scriptstyle d_5$};
\node at (288:1.2cm) {$\scriptstyle c_5$};
\node at (312:1.2cm) {$\scriptstyle c_6$};
\node at (330:1.2cm) {$\scriptstyle d_6$};
\node at (348:1.2cm) {$\scriptstyle c_6$};
}
\mapsto
\tikzmath{
\pgfmathsetmacro{\hexIn}{1};
\pgfmathsetmacro{\hexOut}{1.5};
\draw[mid>] (0:\hexIn) -- (0:\hexOut);
\draw[mid<] (0:\hexIn) -- (60:\hexIn);
\draw[mid>] (60:\hexIn) -- (60:\hexOut);
\draw[mid<] (60:\hexIn) -- (120:\hexIn);
\draw[mid<] (120:\hexIn) -- (120:\hexOut);
\draw[mid<] (120:\hexIn) -- (180:\hexIn);
\draw[mid<] (180:\hexIn) -- (180:\hexOut);
\draw[mid>] (180:\hexIn) -- (240:\hexIn);
\draw[mid<] (240:\hexIn) -- (240:\hexOut);
\draw[mid>] (240:\hexIn) -- (300:\hexIn);
\draw[mid>] (300:\hexIn) -- (300:\hexOut);
\draw[mid>] (300:\hexIn) -- (360:\hexIn);
\draw[draw=none, mid<] (0:\hexIn) -- (20:.88cm);
\draw[draw=none, mid<] (40:.88cm) -- (60:\hexIn);
\draw[draw=none, mid<] (60:\hexIn) -- (80:.88cm);
\draw[draw=none, mid<] (100:.88cm) -- (120:\hexIn);
\draw[draw=none, mid<] (120:\hexIn) -- (140:.88cm);
\draw[draw=none, mid<] (160:.88cm) -- (180:\hexIn);
\draw[draw=none, mid>] (180:\hexIn) -- (200:.88cm);
\draw[draw=none, mid>] (220:.88cm) -- (240:\hexIn);
\draw[draw=none, mid>] (240:\hexIn) -- (260:.88cm);
\draw[draw=none, mid>] (280:.88cm) -- (300:\hexIn);
\draw[draw=none, mid>] (300:\hexIn) -- (320:.88cm);
\draw[draw=none, mid>] (340:.88cm) -- (360:\hexIn);
\draw[mid>] (40:.88cm) arc (-60: -180:.35cm);
\draw[mid>] (100:.88cm) arc (0: -120:.35cm);
\draw[mid>] (160:.88cm) arc (60: -60:.35cm);
\draw[mid>] (220:.88cm) arc (120: 0:.35cm);
\draw[mid>] (280:.88cm) arc (180: 60:.35cm);
\draw[mid>] (340:.88cm) arc (240: 120:.35cm);
\filldraw[fill=red] (20:.88cm) circle (.05cm);
\filldraw[fill=red] (40:.88cm) circle (.05cm);
\filldraw[fill=orange] (80:.88cm) circle (.05cm);
\filldraw[fill=orange] (100:.88cm) circle (.05cm);
\filldraw[fill=yellow] (140:.88cm) circle (.05cm);
\filldraw[fill=yellow] (160:.88cm) circle (.05cm);
\filldraw[fill=DarkGreen] (-160:.88cm) circle (.05cm);
\filldraw[fill=DarkGreen] (-140:.88cm) circle (.05cm);
\filldraw[fill=blue] (-100:.88cm) circle (.05cm);
\filldraw[fill=blue] (-80:.88cm) circle (.05cm);
\filldraw[fill=purple] (-40:.88cm) circle (.05cm);
\filldraw[fill=purple] (-20:.88cm) circle (.05cm);
\node at (0:1.7cm) {$\scriptstyle a_1$};
\node at (60:1.7cm) {$\scriptstyle a_2$};
\node at (120:1.7cm) {$\scriptstyle a_3$};
\node at (180:1.7cm) {$\scriptstyle a_4$};
\node at (240:1.7cm) {$\scriptstyle a_5$};
\node at (300:1.7cm) {$\scriptstyle a_6$};
\node at (0:.45cm) {$\scriptstyle s$};
\node at (60:.45cm) {$\scriptstyle s$};
\node at (120:.45cm) {$\scriptstyle s$};
\node at (180:.45cm) {$\scriptstyle s$};
\node at (240:.45cm) {$\scriptstyle s$};
\node at (300:.45cm) {$\scriptstyle s$};
\node at (12:1.2cm) {$\scriptstyle c_1$};
\node at (30:1.2cm) {$\scriptstyle d_1$};
\node at (48:1.2cm) {$\scriptstyle c_1$};
\node at (72:1.2cm) {$\scriptstyle c_2$};
\node at (90:1.2cm) {$\scriptstyle d_2$};
\node at (108:1.2cm) {$\scriptstyle c_2$};
\node at (132:1.2cm) {$\scriptstyle c_3$};
\node at (150:1.2cm) {$\scriptstyle d_3$};
\node at (168:1.2cm) {$\scriptstyle c_3$};
\node at (192:1.2cm) {$\scriptstyle c_4$};
\node at (210:1.2cm) {$\scriptstyle d_4$};
\node at (228:1.2cm) {$\scriptstyle c_4$};
\node at (252:1.2cm) {$\scriptstyle c_5$};
\node at (270:1.2cm) {$\scriptstyle d_5$};
\node at (288:1.2cm) {$\scriptstyle c_5$};
\node at (312:1.2cm) {$\scriptstyle c_6$};
\node at (330:1.2cm) {$\scriptstyle d_6$};
\node at (348:1.2cm) {$\scriptstyle c_6$};
}
\end{equation}
Note that we switch orientations in the first and third arrows for ease of applying equation \eqref{eq:fusionDecomp}.
The pairs of colored vertices refer to summing over an orthonormal basis and dual basis, as in \eqref{eq:fusionDecomp}, while labels $f_{1\cdots 6}$ for the six vertices have been omitted to avoid clutter.
The second arrow requires the use of the associator/$F$-matrices to re-associate in order to apply \eqref{eq:fusionDecomp}.
Explicitly,
$$
\id_s\otimes f
=
\tikzmath{
\draw (-.8,.6) node[left]{$\scriptstyle x$} -- (.8,.6) node[right]{$\scriptstyle x$} ;
\draw (0,.2) -- (.8,.2) node[right]{$\scriptstyle a$};
\draw (0,-.2) -- (.8,-.2) node[right]{$\scriptstyle b$};
\draw (0,0) -- (-.8,0) node[left]{$\scriptstyle c$};
\roundNbox{fill=white}{(0,0)}{.4}{0}{0}{$f$}
}
=
\tikzmath{
\draw (1,.8) -- (-.4,.8) arc (90:270:.4cm);
\draw (-.8,.4) -- node[above]{$\scriptstyle d$} (-1.2,.4);
\draw (-1.6,.8) node[left]{$\scriptstyle s$} arc (90:-90:.4cm) node[left]{$\scriptstyle c$};
\draw (0,.2) -- (1,.2);
\draw (3.8,.2) -- (4.2,.2) node[right]{$\scriptstyle a$};
\draw (0,-.2) -- (4.2,-.2) node[right]{$\scriptstyle b$};
\draw (1.8,.8) -- (2,.8) arc (90:-90:.2cm) -- (1.8,.4);
\draw (4.2,.8) node[right]{$\scriptstyle s$} -- (2.8,.8) arc (90:270:.2cm) -- (3,.4);
\draw (2.2,.6) -- node[above]{$\scriptstyle e$} (2.6,.6);
\roundNbox{fill=white}{(0,0)}{.4}{0}{0}{$f$}
\roundNbox{fill=white}{(1.4,.3)}{.7}{-.3}{-.3}{$\alpha$}
\roundNbox{fill=white}{(3.4,.3)}{.7}{-.3}{-.3}{$\alpha^{-1}$}
\filldraw[fill=DarkGreen] (2.2,.6) circle (.05cm);
\filldraw[fill=DarkGreen] (2.6,.6) circle (.05cm);
\filldraw[fill=blue] (-1.2,.4) circle (.05cm);
\filldraw[fill=blue] (-.8,.4) circle (.05cm);
\node at (2.4,0) {$\scriptstyle b$};
\node at (.7,0) {$\scriptstyle b$};
\node at (.7,.4) {$\scriptstyle a$};
\node at (-.6,-.2) {$\scriptstyle c$};
\node at (0,1) {$\scriptstyle s$};
}
$$
Thus, in the final diagram of \eqref{eq:bpExpansion}, each vertex of $p$ is now labelled by the composition of several morphisms, yielding a new morphism in $\mathcal{H}_v$.
For example, in terms of the sum where $\tikzmath{\filldraw[fill=blue] (0:0) circle (.05cm);}=\phi$, $\tikzmath{\filldraw[fill=DarkGreen] (0:0) circle (.05cm);}=\psi$, the lower right vertex is now labelled by
an element of $\mathcal{X}(d_5\to d_6a_6)$:
\[\tikzmath{
 \draw (1,.8) -- (-1,.8);
 \draw (-.8,.4) -- (-1.2,.4);
 \draw (-1.8,.4) node[left]{$\scriptstyle d_5$} -- (-1,.4);
 \draw (0,.2) -- (1,.2);
 \draw (1.4,.2) -- (2.3,.2) node[right]{$\scriptstyle c_6$};
 \draw (0,-.2) -- (3,-.2) node[right]{$\scriptstyle a_6$};
 \draw (-1.2,.8) -- (2.3,.8);
 \draw (0,0) -- (-1.2,0);
 \draw (2.4,.6) --  (3,.6) node[right]{$\scriptstyle d_6$};
 \roundNbox{fill=white}{(0,0)}{.4}{0}{0}{$f$}
 \roundNbox{fill=white}{(1.4,.3)}{.7}{-.3}{-.3}{$\alpha$}
 \roundNbox{fill=white}{(-1.2,.4)}{.6}{-.2}{-.2}{$\phi^\dag$}
 \roundNbox{fill=white}{(2.4,.55)}{.55}{-.15}{-.15}{$\psi$}
 \node at (.7,0) {$\scriptstyle a_6$};
 \node at (.7,.4) {$\scriptstyle c_6$};
 \node at (-.6,.2) {$\scriptstyle c_5$};
 \node at (0,1) {$\scriptstyle s$};
}\]

The plaquette term $B_p$ is then given by
\[B_p=\frac{1}{D}\sum_{s\in\Irr(\mathcal{X})}d_sB_p^s\text{,}\]
where $D=\sum_sd_S^2$ is the global dimension \cite[Defn.~2.5]{MR1966524} of $\cX$.
One uses the associativity of $\mathcal{X}$ to check that $B_p$ is an idempotent \cite[\S5]{YanbaiZhang}.
The computation that $B_p$ is self-adjoint is somewhat involved, but appears in \cite[Theorem 5.0.1]{0907.2204}.
The inner product $\langle\cdot|\cdot\rangle_v$ of \eqref{eq:vertexIP} is chosen so that the notion of Hermitian operator used in the proof in \cite{0907.2204} matches the notion in our Hilbert space.

There are several interpretations for $B_p$.
One is that after applying $B_p$, strings can now be deformed across the plaquette $p$ \cite[Appendix C]{PhysRevB.71.045110}.
Another is that the operator $B_p$ is the orthogonal projector onto the trivial representation of the algebra $\Tube(\mathcal{X})$, as described in Section \ref{ssec:stringOperators}.
Still another is that $B_p$ amounts to contracting $p$ to a vertex, and then restoring it \cite{PhysRevB.85.075107}.
Categorically, this corresponds to applying the map $E^\dag E$ (up to a factor of the global dimension $D$), where $E$ is the map which uses the composition of $\cX$ to replace a string diagram labelling $p$ with a single morphism in $\cX$ \cite{MR3204497}.
This last interpretation thus provides an alternative definition of $B_p$ which is manifestly self-adjoint.

Finally, the overall Hamiltonian on the lattice Hilbert space $\otimes_v\cH_v$ is given by
\begin{equation}
 \label{eq:LWHamiltonian}
 H=-\sum_vA_v-\sum_pB_p
\end{equation}

\subsection{Topological Excitations from String Operators}
\label{ssec:stringOperators}
Topological excitations in a Levin-Wen model based on the UFC $\mathcal{X}$ are classified by simple objects in $Z(\mathcal{X})$, the Drinfeld center of $\mathcal{C}$ \cite{MR2942952}.
One way of deriving $Z(\cX)$ from $\cX$ is by means of the tube algebra $\Tube(\cX)$, a finite dimensional $\Cstar$ algebra whose category of representations is equivalent to $Z(\cX)$ \cite{MR1966525,MR1782145}.
An action of the tube algebra as local operators at the site of topological excitation is described in \cite{PhysRevB.97.195154}, providing a natural and local way to identify topological excitations with objects in $Z(\cX)$.
Topological excitations are created by string operators, families of operators determined by an object $c\in\Irr(Z(\cX))$ and a path $p$ through the lattice which create excitations of types $c$ and $\overline{c}$ at the ends of $p$.
The string operator preserves the ground state in the middle of $p$, and if $p$ and $q$ are homotopic paths with the same endpoints, then string operators along $p$ and $q$ agree as long as the part of lattice through which the homotopy must pass is in the ground state \cite[Appendix C]{PhysRevB.71.045110}.

In \cite{PhysRevB.97.195154}, as in many sources, the authors provide a unique string operator for every object in $\Irr(Z(\cX))$, and consequently do not identify the particular representation of $\Tube(\cX)$ at each site.
In this section, we generalize their notion of string operator, so that string operators can absorb elements of $\Tube(\cX)$ acting locally on each end.
We apply this more general notion of string operator in \S\ref{ssec:tubeImplementation}, to determine the exact local representation of $\Tube(\cX)$.
In \S\ref{ssec:condensedPhase}, we will apply and generalize the constructions of \S\ref{ssec:tubeImplementation} to our model of anyon condensation in order to prove Theorem \ref{thm:correctMTC}.

An object in $Z(\mathcal{C})$ is an object $X\in\mathcal{C}$, together with a \textit{half-braiding}, i.e., a unitary natural isomorphism $\rho:X\otimes-\Rightarrow-\otimes X$ which satisfies certain coherences.
Given the data $s=(X,\rho)\in Z(\mathcal{X})$, we can define a class of string operators $\sigma^s_p(\phi,\psi)$, parameterized by a choice of oriented path $p$ between two potential location of excitations and vectors $\phi,\psi\in\oplus_{y\in\Irr(\mathcal{X})}\mathcal{X}(X\to y)$.

We begin by discussing the potential choices of the path $p$.
The location of an excitation is determined by which terms of the Hamiltonian the excitation violates.
In the case of our hexagonal model, the location of an excitation is therefore a pair $(q,\ell)$, where $q$ is a plaquette and $\ell$ is an edge of $q$.
Therefore, the string determined by the path $p$ must begin and end at points inside a plaquette, near a specific boundary edge, and must also avoid the center of each plaquette, as well as the vertices of the lattice.
Thus, $p$ consists of a list of pairs $(q_i,\ell_i)$, where each $\ell_i$ is an edge of $q_i$, either $q_{i+1}=q_i$ or $\ell_{i+1}=\ell_i$, and if $\ell_{i+1}\neq\ell_i$, then the two edges share a common vertex.
An example of such a $p$ appears in Figure \ref{fig:stringOperator}; compare \cite[Fig. 19]{PhysRevB.71.045110}.
\begin{figure}\[
\tikzmath{
\coordinate (a) at (-150:.53cm);
\coordinate (b) at (-90:.55cm);
\coordinate (c) at (-50:.55cm);
\coordinate (d) at (-20:.78cm);
\coordinate (e) at (.95,-.1);
\coordinate (f) at (1.3,.1);
\coordinate (g) at (1.55,.27);
\coordinate (h) at (1.85,.4);
\coordinate (i) at (2.15,.55);
\coordinate (j) at (2.45,.75);
\coordinate (z) at (2.75,1.1);
\filldraw[red] (a) circle (.05cm);
\filldraw[red] (z) circle (.05cm);
\draw[red, thick] (a) .. controls ++(-45:.2cm) and ++(180:.4cm) .. (b) node [sloped,allow upside down,pos=0.5] {\arrowIn[red]}
                            .. controls ++(0:.2cm) and ++(-135:.2cm) .. (c) node [sloped,allow upside down,pos=0.5] {\arrowIn[red]}
                            .. controls ++(45:.2cm) and ++(-135:.2cm) .. (d)
                            .. controls ++(45:.1cm) and ++(180:.1cm) .. (e) node [sloped,allow upside down,pos=0.5] {\arrowIn[red]}
                            .. controls ++(0:.2cm) and ++(180:.2cm) .. (f)
                            .. controls ++(0:.1cm) and ++(-135:.1cm) .. (g) node [sloped,allow upside down,pos=0.5] {\arrowIn[red]}
                            .. controls ++(45:.2cm) and ++(-135:.2cm) .. (h)
                            .. controls ++(45:.2cm) and ++(180:.2cm) .. (i) node [sloped,allow upside down,pos=0.5] {\arrowIn[red]}
                            .. controls ++(0:.2cm) and ++(180:.2cm) .. (j)
                            .. controls ++(0:.2cm) and ++(-135:.2cm) .. (z) node [sloped,allow upside down,pos=0.5] {\arrowIn[red]}
                            ;
\fill[white] (.55,-.35) circle (.07cm);
\fill[white] (1.14,0) circle (.07cm);
\fill[white] (1.7,.35) circle (.07cm);
\fill[white] (2.3,.65) circle (.07cm);
\levinHexGrid[]{0}{0}{2}{3}{.75}{black}
}\]
 \caption{A path for a string operator on the hexagonal lattice.}
 \label{fig:stringOperator}
\end{figure}

The string operator $\sigma^s_p(\phi,\psi)$ is not defined on the entire Hilbert space of our lattice model, but only on the subspace where
\begin{enumerate}[label=(S\arabic*)]
 \item\label{S:Unexcited} 
 every lattice link along $p$, except perhaps the links at the endpoints of $p$, is unexcited, and
 \item\label{S:LabelledBy1}
 at the vertices of the initial and final links of $p$ which do not lie along $p$, the initial and final links are labelled by $1$.
\end{enumerate}
Condition \ref{S:Unexcited} means that string operators cannot pass through the locations of excitations of $A_\ell$ terms, although they may begin or end on such links.
Consequently, in the middle of a string operator, we may use we use the graphical calculus and \eqref{eq:fusionDecomp} to implement tensoring with $s$ and braiding over $s$ as linear combinations of operators on the individual vertex Hilbert spaces, as we did when defining $B_p^s$.
Consequently, when defining a string operator, we may take advantage of the graphical calculus away from the endpoints, using \eqref{eq:fusionDecomp} and the half-braiding of $s$ to rewrite the string operator as linear combinations of products of operators on the individual vertex Hilbert spaces, as we previously did for $B_p^s$.

Because the plaquette term permits deforming strings in $\mathcal{C}$ across plaquettes, if $p$ and $q$ are homotopic paths with the same endpoints, then $\sigma^s_p$ and $\sigma^s_q$ agree on ground states of our Hamiltonian \cite[Appendix C]{PhysRevB.71.045110}.
More generally, if $|\eta\rangle$ is a state containing localized excitations, then $\sigma^s_p(\phi,\psi)|\eta\rangle$ and $\sigma^s_q(\phi,\psi)|\eta\rangle$ differ by an application of certain terms of the half-braidings, depending on which excitations are crossed during a homotopy from $p$ to $q$; if the homotopy never passes through the location of an excitation, $\sigma^s_p(\phi,\psi)|\eta\rangle=\sigma^s_q(\phi,\psi)|\eta\rangle$.
The excitations created by these string operators are therefore topological, in the sense that the effect of moving excitations via string operators depends only on the topology of the movement, and not the exact path taken.

Condition \ref{S:LabelledBy1} means that on states where a string operator $\sigma^s_p(\phi,\psi)$ is nonzero, at the endpoints of $p$, we can view the two hexagonal plaquettes containing the final edge $\ell$ of $p$ as a single decagonal plaquette.
Applying $\sigma^s_p(\phi,\psi)$ then turns $\ell$ into an additional edge interior to the decagonal plaquette which supports the excitation, similar to the edges added to plaquettes in the extended Levin-Wen model of \cite{PhysRevB.97.195154}.
If $w$ is the final vertex along $p$, then basis states where $\sigma^s_p(\phi,\psi)$ is nonzero are those in the image of the projection
\[
\pi_{\ell,w}^1:
\mathcal{H}_w
\cong
\bigoplus_{a,b,c\in\Irr(\mathcal{X})}\mathcal{X}(ab\to c)\to\bigoplus_{a,c\in\Irr(\mathcal{X})}\mathcal{X}(a1\to c)
\]
onto the space of states where the morphism labeling $w$ assigns the simple $1$ to the edge $\ell$.
(See also Definition \ref{def:selectEdgeLabel}.)
However, as a convention, we define $\sigma^1_p=\id$, rather than as a product $\pi^1_{\ell,w}\pi^1_{m,v}$ to enforce condition \ref{S:LabelledBy1}.
This choice will later be justified by Lemma \ref{lem:restoration}.

At the initial vertex of $p$, we define $\sigma_p^s(\phi,\psi)$ by tensoring with $s$ along the link $\ell$ where $p$ begins, and composing with the morphism $\phi^\dag$.
At the final vertex while at the final vertex of $p$, we tensor with $s$, and then compose with the morphism $\psi$.
An example computation of a string operator in terms of operators local to each vertex appears in Figure \ref{fig:stringExpansion}.
If we let $\overline{p}$ denote the path $p$ with the orientation reversed, then by construction, our string operators satisfy
\begin{equation}
 \label{eq:stringOpReverse}
 \sigma_p^s(\phi,\psi)=\sigma_{\overline{p}}^{\overline{s}}\left(\overline{\psi},\overline{\phi}\right)\text{.}
\end{equation}

\begin{figure}
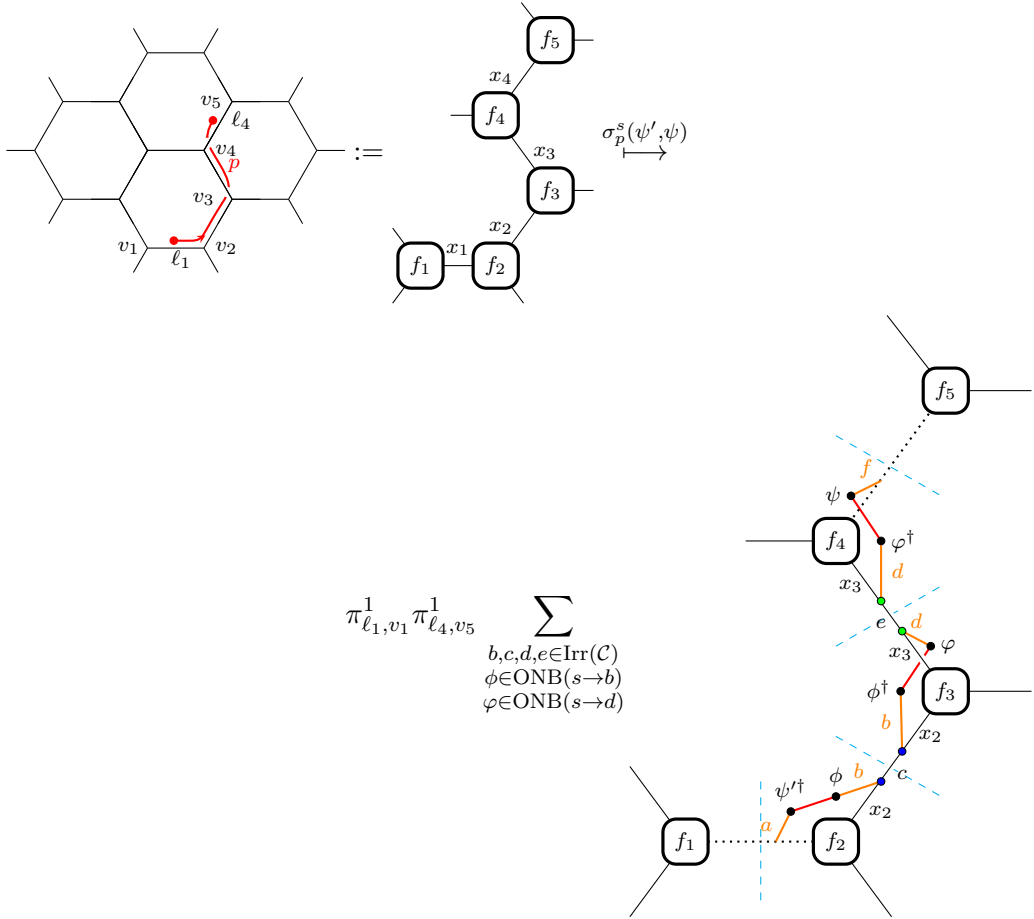

\begin{align*}
\tikzmath{
\coordinate (a) at (1.1,-1.2);
\coordinate (b) at (1.62,-.9);
\coordinate (c) at (1.72,-.2);
\coordinate (z) at (1.62,.4);
\filldraw[red] (a) circle (.05cm);
\filldraw[red] (z) circle (.05cm);
\draw[red, thick] (a) .. controls ++(0:.4cm) and ++(-120:.4cm) .. (b) node [sloped,allow upside down,pos=0.5] {\arrowIn[red]}
                            .. controls ++(60:.4cm) and ++(-60:.4cm) .. (c)
                            .. controls ++(120:.3cm) and ++(-120:.3cm) .. (z) node [sloped,allow upside down,pos=0.5] {\arrowIn[red]}
                            ;
\node[red] at (1.9,-.2) {$\scriptstyle p$};
\fill[white] (1.8,-.55) circle (.07cm);
\fill[white] (1.55,.1) circle (.07cm);
\levinHexGrid[]{0}{0}{2}{2}{.75}{black}
\node at (.5,-1.3) {$\scriptstyle v_1$};
\node at (1.2,-1.45) {$\scriptstyle \ell_1$};
\node at (1.8,-1.3) {$\scriptstyle v_2$};
\node at (1.5,-.65) {$\scriptstyle v_3$};
\node at (1.8,0) {$\scriptstyle v_4$};
\node at (2,.4) {$\scriptstyle \ell_4$};
\node at (1.6,.65) {$\scriptstyle v_5$};
}
&:=
\tikzmath{
\draw (0,0) --node[above]{$\scriptstyle x_1$} (1,0) -- node[left]{$\scriptstyle x_2$} (1.73,1) -- node[right]{$\scriptstyle x_3$} (1,2) -- node[left]{$\scriptstyle x_4$} (1.73,3);
\draw (-.37,-.5) -- (0,0);
\draw (-.37,.5) -- (0,0);
\draw (1.37,-.5) -- (1,0);
\draw (2.3,1) -- (1.73,1);
\draw (.4,2) -- (1,2);
\draw (2.3,3) -- (1.73,3);
\draw (1.34,3.5) -- (1.73,3);
\roundNbox{fill=white}{(0,0)}{.3}{0}{0}{$\scriptstyle f_1$}
\roundNbox{fill=white}{(1,0)}{.3}{0}{0}{$\scriptstyle f_2$}
\roundNbox{fill=white}{(1.73,1)}{.3}{0}{0}{$\scriptstyle f_3$}
\roundNbox{fill=white}{(1,2)}{.3}{0}{0}{$\scriptstyle f_4$}
\roundNbox{fill=white}{(1.73,3)}{.3}{0}{0}{$\scriptstyle f_5$}
}
\overset{\sigma^s_p(\psi',\psi)}{\longmapsto}
\\
&
 \pi^1_{\ell_1,v_1}\pi^1_{\ell_4,v_5}
\sum_{\substack{
b,c,d,e\in \Irr(\cC)
\\
\phi \in \ONB(s\to b)
\\
\varphi \in \ONB(s\to d)
}}
\tikzmath[scale=2]{
\draw[dashed, cyan] (.5,.4) -- (.5,-.4);
\draw[dashed, cyan] (1,.7) -- (1.7,.3);
\draw[dashed, cyan] (1,1.3) -- (1.7,1.7);
\draw[dashed, cyan] (1,2.7) -- (1.7,2.3);
\coordinate (begin) at (.6,0);
\coordinate (b1) at (1.3,.4);
\coordinate (b2) at (1.44,.6);
\coordinate (g1) at (1.44,1.4);
\coordinate (g2) at (1.3,1.6);
\coordinate (phi1) at (1,.3);
\coordinate (phi2) at (1.43,1);
\coordinate (phi3) at (1.63,1.3);
\coordinate (phi4) at (1.3,2);
\coordinate (psi1) at (.7,.2);
\coordinate (psi2) at (1.1,2.3);
\coordinate (end) at (1.3,2.4);
\draw[thick, red] (psi1) -- (phi1);
\draw[thick, red] (phi2) -- (phi3);
\draw[thick, red] (phi4) -- (psi2);
\draw[dotted, thick] (0,0) -- (1,0);
\draw[knot] (1,0) -- (1.73,1) -- (1,2);
\draw[dotted, thick] (1,2) -- (1.73,3);
\draw (-.37,-.5) -- (0,0);
\draw (-.37,.5) -- (0,0);
\draw (1.37,-.5) -- (1,0);
\draw (2.3,1) -- (1.73,1);
\draw (.4,2) -- (1,2);
\draw (2.3,3) -- (1.73,3);
\draw (1.34,3.5) -- (1.73,3);
\draw[thick, orange] (psi1) -- node[left]{$\scriptstyle a$} (begin) ;
\draw[thick, orange] (phi1) -- node[above]{$\scriptstyle b$} (b1);
\draw[thick, orange] (phi2) -- node[left]{$\scriptstyle b$} (b2);
\draw[thick, orange] (phi3) -- node[above]{$\scriptstyle d$} (g1);
\draw[thick, orange] (phi4) -- node[right]{$\scriptstyle d$} (g2);
\draw[thick, orange] (psi2) -- node[above]{$\scriptstyle f$} (end);
\filldraw[fill=blue] (b1) circle (.025cm);
\filldraw[fill=blue] (b2) circle (.025cm);
\filldraw[fill=green] (g1) circle (.025cm);
\filldraw[fill=green] (g2) circle (.025cm);
\filldraw (phi1) circle (.025cm) node[above]{$\scriptstyle \phi$};
\filldraw (phi2) circle (.025cm) node[left]{$\scriptstyle \phi^\dag$};
\filldraw (phi3) circle (.025cm) node[right]{$\scriptstyle \varphi$};
\filldraw (phi4) circle (.025cm) node[right]{$\scriptstyle \varphi^\dag$};
\filldraw (psi1) circle (.025cm) node[above]{$\scriptstyle \psi'^\dag$};
\filldraw (psi2) circle (.025cm) node[left]{$\scriptstyle \psi$};
\roundNbox{fill=white}{(0,0)}{.15}{0}{0}{$\scriptstyle f_1$}
\roundNbox{fill=white}{(1,0)}{.15}{0}{0}{$\scriptstyle f_2$}
\roundNbox{fill=white}{(1.73,1)}{.15}{0}{0}{$\scriptstyle f_3$}
\roundNbox{fill=white}{(1,2)}{.15}{0}{0}{$\scriptstyle f_4$}
\roundNbox{fill=white}{(1.73,3)}{.15}{0}{0}{$\scriptstyle f_5$}
\node at (1.3,.2) {$\scriptstyle x_2$};
\node at (1.44,.45) {$\scriptstyle c$};
\node at (1.63,.7) {$\scriptstyle x_2$};
\node at (1.43,1.25) {$\scriptstyle x_3$};
\node at (1.3,1.45) {$\scriptstyle e$};
\node at (1.1,1.7) {$\scriptstyle x_3$};
}
\end{align*}
 \caption{
  \label{fig:stringExpansion}
  A string operator is resolved into local operators on individual vertex Hilbert spaces.
  The morphism $f_i$ labels the vertex $v_i$, and the object $x_i$ labels the link $\ell_i$.
  The effect of applying $\pi^1_{\ell_1,v_1}$ is to ensure $x_1=1$, and the effect of applying $\pi^1_{\ell_4,v_5}$ is to ensure $x_4=1$.
  We represent the object 1 on the right hand side by dotted edges.
  The entire edge is dotted, rather than just the part parallel to the string, because a string operator ending at $\ell$ is only defined on the ground state of $A_\ell$.
 }
\end{figure}

\begin{rem}
 Condition \ref{S:LabelledBy1} may appear unnatural, especially since, without redefining $\sigma^1_p$ by special case, string operators $\sigma^1_p$ corresponding to the vacuum would actually excite ground states of our model.
 This is not as bad as it seems, because Lemma \ref{lem:restoration} will show that in situations where the removed edge $\ell$ of the plaquette $q$ where a string operator terminates does not host an excitation, applying $B_r$, where $r$ is the plaquette which borders $q$ along $\ell$, will undo the effects of $\pi^1_{\ell,w}$ and restore the ground state.
 Aside from states obtained from string operators $\sigma^1$, this can occur when two string operators have created anti-particles at the same location, leaving the vacuum as one of the fusion channels.
 The benefit of imposing condition \ref{S:LabelledBy1} is that we can apply the results of \cite{PhysRevB.97.195154}.

 This approach is not the only option, though.
 One could instead extend string operators to the whole ground space of $A_\ell$ terms, moving the applications of $\pi_1$ and $B_p$ to definitions of hopping operators and tube-algebra representations.
 Alternatively, one could view string operators as actually changing the lattice.
 The results of \cite{PhysRevB.85.075107}, Lemma \ref{lem:restoration}, and indeed, the original conceptual description of the string-net in \cite{PhysRevB.71.045110}, support the point of view that the particular choice of lattice is not important, as does the fact that the topological field theory describing the low-energy behavior of the string-net model is topological, i.e., depending only the choice of manifold.
 We are so strict about working in a single consistent Hilbert space only because the results Proposition \ref{prop:tubeActionCorrect} and Theorem \ref{thm:correctMTC} are part of the work of checking that our lattice model for anyon condensation realizes the expected topological phases.
\end{rem}

\begin{rem}
 \label{rem:stringOperatorDesign}
Many articles, such as \cite{PhysRevB.97.195154,PhysRevB.103.195155},
only consider $\sigma^s_p(\psi,\psi)$ where $\psi$ is the sum over an orthonormal basis.
One advantage of our approach is that in \S\ref{ssec:tubeImplementation} below, we will recover the entire $\Tube(\mathcal{X})$ representation from a choice of $X\in\Irr(Z(\mathcal{X}))$, rather than simply computing which minimal central projection preserves a certain state.
 This will demonstrate that any excited state containing finitely many excitations which are separated from one another can be achieved via linear combinations of string operators.
\end{rem}

\begin{rem}
 \label{rem:nonsimpleStringOperators}
 In defining the operator $\sigma^s_p(\phi,\psi)$, we do not make use of the fact that $s\in Z(\mathcal{X})$ is a simple object.
 However, there are some reasons to do so.
 First, if $s$ is simple and $p$ crosses any links of the lattice, then $\sigma^s_p(\phi,\psi)$ is always nonzero.
 This can be shown using the tube algebra techniques introduced in \S~\ref{ssec:tubeImplementation} below.
 If $\sigma^s_p(\phi,\psi)=0$, then by an application of Proposition \ref{prop:tubeActionCorrect}, $\phi$ and $\psi$ would lie in orthogonal summands of the tube algebra representation $\rho_s$ introduced in \eqref{eq:tubeRep}, showing that $s$ is decomposable in $Z(\mathcal{X})$.

 Also, if some isotypic component in $Z(\mathcal{X})$ of $s$ is not simple, e.g. $s\cong t\oplus t$ where $t\in\Irr(Z(\mathcal{X}))$, then different choices of $(\phi,\psi)$ will produce the same string operator.
 For this reason, the hopping operators defined below in \S\ref{sssec:hopping} do not make sense for such $s$, and we only define them for simple objects in $Z(\mathcal{X})$.
\end{rem}

\subsubsection{String Operators on Excited States}
\label{sssec:excitedStringOperators}
As defined above, the string operators $\sigma^s_p(\phi,\psi)$ only make sense on states which are locally the ground state near $p$.
We can also define string operators on states where the one or both endpoints of $p$ host an anyon.
Our construction will make use of the local tube algebra action described in \S\ref{ssec:tubeImplementation} below,
but the use of the tube algebra action reduces to the proofs of Corollary \ref{cor:tubeKetbra}, which the reader may presently treat as a black box.
The statement of Corollary \ref{cor:tubeKetbra} is technical, but the upshot is that there are local operators which can detect and change the choice of $\psi$ in $\sigma_p(\psi',\psi)$.
Indeed, since anyon types are by definition superselection sectors under the action of such local operators \cite{cond-mat/0506438}, Corollary \ref{cor:tubeKetbra} must hold so long as we have correctly identified the UMTC $Z(\cX)$ and string operators $\sigma^s$ associated to each $s\in\Irr(Z(\cX))$.
One should therefore view the Corollary as justifying the definition of $\sigma^s$.

\begin{cor*}[\ref{cor:tubeKetbra}]
 For any link $\ell$, vertex $v$ of $\ell$, anyon $s\in\Irr(Z(\mathcal{X})$, and morphisms $\phi,\psi\in\bigoplus_x\mathcal{X}(s\to x)$, there is a local operator $T^s_{\ell,v}(\phi,\psi)$ such that, if $|\omega\rangle=A_\ell|\omega\rangle=B_r|\omega\rangle$ for plaquettes $r$ containing $\ell$, then
 \[
  T^s_{\ell,v}(\phi,\psi)\sigma^s_p(\eta',\eta)|\omega\rangle=\langle\phi|\eta\rangle\sigma^s_p(\eta',\psi)|\omega\rangle\text{,}
 \]
 and if $t\neq s$,
 \[T^s_{\ell,v}(\phi,\psi)\sigma^t_p(\eta',\eta)|\omega\rangle=0\text{.}\]
\end{cor*}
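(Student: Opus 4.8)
The plan is to realize $T^s_{\ell,v}(\phi,\psi)$ as the local action of a single, carefully chosen element of $\Tube(\cX)$, namely the matrix unit $|\psi\rangle\langle\phi|$ sitting in the block indexed by $s$. Everything rests on Proposition \ref{prop:tubeActionCorrect}, which I take as given: it provides a local action $a\mapsto T_{\ell,v}(a)$ of $\Tube(\cX)$ at the site $(\ell,v)$ and identifies it with the representation $\rho_s$ of \eqref{eq:tubeRep} on $V_s:=\bigoplus_x\cX(s\to x)$, in the sense that $T_{\ell,v}(a)\,\sigma^s_p(\eta',\eta)|\omega\rangle=\sigma^s_p(\eta',\rho_s(a)\eta)|\omega\rangle$ whenever $|\omega\rangle$ is a ground state near $\ell$. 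I would first invoke the $\Cstar$-algebra structure of $\Tube(\cX)$: being finite dimensional, it decomposes by Artin--Wedderburn as $\Tube(\cX)\cong\bigoplus_{t\in\Irr(Z(\cX))}\End(V_t)$ via $\bigoplus_t\rho_t$, with each $\rho_t$ irreducible, which is exactly the equivalence $Z(\cX)\simeq\Rep(\Tube(\cX))$. In particular $\rho_s$ is surjective onto $\End(V_s)$, and there are central projections $z_t$ onto the summands.

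The key step is to choose the tube element. Let $a_{\phi,\psi}\in\Tube(\cX)$ be the element whose image under $\rho_s$ is the rank-one operator $|\psi\rangle\langle\phi|\in\End(V_s)$ and whose image under $\rho_t$ is $0$ for every $t\neq s$; such an element exists precisely because the decomposition above is a direct sum of algebras, so one may take $z_s$ times any $\rho_s$-preimage of $|\psi\rangle\langle\phi|$. I then define $T^s_{\ell,v}(\phi,\psi):=T_{\ell,v}(a_{\phi,\psi})$, which is local because the tube action is supported at $(\ell,v)$.

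Verifying the two identities is then immediate from the intertwining property. For the anyon $s$ we have $\rho_s(a_{\phi,\psi})\eta=\langle\phi|\eta\rangle\,\psi$, so $T^s_{\ell,v}(\phi,\psi)\sigma^s_p(\eta',\eta)|\omega\rangle=\sigma^s_p(\eta',\langle\phi|\eta\rangle\psi)|\omega\rangle=\langle\phi|\eta\rangle\,\sigma^s_p(\eta',\psi)|\omega\rangle$, using linearity of $\sigma^s_p(\eta',-)$ in its last argument. For $t\neq s$, the vanishing $\rho_t(a_{\phi,\psi})=0$ forces $T^s_{\ell,v}(\phi,\psi)\sigma^t_p(\eta',\eta)|\omega\rangle=\sigma^t_p(\eta',0)|\omega\rangle=0$.

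The main obstacle is not in the corollary's bookkeeping but in the input I am borrowing: the content lies in Proposition \ref{prop:tubeActionCorrect} identifying the string-operator endpoint datum $\eta$ with a genuine vector of the irreducible representation $\rho_s$, and showing that the local tube operators act on $\sigma^s_p(\eta',\eta)$ precisely through $\rho_s$ on that slot. Once that intertwining is in hand---together with irreducibility of $\rho_s$ for simple $s$, which guarantees that \emph{every} matrix unit is realized---the only remaining care is the simultaneous condition across blocks, namely that the same $a_{\phi,\psi}$ both produces $|\psi\rangle\langle\phi|$ on the $s$-block and annihilates all others; this is supplied by the central projection $z_s$. A minor point to check is that the hypotheses $A_\ell|\omega\rangle=|\omega\rangle$ and $B_r|\omega\rangle=|\omega\rangle$ for plaquettes $r$ containing $\ell$ are exactly what place $\sigma^s_p(\eta',\eta)|\omega\rangle$ in the domain where the intertwining of Proposition \ref{prop:tubeActionCorrect} applies.
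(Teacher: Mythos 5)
Your proposal is correct and is essentially identical to the paper's own proof: the paper also invokes the multimatrix (Wedderburn) decomposition $\Phi:\Tube(\cX)\to\bigoplus_t\End(\rho_t)$ and defines $T^s_{\ell,v}(\phi,\psi):=\Phi^{-1}(|\psi\rangle\langle\phi|)\rhd-$, with the two identities following as a case of Proposition \ref{prop:tubeActionCorrect}. Your explicit mention of the central projection $z_s$ and the block-by-block verification merely unpacks what the paper leaves implicit in the single map $\Phi^{-1}$.
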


Based on our construction of string operators, the location of an excitation in a state $|\phi\rangle$ actually consists of three pieces of information: a link $\ell$, a plaquette $r$ containing $\ell$, and a vertex $v$ of $\ell$ such that $\pi^1_{\ell,v}|\phi\rangle=|\phi\rangle$.
Suppose that $p$ is the potential path of a string operator, as in Figure \ref{fig:stringOperator}, with endpoints $(\ell_0,r_0,v_0)$ and $(\ell_1,r_1,v_1)$.
Suppose that $|\phi\rangle$ is a state containing excitations of types $t_0$ and $t_1$ at the endpoints of $p$, so that $\pi_{\ell_0,v_0}^1|\phi\rangle$ and $\pi_{\ell_1,v_1}^1|\phi\rangle$ are nonzero, and no other excitations near $p$.
Given morphisms $\psi_0\in\mathcal{X}(t_0s\to x)$ and $\psi_1\in\mathcal{X}(t_1s\to y)$, where $x,y\in\Irr(\mathcal{X})$, we can define a string operator
\[\sigma^s_p[t_0,t_1](\psi_0,\psi_1)\]
which creates excitations of types $\overline{s}$ and $s$ at the two endpoints.
In the middle of $p$, the definition is the same as for $\sigma^s_p$, and is once again independent of the choices of $t_i$ and $\psi_i$; at the endpoints of $p$, we use the chosen $t_i$ and $\psi_i$ to end the string on the lattice.
Explicitly, we define
\[\sigma^s_p[t_0,t_1](\psi_0,\psi_1)=\sum_{\substack{x,y\in\Irr(\mathcal{X})\\\eta_0\in\ONB(t_0,x)\\\eta_1\in\ONB(t_1,y)}}S^s_p[\eta_0,\eta_1](\psi_0,\psi_1)T^{t_0}_{\ell_0,v_0}(\eta_0,\eta_0)T^{t_1}_{\ell_1,v_1}(\eta_1,\eta_1)\text{,}\]
where the $T$ operators are the local operators from Corollary \ref{cor:tubeKetbra},
and an explicit description of $S$ appears in Figure \ref{fig:excitedStringExpansion}.

\begin{figure}
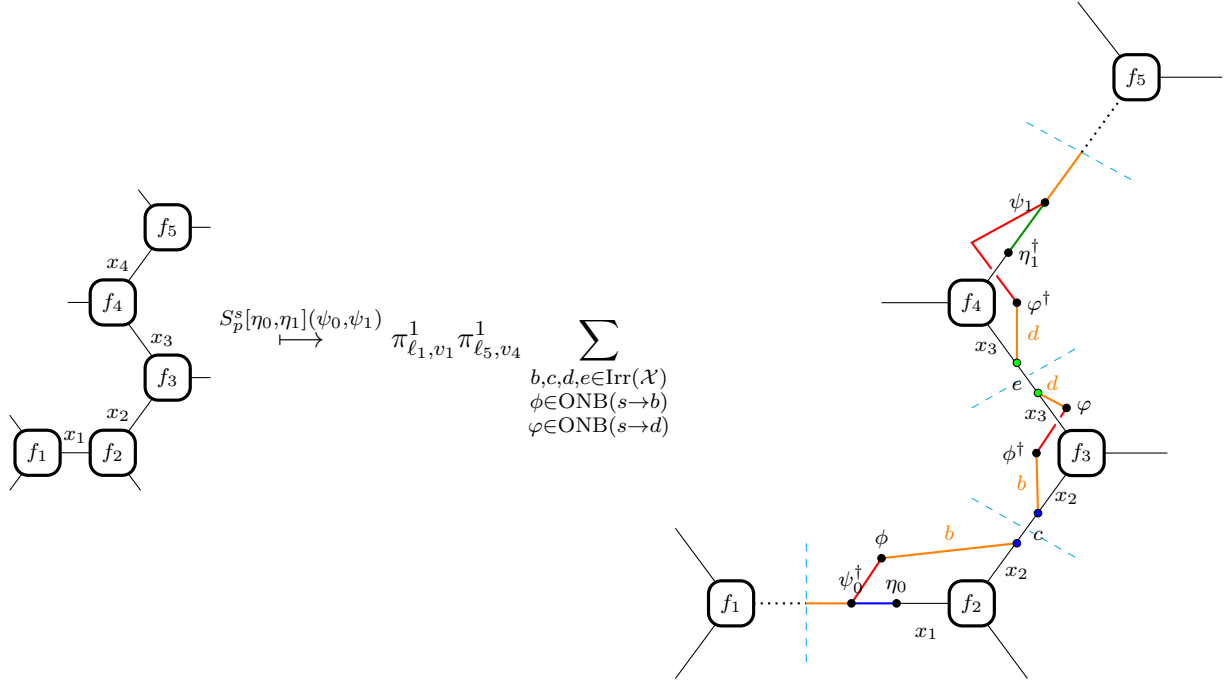

 \[
  \tikzmath{
   \draw (0,0) --node[above]{$\scriptstyle x_1$} (1,0) -- node[left]{$\scriptstyle x_2$} (1.73,1) -- node[right]{$\scriptstyle x_3$} (1,2) -- node[left]{$\scriptstyle x_4$} (1.73,3);
   \draw (-.37,-.5) -- (0,0);
   \draw (-.37,.5) -- (0,0);
   \draw (1.37,-.5) -- (1,0);
   \draw (2.3,1) -- (1.73,1);
   \draw (.4,2) -- (1,2);
   \draw (2.3,3) -- (1.73,3);
   \draw (1.34,3.5) -- (1.73,3);
   \roundNbox{fill=white}{(0,0)}{.3}{0}{0}{$\scriptstyle f_1$}
   \roundNbox{fill=white}{(1,0)}{.3}{0}{0}{$\scriptstyle f_2$}
   \roundNbox{fill=white}{(1.73,1)}{.3}{0}{0}{$\scriptstyle f_3$}
   \roundNbox{fill=white}{(1,2)}{.3}{0}{0}{$\scriptstyle f_4$}
   \roundNbox{fill=white}{(1.73,3)}{.3}{0}{0}{$\scriptstyle f_5$}
  }
  \overset{S^s_p[\eta_0,\eta_1](\psi_0,\psi_1)}{\longmapsto}\pi^1_{\ell_1,v_1}\pi^1_{\ell_5,v_4}
  \sum_{\substack{
  b,c,d,e\in \Irr(\cX)
  \\
  \phi \in \ONB(s\to b)
  \\
  \varphi \in \ONB(s\to d)
  }}
  \tikzmath[scale=2]{
   \draw[dashed, cyan] (-.1,.4) -- (-.1,-.4);
   \draw[dashed, cyan] (1,.7) -- (1.7,.3);
   \draw[dashed, cyan] (1,1.3) -- (1.7,1.7);
   \draw[dashed, cyan] (1.365,3.2) -- (2.095,2.8);
   \coordinate (begin) at (.6,0);
   \coordinate (b1) at (1.3,.4);
   \coordinate (b2) at (1.44,.6);
   \coordinate (g1) at (1.44,1.4);
   \coordinate (g2) at (1.3,1.6);
   \coordinate (phi1) at (.4,.3);
   \coordinate (phi2) at (1.43,1);
   \coordinate (phi3) at (1.63,1.3);
   \coordinate (phi4) at (1.3,2);
   \coordinate (psi0) at (.2,0);
   \coordinate (psiP) at (1.0,2.4);
   \coordinate (psi1) at (1.4867,2.6667);
   \coordinate (eta0) at (.5,0);
   \coordinate (eta1) at (1.2434,2.3333);
   \coordinate (end) at (1.3,2.4);
   \draw[thick, red] (psi0) -- (phi1);
   \draw[thick, red] (phi2) -- (phi3);
   \draw[thick, red] (phi4) -- (psiP) -- (psi1);
   %
   \draw[dotted, thick] (-.6,0) -- (-.1,0);
   \draw[thick, orange] (-.1,0) -- (psi0);
   \draw[thick, blue] (psi0) -- (eta0);
   \draw (eta0) -- (1,0);
   \draw[knot] (1,0) -- (1.73,1) -- (1,2);
   \draw[knot] (1,2) -- (eta1);
   \draw[dotted, thick] (2.46,4);
   \draw (-.97,-.5) -- (-.6,0);
   \draw (-.97,.5) -- (-.6,0);
   \draw (1.37,-.5) -- (1,0);
   \draw (2.3,1) -- (1.73,1);
   \draw (.4,2) -- (1,2);
   \draw (2.665,3.5) -- (2.095,3.5);
   \draw (1.705,4) -- (2.095,3.5);
   \draw[thick,DarkGreen] (eta1) -- (psi1);
   \draw[thick, orange] (psi1) -- (1.73,3);
   \draw[dotted, thick] (1.73,3) -- (2.095,3.5);
   %
   \draw[thick, orange] (phi1) -- node[above]{$\scriptstyle b$} (b1);
   \draw[thick, orange] (phi2) -- node[left]{$\scriptstyle b$} (b2);
   \draw[thick, orange] (phi3) -- node[above]{$\scriptstyle d$} (g1);
   \draw[thick, orange] (phi4) -- node[right]{$\scriptstyle d$} (g2);
   %
   \filldraw[fill=blue] (b1) circle (.025cm);
   \filldraw[fill=blue] (b2) circle (.025cm);
   \filldraw[fill=green] (g1) circle (.025cm);
   \filldraw[fill=green] (g2) circle (.025cm);
   \filldraw (phi1) circle (.025cm) node[above]{$\scriptstyle \phi$};
   \filldraw (phi2) circle (.025cm) node[left]{$\scriptstyle \phi^\dag$};
   \filldraw (phi3) circle (.025cm) node[right]{$\scriptstyle \varphi$};
   \filldraw (phi4) circle (.025cm) node[right]{$\scriptstyle \varphi^\dag$};
   \filldraw (psi0) circle (.025cm) node[above]{$\scriptstyle \psi_0^\dag$};
   \filldraw (psi1) circle (.025cm) node[left]{$\scriptstyle \psi_1$};
   \filldraw (eta0) circle (.025cm) node[above]{$\scriptstyle \eta_0$};
   \filldraw (eta1) circle (.025cm) node[right]{$\scriptstyle \eta_1^\dag$};
   \roundNbox{fill=white}{(-.6,0)}{.15}{0}{0}{$\scriptstyle f_1$}
   \roundNbox{fill=white}{(1,0)}{.15}{0}{0}{$\scriptstyle f_2$}
   \roundNbox{fill=white}{(1.73,1)}{.15}{0}{0}{$\scriptstyle f_3$}
   \roundNbox{fill=white}{(1,2)}{.15}{0}{0}{$\scriptstyle f_4$}
   \roundNbox{fill=white}{(2.095,3.5)}{.15}{0}{0}{$\scriptstyle f_5$}
   \node at (.7,-.2) {$\scriptstyle x_1$};
   \node at (1.3,.2) {$\scriptstyle x_2$};
   \node at (1.44,.45) {$\scriptstyle c$};
   \node at (1.63,.7) {$\scriptstyle x_2$};
   \node at (1.43,1.25) {$\scriptstyle x_3$};
   \node at (1.3,1.45) {$\scriptstyle e$};
   \node at (1.1,1.7) {$\scriptstyle x_3$};
  }
 \]
 \caption{
  \label{fig:excitedStringExpansion}
  A string operator defined on excited states is resolved into local operators on individual vertex Hilbert spaces, as in Figure \ref{fig:stringExpansion}.
 }
\end{figure}

When $t_0=t_1=1$, we recover the string operators of the previous section.
Since the local operators $T$ also generate projections onto states with a particular type of anyon at each endpoint of $p$, namely
\[\sum_{\phi\in\ONB(s,x)}T_{\ell,v}^s(\phi,\phi)\text{,}\]
we can also define string operators for anyons of type $s$ which are applicable regardless of the type of anyon at each endpoint.
A general string operator of type $s$ is thus of the form
\[\sum_{t_0,t_1}\sigma^s_p[t_0,t_1](\psi_{t_0},\psi'_{t_1})\text{.}\]

Now that we have defined string operators on excited states, our string operators can fuse, according to the fusion of anyons in $Z(\cX)$.
That is, if $r,s,t\in\Irr(Z(\cX))$, $x_i\in\Irr(\cX)$, $\psi:rs\to t$, and $|\Omega\rangle$ is a ground state, then
\begin{equation}
 \label{eq:stringFusion}
 \sigma^r_p[s,s](\xi_0\circ\psi,\xi_1\circ\psi)\sigma^s_p(\eta_0,\eta_1)|\Omega\rangle=\sigma^t_p(\xi_0,\xi_1)|\Omega\rangle
\end{equation}
The strings fuse in the middle by associativity of the tensor product on $Z(\cX)$, analogous to the proof that $B_p$ is an idempotent, and it is straightforward to check the equalities at the endpoints of $p$.

\subsubsection{Hopping Operators}
\label{sssec:hopping}
Related to the string operators which create topological excitations are hopping operators \cite[\S\RN{5}.E]{PhysRevB.97.195154}, which move an existing topological excitation from one location to another.
In this section, for each $s\in\Irr(Z(\mathcal{X}))$ and each path $q$, we will define a hopping operator $h^s_q$ which sends states with an excitation of type $s$ at the initial location of $q$ to those with an excitation of type $s$ at the terminal location.
In other words, if $|\omega\rangle$ is a ground state, then $h^s_q\sigma^t_p(\psi',\psi)|\omega\rangle=0$ unless $s\cong t$ and the terminal link of $p$ is the initial link of $q$ (or $s\cong\overline{t}$ and $p$ and $q$ have the same terminal link, or either $s$ or $t$ is $1$), and $p$ and $q$ approach that link from the same vertex.
Moreover, our hopping operators will satisfy the relation
\begin{equation}
 \label{eq:hopping}
 h^s_q\sigma^s_p(\psi',\psi)=\sigma^s_{q\cdot p}(\psi',\psi)
\end{equation}
where $q\cdot p$ is the concatenation of the paths $q$ and $p$.

Hopping operators can be built up from the general string operators discussed in the previous section.
In order to do so, we make the following observation, which is an explicit description of the ``contraction of charges'' of \cite{PhysRevB.97.195154} in our setting.
\begin{lem}
 \label{lem:contractionOfCharges}
 Suppose $p$ and $q$ are two paths, such that $p$ ends where $q$ begins, adjacent to the plaquette $r$.
 Then
 \[\sum_{\phi\in B}B_r\sigma^s_p[1,\overline{s}](\psi,\ev)\sigma^s_q(\phi,\eta)=\sigma_{q\cdot p}(\psi,\eta)\text{,}\]
 where $B$ is an orthonormal basis of $\bigoplus_x\mathcal{X}(s\to x)$ and $\ev:\overline{s}s\to 1$ is part of the duality data of $\mathcal{X}$.
\end{lem}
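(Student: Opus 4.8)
The plan is to reduce the operator identity to a purely local computation at the point $P$ where $p$ terminates and $q$ begins, and then to recognize the local move there as the snake (zig-zag) identity for the duality $(\ev,\coev)$ of $s$, implemented on the lattice by the plaquette projector $B_r$. Along the interiors of $p$ and $q$ the three operators $\sigma^s_p[1,\overline{s}]$, $\sigma^s_q$, and $\sigma^s_{q\cdot p}$ are built from identical local data --- tensoring with $s$ and applying the half-braiding $\rho$ via \eqref{eq:fusionDecomp} --- and on states that are locally ground states there the two sides already agree link by link. (Recall that $p$ and $q$ are required to approach the shared link from the same vertex, so the two $s$-strands run parallel into $P$.) Hence it suffices to verify the equality in a neighborhood of $P$ and of the plaquette $r$, on the domain where all the string operators involved are defined.

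First I would unfold both string operators near $P$ using Figures \ref{fig:stringExpansion} and \ref{fig:excitedStringExpansion}. The operator $\sigma^s_q(\phi,\eta)$ lays an $s$-strand along $q$ and terminates it at $P$, creating an $\overline{s}$ excitation whose fusion datum is recorded by $\phi$; the operator $\sigma^s_p[1,\overline{s}](\psi,\ev)$ begins its own $s$-strand on top of that standing $\overline{s}$ and, through its terminal datum $\ev\colon\overline{s}s\to 1$, fuses the freshly created $s$ into the standing $\overline{s}$ down to the vacuum. The role of the outer sum $\sum_{\phi\in B}$ is precisely to resolve the identity $\sum_{\phi}|\phi\rangle\langle\phi|=\id$ on $\bigoplus_x\mathcal{X}(s\to x)$: combined with the detecting operator $T^{\overline{s}}_{\ell_1,v_1}$ built into $\sigma^s_p[1,\overline{s}]$ (via Corollary \ref{cor:tubeKetbra}), this matches the creation datum of $\sigma^s_q$ at $P$ against the annihilation datum of $\sigma^s_p[1,\overline{s}]$ and glues the two $s$-strands into a single strand that enters $P$ from $q$, turns back, and is capped against the incoming strand from $p$.

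Next I would apply $B_r$ to straighten this configuration. Since $B_r$ permits deformation of strings in $\mathcal{X}$ across the plaquette $r$ --- equivalently, since $B_r$ acts as the categorical straightening map $E^\dag E$ up to the factor $D$ --- the cup-cap created at $P$ can be pulled across $r$ and removed by the triangle identity $(\ev\otimes\id_s)\circ(\id_s\otimes\coev)=\id_s$. What remains is a single unbroken $s$-strand running along $q\cdot p$, with initial datum $\psi$ at the start of $p$ and terminal datum $\eta$ at the end of $q$; this is by definition $\sigma^s_{q\cdot p}(\psi,\eta)$. The endpoint data match because $\psi$ and $\eta$ are carried through the computation untouched, and only the intermediate $\overline{s}/s$ pair at $P$ participates in the contraction.

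The main obstacle I anticipate is the bookkeeping at $P$: one must check that the half-braiding factors $\rho$ accumulated by the two string operators as they approach $P$ compose consistently, so that no spurious braiding or phase survives the contraction, and that the sum over the orthonormal basis $B$ is exactly what converts the matched pair of $T$-operator insertions into the resolution of the identity demanded by the snake identity. Concretely this amounts to verifying that $B_r$ genuinely implements the straightening at the level of the vertex Hilbert spaces $\mathcal{H}_v$ surrounding $r$, rather than merely up to the graphical calculus; the self-adjoint $E^\dag E$ description of $B_r$ together with the rotational invariance of the inner product \eqref{eq:vertexIP} are the tools that make this step rigorous. Once the local move at $P$ is established, the global identity follows from the agreement of the three operators away from $P$ noted at the outset.
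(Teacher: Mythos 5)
Your proposal matches the paper's proof in both structure and substance: the paper likewise shows that the sum over the orthonormal basis glues the two strands at the junction---establishing $\pi^1_{\ell}\,\sigma^s_{q\cdot p}(\psi,\eta)=\pi^1_{\ell}\sum_{\phi\in B}\sigma^s_p[1,\overline{s}](\psi,\ev)\sigma^s_q(\phi,\eta)$ by explicit diagrammatic computation---and then removes the resulting kinked (cup-cap) configuration by applying $B_r$. The only cosmetic difference is that the paper discharges exactly the obstacle you anticipate (that $B_r$ implements the straightening on the actual lattice Hilbert spaces, not merely in the graphical calculus) by invoking Lemma \ref{lem:restoration}, which is precisely the statement your final paragraph is asking for.
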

\begin{proof}
 Let $\ell:v\to w$ be the link where $p$ ends and $q$ begins.
 Abbreviate $\pi^1_{\ell}:=\pi^1_{\ell,v}\pi^1_{\ell,w}$.
 From the definition of $\sigma$ and equation \eqref{eq:stringOpReverse}, we can compute that
 \[\pi^1_{\ell}\sigma^s_{q\cdot p}(\psi,\eta)=\pi^1_{\ell}\sum_{\phi\in B}\sigma^s_p[1,\overline{s}](\psi,\ev)\sigma^s_q(\phi,\eta)\text{.}\]

 Explicitly,
 \begin{align*}
\sigma^s_q(\phi,\eta)|\omega\rangle
&=
\tikzmath[scale=2]{
\coordinate (a) at (-50:.6cm);
\coordinate (b) at (-70:.6cm);
\coordinate (c) at (-110:.85cm);
\coordinate (d) at (-135:1.1cm);
\draw[dashed] (-60:.75) -- (0:.75);
\draw[red, thick] (a) .. controls ++(-150:.1cm) and ++(0:.1cm) .. (b) .. controls ++(180:.2cm) and ++(0:.2cm) .. (c) node[right, yshift=-.1cm]{$s$} node [sloped,allow upside down,pos=0.5] {\arrowIn[red]} -- node [sloped,allow upside down,pos=0.5] {\arrowIn[red]} (d);
\draw[orange, thick] (a) -- node[left, yshift=.15cm, xshift=.15cm]{$x$} ($ .5*(-60:.75) + .5*(0:.75) $);
\filldraw (a) node[left]{$\overline{\phi^\dag}$} circle (.025cm);
\fill[white] (-90:.65) circle (.04cm);
\fill[white] (-120:.9) circle (.04cm);
\levinHexOpen{0}{0}{.75}{black}{6}
\levinHexOpen{1.125}{-.6495}{.75}{black}{3}
\node at ($ .25*(-60:.75) + .75*(0:.75) + (.1,0)$) {$\ell$};
\node at (0:0) {$r$};
}
\\\displaybreak[1]
\pi^1_\ell\sigma_p^s[1,\overline{s}](\psi,\ev)\sigma_q^s(\phi,\eta)|\omega\rangle
&=
\tikzmath[scale=2]{
\coordinate (a) at (-50:.6cm);
\coordinate (b) at (-70:.6cm);
\coordinate (c) at (-110:.85cm);
\coordinate (d) at (-135:1.1cm);
\coordinate (x) at (.6, -.5);
\draw[dashed] (-60:.75) -- (0:.75);
\draw[red, thick] (b) .. controls ++(180:.2cm) and ++(0:.2cm) .. (c) node[right, yshift=-.1cm]{$s$} node [sloped,allow upside down,pos=0.5] {\arrowIn[red]} -- node [sloped,allow upside down,pos=0.5] {\arrowIn[red]} (d);
\draw[red, thick] (a) .. controls ++(30:.1cm) and ++(80:.2cm) .. (x);
\draw[orange, thick] (a) .. controls ++(-150:.1cm) and ++(0:.1cm) .. (b);
\filldraw (b) node[above]{$\overline{\phi^\dag}$} circle (.025cm);
\filldraw (a) node[above]{$\overline{\phi}$} circle (.025cm);
\fill[white] (-90:.65) circle (.04cm);
\fill[white] (-120:.9) circle (.04cm);
\coordinate (y) at (.55, -.8);
\coordinate (z) at (1.15, -1.2);
\draw[red, thick] (x) .. controls ++(-100:.1cm) and ++(120:.1cm) .. node [sloped,allow upside down,pos=0.5] {\arrowInR[red]} (y) .. controls ++(-60:.3cm) and ++(180:.5cm) .. (z) node[above]{$s$} -- node [sloped,allow upside down,pos=0.5] {\arrowInR[red]} ($ (z) + (1,0) $);
\fill[white] ($ (z) + (.4,0) $) circle (.04cm);
\levinHexOpen{0}{0}{.75}{black}{6}
\levinHexOpen{1.125}{-.6495}{.75}{black}{3}
\node at ($ .25*(-60:.75) + .75*(0:.75) + (.1,0)$) {$\ell$};
\node at (0:0) {$r$};
}
\\\displaybreak[1]
\pi^1_{\ell}\sum_{\phi\in B}\sigma_p^s[1,\overline{s}](\psi,\ev)\sigma_q^s(\phi,\eta)|\omega\rangle
&=
\tikzmath[scale=2]{
\coordinate (a) at (-50:.6cm);
\coordinate (b) at (-70:.6cm);
\coordinate (c) at (-110:.85cm);
\coordinate (d) at (-135:1.1cm);
\coordinate (x) at (.6, -.5);
\coordinate (y) at (.55, -.8);
\coordinate (z) at (1.15, -1.2);
\draw[dashed] (-60:.75) -- (0:.75);
\draw[red, thick] (b) .. controls ++(180:.2cm) and ++(0:.2cm) .. node [sloped,allow upside down,pos=0.5] {\arrowIn[red]} (c) node[right, yshift=-.1cm]{$s$} -- node [sloped,allow upside down,pos=0.5] {\arrowIn[red]} (d);
\draw[red, thick] (b) .. controls ++(0:.3cm) and ++(120:.3cm) .. (y) .. node [sloped,allow upside down,pos=0.5] {\arrowInR[red]} controls ++(-60:.3cm) and ++(180:.5cm) .. (z) node[above]{$s$} -- node [sloped,allow upside down,pos=0.5] {\arrowInR[red]} ($ (z) + (1,0) $);
\fill[white] (-90:.65) circle (.04cm);
\fill[white] (-120:.9) circle (.04cm);
\fill[white] ($ (z) + (.4,0) $) circle (.04cm);
\levinHexOpen{0}{0}{.75}{black}{6}
\levinHexOpen{1.125}{-.6495}{.75}{black}{3}
\node at ($ .25*(-60:.75) + .75*(0:.75) + (.1,0)$) {$\ell$};
\node at (0:0) {$r$};
}
\end{align*}
Since $\sigma^s_{q\cdot p}$ does not create an excitation at $\ell$, we need only apply Lemma \ref{lem:restoration}.
\end{proof}

With these ingredients in hand, we define
\[h^s_q=\frac{1}{|B|}B_r\sum_{\phi,\eta\in B}\sigma^s_q[s,1](\coev^\dag,\eta)T^s_{\ell,v}(\eta,\phi)\text{.}\]
Observe that
\begin{align*}
 h^s_q\sigma^s_p(\psi',\psi) &= \frac{1}{|B|}B_r\sum_{\phi,\eta}\sigma^s_q[s,1](\coev^\dag,\eta)T^s_{\ell,v}(\eta,\phi)\sigma^s_p(\psi',\psi)\\
 &= \frac{1}{|B|}\sum_{\phi,\eta}\langle\eta|\psi\rangle B_r\sigma^s_q(\coev^\dag,\eta)\sigma^s_p(\psi',\phi)\text{,}\\\intertext{and applying Lemma \ref{lem:contractionOfCharges},}
 h^s_q\sigma^s_p(\psi',\psi) &= \frac{1}{|B|}\sum_{\phi,\eta}\langle\eta|\psi\rangle \sigma^s_{q\cdot p}(\psi',\eta)\\
 h^s_q\sigma^s_p(\psi',\psi) &= \sum_{\eta}\sigma^s_{q\cdot p}\left(\psi',\langle\eta|\psi\rangle\eta\right)\\
 h^s_q\sigma^s_p(\psi',\psi) &= \sigma^s_{q\cdot p}(\psi',\psi)\text{,}
\end{align*}
verifying Equation \eqref{eq:hopping}.
Thus, $h^s_q$ also transports the local information $\psi$ at the end of the string.
This observation is formalized as Corollary \ref{cor:hoppingIntertwiner}.

\subsection{Tube Algebra Representations from Excitations}
\label{ssec:tubeImplementation}
We will now describe the correspondence between localized excitations in the Levin-Wen model and representations of the tube algebra $\Tube(\mathcal{X})$ (defined below), by implementing $\Tube(\mathcal{X})$ as an algebra of local operators at the location of an excitation in the lattice model.
As shown in \cite{MR1782145,MR1966525}, representations of $\Tube(\mathcal{X})$correspond to objects in $Z(\mathcal{X})$, so this gives a direct means of assigning an object in $Z(\mathcal{X})$ to an localized excitation.
The correspondence between localized excitations and $Z(\mathcal{X})$ is well known, and the type of an excitation can also be determined by other means, such as braiding experiments \cite[\S~III.C]{MR2443722}.
The use of tube algebras to classify excitations in various models of topological phases has been described in \cite{MR3614057,PhysRevB.90.115119,PhysRevLett.124.120601,2110.06079,1709.01941,MR4051062}.
However, we are interested in giving a concrete description of a local tube algebra action and the relationship to string operators in the case of Levin-Wen models, because in Section \ref{ssec:condensedPhase}, we will repeat the process with an appropriate variant of the tube algebra in order to prove Theorem \ref{thm:correctMTC}.

\begin{defn}[{\cite[\S3]{MR1782145},\cite[Def 3.3]{MR3447719}}]
 \label{def:tubeAlgebra}
 Let $\mathcal{X}$ be a unitary fusion category.
 The \emph{tube algebra} $\Tube(\mathcal{X})$ of $\mathcal{X}$ has the underlying Hilbert space $\bigoplus_{x,y,c\in\Irr(\mathcal{X})}\mathcal{X}(xc\to cy)$.
 If $\phi:xc\to cy$ and $\psi:zd\to dw$, the product $\psi\cdot\phi$ is defined to be the linear extension
$$
\tikzmath[xscale=-1]{
\draw (-.2,-.7) node[right, yshift=.2cm]{$\scriptstyle z$} -- (-.2,0);
\draw (.2,-.7) node[left, yshift=.2cm]{$\scriptstyle d$} -- (.2,0);
\draw (-.2,.7) node[right, yshift=-.2cm]{$\scriptstyle d$} -- (-.2,0);
\draw (.2,.7) node[left, yshift=-.2cm]{$\scriptstyle w$} -- (.2,0);
\roundNbox{fill=white}{(0,0)}{.3}{.2}{.2}{$\psi$}
}
\cdot
\tikzmath[xscale=-1]{
\draw (-.2,-.7) node[right, yshift=.2cm]{$\scriptstyle x$} -- (-.2,0);
\draw (.2,-.7) node[left, yshift=.2cm]{$\scriptstyle c$} -- (.2,0);
\draw (-.2,.7) node[right, yshift=-.2cm]{$\scriptstyle c$} -- (-.2,0);
\draw (.2,.7) node[left, yshift=-.2cm]{$\scriptstyle y$} -- (.2,0);
\roundNbox{fill=white}{(0,0)}{.3}{.2}{.2}{$\phi$}
}
:=
\delta_{y=z}
\sum_{f\in \Irr(\cX)}
\tikzmath[xscale=-1]{
\draw (-.17,-.2) to[rounded corners=5] (-.45,.1) to[sharp corners] (-.45,1.1) -- (-.55,1.4);
\draw (-.15,.8) to[bend right=25] (-.45,1.1);
\draw (.17,.2) to[rounded corners=5] (.45,-.1) to[sharp corners] (.45,-1.1) -- (.55,-1.4);
\draw (.15,-.8) to[bend right=25] (.45,-1.1);
\draw[fill=red] (-.45,1.1) circle (.05cm);
\draw[fill=red] (.45,-1.1) circle (.05cm);
	\draw (0,-1.4) -- node[left, xshift=.1cm]{$\scriptstyle y$} (0,1.4);
\roundNbox{fill=white}{(0,.5)}{.3}{0}{0}{$\psi$};
\roundNbox{fill=white}{(0,-.5)}{.3}{0}{0}{$\phi$};
	\node at (-.58,1.6) {$\scriptstyle f$};
	\node at (.58,-1.6) {$\scriptstyle f$};
	\node at (-.25,1.25) {$\scriptstyle d$};
	\node at (.25,-1.25) {$\scriptstyle c$};
	\node at (-.58,.5) {$\scriptstyle c$};
	\node at (.58,-.5) {$\scriptstyle d$};
	\node at (0,1.6) {$\scriptstyle w$};
	\node at (0,-1.6) {$\scriptstyle x$};
}\,\text{,}
$$
where $\tikzmath{\draw[fill=red] (0,0) circle (.05cm);}$ runs over an orthonormal basis of $\bigoplus_f\mathcal{X}(f\to dc)$.
We define a $*$-structure on $\Tube(\mathcal{X})$ by
$$
\left(
\tikzmath[xscale=-1]{
\draw (-.2,-.7) node[right, yshift=.2cm]{$\scriptstyle x$} -- (-.2,0);
\draw (.2,-.7) node[left, yshift=.2cm]{$\scriptstyle c$} -- (.2,0);
\draw (-.2,.7) node[right, yshift=-.2cm]{$\scriptstyle c$} -- (-.2,0);
\draw (.2,.7) node[left, yshift=-.2cm]{$\scriptstyle y$} -- (.2,0);
\roundNbox{fill=white}{(0,0)}{.3}{.2}{.2}{$\phi$}
}
\right)^*
:=
 \tikzmath{
\draw (-.2,-.7) node[left, yshift=.2cm]{$\scriptstyle y$} -- (-.2,0);
\draw (.2,-.3) arc (-180:0:.3cm) -- node[right]{$\scriptstyle \overline{c}$} (.8,.7);
\draw (-.2,.3) arc (0:180:.3cm) -- node[left]{$\scriptstyle \overline{c}$} (-.8,-.7);
\draw (.2,.7) node[right, yshift=-.2cm]{$\scriptstyle x$} -- (.2,0);
\roundNbox{fill=white}{(0,0)}{.3}{.2}{.2}{$\phi^\dag$}
}\,.
$$

One interpretation of the structure of the tube algebra is that elements of the tube algebra are morphisms on tubes:
$$
\tikzmath{
\draw[thick] (-.7,-1) -- (-.7,1);
\draw[thick] (.7,-1) -- (.7,1);
\draw[thick] (0,-.3) -- (0,-1.3) node[below]{$\scriptstyle x$};
\draw[thick] (0,0) -- (0,.7) node[above]{$\scriptstyle y$};
\draw[very thick] (0,1) ellipse (.7cm and .3cm);
\halfDottedEllipse[very thick]{(-.7,-1)}{.7}{.3}
\halfDottedEllipse[]{(-.7,0)}{.7}{.3}
\roundNbox{fill=white}{(0,-.3)}{.3}{0}{0}{$\phi$};
\node at (.5,-.4) {$\scriptstyle c$};
}\,.
$$
\end{defn}
The multiplication of the tube algebra is then accomplished by stacking tubes and applying identity \eqref{eq:fusionDecomp} to the strings running around the circumference of the tube, while $\dag$ reflects a tube vertically.

The algebraic correspondence between irreducible representations of the tube algebra and simple objects of $Z(\mathcal{X})$ is worked out in \cite[\S4]{MR1782145}.
There is a natural mathematical way to define a $\dag$-representation $\rho_H$ of $\Tube(\mathcal{X})$ from an object $H\in Z(\mathcal{X})$.
The Hilbert space for the representation will be $\bigoplus_{x\in\Irr(\mathcal{X})}\mathcal{X}(H\to x)$.
For $m:H\to x$ and $\phi\in\mathcal{X}(cy\to zc)$, the action is given by
\begin{equation}
 \label{eq:tubeRep}
\rho_H(\phi)m=\delta_{x,y}\cdot\,
\tikzmath{
\draw (.3,-2) node[below]{$\scriptstyle H$} -- (.3,-.7) -- node[right]{$\scriptstyle x$} (.3,-.3);
\draw (-.3,.3) -- (-.3,.7) node[above]{$\scriptstyle z$};
\draw[knot] (.3,.3) node[left, yshift=.15cm]{$\scriptstyle c$} arc (180:0:.3cm) -- node[right]{$\scriptstyle \overline{c}$} (.9,-1) arc (0:-180:.6cm) -- node[left]{$\scriptstyle c$} (-.3,-.3);
\roundNbox{fill=white}{(0,0)}{.3}{.3}{.3}{$\phi$}
\roundNbox{fill=white}{(.3,-1)}{.3}{0}{0}{$m$}
}
\qquad\qquad\qquad
\forall\,m\in \cX(H\to x).
\end{equation}
This action is the one that appears in \cite[Lemma~{4.7.{\romannumeral 3}}]{MR1782145}.
Since $\im(\rho_H(\id_x))=\mathcal{X}(H\to x)$, the object $H$ can be recovered from $\rho_H$.
Since $\phi\in\mathcal{X}(cy\to zc)$ is a sum of morphisms of the form $f\circ g$, where $f\in\mathcal{X}(cy\to w)$ and $g\in\mathcal{X}(w\to zc)$, the representation $\rho_H$ contains all the data needed to recover the half-braiding on $H$.
One computes the half-braiding by the following equation.
\begin{equation}
 \label{eq:tubeRepToZC}
 \tikzmath{
  \draw (0,-.6) node[below]{$\scriptstyle w$} -- (0,0);
  \draw (0,3) -- (0,3.9) node[above]{$\scriptstyle w$};
  \draw (.5,.3) -- node[right]{$\scriptstyle y$} (.5,.7) -- (.5,1.3) .. controls ++(90:.5cm) and ++(90:-.5cm) .. (-.5,2) --(-.5,2.6) -- node[left]{$\scriptstyle z$} (-.5,3);
  \draw[knot] (-.5,.3) -- node[left]{$\scriptstyle c$} (-.5,1.3) .. controls ++(90:.5cm) and ++(90:-.5cm) .. (.5,2) -- (.5,3);
  \node at (.7,1.5) {$\scriptstyle H$};
  \roundNbox{fill=white}{(0,0)}{.3}{.4}{.4}{$g$}
  \roundNbox{fill=white}{(.5,1)}{.3}{0}{0}{$n^\dag$}
  \roundNbox{fill=white}{(-.5,2.3)}{.3}{0}{0}{$m$}
  \roundNbox{fill=white}{(0,3.3)}{.3}{.4}{.43}{$f$}
 }
 :=\langle n|(g\circ f)\rhd m\rangle\id_w.
\end{equation}
In fact, $\rho_\bullet$ can be extended to a monoidal equivalence $Z(\mathcal{X})\to\Rep(\Tube(\mathcal{X}))$.

Suppose that, in the state $|\phi\rangle$, an excitation is located in a plaquette $p$ near a link $\ell$, as in
$$
|\phi\rangle=\tikzmath{
\coordinate (a) at (-30:.53cm);
\coordinate (b) at (-70:.6cm);
\coordinate (c) at (-110:.85cm);
\coordinate (d) at (-135:1.1cm);
\filldraw[red] (a) circle (.05cm);
\draw[red, thick] (a) .. controls ++(-135:.1cm) and ++(0:.15cm) .. (b)
                            .. controls ++(180:.2cm) and ++(0:.2cm) .. (c) -- (d);
\fill[white] (-90:.65) circle (.07cm);
\fill[white] (-120:.9) circle (.07cm);
\levinHexGrid[]{0}{0}{2}{1}{.75}{black}
\node at (0,0) {$\scriptstyle p$};
\node at (-30:.8) {$\scriptstyle \ell$};
\node at (-30:1.4) {$\scriptstyle q$};
\node at (-75:.9) {$\scriptstyle v$};
\node at (15:.9) {$\scriptstyle w$};
}\,.
$$
Here, the red string depicts a string operator which could be applied to the ground state to obtain $|\phi\rangle$.
Further suppose\footnote{These assumptions are not necessary, but greatly simplify the details.} that the Hamiltonian terms for other links of $p$ and plaquettes adjacent to $p$ other than $q$ are not excited in $|\phi\rangle$, meaning that $|\phi\rangle$ contains an isolated excitation at $(p,\ell)$.
(This excitation could be trivial, {i.e.} the vacuum; the key word here is ``isolated.'')

There is not a straightforward action of $\Tube(\mathcal{X})$ on the space of states with isolated excitations at $(p,\ell)$,
but we can construct an action on the image of $\pi^1_{\ell,w}$, where $w$ is either vertex of $\ell$.
As described in \S\ref{ssec:stringOperators}, the effect of applying $\pi^1_{\ell,w}$ is to replace the plaquettes $p$ and $q$ adjacent to $\ell$ with a decagonal plaquette $p\vee q$, with an link leading inside the plaquette where an excitation may be supported.
The action of $\Tube(\mathcal{X})$ on excitations inside $p\vee q$ is now as described in \cite[\S\RN{5}.A]{PhysRevB.97.195154}:
when acting by an element of $\cX(xc\to cy)\subseteq\Tube(\cX)$, the action is given by
\begin{equation}
 \label{eq:HGWaction}
\tikzmath{
\coordinate (a) at (-30:.53cm);
\coordinate (b) at (-70:.6cm);
\coordinate (c) at (-110:.85cm);
\coordinate (d) at (-135:1.1cm);
\filldraw[red] (a) circle (.05cm);
\draw[red, thick] (a) .. controls ++(-135:.1cm) and ++(0:.15cm) .. (b)
                            .. controls ++(180:.2cm) and ++(0:.2cm) .. (c) -- (d);
\fill[white] (-90:.65) circle (.07cm);
\fill[white] (-120:.9) circle (.07cm);
\levinHexOpen{0}{0}{.75}{black}{6}
\levinHexOpen{1.125}{-.6495}{.75}{black}{3}
\node at (210:1.299) {$\scriptstyle r$};
\node at (210:.8) {$\scriptstyle k$};
\node at (0:0) {$\scriptstyle p$};
\node at (-30:1.299) {$\scriptstyle q$};
}
\qquad
\overset{\phi \rhd -}{\longmapsto}
 \frac{d_y^{1/2}}{d_x^{1/2}}
\qquad
\tikzmath{
\coordinate (a) at (-30:.53cm);
\coordinate (b) at (-70:.6cm);
\coordinate (c) at (-110:.85cm);
\coordinate (d) at (-135:1.1cm);
\filldraw[red] (a) circle (.05cm);
\draw[red, thick] (a) .. controls ++(-135:.1cm) and ++(0:.15cm) .. (b)
                            .. controls ++(180:.2cm) and ++(0:.2cm) .. (c) -- (d);
\arHex[6]{0}{0}{.5}{blue}{3}{0}
\arHex[3]{1.125}{-.6495}{.5}{blue}{3}{0}
\fill[white] (-90:.65) circle (.07cm);
\fill[white] (-120:.9) circle (.07cm);
\draw[blue] (300:.5) -- (.625,-.6495);
\draw[blue] (0:.5) -- (0.875,-.2165);
\fill[black] (305:.6) circle (.07cm); 
\levinHexOpen{0}{0}{.75}{black}{6}
\levinHexOpen{1.125}{-.6495}{.75}{black}{3}
\node at (210:1.299) {$\scriptstyle r$};
\node at (210:.8) {$\scriptstyle k$};
\node at (0:0) {$\scriptstyle p$};
\node at (-30:1.299) {$\scriptstyle q$};
}
\end{equation}
where the vertex between the red and blue strands represents the morphism $\phi$.
The constant factor is required to make the $\Tube(\cX)$ representation a $\dag$-representation.
Because the excitations $Z(\cX)$ form a UMTC, and because an UMTC has a unique unitary structure \cite{MR4538281}, we do not need this fact here, and therefore leave the verification to a forthcoming article.

As we did when defining $B_p$, we use Equation \eqref{eq:fusionDecomp} to rewrite the above diagram as a sum of diagrams where all strings other than lattice links are local to a particular vertex, and then compute the effect on each vertex using the data of $\mathcal{X}$ and $Z(\mathcal{X})$.
One verifies that the action of $\Tube(\mathcal{X})$ is associative, essentially for the same reason that the plaquette operator $B_p$ is an idempotent; the action is manifestly unital.

Evidently, we have defined a representation of $\Tube(\mathcal{X})$ on $\pi^1_{\ell,w}\mathcal{H}$; we will now explain why this representation is faithful, and behaves as expected on excited states obtained via the application of string operators.
First, we establish a technical lemma about states in Levin-Wen models.
Essentially, this lemma shows that if a state $|\phi\rangle$ does \textit{not} contain an excitation along an edge $\ell$, then applying $\pi^1_\ell$ and removing $\ell$ is invertible on the space of states which agree with a ground state near $\ell$; {cf.} \cite{PhysRevB.85.075107}, where it is shown that mutations of trivalent lattices are unitary on ground states.

\begin{defn}
 \label{def:selectEdgeLabel}
For $\ell$ a link in our lattice from $v$ to $w$ and $x\in \Irr(\cX)$,
we define the orthogonal projection $\pi^x_{\ell,w}$ to project onto the subspace of $\cH$ of states which assign $X$ to $\ell$ at $w$, i.e., at the tensorand $\cH_w$, we have
$$
\pi_{\ell,w}^x:
\mathcal{H}_w
\cong
\bigoplus_{a,b,c\in\Irr(\mathcal{X})}\mathcal{X}(ab,c)\to\bigoplus_{a,c\in\Irr(\mathcal{X})}\mathcal{X}(ax,c).
$$
On the ground state of $A_\ell$, we denote the operator $\pi^x_{\ell,w}$ graphically by
$
\pi^x_{\ell,w}
=
\tikzmath{
  \levinHexOpen{0}{0}{.5}{black}{1}
  \draw[orange] (60:.5) -- node[right]{$\scriptstyle x$} (0:.5) node [sloped,allow upside down,pos=0.5] {\arrowIn[orange]};
}\,
$.
\end{defn}

\begin{lem}
\label{lem:restoration}
Suppose $\ell$ is a link between plaquettes $p$ and $q$, and $v$ is a vertex incident to $\ell$.
If $|\phi\rangle$ is a state which is not excited at $(q,\ell)$ {i.e.} $|\phi\rangle=A_\ell|\phi\rangle=B_q|\phi\rangle$,
 then $B_q\pi_{\ell,v}^1|\phi\rangle=\frac{1}{D}|\phi\rangle$.
\end{lem}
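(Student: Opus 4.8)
The plan is to reduce the statement to a local bubble identity for the plaquette operator. Since $B_q$ is a self-adjoint idempotent and $|\phi\rangle = B_q|\phi\rangle$, I would first rewrite $B_q\pi^1_{\ell,v}|\phi\rangle = B_q\pi^1_{\ell,v}B_q|\phi\rangle$. Because $B_q$ involves only the degrees of freedom of the plaquette $q$ (its six vertices and boundary links, including $\ell$) and leaves the rest of $|\phi\rangle$ untouched, it suffices to prove the operator identity $B_q\pi^1_{\ell,v}B_q = \frac1D B_q$ on the subspace where $A_\ell = \id$. On this subspace $\pi^1_{\ell,v}$ agrees with the projection trivializing the entire edge $\ell$, since the matching of labels forced by $A_\ell$ together with $\ell = 1$ at $v$ forces $\ell = 1$ at the opposite vertex as well.

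The heart of the argument is a vacuum-loop computation. On $\im(B_q)$ the plaquette is flat, so — using the interpretation of $B_q$ as allowing strings to be deformed across $q$ \cite[Appendix C]{PhysRevB.71.045110} — a vacuum loop $\omega := \frac1D\sum_{s\in\Irr(\cX)} d_s\, s$ may be isotoped to run just inside $q$, parallel to $\ell$. Expanding this loop against $\ell$ by \eqref{eq:fusionDecomp} and re-associating exactly as in \eqref{eq:bpExpansion}, the projection $\pi^1_{\ell,v}$ retains only the $s = 1$ summand of $\omega$; this contributes the scalar $d_1/D = 1/D$ and leaves a configuration in which $\ell$ has become a trivial, removable edge (the decagon $p\vee q$ of \S\ref{ssec:stringOperators}). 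Applying the outer $B_q$ then reflattens $q$, restoring the full vacuum loop and hence $B_q|\phi\rangle = |\phi\rangle$, while preserving the scalar $1/D$, which yields the claim. Equivalently, writing $B_q = \frac1D E^\dagger E$ with $E$ the contraction map of \cite{MR3204497} and noting $E E^\dagger = D\cdot\id$ (forced by $B_q^2 = B_q$), the same statement becomes $E\pi^1_{\ell,v}E^\dagger = \id$: trivializing the internal edge $\ell$ before re-contracting returns the boundary morphism unchanged.

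I expect the main obstacle to be the normalization bookkeeping rather than the conceptual picture. Tracking the graphical calculus at the two endpoints of $\ell$ and at the neighboring internal links requires the $F$-move re-association of \eqref{eq:bpExpansion}, and the various quantum-dimension factors from the rotationally invariant inner product \eqref{eq:vertexIP} must be combined with the bubble coefficient to produce the scalar exactly $1/D$, not merely a nonzero multiple. The subtle point is to verify that reapplying $B_q$ reconstructs $|\phi\rangle$ precisely — that is, that the compression of $\pi^1_{\ell,v}$ to $\im(B_q)$ is genuinely the scalar $1/D$ and not some other positive operator bounded by it — which is exactly the content of the bubble identity and the reason the hypothesis $|\phi\rangle = B_q|\phi\rangle$ is needed.
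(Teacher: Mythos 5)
Your opening reduction---using $|\phi\rangle=B_q|\phi\rangle$ to replace the claim by the operator identity $B_q\pi^1_{\ell,v}B_q=\tfrac{1}{D}B_q$ on the ground space of $A_\ell$, and noting that $A_\ell$ makes $\pi^1_{\ell,v}$ independent of the endpoint $v$---is exactly how the paper proceeds. The gap is in what you call the heart of the argument. It is false that $\pi^1_{\ell,v}$ retains only the $s=1$ summand of the vacuum loop. Decompose $|\phi\rangle=\sum_x\pi^x_{\ell,v}|\phi\rangle$ by the label on $\ell$. The component labelled $x$ survives the projection through the summand $s=\overline{x}$ of $\omega$, via the fusion channel $\overline{x}\otimes x\to 1$; the $s=1$ summand contributes only to the $x=1$ component. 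The uniform coefficient $1/D$ then comes from $d_{\overline{x}}/D$ multiplied by the normalization factors that convert the isometry-normalized vertices of \eqref{eq:fusionDecomp} into rotationally invariant bends---this is precisely the paper's remark that the coefficient is $1/D$ ``rather than $d_x/D$.'' Just as importantly, the resulting configuration is not merely ``$\ell$ trivialized'': for each $x$ it contains the original $x$-string rerouted around the inside of $q$. It is exactly this rerouted string that the outer $B_q$ deforms back across the plaquette onto $\ell$, yielding $\tfrac1D\sum_x B_q\pi^x_{\ell,v}|\phi\rangle=\tfrac1D B_q|\phi\rangle=\tfrac1D|\phi\rangle$. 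If, as in your account, the projection left only a trivialized edge and nothing else, then asserting that reapplying $B_q$ ``restores the full vacuum loop and hence $|\phi\rangle$'' would be assuming the lemma, not proving it.

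The error is visible already in $\mathbb{Z}/2$ toric code, where $B_q=\tfrac12(\mathrm{id}+B_q^g)$ and $D=2$. Take $|\phi\rangle=B_q|00\cdots0\rangle=\tfrac12\left(|00\cdots0\rangle+|gg\cdots g\rangle\right)$. The correct computation gives $\pi^1_{\ell,v}B_q|\phi\rangle=\tfrac12|00\cdots0\rangle$, with equal contributions from the $x=1$ component (via $s=1$) and from the $x=g$ component (via $s=g$, the rerouted term), and then $B_q\pi^1_{\ell,v}|\phi\rangle=\tfrac12|\phi\rangle$ as the lemma asserts. Keeping only the $s=1$ summand instead yields $\tfrac14|00\cdots0\rangle$, and applying $B_q$ produces $\tfrac14|\phi\rangle\neq\tfrac12|\phi\rangle$: the bookkeeping is off by exactly the contributions you discarded. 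Finally, your reformulation $E\pi^1_{\ell,v}E^\dagger=\mathrm{id}$ is a correct and tidy restatement (given $EE^\dagger=D\cdot\mathrm{id}$ it is equivalent to the operator form of the lemma), but stating it is not proving it; establishing it requires the same $s=\overline{x}$ bubble computation described above, so the central step of your proof is still missing.
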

\begin{proof}
First, when $A_\ell|\phi\rangle=|\phi\rangle$, each $\pi_{\ell,v}^x|\phi\rangle$ is independent of $v$.
Observe in this case that
\[
|\phi\rangle
=
\sum_{\textcolor{orange}{x}\in \Irr(\cX)}
\tikzmath{
  \levinHexOpen{0}{0}{.5}{black}{1}
  \draw[orange] (60:.5) -- node[right]{$\scriptstyle x$} (0:.5) node [sloped,allow upside down,pos=0.5] {\arrowIn[orange]};
}\,
|\phi\rangle
\qquad\qquad
\Longrightarrow
\qquad\qquad
\pi^1_{\ell,v}B_q|\phi\rangle
=
\sum_{\textcolor{orange}{x}}
 \frac{1}{D}
\tikzmath{
\levinHexOpen{0}{0}{.5}{black}{1}
\draw[orange, mid>] (60:.5) .. controls ++(-60:.2cm) and ++(0:.2cm) .. (90:.3) arc (90:330:.3cm) .. controls ++(60:.2cm) and ++(120:.2cm) .. (0:.5);
\node[orange] at (-.08,-.08) {$\scriptstyle x$};
}\,
|\phi\rangle\text{,}
\]
 where the coefficient is $\frac{1}{D}$ rather than $\frac{d_x}{D}$ because the depicted bends in the $x$-string are rotationally invariant trivalent vertices, rather than the ones which are normal with respect to the isometry inner product that would appear on the right-hand side of \eqref{eq:fusionDecomp}.
 Since $B_q|\phi\rangle=|\phi\rangle$, it suffices to check that $B_q\pi^1_{\ell,v}B_q|\phi\rangle=\frac{1}{D}B_q|\phi\rangle$,
which may be expressed diagrammatically as
\[\sum_{\textcolor{orange}{x},\textcolor{blue}{y}}\frac{d_y}{D^2}
\begin{tikzpicture}[baseline]
\levinHexOpen{0}{0}{.75}{black}{1}
\draw[orange] (60:.75) .. controls ++(-60:.2cm) and ++(0:.2cm) .. (60:.5);
\arHex[1]{0}{0}{.5}{orange}{4}{0}
\draw[orange] (0:.5) .. controls ++(60:.2cm) and ++(120:.2cm) .. (0:.75);
\arHex{0}{0}{.25}{blue}{4}{0}
\end{tikzpicture}
\,|\phi\rangle
\overset{?}{=}
\sum_{\textcolor{orange}{x},\textcolor{DarkGreen}{z}}\frac{d_z}{D^2}
\begin{tikzpicture}[baseline]
\levinHexOpen{0}{0}{.75}{black}{1}
\draw[orange] (60:.75) -- (0:.75) node [sloped,allow upside down,pos=0.5] {\arrowIn[orange]};
\arHex{0}{0}{.5}{DarkGreen}{4}{0}
\end{tikzpicture}
\,|\phi\rangle.
\]
This equation holds, because the effect of applying $B_p$ is to permit the deformation of strings across the center of plaquettes; it first appeared in \cite[Appendix C]{PhysRevB.71.045110}.
\end{proof}

We point out some consequences of Lemma \ref{lem:restoration} which will be useful later.
Sometimes, we wish to write a state $|\phi\rangle$ as a sum of states which have a fixed label for a given edge, as in $|\phi\rangle=\sum_x\pi_{w,\ell}^x|\phi\rangle$.
It turns out that in the ground state, the summands $\pi_{w,\ell}^x$ can be determined in a straightforward manner from $\pi_{w,\ell}^1|\phi\rangle$.
\begin{cor}
\label{cor:edgeDecomp}
Suppose $|\phi\rangle$ is a state satisfying $B_p|\phi\rangle=|\phi\rangle$, and $\ell$ is an edge of $p$.
Then
\[\pi_{w,\ell}^x|\phi\rangle=d_xB_p^x\pi_{w,\ell}^1|\phi\rangle\text{.}\]
\end{cor}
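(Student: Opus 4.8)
The plan is to deduce the refined, label-by-label identity from Lemma~\ref{lem:restoration} together with a single observation about how $B_p^x$ acts on an edge that has been forced to carry the unit object. First I would record a preliminary point needed throughout: since $B_p|\phi\rangle=|\phi\rangle$ and $B_p$ annihilates any state excited on a link of $p$, we automatically have $A_\ell|\phi\rangle=|\phi\rangle$. On this $A_\ell$-ground subspace the labels assigned to $\ell$ by its two vertices agree, so applying $\pi^1_{w,\ell}$ also forces the label at the other vertex of $\ell$ to be $1$; in particular $\pi^1_{w,\ell}|\phi\rangle$ again lies in the ground space of $A_\ell$, so that $B_p^s\pi^1_{w,\ell}|\phi\rangle$ is defined and need not vanish. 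The same remark shows each $\pi^x_{w,\ell}|\phi\rangle$ lies in the $A_\ell$-ground space, so that $|\phi\rangle=\sum_{x}\pi^x_{w,\ell}|\phi\rangle$ is the orthogonal decomposition of $|\phi\rangle$ by the label on $\ell$.

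The computation then has two ingredients. First, applying Lemma~\ref{lem:restoration} with the plaquette $p$ playing the role of its ``$q$'' (the hypothesis being exactly $A_\ell|\phi\rangle=B_p|\phi\rangle=|\phi\rangle$) gives $B_p\pi^1_{w,\ell}|\phi\rangle=\tfrac1D|\phi\rangle$. Expanding $B_p=\tfrac1D\sum_s d_s B_p^s$ and clearing the factor of $D$ turns this into
\[
\sum_{s\in\Irr(\cX)} d_s\, B_p^s\,\pi^1_{w,\ell}|\phi\rangle=|\phi\rangle.
\]
Second, I would establish the key claim that $B_p^s\pi^1_{w,\ell}|\phi\rangle$ carries the label $s$ on $\ell$, i.e. $\pi^y_{w,\ell}B_p^s\pi^1_{w,\ell}=\delta_{y,s}\,B_p^s\pi^1_{w,\ell}$. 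This is where the graphical content sits: in the expansion \eqref{eq:bpExpansion} the operator $B_p^s$ fuses a parallel $s$-strand into each boundary edge of $p$, and at the edge $\ell$ --- which has been set to $1$ by $\pi^1_{w,\ell}$ --- the fusion $1\otimes s\cong s$ is an isomorphism with no other summands and no multiplicity, so the new label on $\ell$ is forced to be exactly $s$ (at both of its vertices). The other boundary edges are sent to superpositions of labels, but those are invisible to $\pi^y_{w,\ell}$.

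With these in hand the corollary is immediate: applying $\pi^x_{w,\ell}$ to the displayed identity, the key claim collapses the sum on the left to its $s=x$ term, yielding $d_x B_p^x\pi^1_{w,\ell}|\phi\rangle$, while the right-hand side becomes $\pi^x_{w,\ell}|\phi\rangle$, which is the asserted equality. I expect the main obstacle to be a careful justification of the key claim --- tracking the $s$-strand through the six applications of \eqref{eq:fusionDecomp} in \eqref{eq:bpExpansion} and verifying that the bookkeeping at the vertex $w$ really does record the label $s$ on $\ell$ rather than a rescaled or re-associated variant. As a consistency check one may sum the final identity over $x$ to recover $\sum_x d_x B_p^x\pi^1_{w,\ell}|\phi\rangle=|\phi\rangle$, matching Lemma~\ref{lem:restoration}; equivalently the fusion-algebra relation $B_p^x B_p=d_x B_p$ (from $B_p^sB_p^t=\sum_u N_{st}^u B_p^u$ and $\sum_s d_s N_{xs}^u=d_xd_u$) gives $B_p^x|\phi\rangle=d_x|\phi\rangle$, consistent with the claim.
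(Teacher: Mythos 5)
Your proposal is correct and follows essentially the same route as the paper: the paper's one-line proof writes $\pi_{w,\ell}^x|\phi\rangle=\pi_{w,\ell}^xDB_p\pi_{w,\ell}^1|\phi\rangle=d_xB_p^x\pi_{w,\ell}^1|\phi\rangle$, which is exactly your combination of Lemma~\ref{lem:restoration} with the key claim that $\pi^y_{w,\ell}B_p^s\pi^1_{w,\ell}=\delta_{y,s}B_p^s\pi^1_{w,\ell}$ (left implicit in the paper, since $1\otimes s\cong s$ forces the label $s$ on $\ell$). Your preliminary observation that $B_p|\phi\rangle=|\phi\rangle$ already forces $A_\ell|\phi\rangle=|\phi\rangle$, and your closing consistency check, are sensible additions but do not change the argument.
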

\begin{proof}
By Lemma \ref{lem:restoration}, we have
$\pi_{w,\ell}^x|\phi\rangle=\pi_{w,\ell}^xDB_p\pi_{w,\ell}^1|\phi\rangle= 
d_xB_p^x\pi_{w,\ell}^1|\phi\rangle
$, as desired.
\end{proof}
Corollary \ref{cor:edgeDecomp} was not obvious \emph{a priori} when $X\neq 1$, since several fusion channels contribute to $\pi_{\ell,w}^xB_p|\phi\rangle$.

Lemma \ref{lem:restoration} also allows us to more precisely establish the sense in which applying $\pi_{\ell,w}^1$ replaces the plaquettes $p$ and $q$ with a decagonal plaquette $p\vee q$, analogous to the lattice mutations studied in \cite{PhysRevB.85.075107}.
One would expect to obtain a Levin-Wen model on the lattice obtained by removing the link $\ell$ between $p$ and $q$ from the modified Hamiltonian
\begin{equation}
 \label{eq:punchoutHamiltonian}
 H_{\setminus\ell}=-\pi_{\ell,w}^1-\sum_mA_m-\sum_{r\notin\{p,q\}}B_r-B_{p\vee q}
\end{equation}
where we still have $B_{p\vee q}=\frac{1}{D}\sum_{x\in\Irr(\cX)}d_xB_{p\vee q}^x$ and where $B_{p\vee q}^x$ is defined (on the ground state of $\pi_{\ell,w}^1$ and all $A_m$ terms, including $m=\ell$) by modifying \eqref{eq:Bps} to account for the new plaquette shape.
The new term $\pi_{\ell,w}$ evidently commutes with all $A_m$ terms and $B_r$ for $r\notin\{p,q\}$.
That $B_{p\vee q}$ behaves as expected follows from the following result.
\begin{cor}
 \label{cor:Bpvq}
 Suppose $p$ and $q$ are adjacent plaquettes, $\ell$ is the link where $p$ and $q$ meet, and $w$ is either vertex of $\ell$.
 On the space of states $\set{|\Omega\rangle}{B_p|\Omega\rangle=B_q|\Omega\rangle=|\Omega\rangle}$, 
 \[
 D\pi_{\ell,w}^1B_pB_q\pi_{\ell,w}^1 = B_{p\vee q}\pi_{\ell,w}^1 =\pi_{\ell,w}^1.
 \]
\end{cor}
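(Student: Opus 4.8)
The plan is to establish the chain from right to left, treating the outer equality $D\pi_{\ell,w}^1B_pB_q\pi_{\ell,w}^1=\pi_{\ell,w}^1$ and the middle equality $B_{p\vee q}\pi_{\ell,w}^1=\pi_{\ell,w}^1$ separately. Throughout I fix a state $|\Omega\rangle$ with $B_p|\Omega\rangle=B_q|\Omega\rangle=|\Omega\rangle$. The first observation is that, since each $B_p^s$ (and hence $B_p=\frac1D\sum_s d_sB_p^s$) is declared to vanish on states violating $A_m$ for the links $m$ of $p$, the hypothesis $B_p|\Omega\rangle=|\Omega\rangle\neq 0$ already forces $A_m|\Omega\rangle=|\Omega\rangle$ for every link $m$ of $p$, in particular $A_\ell|\Omega\rangle=|\Omega\rangle$. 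Thus Lemma \ref{lem:restoration} is applicable at the shared link $\ell$.

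First I would dispatch the outer equality by reading $D\pi_{\ell,w}^1B_pB_q\pi_{\ell,w}^1$ from right to left. Applying $\pi_{\ell,w}^1$ to $|\Omega\rangle$ and then $B_q$, Lemma \ref{lem:restoration} (with plaquette $q$ and $v=w$, using $A_\ell|\Omega\rangle=B_q|\Omega\rangle=|\Omega\rangle$) gives $B_q\pi_{\ell,w}^1|\Omega\rangle=\frac1D|\Omega\rangle$. Since $B_p|\Omega\rangle=|\Omega\rangle$, applying $B_p$ leaves this unchanged, and a final $\pi_{\ell,w}^1$ followed by multiplication by $D$ yields $D\pi_{\ell,w}^1B_pB_q\pi_{\ell,w}^1|\Omega\rangle=\pi_{\ell,w}^1|\Omega\rangle$. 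As this holds for all such $|\Omega\rangle$, it is the claimed operator identity on the stated subspace.

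The substance lies in the middle equality, which I would prove as the diagrammatic identity $B_{p\vee q}\pi_{\ell,w}^1=D\pi_{\ell,w}^1B_pB_q\pi_{\ell,w}^1$ on the image of $\pi_{\ell,w}^1$; combined with the outer equality this gives $B_{p\vee q}\pi_{\ell,w}^1=\pi_{\ell,w}^1$. On the image of $\pi_{\ell,w}^1$ the shared edge $\ell$ carries the trivial label $1$, so it becomes an erasable interior edge of the decagon $p\vee q$. I would expand $B_p=\frac1D\sum_s d_sB_p^s$ and $B_q=\frac1D\sum_t d_tB_q^t$ via the loop form \eqref{eq:Bps} and observe that, with $\ell$ trivial, the $s$-loop around $\partial p$ and the $t$-loop around $\partial q$ sit inside a single face and fuse across $\ell$ into loops around $\partial(p\vee q)$ — the same loop-fusion move used to show $B_p$ is idempotent. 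Tracking $s\otimes t=\bigoplus_u N_{st}^u\,u$ and summing, the only nontrivial combinatorial input is the character identity $\sum_{s,t}d_sd_tN_{st}^u=D\,d_u$, which reassembles the result into $\frac1D\sum_u d_uB_{p\vee q}^u\,\pi_{\ell,w}^1=B_{p\vee q}\pi_{\ell,w}^1$.

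The main obstacle is precisely the normalization bookkeeping in this fusion step: one must verify that merging an $s$-loop around $p$ with a $t$-loop around $q$ across the \emph{trivial} edge $\ell$ produces $B_{p\vee q}^u$-loops with the correct scalar, which differs by a factor of $1/D$ from the naive same-face loop-fusion constant, so that the overall factor $D$ in the statement comes out exactly. An alternative route to the middle equality that sidesteps this constant is to argue directly that $\pi_{\ell,w}^1|\Omega\rangle$ lies in the image of the idempotent $B_{p\vee q}$: since strings deform freely across both $p$ and $q$ in $|\Omega\rangle$ (as $B_p|\Omega\rangle=B_q|\Omega\rangle=|\Omega\rangle$) and $\ell$ is trivial, any loop glued around $\partial(p\vee q)$ can be absorbed, exactly as in the deformation step of Lemma \ref{lem:restoration} and \cite[Appendix C]{PhysRevB.71.045110}. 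Either way, combining the middle and outer equalities produces the full chain.
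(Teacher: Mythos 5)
Your handling of the outer equality is correct and is exactly the paper's argument: Lemma \ref{lem:restoration} gives $B_q\pi^1_{\ell,w}|\Omega\rangle=\tfrac{1}{D}|\Omega\rangle$, then $B_p|\Omega\rangle=|\Omega\rangle$ and a final projection finish it; your preliminary remark that $B_p|\Omega\rangle=|\Omega\rangle$ already forces $A_\ell|\Omega\rangle=|\Omega\rangle$ is a worthwhile point the paper leaves implicit. The genuine gap is in your primary route to the middle equality. You assert that, once $\ell$ is trivialized, the $s$-loop around $p$ and the $t$-loop around $q$ ``fuse across $\ell$'' into $u$-loops around $p\vee q$ with multiplicity $N_{st}^u$, and you then invoke $\sum_{s,t}d_sd_tN_{st}^u=Dd_u$. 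This is not what these operators do, and the problem is not the $1/D$ normalization you flag as ``the main obstacle.'' The two loops are not concentric loops in a common face (that is the geometry of the idempotence proof $B_p^sB_p^t=\sum_uN_{st}^uB_p^u$, which you are transplanting); each loop has one strand fused into $\ell$, and the \emph{left} $\pi^1_{\ell,w}$ projects that pair of strands onto their trivial fusion channel. Since the trivial object appears in $s\otimes\overline{t}$ only when $s\cong t$ (and then with multiplicity one), every cross term vanishes, and for $s=t$ the trivial-channel projection cuts the two strands and reconnects them, merging the two loops into a \emph{single} $s$-loop around the decagon with scalar $1/d_s$. That is the paper's computation: $\pi^1_{\ell,w}B_p^sB_q^t\pi^1_{\ell,w}=\delta_{s,t}\,d_s^{-1}B_{p\vee q}^s\pi^1_{\ell,w}$, after which $\tfrac{1}{D}\sum_sd_s^2\cdot d_s^{-1}B^s_{p\vee q}\pi^1_{\ell,w}=B_{p\vee q}\pi^1_{\ell,w}$ follows from the definition of $B_{p\vee q}$ alone --- no character identity is needed. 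Your channel decomposition fails term by term: in $\Z/2$ toric code, writing $g$ for the nontrivial simple, $\pi^1B_p^gB_q^1\pi^1=0$ (the $g$-loop flips $\ell$, which the left projection then kills), while your formula gives $\tfrac{1}{2}B^g_{p\vee q}\pi^1\neq0$; likewise $\pi^1B_p^gB_q^g\pi^1=B^g_{p\vee q}\pi^1$, not $\tfrac{1}{2}\pi^1$. The $d_sd_t$-weighted totals of your decomposition and of the correct one agree precisely because of the character identity --- that is why your final answer comes out right --- but agreement of totals is not a proof of the step.

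Your fallback argument --- that $\pi^1_{\ell,w}|\Omega\rangle$ lies in the image of the idempotent $B_{p\vee q}$ because any loop around $\partial(p\vee q)$ can be absorbed by deforming strings across $p$ and $q$ --- is closer in spirit to what actually makes everything work (it is the same appeal to string deformation that underlies Lemma \ref{lem:restoration}), but as stated it assumes the conclusion: ``a $u$-loop can be absorbed at no cost'' \emph{is} the statement $B^u_{p\vee q}\pi^1_{\ell,w}|\Omega\rangle=\pi^1_{\ell,w}|\Omega\rangle$. To close the argument you need the cross-term vanishing and loop-merging identity above; that $\delta_{s,t}$ mechanism, forced by the left projection $\pi^1_{\ell,w}$, is the one idea your proposal is missing.
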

\begin{proof}
 To show $B_{p\vee q}\pi_{w,\ell}^1|\Omega\rangle=D\pi_{w,\ell}^1B_pB_q\pi_{w,\ell}^1|\Omega\rangle$, we expand the left hand side.
 \begin{align*}
  D\pi_{w,\ell}^1B_pB_q\pi_{w,\ell}^1|\Omega\rangle &= \frac{1}{D}\sum_{x,y\in\Irr(\cX)}d_xd_y\pi_{w,\ell}^1B_p^xB_q^y\pi_{w,\ell}^1|\Omega\rangle\\
  &= \frac{1}{D}\sum_{x\in\Irr(\cX)}d_x^2\pi_{w,\ell}^1B_p^xB_q^x\pi_{w,\ell}^1|\Omega\rangle\\
  &= \frac{1}{D}\sum_{x\in\Irr(\cX)}d_xB_{p\vee q}^x\pi_{w,\ell}^1|\Omega\rangle\\
  &= B_{p\vee q}\pi_{w,\ell}^1|\Omega\rangle.
 \end{align*}
Above, the only way to get $1$ after applying $B_p^x$ and $B_q^y$ to $\pi^1_{\ell,w}$ is when $x=y$, and the scalar $d_x$ difference again arises from rotational invariance scaling of cups and caps.
Now by Lemma \ref{lem:restoration}, 
$D\pi_{\ell,w}^1B_pB_q\pi_{\ell,w}^1|\Omega\rangle = \pi_{\ell,w}^1B_p|\Omega\rangle = \pi_{\ell,w}^1|\Omega\rangle$.
\end{proof}
On the other hand, by Lemma \ref{lem:restoration}, all ground states of \eqref{eq:punchoutHamiltonian} are of the form $D\pi^1_{\ell,w}|\Omega\rangle$, where $|\Omega\rangle$ is a ground state of the original Hamiltonian \eqref{eq:LWHamiltonian}.
Consequently, $D\pi^1_{\ell,w}$ is a unitary map between the spaces of ground states of the two Hamiltonians.
We have now shown \eqref{eq:punchoutHamiltonian} to be a frustration-free commuting projector Hamiltonian which contains the terms $\pi_{\ell,w}^1$ and $A_\ell$, so it indeed has the same space of ground states as the natural Levin-Wen Hamiltonian defined on the lattice obtained by removing $\ell$.

Using the previous results, we can now demonstrate the compatibility of our definition of string operator with the tube algebra action \eqref{eq:HGWaction}.
\begin{prop}
 \label{prop:tubeActionCorrect}
 Suppose $H\in Z(\mathcal{X})$ and $\psi\in \cX(H\to x)$ where $x\in \Irr(\cX)$.
 Let $r$ be a path ending at $(p,\ell)$ and beginning far away,
 and suppose $|\Omega\rangle$ is locally a ground state near the path $r$.
 Then the tube algebra action on a state excited at the endpoint of $r$ generates a representation isomorphic to $\rho_H$, where $\rho_H$ is the representation defined in Equation \eqref{eq:tubeRep}.
 Explicitly, for $f\in\cX(xc\to cy)\subseteq\Tube(\mathcal{X})$ and $\psi\in\cX(H\to x)$, we have
 \begin{equation}
 \label{eq:tubeActionAtVertex}
 f\rhd\sigma^H_r(\phi,\psi)|\Omega\rangle=\sqrt{\frac{d_y}{d_x}}\sigma^H_r(\phi,\rho_H(f)\psi)|\Omega\rangle\text{.}
 \end{equation}
\end{prop}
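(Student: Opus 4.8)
The plan is to verify the explicit identity \eqref{eq:tubeActionAtVertex} by a purely local diagrammatic computation at the terminal plaquette of $r$; the abstract statement then follows. Note first that the scalar $\sqrt{d_y/d_x}$ does not affect the isomorphism class: scaling the summand $\mathcal{X}(H\to x)$ of the representation space by $\sqrt{d_x}$ conjugates $\rho_H$ into the action carrying this extra factor, so $f\mapsto\bigl(g\mapsto\sqrt{d_y/d_x}\,\rho_H(f)g\bigr)$ is isomorphic to $\rho_H$. Granting \eqref{eq:tubeActionAtVertex}, the map $\psi\mapsto\sigma^H_r(\phi,\psi)|\Omega\rangle$ is then an intertwiner from this rescaled $\rho_H$ to the lattice tube action \eqref{eq:HGWaction}; since this map is injective (it is nonzero for $H$ simple by Remark \ref{rem:nonsimpleStringOperators}, and in general $\psi$ is recovered by the operators of Corollary \ref{cor:tubeKetbra}), the generated representation is isomorphic to $\rho_H$.

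First I would localize. By \ref{S:Unexcited}, the state $\sigma^H_r(\phi,\psi)|\Omega\rangle$ agrees with the ground state along $r$ except at the endpoints, so both the tube action $f\rhd-$ of \eqref{eq:HGWaction} and the replacement $\psi\mapsto\rho_H(f)\psi$ only alter the state inside the terminal decagonal plaquette $p\vee q$ produced by $\pi^1_{\ell,w}$; here Corollary \ref{cor:Bpvq} guarantees that $p\vee q$ behaves as a genuine Levin-Wen plaquette, so the action \eqref{eq:HGWaction} is legitimate. Away from $p\vee q$ the two sides of \eqref{eq:tubeActionAtVertex} coincide as the common operator $\sigma^H_r(\phi,-)$ applied to $|\Omega\rangle$, so it suffices to compare the two local pictures at the endpoint. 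As in the definition of $B_p^s$, I would resolve the $c$-loop attached by \eqref{eq:HGWaction} into operators local to each vertex of $p\vee q$ using \eqref{eq:fusionDecomp}, so that the entire comparison takes place within the graphical calculus of $\mathcal{X}$.

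The heart of the argument is a single diagrammatic move. On the left-hand side, the terminal $X$-string of $\sigma^H_r(\phi,\psi)$ is capped by $\psi\colon X\to x$, and \eqref{eq:HGWaction} attaches a $c$-strand, coming from $f\in\mathcal{X}(xc\to cy)$, which encircles the excitation inside $p\vee q$. The key step is to transport this $c$-strand along the $X$-string using the half-braiding $\rho$ of $H=(X,\rho)$; recall from \S\ref{ssec:stringOperators} that the string operator is built so that precisely this half-braiding governs how strands cross the $s=H$ line. Dragging the encircling $c$-strand past the $X$-string converts the local endpoint data into the composite of $\psi$, $f$, and one application of $\rho$, which is exactly the diagram defining $\rho_H(f)\psi$ in \eqref{eq:tubeRep}. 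Matching the two pictures then yields \eqref{eq:tubeActionAtVertex}, with the scalar $\sqrt{d_y/d_x}$ inherited verbatim from the normalization already present in \eqref{eq:HGWaction} and with no further factor, since $\rho_H$ in \eqref{eq:tubeRep} is an unnormalized diagram.

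I expect the main obstacle to be the bookkeeping of this half-braiding: one must check that the orientation and over/under data of the $c$-strand from \eqref{eq:HGWaction}, once slid to the endpoint, match exactly the crossing in \eqref{eq:tubeRep}, including the index constraint that the outgoing link carry $y$ and that the composition $f\circ(\psi\otimes\id_c)$ be formed in the correct order. Once the half-braiding is correctly identified, the remaining identities are routine applications of \eqref{eq:fusionDecomp} and pivotality, directly parallel to the verifications that the action \eqref{eq:HGWaction} is associative and that $B_p$ is idempotent.
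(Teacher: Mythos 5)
Your computation of the identity \eqref{eq:tubeActionAtVertex} is essentially the paper's: handle the scalar $\sqrt{d_y/d_x}$ by a rescaling (the paper phrases this as an automorphism of $\Tube(\cX)$), localize to the decagonal plaquette $p\vee q$, resolve the encircling strand of \eqref{eq:HGWaction} via \eqref{eq:fusionDecomp}, and use the half-braiding of $H$ to turn the tube strand into the small loop defining $\rho_H(f)$ in \eqref{eq:tubeRep}. One caveat: ``dragging the encircling $c$-strand past the $X$-string'' is not a free isotopy in the lattice Hilbert space. The paper implements it by writing $\sigma^H_r(\phi,\psi)|\Omega\rangle=\sigma^H_r(\phi,\psi)B_{p\vee q}\pi^1_{\ell,w}|\Omega\rangle$, fusing the tube strand with the plaquette loop coming from $B_{p\vee q}$, and then reabsorbing the resulting plaquette loop using $B_{p\vee q}$-invariance of the ground state. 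It is this ground-state property, not Corollary \ref{cor:Bpvq} alone, that licenses the move; your sketch leaves the mechanism implicit, but it is the same computation.

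The genuine gap is in your last step, passing from the intertwining identity to the isomorphism. You justify injectivity by citing Remark \ref{rem:nonsimpleStringOperators} (nonvanishing for simple $H$) and Corollary \ref{cor:tubeKetbra} (recovering $\psi$). Both citations are circular: the Remark's nonvanishing claim is justified in the paper precisely ``by an application of Proposition \ref{prop:tubeActionCorrect},'' and Corollary \ref{cor:tubeKetbra} is proved as ``just a case of'' that Proposition. Moreover, even granting the $T$-operator identity (which does follow from \eqref{eq:tubeActionAtVertex} alone, hence is non-circular in the sense the paper explains), it cannot yield injectivity: if $\sigma^H_r(\eta',\eta)|\Omega\rangle=0$, then every state produced from it by $T$-operators also vanishes, so the $T$-operators only re-derive that the kernel is a subrepresentation, which is automatic. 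What is missing is an independent argument that the map $\psi\mapsto\sigma^H_r(\phi,\psi)|\Omega\rangle$ is nonzero. The paper supplies one: by Lemma \ref{lem:restoration} the vacuum string operators $\sigma^1_r$ are nonzero on ground states, and the fusion relation \eqref{eq:stringFusion} gives $\sigma^{\overline{H}}_r[1,1](1,1)\,\sigma^H_r(\phi,\psi)=\sigma^1_r(1,1)\neq 0$, whence $\sigma^H_r(\phi,\psi)\neq 0$; it then notes that the generalized string operator used here only needs the transitivity of the action already established, avoiding circularity. Without some argument of this kind, your proof does not rule out the zero representation.
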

\begin{proof}
Note that introducing or removing the scalar $\sqrt{\frac{d_y}{d_x}}$ is an automorphism of $\Tube(\cX)$, so the presence of this scalar is immaterial to whether the two representations are isomorphic; the scalar only affects unitarity.

In general, computing $f\rhd|\eta\rangle$ for $f\in\Tube(\mathcal{X})$ involves gluing a strand into the plaquette $p\vee q$, which means summing over many fusion channels and basis vectors in an expression of some state $|\eta\rangle$.
We will first exploit Lemma \ref{lem:restoration} to show that $f\rhd-$ actually reduces to an operator local to the final vertex of the path $r$, such that $f\rhd\sigma^H_r(\phi,\psi)|\Omega\rangle=\sigma^H_r(\phi,\psi')|\Omega\rangle$ for some $\psi'$.
Then, we will compute algebraically that this $\psi'$ is just $\rho_H(f)\psi$.

The situation of the Proposition can be depicted graphically by
\[
\sigma^H_r(\phi,\psi)|\Omega\rangle=
\tikzmath[scale=1.1]{
\coordinate (a) at (-40:.58cm);
\coordinate (b) at (-70:.6cm);
\coordinate (c) at (-110:.85cm);
\coordinate (d) at (-135:1.1cm);
\draw[thick, orange] (a) -- ($ (a) + (.15,.2) $) node[left]{$\scriptstyle x$};
\filldraw[red] (a) node[right]{$\scriptstyle \psi$} node[left]{$\scriptstyle H$} circle (.05cm);
\draw[red, thick] (a) .. controls ++(-135:.1cm) and ++(0:.15cm) .. (b)
                        .. controls ++(180:.2cm) and ++(0:.2cm) .. (c) -- (d);
\fill[white] (-90:.65) circle (.07cm);
\fill[white] (-120:.9) circle (.07cm);
\levinHexOpen{0}{0}{.75}{black}{6}
\levinHexOpen{1.125}{-.6495}{.75}{black}{3}
\node at (210:1.299) {$\scriptstyle r$};
\node at (210:.8) {$\scriptstyle k$};
\node at (0:0) {$\scriptstyle p$};
\node at (-30:1.299) {$\scriptstyle q$};
}
\,|\Omega\rangle
=
\tikzmath{
\draw[thick, red] (.4,-.5) -- (.4,-.3) to[out=90,in=180] (.7,-.1) --node[below]{$\scriptstyle H$} (.9,-.1) to[out=0,in=-90] (1,.1) -- (1,.4);
\filldraw[white] (.5,-.15) circle (.05cm);
\draw[knot] (-.5,0) -- (2.5,0);
\draw (-.5,1) -- (2.5,1);
\draw (0,0) -- (0,1);
\draw (2,0) -- (2,1);
\draw (.5,-.5) -- (.5,0);
\draw (.5,1) -- (.5,1.5);
\draw (1.5,-.5) -- (1.5,0);
\draw (1.5,1) -- (1.5,1.5);
\draw[dotted] (1,0) -- (1,1);
\draw[thick, orange] (1,.4) -- (1,.6) node[left]{$\scriptstyle x$};
\filldraw[red] (1,.4) node[right]{$\scriptstyle \psi$} circle (.05cm);
}
|\Omega\rangle
\]
We use the bricklayer lattice instead of the honeycomb lattice here for readability.
We also use the rotationally invariant version of the fusion relation \eqref{eq:fusionDecomp} using an ONB with respect to skein-module inner product rather than the isometry inner product:
$$
\sum_{z\in \Irr(\cX)}
\sqrt{d_z}
\begin{tikzpicture}[baseline=-.1cm]
	\draw (.2,-.6) -- (0,-.3) -- (-.2,-.6);
	\draw (.2,.6) -- (0,.3) -- (-.2,.6);
	\draw (0,-.3) -- (0,.3);
	\draw[fill=cyan] (0,-.3) circle (.05cm);
	\draw[fill=cyan] (0,.3) circle (.05cm);	
	\node at (-.2,-.8) {\scriptsize{$x$}};
	\node at (.2,-.8) {\scriptsize{$y$}};
	\node at (-.2,.8) {\scriptsize{$x$}};
	\node at (.2,.8) {\scriptsize{$y$}};
	\node at (.2,0) {\scriptsize{$z$}};
\end{tikzpicture}
\,=\,\sqrt{d_xd_y}\cdot
\begin{tikzpicture}[baseline=-.1cm]
	\draw (.2,-.6) -- (.2,.6);
	\draw (-.2,-.6) -- (-.2,.6);
	\node at (-.2,-.8) {\scriptsize{$x$}};
	\node at (.2,-.8) {\scriptsize{$y$}};
\end{tikzpicture}
$$

Let $f\in\mathcal{X}(by\to zb)\subseteq\Tube(\mathcal{X})$.
In case $x\neq y$, we have $f\rhd\sigma^H_r(\phi,\psi)|\Omega\rangle=0=\sigma^H_r(\phi,\rho_H(f)\psi)|\Omega\rangle$, by definition.
Now suppose $x=y$.
Since $\sigma^H_r(\phi,\psi)|\Omega\rangle = \sigma^H_r(\phi,\psi) \pi^1_{\ell,w}|\Omega\rangle = \sigma^H_r(\phi,\psi) B_{p\vee q}\pi^1_{\ell,w}|\Omega\rangle$,
we have
\begin{align*}
f\rhd
\tikzmath{
\draw[thick, red] (.4,-.5) -- (.4,-.3) to[out=90,in=180] (.7,-.1) --node[below]{$\scriptstyle H$} (.9,-.1) to[out=0,in=-90] (1,.1) -- (1,.4);
\filldraw[white] (.5,-.15) circle (.05cm);
\draw[knot] (-.5,0) -- (2.5,0);
\draw (-.5,1) -- (2.5,1);
\draw (0,0) -- (0,1);
\draw (2,0) -- (2,1);
\draw (.5,-.5) -- (.5,0);
\draw (.5,1) -- (.5,1.5);
\draw (1.5,-.5) -- (1.5,0);
\draw (1.5,1) -- (1.5,1.5);
\draw[dotted] (1,0) -- (1,1);
\draw[thick, orange] (1,.4) -- (1,.6) node[left]{$\scriptstyle x$};
\filldraw[red] (1,.4) node[right]{$\scriptstyle \psi$} circle (.05cm);
}
|\Omega\rangle
&=
\sqrt{\frac{d_y}{d_x}}
\frac{1}{D}\sum_{\textcolor{blue}{a}\in \Irr(\cX)} \textcolor{blue}{d_a}
\tikzmath[scale=2]{
\draw[thick, red] (.4,-.5) -- (.4,-.3) to[out=90,in=180] (.7,-.1) --node[below]{$\scriptstyle H$} (.9,-.1) to[out=0,in=-90] (1,.1) -- (1,.15);
\filldraw[white] (.5,-.15) circle (.05cm);
\draw[knot] (-.5,0) -- (2.5,0);
\draw (-.5,1) -- (2.5,1);
\draw (0,0) -- (0,1);
\draw (2,0) -- (2,1);
\draw (.5,-.5) -- (.5,0);
\draw (.5,1) -- (.5,1.5);
\draw (1.5,-.5) -- (1.5,0);
\draw (1.5,1) -- (1.5,1.5);
\draw[dotted] (1,0) -- (1,1);
\draw[thick, orange] (1,.15) -- node[left]{$\scriptstyle x$} (1,.3) ;
\draw[thick, orange] (1,.3) -- (1,.45) node[left]{$\scriptstyle z$};
\filldraw[orange] (1,.3) node[right, yshift=.15cm]{$\scriptstyle f$} circle (.025cm);
\filldraw[red] (1,.15) node[right]{$\scriptstyle \psi$} circle (.025cm);
\draw[knot, thick, blue, rounded corners=5pt] (.05,.05) rectangle (1.95,.95); 
\draw[thick, orange, rounded corners=5pt] (1,.3) to[out=180, in=0] (.5,.1) -- (.1,.1) -- (.1,.9) -- (1.9,.9) -- (1.9,.1) -- (1.5,.1) to[out=180,in=0] (1,.3);
\node[blue] at (.85,.1) {$\scriptstyle a$};
\node[orange] at (.3,.2) {$\scriptstyle b$};
}
|\Omega\rangle
\displaybreak[1]\\&=
\sqrt{\frac{d_y}{d_x}}
\frac{1}{D}\sum_{\textcolor{blue}{a},\textcolor{DarkGreen}{c}\in \Irr(\cX)} \frac{\sqrt{\textcolor{blue}{d_a}\textcolor{DarkGreen}{d_c}}}{\sqrt{\textcolor{orange}{d_b}}}
\tikzmath[scale=2]{
\draw[thick, red] (.4,-.5) -- (.4,-.3) to[out=90,in=180] (.7,-.1) --node[below]{$\scriptstyle H$} (.9,-.1) to[out=0,in=-90] (1,.1) -- (1,.15);
\filldraw[white] (.5,-.15) circle (.05cm);
\draw[knot] (-.5,0) -- (2.5,0);
\draw (-.5,1) -- (2.5,1);
\draw (0,0) -- (0,1);
\draw (2,0) -- (2,1);
\draw (.5,-.5) -- (.5,0);
\draw (.5,1) -- (.5,1.5);
\draw (1.5,-.5) -- (1.5,0);
\draw (1.5,1) -- (1.5,1.5);
\draw[dotted] (1,0) -- (1,1);
\draw[thick, orange] (1,.15) -- node[left]{$\scriptstyle x$} (1,.3) ;
\draw[thick, orange] (1,.3) -- (1,.45) node[left]{$\scriptstyle z$};
\filldraw[orange] (1,.3) node[right, yshift=.15cm]{$\scriptstyle f$} circle (.025cm);
\filldraw[red] (1,.15) node[right]{$\scriptstyle \psi$} circle (.025cm);
\draw[knot, thick, blue, rounded corners=5pt] (.8,.95) -- (.05,.95) -- (.05,.05) -- (1.95,.05) -- (1.95,.95) -- (1.2,.95); 
\draw[thick, orange, rounded corners=5pt] (1,.3) to[out=180, in=0] (.5,.1) -- (.1,.1) -- (.1,.9) -- (.5,.9) to[out=180, in=-135] (.8,.95);
\draw[thick, orange, rounded corners=5pt] (1.2,.95) to[out=-45, in=180] (1.5,.9) -- (1.9,.9) -- (1.9,.1) -- (1.5,.1) to[out=180,in=0] (1,.3);
\draw[thick, DarkGreen] (.8,.95) -- (1.2,.95);
\filldraw[fill=cyan] (.8,.95) circle (.025cm);
\filldraw[fill=cyan] (1.2,.95) circle (.025cm);
\node[blue] at (.85,.1) {$\scriptstyle a$};
\node[orange] at (.3,.2) {$\scriptstyle b$};
\node[DarkGreen] at (.9,.85) {$\scriptstyle \overline{c}$};
}
\displaybreak[1]\\&=
\sqrt{\frac{d_y}{d_x}}
\frac{1}{D}\sum_{\textcolor{blue}{a},\textcolor{DarkGreen}{c}\in \Irr(\cX)} \frac{\sqrt{\textcolor{blue}{d_a}\textcolor{DarkGreen}{d_c}}}{\sqrt{\textcolor{orange}{d_b}}}
\tikzmath[scale=2]{
\draw[thick, red] (.4,-.5) -- (.4,-.3) to[out=90,in=180] (.7,-.1) --node[below]{$\scriptstyle H$} (.9,-.1) to[out=0,in=-90] (1,.1) -- (1,.15);
\filldraw[white] (.5,-.15) circle (.05cm);
\draw[knot] (-.5,0) -- (2.5,0);
\draw (-.5,1) -- (2.5,1);
\draw (0,0) -- (0,1);
\draw (2,0) -- (2,1);
\draw (.5,-.5) -- (.5,0);
\draw (.5,1) -- (.5,1.5);
\draw (1.5,-.5) -- (1.5,0);
\draw (1.5,1) -- (1.5,1.5);
\draw[dotted] (1,0) -- (1,1);
\draw[thick, orange] (1,.15) -- node[left]{$\scriptstyle x$} (1,.3) ;
\draw[thick, orange] (1,.3) -- (1,.45) node[left]{$\scriptstyle z$};
\filldraw[orange] (1,.3) node[right, yshift=.15cm]{$\scriptstyle f$} circle (.025cm);
\filldraw[red] (1,.15) node[right]{$\scriptstyle \psi$} circle (.025cm);
\draw[knot, thick, blue, rounded corners=5pt] (.7,.05) -- (1.3,.05); 
\draw[thick, orange] (1.3,.05) to[out=0,in=0] (1.3,.3) -- (1,.3) -- (.7,.3) to[out=180, in=180] (.7,.05);
\draw[thick, DarkGreen, rounded corners=5pt] (.7,.05) -- (.05,.05) -- (.05,.95) -- (1.95,.95) -- (1.95,.05) -- (1.3,.05);
\filldraw[fill=cyan] (.7,.05) circle (.025cm);
\filldraw[fill=cyan] (1.3,.05) circle (.025cm);
\node[blue] at (.85,.1) {$\scriptstyle a$};
\node[orange] at (.55,.2) {$\scriptstyle \overline{b}$};
\node[DarkGreen] at (.3,.15) {$\scriptstyle c$};
}
|\Omega\rangle
\displaybreak[1]\\&=
\sqrt{\frac{d_y}{d_x}}
\frac{1}{D}\sum_{\textcolor{DarkGreen}{c}\in \Irr(\cX)} \textcolor{DarkGreen}{d_c}
\tikzmath[scale=2]{
\draw[thick, red] (.4,-.5) -- (.4,-.3) to[out=90,in=180] (.7,-.1) --node[below]{$\scriptstyle H$} (.9,-.1) to[out=0,in=-90] (1,.1) -- (1,.2);
\filldraw[white] (.5,-.15) circle (.05cm);
\draw[knot] (-.5,0) -- (2.5,0);
\draw (-.5,1) -- (2.5,1);
\draw (0,0) -- (0,1);
\draw (2,0) -- (2,1);
\draw (.5,-.5) -- (.5,0);
\draw (.5,1) -- (.5,1.5);
\draw (1.5,-.5) -- (1.5,0);
\draw (1.5,1) -- (1.5,1.5);
\draw[dotted] (1,0) -- (1,1);
\draw[thick, orange] (1,.2) -- node[left]{$\scriptstyle x$} (1,.4) ;
\draw[thick, orange] (1,.4) -- (1,.55) node[left]{$\scriptstyle z$};
\filldraw[red] (1,.2) node[right]{$\scriptstyle \psi$} circle (.025cm);
\draw[thick, orange, knot] (1,.4) -- (.8,.4) to[out=180, in=180] (.8,.1) -- (1.2,.1) to[out=0,in=0] (1.2,.4) -- (1,.4);
\draw[thick, DarkGreen, knot, rounded corners=5pt] (.05,.05) rectangle (1.95,.95);
\filldraw[orange] (1,.4) node[right, yshift=.15cm]{$\scriptstyle f$} circle (.025cm);
\node[orange] at (.55,.2) {$\scriptstyle \overline{b}$};
\node[DarkGreen] at (.3,.15) {$\scriptstyle c$};
}
|\Omega\rangle
\displaybreak[1]\\&=
\sqrt{\frac{d_y}{d_x}}\sigma^H_r(\phi,\rho_H(f)\psi)|\Omega\rangle
\end{align*}
as claimed.

So far, we have a well-defined representation of $\Tube(\cX)$ on the Hilbert space 
$$\spann\set{\sigma^H_r(\phi,\psi)|\Omega\rangle}{\psi\in\rho_H},$$ 
and an obvious surjective representation homomorphism given by $\psi\mapsto\sigma^H_r(\phi,\psi)|\Omega\rangle$.
There are therefore two possibilities: either this representation is isomorphic to $\rho_H$, or we simply have the $0$ representation.
The last thing we must do is rule out the latter possibility.

If the representation we have just defined were $0$ for all $\phi$ and $\Omega$, then all our string operators would just be $0$ on the space of ground states.
Hence, it suffices to check that there is at least one string operator $\sigma^H_R(\phi,\psi)$ for each $H\in\Irr(Z(\cX))$.
By Lemma \ref{lem:restoration}, string operators $\sigma^1_r(z,w)$ are all nonzero on ground states.
By Equation \eqref{eq:stringFusion}, we see that $\sigma^{\overline{H}}_r[1,1](1,1)\sigma^H_r(\phi,\psi)=\sigma^1_r(1,1)$, which has a nonzero action on the space of ground states.
Hence, the factor $\sigma^H_r(\phi,\psi)$ was itself nonzero on ground states, completing the proof.

One might worry that our construction of generalized string operators such as $\sigma^{\overline{H}}_r[1,1](1,1)$ relies on Corollary \ref{cor:tubeKetbra}, making this reasoning circular.
However, Corollary \ref{cor:tubeKetbra} does not rely on the fact that the representation $\set{\sigma^H_r(\phi,\psi)|\Omega\rangle}{\psi\in\rho_H}$ is nonzero, but only on the fact that it is a transitive representation of $\Tube(\cX)$, which we have already proven, so there is no issue.
\end{proof}

Because $\Tube(\mathcal{X})$ is a finite dimensional $\Cstar$-algebra, it is nothing more than a multimatrix algebra, i.e. $\bigoplus_kM_{n_k}(\mathbb{C})$, with one summand for each irreducible representation.
We know that the irreducible representations of $\Tube(\mathcal{X})$ are just $\rho_s$ for $s\in\Irr(Z(\mathcal{X}))$.
Actually computing an isomorphism $\Phi:\Tube(\mathcal{X})\to\bigoplus_s\End(\rho_s)$, where $\End(\rho_s)\cong M_{\dim(\rho_s)}(\mathbb{C})$ means endomorphisms as a Hilbert space, is another matter.
The existence of such a $\Phi$ has several consequences: for one thing, it implies that all our string operators $\sigma^s_p(\phi,\psi)$ are distinct.
However, one must explicitly compute $\Phi$ in order to obtain operators $T^s_{\ell,v}(\phi,\psi)$ used to define hopping operators and string operators on excited states, as we see in the following corollary.
\begin{cor}
 \label{cor:tubeKetbra}
 For any link $\ell$, vertex $v$ of $\ell$, anyon $s\in\Irr(Z(\mathcal{X}))$, and morphisms $\phi,\psi\in\bigoplus_x\mathcal{X}(s\to x)$, there is a local operator $T^s_{\ell,v}(\phi,\psi)$ such that, if $|\omega\rangle=A_\ell|\omega\rangle=B_r|\omega\rangle$ for plaquettes $r$ containing $\ell$, then
 \[
  T^s_{\ell,v}(\phi,\psi)\sigma^s_p(\eta',\eta)|\omega\rangle=\langle\phi|\eta\rangle\sigma^s_p(\eta',\psi)|\omega\rangle\text{,}
 \]
 and if $t\neq s$,
 \[T^s_{\ell,v}(\phi,\psi)\sigma^t_p(\eta',\eta)|\omega\rangle=0\text{.}\]
\end{cor}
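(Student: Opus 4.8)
The plan is to realize $T^s_{\ell,v}(\phi,\psi)$ as the local tube-algebra action \eqref{eq:HGWaction} of a single carefully chosen element of $\Tube(\mathcal{X})$, and then read off the two identities from Proposition \ref{prop:tubeActionCorrect}. Since $\Tube(\mathcal{X})$ is a finite-dimensional $\Cstar$-algebra whose irreducible representations are exactly the $\rho_t$ for $t\in\Irr(Z(\mathcal{X}))$, the structure theory of multimatrix algebras provides an isomorphism $\Phi\colon\Tube(\mathcal{X})\xrightarrow{\sim}\bigoplus_t\End(\rho_t)$. First I would use this to produce, for the given data, the unique element $g\in\Tube(\mathcal{X})$ whose image under $\Phi$ is the rank-one operator $|\psi\rangle\langle\phi|\in\End(\rho_s)$ (that is, $\eta\mapsto\langle\phi|\eta\rangle\psi$) in the $s$-summand and $0$ in every summand $\End(\rho_t)$ with $t\neq s$; equivalently, $\rho_s(g)=|\psi\rangle\langle\phi|$ and $\rho_t(g)=0$ for $t\neq s$.

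Next I would absorb the normalization scalar appearing in Proposition \ref{prop:tubeActionCorrect}. That proposition shows the physical action $\rhd$ of a homogeneous $f\in\mathcal{X}(xc\to cy)$ agrees with $\rho_s$ up to the factor $\sqrt{d_y/d_x}$; in other words, $\rhd$ realizes the twisted representation $\rho_s\circ\alpha$, where $\alpha$ is the algebra automorphism of $\Tube(\mathcal{X})$ rescaling the summand $\mathcal{X}(xc\to cy)$ by $\sqrt{d_y/d_x}$ (this is precisely the automorphism flagged at the start of the proof of Proposition \ref{prop:tubeActionCorrect}). Setting $f:=\alpha^{-1}(g)$ and defining $T^s_{\ell,v}(\phi,\psi):=f\rhd(-)$ via \eqref{eq:HGWaction} then yields a genuine local operator, because the action \eqref{eq:HGWaction} only glues a strand into the two plaquettes adjacent to $\ell$.

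Finally I would verify the two identities. Because $\sigma^s_p(\eta',\eta)|\omega\rangle$ lies in the image of $\pi^1_{\ell,v}$ and $|\omega\rangle$ is locally a ground state near $\ell$, Proposition \ref{prop:tubeActionCorrect}, applied componentwise in the homogeneous decomposition of $f$ and summed by linearity, gives $f\rhd\sigma^s_p(\eta',\eta)|\omega\rangle=\sigma^s_p(\eta',\rho_s(\alpha(f))\eta)|\omega\rangle=\sigma^s_p(\eta',\rho_s(g)\eta)|\omega\rangle=\langle\phi|\eta\rangle\,\sigma^s_p(\eta',\psi)|\omega\rangle$, which is the first claim; the $\alpha$-twist is exactly what converts the component scalars $\sqrt{d_y/d_x}$ into the clean matrix-unit action. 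Running the identical computation on $\sigma^t_p(\eta',\eta)|\omega\rangle$ for $t\neq s$ produces $\sigma^t_p(\eta',\rho_t(g)\eta)|\omega\rangle=\sigma^t_p(\eta',0)|\omega\rangle=0$, which is the second claim; here I use that the representation generated by the $\sigma^t_p$ states is isomorphic to $\rho_t$, as established at the end of the proof of Proposition \ref{prop:tubeActionCorrect}, so that the single element $f$ acts as $\rho_t(\alpha(f))$ on each sector.

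The main obstacle is bookkeeping rather than conceptual: one must keep the scalar twist $\alpha$ straight so that the abstract matrix unit $\rho_s(g)=|\psi\rangle\langle\phi|$ matches the physically normalized action, and one must take care that $f$ is a single fixed element acting correctly both on the $s$-sector and on every $t\neq s$ sector at once. Both points are entirely controlled by Proposition \ref{prop:tubeActionCorrect}, so the only genuinely new input is invoking the multimatrix structure of $\Tube(\mathcal{X})$ to obtain the sector-supported matrix unit $g$. I would note that to exhibit $T^s_{\ell,v}(\phi,\psi)$ \emph{explicitly}, as opposed to merely proving existence, one additionally needs the explicit isomorphism $\Phi$, which is why the surrounding discussion stresses computing $\Phi$.
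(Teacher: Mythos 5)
Your proposal is correct and takes essentially the same approach as the paper, which simply sets $T^s_{\ell,v}(\phi,\psi)=\Phi^{-1}(|\psi\rangle\langle\phi|)\rhd-$ and invokes Proposition \ref{prop:tubeActionCorrect}. Your explicit tracking of the normalization automorphism $\alpha$ (converting the $\sqrt{d_y/d_x}$ factors into the clean matrix-unit action) is a careful refinement that the paper leaves implicit, but it is the same argument.
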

\begin{proof}
 Since $\phi$ and $\psi$ are vectors in $\rho_s$, we simply set $T^s_{\ell,v}(\phi,\psi)=\Phi^{-1}(|\psi\rangle\langle\phi|)\rhd-$.
 The desired result is now just a case of Proposition \ref{prop:tubeActionCorrect}.
\end{proof}
\begin{cor}
 \label{cor:hoppingIntertwiner}
 The hopping operators $h^s_q$ are intertwiners for local $\Tube(\mathcal{X})$-actions on spaces of excitations at the endpoints of $q$.
\end{cor}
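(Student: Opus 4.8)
The plan is to reduce everything to Proposition \ref{prop:tubeActionCorrect} together with the hopping relation \eqref{eq:hopping}. By Proposition \ref{prop:tubeActionCorrect}, the local $\Tube(\cX)$-representation at the terminal endpoint of a path $r$ is realized, on the span of the states $\sigma^s_r(\psi',\psi)|\Omega\rangle$ with $|\Omega\rangle$ a ground state, through the identification $\psi\mapsto\sigma^s_r(\psi',\psi)|\Omega\rangle$ with $\rho_s$, up to the scalar $\sqrt{d_y/d_x}$ recorded in \eqref{eq:tubeActionAtVertex}. I would apply this once with $r=p$ (the source space, where the $s$-excitation sits at the initial endpoint of $q$, i.e.\ the end of $p$) and once with $r=q\cdot p$ (the target space, where the excitation sits at the terminal endpoint of $q$). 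Since such states span the two representations in question, it suffices to verify the intertwining identity on them.

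First I would fix $f\in\cX(xc\to cy)\subseteq\Tube(\cX)$ and a ground state $|\Omega\rangle$. Acting by the source tube algebra and then hopping, Proposition \ref{prop:tubeActionCorrect} followed by \eqref{eq:hopping} gives
\[
h^s_q\left(f\rhd\sigma^s_p(\psi',\psi)|\Omega\rangle\right)
=\sqrt{\frac{d_y}{d_x}}\,h^s_q\sigma^s_p(\psi',\rho_s(f)\psi)|\Omega\rangle
=\sqrt{\frac{d_y}{d_x}}\,\sigma^s_{q\cdot p}(\psi',\rho_s(f)\psi)|\Omega\rangle\text{.}
\]
Conversely, hopping first and then acting by the target tube algebra, \eqref{eq:hopping} followed by Proposition \ref{prop:tubeActionCorrect} gives
\[
f\rhd\left(h^s_q\sigma^s_p(\psi',\psi)|\Omega\rangle\right)
=f\rhd\sigma^s_{q\cdot p}(\psi',\psi)|\Omega\rangle
=\sqrt{\frac{d_y}{d_x}}\,\sigma^s_{q\cdot p}(\psi',\rho_s(f)\psi)|\Omega\rangle\text{.}
\]
The two right-hand sides coincide, so $h^s_q$ commutes with the $\Tube(\cX)$-action on these spanning vectors; as $f$ was arbitrary, $h^s_q$ is an intertwiner. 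Here I am using that \eqref{eq:hopping} holds for the arbitrary terminal datum $\rho_s(f)\psi$, which is immediate since \eqref{eq:hopping} was established for all $\psi$.

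The point to be careful about — and what I expect to be the only genuine obstacle — is the bookkeeping of \emph{where} the tube algebra acts relative to the hopping. The datum $\psi$ originally labels the terminal endpoint of $p$ (equivalently the initial endpoint of $q$); after applying $h^s_q$ it labels the terminal endpoint of $q\cdot p$, i.e.\ the far endpoint of $q$, as the remark following \eqref{eq:hopping} records. Thus the source action is the local $\Tube(\cX)$-action at the end of $p$, the target action is the local action at the far end of $q$, and it is precisely the transport of $\psi$ effected by $h^s_q$ via \eqref{eq:hopping} that matches them. One should also note that the scalar $\sqrt{d_y/d_x}$ in \eqref{eq:tubeActionAtVertex} depends only on $f$ and not on the path, so it is identical in both computations; hence $h^s_q$ intertwines the two $\rho_s$-representations on the nose.
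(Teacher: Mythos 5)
Your proof is correct and is exactly the paper's argument: the paper's proof reads simply ``This follows immediately from Equation \eqref{eq:hopping} and Proposition \ref{prop:tubeActionCorrect},'' and your two computations (act-then-hop versus hop-then-act, both landing on $\sqrt{d_y/d_x}\,\sigma^s_{q\cdot p}(\psi',\rho_s(f)\psi)|\Omega\rangle$) are precisely the details being left implicit there. Your care about the path-independence of the scalar $\sqrt{d_y/d_x}$ and about which endpoint carries which action is a welcome elaboration, not a deviation.
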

\begin{proof}
 This follows immediately from Equation \eqref{eq:hopping} and Proposition \ref{prop:tubeActionCorrect}.
\end{proof}

Proposition \ref{prop:tubeActionCorrect} makes explicit the correspondence between quasiparticle excitations and simple objects of $Z(\mathcal{X})$ which is described in \cite[Section~\RN{5}.A]{PhysRevB.97.195154}.
In the language of that article, when applying $\sigma_p^s(\phi,\psi)$ for different choices of $\psi$, we obtain different dyons belonging to the same dyon species $s$.
The possible excitations of type $s$ form an irreducible representation of the $\Tube(\mathcal{X})$, acting locally at the endpoint of $p$, so we can explicitly construct local operators that permute the dyons of a given species.
Moreover, the more general notion of string operator we have given here allows us to locally realize all dyons in a given dyon species via string operators.

\section{Lattice model for anyon condensation}
\label{sec:condensationModels}

We will now describe a class of string-net lattice model due to Corey Jones, parameterized by the choice of a unitary fusion category $\cX$ and a condensable algebra $A\in Z(\cX)$, which supports a phase transition between $Z(\cX)$ and $Z(\cX)_A^{\loc}$ topological order.
While \cite{PhysRevB.79.045316} discusses condensable algebras in the context of phase transitions, other works on anyons condensation, such as \cite{MR3246855,MR2942952}, have focused on describing a spatial boundary between regions where $A$ is and is not condensed.
In our construction, one can recover such a spatial boundary by performing the phase transition in only part of the lattice.

In \S~\ref{ssec:etaleMaths}, we review the mathematics of condensable algebras, which are the data necessary to perform anyon condensation, as described in \cite{PhysRevB.79.045316,MR3246855}.
In \S~\ref{ssec:condensationModel}, we present a string-net model where condensation of $A$ can be performed by tuning a parameter $t$ from $0$ to $1$.
In \S~\ref{ssec:t=0}, we see that when $t=0$, our model reduces to the one introduced in \S~\ref{ssec:lwModel}, and hence has $Z(\cX)$ topological order.
In \S~\ref{ssec:condensedPhase}, we see that when $t=1$, our model has $Z(\cX)_A^{\loc}$ topological order.
To show this, we investigate how string operators from the model of \S~\ref{ssec:lwModel} are modified to give string operators in the new model, and give an algebra $\PreTubeAC$, analogous to $\Tube(\cX)$, of local operators acting on states containing an isolated excitation.
We generalize the arguments of the previous section to show that $\PreTubeAC$ classifies localized excitations when $t=1$, some of which are not topologically mobile.
The excitations which are topological are representations of a quotient $\Tube_A(\cX)$ of $\PreTubeAC$, which we prove is Morita equivalent to $\Tube(\cX_A)$.
Since $Z(\cX_A)\cong Z(\cX)_A^{\loc}$, this verifies that topological excitations in our model at $t=1$ are indeed described by $Z(\cX)_A^{\loc}$.

\subsection{Background: Condensable Algebras}
\label{ssec:etaleMaths}
We begin by recalling the definition of a condensable algebra in a UMTC $\cC$.
We then give some basic facts about condensable algebras, which we will later use.
\begin{defn}
 An \textbf{algebra} in $\mathcal{C}$ is an object $A$ equipped with a unit morphism $u:1\to A$, depicted by a univalent vertex, and a multiplication morphism $m:A A\to A$, depicted by a trivalent vertex, with the following properties:
 \begin{itemize}
  \item (unitality)
$
\tikzmath{
\draw[thick,red] (0,-.4) -- (0,.4);
\draw[thick,red] (0,0) -- (-.2,-.2);
\filldraw[red] (-.2,-.2) circle (.05cm);
}
=
\tikzmath{
\draw[thick,red] (.2,-.4) -- (.2,.4);
}
=
\tikzmath[xscale=-1]{
\draw[thick,red] (0,-.4) -- (0,.4);
\draw[thick,red] (0,0) -- (-.2,-.2);
\filldraw[red] (-.2,-.2) circle (.05cm);
}
$
  \item (associativity)
$
\tikzmath{
\draw[thick,red] (-.2,-.2) arc (180:0:.2cm);
\draw[thick,red] (0,0) arc (180:0:.3cm) -- (.6,-.2);
\draw[thick,red] (.3,.3) -- (.3,.5);
}
=
\tikzmath[xscale=-1]{
\draw[thick,red] (-.2,-.2) arc (180:0:.2cm);
\draw[thick,red] (0,0) arc (180:0:.3cm) -- (.6,-.2);
\draw[thick,red] (.3,.3) -- (.3,.5);
}$
 \end{itemize}
 A \textbf{condensable algebra} is also \emph{commutative}, meaning $m\circ \beta_{A,A} = m$ where $\beta$ is the braiding in $\cC$, and \emph{unitarily separable}, meaning that $m^\dag$ splits $m$ as an $A-A$ bimodule map:
 \begin{equation}
  \label{eq:separable}
  \tikzmath{
\draw[thick,red] (-.2,.2) -- (-.2,.4);
\draw[thick,red] (.2,-.2) -- (.2,-.4);
\draw[thick,red] (-.4,-.4) -- (-.4,0) arc (180:0:.2cm) arc (-180:0:.2cm) -- (.4,.4);
}
=
\tikzmath{
\draw[thick,red] (-.2,.4) arc (-180:0:.2cm);
\draw[thick,red] (-.2,-.4) arc (180:0:.2cm);
\draw[thick,red] (0,-.2) -- (0,.2);
}
=
\tikzmath[xscale=-1]{
\draw[thick,red] (-.2,.2) -- (-.2,.4);
\draw[thick,red] (.2,-.2) -- (.2,-.4);
\draw[thick,red] (-.4,-.4) -- (-.4,0) arc (180:0:.2cm) arc (-180:0:.2cm) -- (.4,.4);
}
\qquad\qquad
\tikzmath{
\draw[thick,red] (0,.2) -- (0,.4);
\draw[thick,red] (0,-.2) -- (0,-.4);
\draw[thick,red] (0,0) circle (.2cm);
}
=
\tikzmath{
\draw[thick,red] (0,-.4) -- (0,.4);
}
\end{equation}
Here, the vertical reflection of $m$ denotes $m^\dag$.
Finally, a condensable algebra $A$ is \textit{connected}, meaning that $\mathcal{X}(1\to A)$ is $1$-dimensional.

 We call the projection $m^\dag m\in\End(AA)$ the \emph{condensation morphism}; by Condition \eqref{eq:separable}, the condensation morphism is a projection with image isomorphic to $A$.
\end{defn}

We will use the condensation morphism to define the term of our Hamiltonian which implements anyon condensation; see \eqref{eq:DTerm}.
By composing condensation morphisms for different pairs of copies of $A$, we can also obtain projections in $\End(A^{\otimes n})$ for $n>2$, as in Figure \ref{fig:condensationMorphisms}.
Intuitively, when a condensable algebra is condensed, copies of that algebra saturate the system, and any two nearby copies are entangled via the condensation morphism, so these projections in $\End(A^{\otimes n})$ will be of interest to us.
As seen in the figure, since the copies of $A$ fill a $2$-dimensional region, we must also consider conjugating condensation morphisms by the braiding.

\begin{figure}
 \centering
 \begin{tikzpicture}
  \draw[blue] (0,0) -- (1,0);
  \draw[blue] (0,0) -- (.3,.2);
  \draw[blue] (.3,.2) -- (1,0);
  \filldraw (0,0) circle (2pt) node[anchor=east] {$\scriptstyle A$};
  \filldraw (.3,.2) circle (2pt) node[anchor=north] {$\scriptstyle A$};
  \filldraw (1,0) circle (2pt) node[anchor=west] {$\scriptstyle A$};
  \draw[thick,red] (0,0) arc (213.69:33.69:.1803cm);
  \draw[thick,red] (.05,.25) -- (-.05,.4);
  \draw[thick,red] (-.05,.4) arc (-56.31:-146.31:.1803cm);
  \draw[thick,red] (-.05,.4) arc (-56.31:33.69:.1803cm);
  \draw[thick,red] (1,0) -- (1,.6);
 \end{tikzpicture}
 \begin{tikzpicture}
  \draw[thick,red] (.3,.2) -- (.3,1.2);
  \draw[thick,red,knot] (0,0) arc (180:0:.5cm);
  \draw[thick,red,knot] (.5,.5) -- (.5,.7);
  \draw[thick,red,knot] (0,1.2) arc (-180:0:.5cm);
  \draw[blue] (0,0) -- (1,0);
  \draw[blue] (0,0) -- (.3,.2);
  \draw[blue] (.3,.2) -- (1,0);
  \filldraw (0,0) circle (2pt) node[anchor=east] {$\scriptstyle A$};
  \filldraw (.3,.2) circle (2pt) node[anchor=north] {$\scriptstyle A$};
  \filldraw (1,0) circle (2pt) node[anchor=west] {$\scriptstyle A$};
 \end{tikzpicture}
 \begin{tikzpicture}
  \draw[blue] (0,0) -- (1,0);
  \draw[blue] (0,0) -- (.3,.2);
  \draw[blue] (.3,.2) -- (1,0);
  \filldraw (0,0) circle (2pt) node[anchor=east] {$\scriptstyle A$};
  \filldraw (.3,.2) circle (2pt) node[anchor=north] {$\scriptstyle A$};
  \filldraw (1,0) circle (2pt) node[anchor=west] {$\scriptstyle A$};
  \draw[thick,red] (.3,.2) arc (164.1:-15.9:.364cm);
  \draw[thick,red] (.75,.45) -- (.85,.8);
  \draw[thick,red] (.85,.8) arc (254.1:164.1:.364cm);
  \draw[thick,red] (.85,.8) arc (-105.9:-15.9:.364cm);
  \draw[thick,red] (0,0) -- (0,1.2);
 \end{tikzpicture}
  \begin{tikzpicture}
  \draw[blue] (0,0,0) -- (1,0,0);
  \draw[blue] (0,0,0) -- (.5,.866,0);
  \draw[blue] (.5,.866,0) -- (1,0,0);
  \filldraw (0,0,0) circle (2pt) node[anchor=east] {$\scriptstyle A$};
  \filldraw (.5,.866,0) circle (2pt) node[anchor=north] {$\scriptstyle A$};
  \filldraw (1,0,0) circle (2pt) node[anchor=west] {$\scriptstyle A$};
 \end{tikzpicture}
 \caption{Condensation morphisms between $3$ nearby copies of $A$, with time depicted in the vertical direction. Blue segments depict paths on the underlying lattice. Notice that one morphism involves braiding, depending on perspective.}
 \label{fig:condensationMorphisms}
\end{figure}
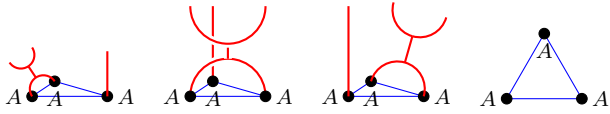

The following lemmas show that the order in which several condensation morphisms are composed and the choices of over- or under-braiding do not affect the resulting projection in $\End(A^{\otimes n})$.
In particular, condensation morphisms generate a unique projection $\End(A^{\otimes n})$ in which all $n$ copies of $A$ interact.
\begin{lem}
 \label{lem:DCommutesMath}
 Suppose $A$ is a condensable algebra in $\mathcal{C}$ with multiplication $m$.
 Then $( m^\dag m)\otimes 1$ and $1\otimes( m^\dag m)$ commute.
\end{lem}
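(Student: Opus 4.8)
The plan is to show that both iterated projections coincide with a single canonical projection coming from the threefold multiplication, so that their equality is immediate. The only input needed is the reinterpretation of the unitary separability condition \eqref{eq:separable} as the Frobenius relations: reading the first displayed identity there as a composite of morphisms, it asserts exactly
\begin{equation*}
(m\otimes\id)(\id\otimes m^\dag)=m^\dag m=(\id\otimes m)(m^\dag\otimes\id).
\end{equation*}
I abbreviate $p:=m^\dag m\in\End(A\otimes A)$, $P_{12}:=p\otimes\id$ and $P_{23}:=\id\otimes p$ on $A^{\otimes 3}$, and write $m_3:=m(m\otimes\id)=m(\id\otimes m)$ for the threefold multiplication (well-defined by associativity), so that $m_3^\dag=(m^\dag\otimes\id)m^\dag=(\id\otimes m^\dag)m^\dag$ by taking adjoints.

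First I would factor each projection through the underlying (co)multiplication as $P_{12}=(m^\dag\otimes\id)(m\otimes\id)$ and $P_{23}=(\id\otimes m^\dag)(\id\otimes m)$, and then compute
\begin{equation*}
P_{12}P_{23}=(m^\dag\otimes\id)\bigl[(m\otimes\id)(\id\otimes m^\dag)\bigr](\id\otimes m).
\end{equation*}
The bracketed middle term is the left-hand Frobenius relation, hence equals $m^\dag m$; substituting and regrouping via (co)associativity turns the expression into $\bigl[(m^\dag\otimes\id)m^\dag\bigr]\bigl[m(\id\otimes m)\bigr]=m_3^\dag m_3$. Running the mirror-image computation on $P_{23}P_{12}=(\id\otimes m^\dag)\bigl[(\id\otimes m)(m^\dag\otimes\id)\bigr](m\otimes\id)$ and invoking the right-hand Frobenius relation for the bracket gives $P_{23}P_{12}=m_3^\dag m_3$ as well. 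Since both products equal $m_3^\dag m_3$, they are equal, which is the claim.

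I expect no serious obstacle here; the content is entirely the observation that \eqref{eq:separable} \emph{is} the Frobenius axiom, after which the two projections are just two bracketings of the same triple-product morphism. The one point to watch is the bookkeeping of which tensor factors each map acts on, which is cleanest handled diagrammatically (stacking the two ``bubble'' projections and using \eqref{eq:separable} to slide the cap past the cup). It is worth remarking that neither commutativity nor specialness $mm^\dag=\id$ is needed for the commuting relation itself---only the Frobenius relations and associativity---although specialness is what makes $p$ a genuine projection and is what will identify $m_3^\dag m_3$ as the canonical ``all three copies interact'' projection used in the subsequent $n$-fold discussion.
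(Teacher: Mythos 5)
Your proof is correct and is essentially the paper's own argument in algebraic rather than diagrammatic form: the paper's chain of pictures likewise reads \eqref{eq:separable} as the Frobenius relation, uses it to collapse each ordering of the two bubbles to the triple-(co)multiplication composite $m_3^\dag m_3$, and passes between the two bracketings by associativity and coassociativity (the latter obtained, as you do, by applying $\dag$ to associativity). Your closing remark that neither commutativity nor specialness $mm^\dag=\id$ is needed is accurate, but it is a side observation rather than a different method.
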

\begin{proof}
First, observe that the separator is coassociative by applying $\dag$ to the associativity axiom.
We then have
 \[(1\otimes( m^\dag m))\circ(( m^\dag m)\otimes 1)=
  \left.
  \begin{tikzpicture}[baseline,shift={(0,-.5)}]
   \draw[thick,red] (0,0) arc (180:0:.2cm);
   \draw[thick,red] (.2,.2) -- (.2,.4);
   \draw[thick,red] (.8,0)-- (.8,.6) arc (0:180:.2cm) arc (0:-180:.2cm) -- (0,1.2);
   \draw[thick,red] (.6,.8) -- (.6,1);
   \draw[thick,red] (.4,1.2) arc (-180:0:.2cm);
  \end{tikzpicture}
  \right.
  =
  \begin{tikzpicture}[baseline,shift={(0,-.6)}]
   \draw[thick,red] (0,0) arc (180:0:.2cm);
   \draw[thick,red] (.2,.2) arc (180:0:.3cm) -- (.8,0);
   \draw[thick,red] (.5,.5) -- (.5,.8);
   \draw[thick,red] (.2,1.3) -- (.2,1.1) arc (-180:0:.3cm);
   \draw[thick,red] (.6,1.3) arc (-180:0:.2cm);
  \end{tikzpicture}
  =
  \begin{tikzpicture}[baseline,shift={(0,-.6)}]
   \draw[thick,red] (.4,0) arc (180:0:.2cm);
   \draw[thick,red] (0,0) -- (0,.2) arc (180:0:.3cm);
   \draw[thick,red] (.3,.5) -- (.3,.8);
   \draw[thick,red] (0,1.1) arc (-180:0:.3cm) -- (.6,1.3);
   \draw[thick,red] (-.2,1.3) arc (-180:0:.2cm);
  \end{tikzpicture}
  \hspace{-.1cm}=
  \begin{tikzpicture}[baseline,shift={(0,-.5)}]
   \draw[thick,red] (.4,0) arc (180:0:.2cm);
   \draw[thick,red] (.6,.2) -- (.6,.4);
   \draw[thick,red] (0,0)-- (0,.6) arc (180:0:.2cm) arc (-180:0:.2cm) -- (.8,1.2);
   \draw[thick,red] (.2,.8) -- (.2,1);
   \draw[thick,red] (0,1.2) arc (-180:0:.2cm);
\end{tikzpicture}
=(( m^\dag m)\otimes 1)\circ(1\otimes( m^\dag m))\text{,}\]
completing the proof.
\end{proof}

This lemma immediately implies the following corollary together with commutativity of $A$ and unitary separability.

\begin{cor}
 \label{cor:condensationFoamNE}
There is a unique morphism $A^{\otimes n}\to A^{\otimes m}$ generated by condensation morphisms and braidings for which all $n+m$ inputs and outputs are connected.

In particular, for any $n$, an endomorphism of $A^{\otimes n}$ generated by condensation morphisms between adjacent $A$-strands and conjugation by braiding depends only on which strands are connected.
\end{cor}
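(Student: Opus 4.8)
The plan is to reduce the statement to Lemma \ref{lem:DCommutesMath} together with commutativity and unitary separability of $A$, treating the collection of condensation morphisms as generating a well-defined ``cobordism-like'' operation. First I would set up the claim as a statement about string diagrams: a composite of condensation morphisms and braidings, in which every input and output strand is ultimately connected to every other, determines a single morphism $A^{\otimes n}\to A^{\otimes m}$, and this morphism is independent of all the choices made in building it (order of composition, which pairs of adjacent strands are fused at each stage, and whether each crossing is an over- or under-braiding). The key observation is that because $A$ is a commutative condensable algebra, the pair $(m,m^\dagger)$ makes $A$ a commutative \emph{Frobenius} algebra that is moreover special (by the second relation in \eqref{eq:separable}, $m m^\dagger=\id_A$) and symmetric. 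For such an algebra, any connected diagram built from $m$, $m^\dagger$, $u$, $u^\dagger$ and braidings collapses to a normal form depending only on the genus and the numbers $n,m$ of legs; connectedness plus the absence of any unit/counit insertions forces genus $0$, yielding a unique morphism.

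The steps I would carry out, in order, are as follows. \emph{Step 1:} Record that special commutative Frobenius structure gives the Frobenius relations and the idempotent identity $m m^\dagger=\id_A$, so that any internal ``bubble'' or redundant handle can be removed. \emph{Step 2:} Use Lemma \ref{lem:DCommutesMath} to show that the two ways of composing three consecutive condensation morphisms agree; this is exactly the associativity/coassociativity interchange, and it licenses rearranging the order in which non-overlapping fusions are performed on a row of strands. \emph{Step 3:} Use commutativity $m\circ\beta_{A,A}=m$ (and its adjoint) to show that inserting or deleting a braiding at a fusion vertex does not change the morphism, which handles the ``choice of over- or under-braiding'' ambiguity illustrated in Figure \ref{fig:condensationMorphisms}; braidings can be absorbed or slid through vertices at will. \emph{Step 4:} Combine Steps 1--3 to argue that any connected diagram can be brought to a fixed normal form---for instance, fuse all $n$ inputs into a single $A$-strand using a chosen left-to-right bracketing of $m^\dagger m$'s, then split that strand into the $m$ outputs using a chosen bracketing of separators---so the resulting morphism is unique. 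The ``in particular'' clause then follows by specializing to $m=n$ endomorphisms and noting that the only datum surviving the normal form is the partition of the $2n$ legs into connected components.

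The main obstacle I expect is Step 3, controlling the braiding. Lemma \ref{lem:DCommutesMath} only addresses planar compositions of condensation morphisms; once strands must cross---which is forced in a two-dimensional array of copies of $A$, as the figure emphasizes---one must check that the composite is genuinely insensitive to the crossing data. The clean way to dispatch this is to prove a small sliding lemma: because $A$ is commutative, $\beta_{A,A}\circ m^\dagger=m^\dagger$ and $m\circ\beta_{A,A}=m$, so a braiding adjacent to either a multiplication or a comultiplication vertex can be erased; iterating this lets one push every crossing to an isolated strand, where naturality of the braiding and the fact that the strand reconnects (by connectedness) let it be undone. Once braidings are eliminated the diagram is planar and Steps 1--2 finish the argument. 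I would flag that the rigorous version of ``collapse to normal form'' is the standard classification of special symmetric commutative Frobenius algebra cobordisms, and cite it rather than reprove it, so that the remaining content is precisely the braiding-erasure argument specific to the ambient UMTC $\mathcal{C}$.
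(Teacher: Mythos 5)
Your proposal takes essentially the same route as the paper, whose entire proof of this corollary is the single assertion that it follows from Lemma \ref{lem:DCommutesMath} together with commutativity and unitary separability of $A$ --- exactly the three ingredients you organize into a special-commutative-Frobenius normal-form argument. One caution on your Step 3: in a braided (non-symmetric) category $\beta_{A,A}^2\neq\id_{A\otimes A}$ in general (the twist of $A$ is trivial, but the twist of $A\otimes A$ need not be), so a crossing can never be ``undone'' on an isolated stretch of strand; the erasure-at-vertices mechanism you state first (using $m\circ\beta_{A,A}=m$, hence $m\circ\beta_{A,A}^{-1}=m$ and $\beta_{A,A}^{\pm 1}\circ m^\dagger=m^\dagger$, with connectedness guaranteeing every crossing can be slid via naturality to a vertex where its two strands fuse) is the step that must carry all the weight, rather than any appeal to the symmetric-monoidal cobordism classification.
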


Given a condensable algebra $A\in Z(\cX)$, anyons in the condensed phase are described by objects in $\Irr(Z(\cX)_A^{\loc})$ \cite{MR3246855,PhysRevB.79.045316}.
This UMTC is also the center of a fusion category, namely the category $\cX_A$ of right $A$-modules in $\cX$ \cite[Thm.~3.20]{MR3039775}, a fact we will frequently use below.
\begin{defn}
 If $\cX$ is a fusion category and $A\in Z(\cX)$ is a condensable algebra, the fusion category of \textit{right $A$-modules in $\cX$} \cite{MR1976459}\cite[\S~A.3]{MR3246855} has as objects pairs $(M,m)$, where $M\in\cX$ and $m:MA\to M$ is an \textit{action} morphism.
 We often denote such a pair by $M_A$, where the subscript denotes the existence of the $A$ action.

 Similar to the multiplication of $A$, we denote the action morphism diagrammatically by a trivalent vertex, as follows.
 \[m=\tikzmath{
  \draw (0,0) node[below]{$\scriptstyle M$} -- (0,1);
  \draw[red,thick] (.5,0) node[below]{$\scriptstyle A$} -- (.5,.25) arc (0:90:.5);
 }\]
 Here, red strands denote the algebra $A$.
 The action must satisfy the following associativity, unitality, and separability conditions (associative and separable actually implies unital) \cite[\S3.2]{MR4419534}.
 \[
  \tikzmath{
  \draw (0,-.5) -- (0,.5);
  \node at (0,-.7) {$\scriptstyle M$};
  \draw[red,thick] (0,.2) arc (90:0:.7cm);
  \draw[red,thick] (0,-.2) arc (90:0:.3cm);
  }
  =
  \tikzmath{
  \draw (0,-.5) -- (0,.5);
  \node at (0,-.7) {$\scriptstyle M$};
  \draw[red,thick] (0,.3) arc (90:0:.5cm);
  \draw[red,thick] (.2,-.5) arc (180:0:.3cm);
  }\,
  \qquad\qquad
  \tikzmath[xscale=-1]{
  \draw (0,-.5) -- (0,.5);
  \node at (0,-.7) {$\scriptstyle M$};
  \draw[red,thick] (0,0) -- (-.2,-.2);
  \filldraw[red] (-.2,-.2) circle (.05cm);
  }
  =
  \tikzmath{
  \draw (.2,-.5) -- (.2,.5);
  \node at (.2,-.7) {$\scriptstyle M$};
  }
  =
  \tikzmath{
  \draw (0,-.5) -- (0,.5);
  \node at (0,-.7) {$\scriptstyle M$};
  \draw[thick, red] (0,-.3) arc (-90:90:.3cm);
  }\,.
 \]
 The tensor product $M \otimes_A N_A$ of $A$-modules $M_A$ and $N_A$ is a subobject of $MN$, defined to be the image of the projection
 \[
  p_{M,N}
  :=
  \tikzmath[scale=.666]{
  \draw[thick,red] (0.5,-.4) -- (0.5,-.1);
  \draw[thick,red] (0.75,.15) arc (0:-180:.25cm);
  \filldraw[red] (.5,-.4) circle (.075cm);
  \draw[thick,red] (0.25,.15) arc (0:90:.25cm);
  \draw[thick,red,knot] (.75,.15) .. controls ++(90:.25cm) and ++(270:.25cm) .. (1.25,.65) arc (0:90:.25cm);
  \draw (0,-.6) -- (0,1.2);
  \draw[knot] (1,-.6) -- (1,1.2);
  }
  \in\End_\cC(M N),
 \]
 in $\cX(MN\to MN)$.
 Thus, the tensor unit of $\cX_A$ is $A_A$, where the $A$-module action on $A_A$ is defined to be the multiplication of $A$, and the associator and unitors of $\cX_A$ come from those of $\cX$.

\begin{rem}
The separability condition for $A$-modules ensures that $\cX_A$ is again unitary \cite[\S3.2]{MR4419534}.
Recall that $\cX_A$ can also be defined as the idempotent completion of the category of \emph{free} right $\cA$-modules of the form $x\otimes A$ for $x\in \cX$ where $A$ acts on the right using the multiplication on $A$.
Every free module is unitarily separable since $A$ is, and so the category of free right $A$-modules is a $\rm C^*$-category with finite dimensional hom spaces.
This means the unitary idempotent completion is equivalent to the ordinary idempotent completion.
\end{rem}

\end{defn}
Similarly, one can define the fusion category ${}_A\cX$ of left $A$-modules in $\cX$, and the UMTC $Z({}_A\cX)\cong{}_A^{\loc}Z(\cX)$.
However, there are canonical equivalences of UFCs ${}_A\cX\cong\cX_A$ and BFCs ${}_A^{\loc}Z(\cX)\cong Z(\cX)_A^{\loc}$.
Therefore, in the following, we will speak only of $\cX_A$ and $Z(\cX)_A^{\loc}\cong Z(\cX_A)$, even when $A$ acts on the left.

\subsection{General Lattice Model}
\label{ssec:condensationModel}
The essential idea behind this lattice model is to modify the original string-net models of \cite{PhysRevB.71.045110,PhysRevB.103.195155} to support a copy of the condensable algebra $A$ inside each plaquette, by adding appropriate local Hilbert spaces and modifying the Hamiltonian to account for the excitation, similar to the extended Levin-Wen models of \cite{PhysRevB.97.195154}.
The commuting projector Hamiltonian is then augmented with additional families of terms $C$ and $D$ depending on whether $Z(\cX)$ or $Z(\cX)_A^{\loc}$ topological order is desired.
Now that we have models of the condensed and uncondensed phases living in the same Hilbert space, one can smoothly pass from one Hamiltonian to the other through convex combinations.

Rather than give a construction for arbitrary lattice geometries, we will work out the details explicitly for a regular hexagonal lattice, and then describe how our construction must be adapted for other cases.
We realize the additional Hilbert space on each plaquette by adding an additional vertex and edge, as shown.
$$
\tikzmath{
\levinHexGrid[]{0}{0}{2}{3}{.5}{black}
\draw [->, line join=round,
decorate, decoration={
    zigzag,
    segment length=4,
    amplitude=.9,post=lineto,
    post length=2pt
}]  (3,0) -- (4,0);
\levinHexGrid[thick, red]{5}{0}{2}{3}{.5}{black}
}
$$
The usual vertices of the plaquette are assigned the usual hom spaces of the Levin-Wen model,
and the new trivalent vertices correspond to the Hilbert space
\begin{align*}
\tikzmath{
\draw[thick, red] (0:0cm) node[below, xshift=.1cm] {$\scriptstyle v$} -- (-.3,.3) node[above]{$\scriptstyle A$} ;
\draw (-120:.5cm)node[below]{$\scriptstyle x$}  -- (60:.5cm) node[above]{$\scriptstyle y$};
}
\qquad
&\longleftrightarrow
\qquad
 \cH_v := \bigoplus_{x,y\in \Irr(\cX)}\cX(U(A)x\to y)\text{,}
\end{align*}
where $U: Z(\cX)\to \cX$ is the forgetful functor, and the inner product $\langle\cdot|\cdot\rangle_v$ on the orthogonal summand $\cX(U(A)x\to y)$ of $\cH_v$ is given by
\begin{equation}
 \label{eq:AVertexIP}
 \langle g|f\rangle_v\id_y=\sqrt{\frac{d_y}{d_x}}fg^\dag
\end{equation}
This construction resembles the extended Levin-Wen models of \cite{PhysRevB.97.195154}, but it is slightly different, because $A$ may contain more than one copy of a given simple object $x\in\Irr(\cX)$.
As with \eqref{eq:vertexIP}, the choice of normalization in \eqref{eq:AVertexIP} is necessary so that the forthcoming plaquette operator $B_p$ will be self-adjoint.

One should think of the red edges as leading to an $A$-defect at the center of each plaquette, i.e.~a puncture in the surface labelled by the object $A\in Z(\cX)$, as described in \cite{PhysRevB.97.195154,1106.6033}.
While before, we interpreted states of the lattice model as living in the diagrammatic calculus of $\cX$, the red edges should be thought of as living in the diagrammatic calculus of $Z(\cX)$, descending into the page (c.f.~Figure~\ref{fig:condensationMorphisms}).
This interpretation will determine how we extend the existing terms of the Hamiltonian to our new Hilbert space, as well as how we define the additional terms necessary to select either $Z(\cX)$ or $Z(\cX)_A^{\loc}$ topological order.

The Hamiltonian for our model will consist of the original $A_v$ and $B_p$ terms, modified to account for the new red links, as well as two new terms needed to implement anyon condensation.
The required modifications of the $A_v$ and $B_p$ terms to account for the red links are fairly minor.
The $A_\ell$ terms associated to each black link $\ell$ are defined as before, and no new $A$-term is associated to red links.
Recall that the $A$-term ensure the morphisms labelling the two vertices at either end of a link are composable, so that ground states for all $A$-terms can be locally interpreted as living in the diagrammatic calculus of $\cX$.
Since the red edge is only incident to a single vertex, no additional constraint is needed.

The $B_p$ term is defined as previously: it is $0$ outside of the ground state of nearby $A_\ell$ terms, and on their ground states, $B_p=\frac{1}{D_\cX}\sum_{x\in\Irr(\cX)}d_xB_p^x$.
However, the operators $B_p^x$ which insert a loop labelled by $x$ into the plaquette must now take into account the half-braiding on $A$.
That is,
\begin{equation}
B_p^x=
 \label{eq:newBps}
 \tikzmath{
 \coordinate (center) at ($ .7*(-.333,.333) +.3*(-.5,0)$);
 \coordinate (pointD) at (canvas polar cs:angle=-30,radius=.866*.5);
 \coordinate (pointE) at (canvas polar cs:angle=-30,radius=.288*.5);
  \draw[thick, red] ($(center)+(pointD)$) -- +($.5*(-.333, .333)$);
  \levinHex[red, thick]{0}{0}{.5}{black}
  \arHexKnot{0}{0}{.35}{blue}{4}{1}
  \node[blue] at (210:.18) {$\scriptstyle{x}$};
 }
\end{equation}

We omit the proof of the following lemma which is a straightforward adaptation of \cite[Thm.~5.0.1]{0907.2204}.

\begin{lem}
Given two simple tensors of orthonormal basis elements
$$
\xi 
=
\tikzmath{
\coordinate (center) at ($ .7*1.6666*(-.333,.333) +.3*(-.5,0)$);
\coordinate (pointD) at (canvas polar cs:angle=-30,radius=.866*.75);
\coordinate (pointE) at (canvas polar cs:angle=-30,radius=.288*.75);
\draw[thick, red] ($(center)+(pointD)$) -- +($.75*(-.333, .333)$);
\levinHex[red, thick]{0}{0}{.75}{black}
\node at (-1.3,0) {$\scriptstyle x_1$};
\node at (-.5,0) {$\scriptstyle \xi_1$};
\node at (-.65,-1.1) {$\scriptstyle x_2$};
\node at (-.25,-.45) {$\scriptstyle \xi_2$};
\node at (.65,-1.1) {$\scriptstyle x_3$};
\node at (.25,-.45) {$\scriptstyle \xi_3$};
\node[red] at (.15,0) {$\scriptstyle c_2$};
\node[red] at (.75,-.35) {$\scriptstyle \gamma_2$};
\node[red] at (-.8,.8) {$\scriptstyle c_1$};
\node[red] at (-.35,.35) {$\scriptstyle \gamma_1$};
\node at (1.3,0) {$\scriptstyle x_4$};
\node at (.5,0) {$\scriptstyle \xi_4$};
\node at (.65,1.1) {$\scriptstyle x_5$};
\node at (.25,.45) {$\scriptstyle \xi_5$};
\node at (-.65,1.1) {$\scriptstyle x_6$};
\node at (-.25,.85) {$\scriptstyle \xi_6$};
}
\qquad
\text{ and }
\qquad
\xi'
=
\tikzmath{
\coordinate (center) at ($ .7*1.6666*(-.333,.333) +.3*(-.5,0)$);
\coordinate (pointD) at (canvas polar cs:angle=-30,radius=.866*.75);
\coordinate (pointE) at (canvas polar cs:angle=-30,radius=.288*.75);
\draw[thick, red] ($(center)+(pointD)$) -- +($.75*(-.333, .333)$);
\levinHex[red, thick]{0}{0}{.75}{black}
\node at (-1.3,0) {$\scriptstyle x_1'$};
\node at (-.5,0) {$\scriptstyle \xi_1'$};
\node at (-.65,-1.15) {$\scriptstyle x_2'$};
\node at (-.25,-.45) {$\scriptstyle \xi_2'$};
\node at (.65,-1.1) {$\scriptstyle x_3'$};
\node at (.25,-.45) {$\scriptstyle \xi_3'$};
\node[red] at (.15,0) {$\scriptstyle c_2'$};
\node[red] at (.75,-.35) {$\scriptstyle \gamma_2'$};
\node[red] at (-.8,.8) {$\scriptstyle c_1'$};
\node[red] at (-.35,.35) {$\scriptstyle \gamma_1'$};
\node at (1.3,0) {$\scriptstyle x_4'$};
\node at (.5,0) {$\scriptstyle \xi_4'$};
\node at (.65,1.1) {$\scriptstyle x_5'$};
\node at (.25,.45) {$\scriptstyle \xi_5'$};
\node at (-.65,1.15) {$\scriptstyle x_6'$};
\node at (-.25,.85) {$\scriptstyle \xi_6'$};
}
$$
whose internal edge labels are consistent, we have
$$
\langle \xi | B_p \xi' \rangle
=
\prod_i\delta_{c_i=c_i'}\prod_j\delta_{x_j=x_j'}
\frac{1}{D_\cX\sqrt{d_{x_1}d_{x_2}d_{x_3}d_{x_4}d_{x_5}d_{x_6}}}
\tikzmath{
\begin{scope}[xshift=-2.25cm]
\draw[thick, red] (-.5383,-.3883) .. controls ++(60:1.1cm) and ++(120:1.1cm) .. ($ (.5383,-.3883) + (2.25,0) $);
\draw[thick, red] (.5383,.3883) .. controls ++(60:.5cm) and ++(120:.5cm) .. ($ (-.5383,.3883) + (2.25,0) $);
\draw[] ($ (.564,-.974) $) .. controls ++(-60:.5cm) and ++(-120:.5cm) .. ($ (-.564,-.974) + (2.25,0) $);
\draw[] ($ (.564,.974) $) .. controls ++(60:.5cm) and ++(120:.5cm) .. ($ (-.564,.974) + (2.25,0) $);
\draw[] ($ (-.564,-.974) $) .. controls ++(-120:1cm) and ++(-60:1cm) .. ($ (.564,-.974) + (2.25,0) $);
\draw[] ($ (-.564,.974) $) .. controls ++(120:1cm) and ++(60:1cm) .. ($ (.564,.974) + (2.25,0) $);
\draw[] (-1.1,0) .. controls ++(180:.5cm) and ++(180:.5cm) .. (-.5,-2) -- ($ (.5,-2) + (2.25,0) $) .. controls ++(0:.5cm) and ++(0:.5cm) .. ($ (1.1,0) + (2.25,0) $);
\node[red] at ($ (-1.1,.85) + (2.25,0) $) {$\scriptstyle c_1$};
\node[red] at ($ (-1.1,.45) + (2.25,0) $) {$\scriptstyle c_2$};
\levinHex[]{0}{0}{.75}{black,knot}
\node at (-.5,0) {$\scriptstyle \xi_4^\dag$};
\node at (-.25,-.45) {$\scriptstyle \xi_3^\dag$};
\node at (.25,-.45) {$\scriptstyle \xi_2^\dag$};
\node[red] at (-.75,-.35) {$\scriptstyle \gamma_2^\dag$};
\node[red] at (.3,.45) {$\scriptstyle \gamma_1^\dag$};
\node at (.5,0) {$\scriptstyle \xi_1^\dag$};
\node at (.25,.85) {$\scriptstyle \xi_6^\dag$};
\node at (-.25,.45) {$\scriptstyle \xi_5^\dag$};
\end{scope}
\levinHex[]{0}{0}{.75}{black,knot}
\node at (-1.1,.15) {$\scriptstyle x_1$};
\node at (-.5,0) {$\scriptstyle \xi_1'$};
\node at (-1.1,-1.1) {$\scriptstyle x_2$};
\node at (-.25,-.45) {$\scriptstyle \xi_2'$};
\node at (-1.1,-1.45) {$\scriptstyle \overline{x_3}$};
\node at (.25,-.45) {$\scriptstyle \xi_3'$};
\node[red] at (.75,-.35) {$\scriptstyle \gamma_2'$};
\node[red] at (-.3,.4) {$\scriptstyle \gamma_1'$};
\node at (-1.1,-1.8) {$\scriptstyle \overline{x_4}$};
\node at (.5,0) {$\scriptstyle \xi_4'$};
\node at (-1.1,1.8) {$\scriptstyle \overline{x_5}$};
\node at (.25,.45) {$\scriptstyle \xi_5'$};
\node at (-1.1,1.45) {$\scriptstyle x_6$};
\node at (-.25,.85) {$\scriptstyle \xi_6'$};
}\,.
$$
Thus the operator $B_p$ is a self-adjoint projector on the ground state space of $-\sum_\ell A_\ell$.
\end{lem}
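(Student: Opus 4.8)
The plan is to adapt the computation of \cite[Thm.~5.0.1]{0907.2204} essentially verbatim, the only genuinely new ingredient being the bookkeeping of the half-braiding $\rho$ of $A$ at the point where the inserted loop crosses the red $A$-strand. First I would unpack $B_p=\frac{1}{D_\cX}\sum_x d_x B_p^x$ with $B_p^x$ as in \eqref{eq:newBps}, and use six applications of the resolution \eqref{eq:fusionDecomp} to rewrite each $B_p^x$ as a linear combination of operators local to the six honeycomb vertices of $p$, together with one insertion of the half-braiding $\rho$ at the vertex adjacent to the red edge. On the ground state of the surrounding $A_\ell$ terms this expresses $B_p^x\xi'$ as a sum of simple tensors in the $\cH_v$, whose new internal edge labels are the loop--channel objects and whose new vertices are the $F$-matrix recouplings.

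Next I would compute $\langle\xi\mid B_p\xi'\rangle$ summand by summand using the inner products \eqref{eq:vertexIP} and \eqref{eq:AVertexIP}. Orthogonality in each vertex Hilbert space forces the external boundary labels and the half-braiding channels to agree, producing the Kronecker factors $\prod_i\delta_{c_i=c_i'}\prod_j\delta_{x_j=x_j'}$; once these hold, the bra $\xi$ with its daggers glues to $B_p^x\xi'$ along all six boundary legs, and the sum over the loop label $x$, weighted by $d_x/D_\cX$, assembles via sphericality together with the pentagon and half-braiding hexagon coherences into the single closed ``doubled hexagon'' diagram on the right-hand side. The crucial point, and the reason the normalizations \eqref{eq:vertexIP}, \eqref{eq:AVertexIP} were chosen, is that the rotationally invariant cups and caps contribute exactly the factor $1/\sqrt{d_{x_1}\cdots d_{x_6}}$ rather than the $d$-dependent factors the isometry inner product \eqref{eq:isometryIP} would produce, leaving the stated prefactor $\frac{1}{D_\cX\sqrt{d_{x_1}\cdots d_{x_6}}}$.

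Self-adjointness then falls out of the formula: the closed diagram is evaluated by the spherical trace of $\cX$, which is invariant under the reflection exchanging $\xi\leftrightarrow\xi'$ and sending each local morphism to its adjoint, so $\langle\xi\mid B_p\xi'\rangle=\overline{\langle\xi'\mid B_p\xi\rangle}$; here unitarity of the half-braiding guarantees that reflecting the $A$-crossing replaces $\rho$ by $\rho^\dag=\rho^{-1}$ consistently. (Alternatively, one may invoke the manifestly self-adjoint description $B_p\propto E^\dag E$ mentioned above.) Idempotency is inherited from the unmodified case exactly as in the discussion preceding the lemma: stacking two loops, fusing them with \eqref{eq:fusionDecomp}, and using that $\frac{1}{D_\cX}\sum_x d_x(-)$ projects onto the trivial sector; the half-braiding hexagon ensures that two loops crossing the $A$-strand compose as a single loop, so the argument is unaffected. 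I expect the only real obstacle to be the careful tracking of the half-braiding at the red edge, in particular verifying that its naturality and hexagon axioms are precisely what make the local $F$-move bookkeeping reduce to the $\cX$-calculation of \cite{0907.2204} with $\rho$-insertions that obstruct neither the gluing nor the reflection symmetry.
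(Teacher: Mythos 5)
Your proposal is correct and takes essentially the same approach as the paper: in fact the paper omits the proof entirely, stating only that it is ``a straightforward adaptation of \cite[Thm.~5.0.1]{0907.2204}.'' Your outline---resolving $B_p^x$ via \eqref{eq:fusionDecomp} with the half-braiding of $A$ inserted at the crossing with the red strand, reassembling the vertex-wise inner products into the doubled closed diagram using the normalizations \eqref{eq:vertexIP} and \eqref{eq:AVertexIP}, and obtaining self-adjointness and idempotency exactly as in the unmodified model---is precisely that adaptation, spelled out.
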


Now we turn to the additional Hamiltonian terms, which pick out either desired topological order.
To obtain $Z(\cX)$ topological order, we introduce the term $\Cl$ for each red edge $\redl$.
If $v$ is the vertex incident to $\redl$, then $\Cl$ is local to $\cH_v$, and projects on the subspace generated by morphisms of the form $x\xrightarrow{\lambda_x^{-1}}1x\xrightarrow{u\otimes\id_x}Ax$,
where $u:1\to A$ is the unit of $A$.

We observe that $\Cl$ commutes with the $A_v$ and $B_p$ terms.
First, all pure tensors of morphisms are eigenstates for both $A_v$ and $\Cl$ operators, so they are diagonalized with respect to the same bases, and hence commute.
Second, for any $c\in\Irr(Z(\cX))$ and any morphism $f:c\to A$, the $B_p$ term preserves states where the morphism labeling the vertex incident to the red edge $\redl$ factors through $f$.
This is because morphisms in $Z(\cX)$ are, by definition, those that pass under crossings obtained from the half-braiding, including the crossing appearing in the definition \eqref{eq:newBps} of $B_p^x$.
In particular, we can take $f=u$, the unit of $A$, and observe that since $1_{Z(\cX)}\cong 1_\cX$ has a simple underlying object, this means that $B_p$ preserves the eigenspaces of $\Cl$.
Finally, distinct $\Cl$ terms are all disjointly supported, and hence commute.
Thus, the Hamiltonian
\[H=-\sum_{\ell}A_{\ell}-\sum_pB_p-\sum_{\redl}\Cl\] is a local commuting projector Hamiltonian.

To obtain $Z(\cX)_A^{\loc}$ topological order, we instead add the following term $D_{p,q}$ for every pair of adjacent plaquettes $p$ and $q$.
\begin{equation}
 \label{eq:DTerm}
 D_{p,q}=
 \tikzmath{
 \coordinate (a) at (.93,.15);
 \coordinate (b) at (1.8,.15);
 \coordinate (c) at (2.3,.5);
 \coordinate (m) at (2.1,.7);
 \coordinate (md) at (1.9,.9);
 \coordinate (d) at (1.9,1.2);
 \coordinate (e) at (1.7,.9);
 \coordinate (f) at (1.9,.35);
 \coordinate (g) at (.87,.25);
 \coordinate (h) at (.5,.45);
 \draw[thick, red] (.7,-.5) .. controls +(90:.3) and +(180:.2) .. (a) -- (b) .. controls +(0:.15) and +(-90:.5) .. (m);
 \draw[thick, red] (c) .. controls +(135:.2) and +(-45:.2) .. (m);
 \draw[thick, red] (m) -- (md);
 \draw[thick, red] (md) -- (d);
 \draw[thick, red] (md) .. controls +(135:.15) and +(90:.15) .. (e) .. controls +(-90:.2) and +(90:.2) .. (f) arc (0:-90:.1cm) -- (g) .. controls +(180:.2) and +(-90:.2) .. (h);
 \fill[white] (a) circle (.07cm);
 \fill[white] (g) circle (.07cm);
 \fill[red] (m) circle (.05cm);
 \fill[red] (md) circle (.05cm);
 \levinHexGrid[]{0}{0}{1}{2}{1}{black}
 \node at (0,0) {$\scriptstyle p$};
 \node at (1.5,.866) {$\scriptstyle q$};
 }
 :=
 \sum_{\textcolor{blue}{H},\textcolor{orange}{K}}
 \sum_{\textcolor{blue}{i},\textcolor{orange}{j}}
 \tikzmath{
 \coordinate (z) at (.72,-.2);
 \coordinate (a) at (.93,.15);
 \coordinate (b) at (1.8,.15);
 \coordinate (c) at (2.3,.5);
 \coordinate (bandhalf) at (2.05,.45);
 \coordinate (m) at (2.1,.7);
 \coordinate (md) at (1.9,.9);
 \coordinate (d) at (1.9,1.3);
 \coordinate (beforee) at (1.68,1.15);
 \coordinate (e) at (1.45,.9);
 \coordinate (f) at (1.8,.35);
 \coordinate (g) at (.87,.25);
 \coordinate (gandhalf) at (.55,.4);
 \coordinate (h) at (.5,.65);
 \draw[thick, red] (.7,-.5) -- (z);
 \draw[thick, blue] (z) .. controls +(90:.3) and +(180:.2) .. (a);
 \draw[thick, blue] (a) -- (b) .. controls +(0:.15) and +(-90:.35) .. (bandhalf);
 \draw[thick, red] (c) .. controls +(135:.2) and +(-45:.2) .. (m);
 \draw[thick, red] (bandhalf) -- (m) -- (md);
 \draw[thick, red] (md) -- (d);
 \draw[thick, red] (md) -- (beforee);
 \draw[thick, orange] (beforee) .. controls +(135:.45) and +(90:.15) .. (e) .. controls +(-90:.2) and +(90:.2) .. (f) arc (0:-90:.1cm) -- (g) .. controls +(180:.2) and +(-90:.2) .. (gandhalf);
 \draw[thick, red] (gandhalf) -- (h);
 \fill[white] (a) circle (.07cm);
 \fill[white] (g) circle (.07cm);
 \fill[red] (m) circle (.05cm);
 \fill[red] (md) circle (.05cm);
 \fill[blue] (z) circle (.05cm);
 \fill[blue] (bandhalf) circle (.05cm);
 \fill[orange] (beforee) circle (.05cm);
 \fill[orange] (gandhalf) circle (.05cm);
 \levinHexGrid[]{0}{0}{1}{2}{1}{black}
 \node at (0,0) {$\scriptstyle p$};
 \node at (1.5,.866) {$\scriptstyle q$};
 }\text{.}
\end{equation}
Here, the blue and orange vertices run over bases and dual bases of $\bigoplus_{c\in\Irr(Z(\mathcal{X}))}\cX(c\to A)$; hence, \eqref{eq:DTerm} can be resolved into local operators at each vertex in the same manner as string operators, given the additional data necessary to resolve the condensation morphism.
The second diagrammatic term in the definition is an appropriate interpretation of the first one by a variant of \eqref{eq:fusionDecomp}.

The $D_{p,q}$ terms commute with $A_\ell$ and $B_p$ terms by construction, i.e.~for the same reasons that string operators commute with $A_\ell$ and $B_p$ terms whenever $\ell$ and $p$ are far from the endpoints of the string.
Finally, terms $D_{p,q}$ and $D_{r,s}$ commute, by Lemma \ref{lem:DCommutesMath}.
Thus, the Hamiltonian
\[H=-\sum_vA_v-\sum_pB_p-\sum_{p,q}D_{p,q}\]
is a local commuting projector Hamiltonian.

The overall Hamiltonian for our lattice model which supports a phase transition is then given by
\begin{equation}
 \label{eq:newHamiltonian}
 H_t=-\sum_{v}A_v-\sum_{p}B_p-K\left((1-t)\sum_{v}C_v+t\sum_{p,q}D_{p,q}\right)
\end{equation}
where $K\gg1$.
In sections \S~\ref{ssec:t=0} and \S~\ref{ssec:condensedPhase} below, we will see that this Hamiltonian realizes $Z(\cX)$ topological order at $t=0$ and $Z(\cX)_A^{\loc}$ topological order at $t=1$.
The intended effect of choosing a large value for the constant $K$ is that at $t=0$ and $t=1$, the low-energy physics of the lattice model consists only of ground states for the $\Cl$ and $D_{p,q}$ terms respectively.
Therefore, in our analyses, we only consider such states.
In other words, one could analyze the models at $t=0$ and $t=1$ by projecting onto the space of ground states for each $\Cl$ and $D_{p,q}$, at which point the Hamiltonian would consist of only $A_\ell$ and $B_p$ terms.

\subsection{Topological order when \texorpdfstring{$t=0$}{t=0}}
\label{ssec:t=0}
In this section, we will argue that when $t=0$, the Hamiltonian \eqref{eq:newHamiltonian} gives rise to $Z(\cX)$-topological order.

Since $A$ is unital and the half-braiding on $1_{Z(\cX)}$ is trivial, projecting onto the ground state of $\Cl$ terms is equivalent to removing the red edges and the corresponding $A$-punctures from the lattice, leaving behind the original string-net model associated to $\cX$.
In the hexagonal model discussed above, if $v$ is the vertex incident to $\redl$, the image of $\Cl$ in $\cH_v$ is isomorphic to $\bigoplus_{a,b\in\Irr(\cX)}\mathcal{\cX}(a\to b)$.
Thus, the Hamiltonian on the ground state of $\Cl$ terms describes the Levin-Wen model on a heptagonal lattice, obtained by adding the additional vertex $v$ to each plaquette of the regular hexagonal lattice.
One can remove this vertex and return to the original hexagonal model in a straightforward manner.

\subsection{Topological Order when \texorpdfstring{$t=1$}{t=1}}
\label{ssec:condensedPhase}
In this section, we will show that the topological order when $t=1$ is described by $Z(\mathcal{X})_A^{\loc}$.
To do so, we will again introduce an algebra of local observables which acts on low-energy topological excitations of our lattice model, which we will call $\PreTubeAC$.
This algebra will be generated by $\Tube(\cX)$, which acts as before, as well as a new algebra $\Absorb(A)$, which consists of operators which fuse the localized excitation with an excitation pulled from the condensate $A$.

We will then describe the fates of string operators from our original model at $t=1$.
The definition of the operators operators $\sigma^s_p(\phi,\psi)$ in the original Levin-Wen model given in \S~\ref{ssec:stringOperators} still makes sense, with slight modifications to account for when the string crosses over a vertical red link, similar to how the definition of $B_p^s$ was modified.
However, there are two complications.
The first is that, at $t=1$, there is more freedom in choosing the local data $\phi$ and $\psi$ at the endpoints of the string, since operators in $\Absorb(A)$ can be applied to the excitation.
The second is that not all types of localized excitations obtainable from the operators $\sigma^s_p$ will be topological, i.e. anyons in the condensed phase, because $\sigma^s_p$ operators associated to anyon types $s$ which are not transparent to the condensate $A$ will not commute with the $D_{p,q}$ terms of the Hamiltonian.
Thus, at $t=1$, the operators $\sigma^s_p$ should be thought of as ``defect operators,'' which produce topological line defects along $p$, terminated by point defects at the endpoints of $p$.
Those defect operators which commute with $D_{p,q}$ terms will produce anyons at the end of trivial line defects, and thus become the string operators for anyons in the condensed phase.

Finally, we will show that, similar to the story in the original Levin-Wen model, states containing an isolated excitations at a fixed location form a representation of $\PreTubeAC$, and that topological excitations are representations of a quotient algebra $\Tube_A(\cX)$, which satisfy the additional relation \eqref{eq:tubeARel}.
It is expected that excitations in the condensed phase correspond to simple objects in $Z(\cX)_A^{\loc}$ \cite{MR3246855}.
We will show that this is the case, by checking that $\Tube_A(\cX)$ has the same representation theory as (i.e., is Morita equivalent to) $\Tube(\cX_A)$, and recalling that $Z(\cX)_A^{\loc}\cong Z(\cX_A)$ \cite[Thm.~3.20]{MR3039775}.
Representations of $\PreTubeAC$ will instead correspond to simple objects of $Z(\cX)_A$.
The details will be analogous to those of \S\ref{ssec:tubeImplementation}.

We will not characterize the line defects and point defects created by defect operators which do not correspond to anyons in the condensed phase.
However, the interpretation of simple objects of $Z(\cX)_A$ as point defects at the ends of nontrivial line defects in $Z(\cX)_A^{\loc}$ topological order is explored algebraically in \cite[\S~IV.D]{2208.14018}.

\subsubsection{Tube algebras of local operators when \texorpdfstring{$t=1$}{t=1}}
\label{sssec:tubeA}
We begin by defining the algebra $\PreTubeAC$ of local observables near a localized excitation in the condensed phase, as well as the quotient $\Tube_A(\cX)$ of which topological representations are excitations.
\begin{defn}
  \label{def:tubeA}
  The C* algebra $\PreTubeAC$ has the underlying vector space
  \[\PreTubeAC=\bigoplus_{x,y,c\in\Irr(\mathcal{X})}(cx\to U(A)yc)\text{,}\]
  In the definition that follows we abuse notation and denote $U(A)$ by $A$. The multiplication on $\PreTubeAC$ is defined as
  \[
\tikzmath[xscale=-1]{
\draw (-.2,-.7) node[right, yshift=.2cm]{$\scriptstyle z$} -- (-.2,0);
\draw (.2,-.7) node[left, yshift=.2cm]{$\scriptstyle d$} -- (.2,0);
\draw (-.2,.7) node[right, yshift=-.2cm]{$\scriptstyle d$} -- (-.2,0);
\draw (.2,.7) node[left, yshift=-.2cm]{$\scriptstyle w$} -- (.2,0);
\draw[thick, red] (.4,.2) -- node[left]{$\scriptstyle A$} (.9,.7);
\roundNbox{fill=white}{(0,0)}{.3}{.2}{.2}{$\psi$}
}
\cdot
\tikzmath[xscale=-1]{
\draw (-.2,-.7) node[right, yshift=.2cm]{$\scriptstyle x$} -- (-.2,0);
\draw (.2,-.7) node[left, yshift=.2cm]{$\scriptstyle c$} -- (.2,0);
\draw (-.2,.7) node[right, yshift=-.2cm]{$\scriptstyle c$} -- (-.2,0);
\draw (.2,.7) node[left, yshift=-.2cm]{$\scriptstyle y$} -- (.2,0);
\draw[thick, red] (.4,.2) -- node[left]{$\scriptstyle A$} (.9,.7);
\roundNbox{fill=white}{(0,0)}{.3}{.2}{.2}{$\phi$}
}
:=
\delta_{y=z}
\sum_{f\in \Irr(\cX)}
\tikzmath[xscale=-1]{
\draw[thick, red] (.2,.7) -- (.45,1.1);
\draw[thick, red] (.2,-.3) to[rounded corners=5] (.45,.1) to[sharp corners] (.45,1.1) -- (.55,1.4) ;
\filldraw[red] (.45,1.1) circle (.05cm);
\draw (-.17,-.2) to[rounded corners=5] (-.45,.1) to[sharp corners] (-.45,1.1) -- (-.55,1.4);
\draw (-.15,.8) to[bend right=25] (-.45,1.1);
\draw[knot] (.17,.2) to[rounded corners=5] (.45,-.1) to[sharp corners] (.45,-1.1) -- (.55,-1.4);
\draw (.15,-.8) to[bend right=25] (.45,-1.1);
\draw[fill=blue] (-.45,1.1) circle (.05cm);
\draw[fill=blue] (.45,-1.1) circle (.05cm);
	\draw (0,-1.4) -- node[left, xshift=.1cm]{$\scriptstyle y$} (0,1.4);
\roundNbox{fill=white}{(0,.5)}{.3}{0}{0}{$\psi$};
\roundNbox{fill=white}{(0,-.5)}{.3}{0}{0}{$\phi$};
	\node at (-.58,1.6) {$\scriptstyle f$};
	\node at (.58,-1.6) {$\scriptstyle f$};
	\node at (-.25,1.25) {$\scriptstyle d$};
	\node at (.25,-1.25) {$\scriptstyle c$};
	\node at (-.58,.5) {$\scriptstyle c$};
	\node at (.58,-.5) {$\scriptstyle d$};
	\node at (0,1.6) {$\scriptstyle w$};
	\node at (0,-1.6) {$\scriptstyle x$};
	\node[red] at (.58,1.6) {$\scriptstyle A$};
}
  \]
  and the involution is given by
  \[
\left(
\tikzmath[xscale=-1]{
\draw (-.2,-.7) node[right, yshift=.2cm]{$\scriptstyle x$} -- (-.2,0);
\draw (.2,-.7) node[left, yshift=.2cm]{$\scriptstyle c$} -- (.2,0);
\draw (-.2,.7) node[right, yshift=-.2cm]{$\scriptstyle c$} -- (-.2,0);
\draw (.2,.7) node[left, yshift=-.2cm]{$\scriptstyle y$} -- (.2,0);
\draw[thick, red] (.4,.2) -- node[left]{$\scriptstyle A$} (.9,.7);
\roundNbox{fill=white}{(0,0)}{.3}{.2}{.2}{$\phi$}
}
\right)^*
:=
 \tikzmath{
\draw[thick, red] (-.4,-.2) -- (-.5,-.3) arc (-45:-180:.3cm) -- node[left]{$\scriptstyle A$} (-1.012,.7);
\draw (-.2,-.7) node[right, yshift=.2cm]{$\scriptstyle y$} -- (-.2,0);
\draw (.2,-.3) arc (-180:0:.3cm) -- node[right]{$\scriptstyle \overline{c}$} (.8,.7);
\draw[knot] (-.2,.3) arc (0:180:.3cm) -- node[left, yshift=-.25cm]{$\scriptstyle \overline{c}$} (-.8,-.7);
\draw (.2,.7) node[right, yshift=-.2cm]{$\scriptstyle x$} -- (.2,0);
\roundNbox{fill=white}{(0,0)}{.3}{.2}{.2}{$\phi^\dag$}
}\,.
  \]
  The algebra $\Tube_A(\mathcal{X})$ is the quotient of $\PreTubeAC$ by the relation
   \begin{equation}
   \label{eq:tubeARel}
\tikzmath[xscale=-1]{
\draw (-.2,-1.7) -- node[right]{$\scriptstyle z$} (-.2,-.3);
\draw (.2,-.7) -- node[left]{$\scriptstyle d$} (.2,-.3);
\draw (-.2,.8) -- node[right]{$\scriptstyle d$} (-.2,.3);
\draw (.2,.8) -- node[right]{$\scriptstyle w$} (.2,.3);
\draw (.4,-1.3) -- node[left]{$\scriptstyle d$} (.4,-1.7);
\draw[thick, red] (.4,.2) -- (.6,.4) -- node[left]{$\scriptstyle A$} (.6,.8);
\draw[thick, red] (.6,-.7) -- (.6,.4);
\filldraw[red] (.6,.4) circle (.05cm);
\roundNbox{fill=white}{(0,0)}{.3}{.2}{.2}{$\psi$}
\roundNbox{fill=white}{(.4,-1)}{.3}{.1}{.1}{$h$}
}
\approx
\tikzmath[xscale=-1]{
\draw[thick, red] (.5,.2) to[rounded corners=5] (.7,.7) to[sharp corners] (.6,1.7) -- node[left]{$\scriptstyle A$} (.6,2) ;
\draw[thick, red] (-.1,1.2) -- (.6,1.7);
\filldraw[red] (.6,1.7) circle (.05cm);
\draw (-.3,-.7) -- node[right]{$\scriptstyle z$} (-.3,-.3);
\draw (.3,-.7) -- node[left]{$\scriptstyle d$} (.3,-.3);
\draw (-.3,2) -- node[right]{$\scriptstyle d$} (-.3,1.3) -- (-.3,.8) -- node[right]{$\scriptstyle d$} (-.3,.3);
\draw[knot] (.3,2) node[right, yshift=-.2cm]{$\scriptstyle w$} -- (.3,.3);
\roundNbox{fill=white}{(0,0)}{.3}{.3}{.3}{$\psi$}
\roundNbox{fill=white}{(-.3,1)}{.3}{0}{0}{$h$}
}
  \end{equation}
 Is is straightforward to verify that this relation is $\ast$-closed.
\end{defn}

As promised, we now check that representations of $\Tube_A(\mathcal{X})$ correspond to simple objects of $Z(\cX)_A^{\loc}\cong Z(\cX_A)$, which is the UMTC which should describe topological excitations at $t=1$.
Because we have a monoidal equivalence $\Rep(\Tube(\cX))\cong Z(\cX)$, there is also a monoidal equivalence $\Rep(\Tube(\cX_A))\cong Z(\cX_A)$.
However, $\PreTubeAC$, rather than $\Tube(\cX_A)$, is the algebra which acts on local excitations of our model at $t=1$.
We will therefore need the following lemma, which follows from straightforward application of techniques from \cite{MR3447719}.
\begin{lem}
 \label{lem:tubeAMoritaClass}
 The algebras $\Tube(\mathcal{X}_A)$ and $\Tube_A(\mathcal{X})$ are Morita equivalent.
\end{lem}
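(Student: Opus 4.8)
The plan is to establish the Morita equivalence $\Tube(\cX_A)\sim\Tube_A(\cX)$ by constructing an explicit imprimitivity bimodule, following the philosophy of \cite{MR3447719} where tube algebras are realized as endomorphism algebras of natural objects and Morita equivalences are read off from change-of-generators. The key conceptual point is that both algebras should be understood as categories of ``annular'' or ``tube'' morphisms, and the difference between them is merely a choice of which family of objects one cuts the annulus with. For $\Tube(\cX_A)$, one cuts along simple right $A$-modules $M\in\Irr(\cX_A)$ and the annular morphism space is $\bigoplus_{M,N}\cX_A(M\otimes_A \Gamma \to \Gamma \otimes_A N)$ for $\Gamma$ running through the half-braiding data; for $\Tube_A(\cX)$, one cuts along simple objects of $\cX$ (with the $A$-strand recording the condensate). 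The two are the ``same'' tube algebra computed with respect to Morita-equivalent categorical inputs.

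\textbf{First} I would set up the general framework. Recall from \cite{MR3447719} that for a unitary fusion category $\cY$ with a module category (or more precisely, for a suitable bimodule/annular structure), the tube algebra can be presented as $\End$ of a direct sum object in a $2$-categorical sense, and two such presentations arising from different generating sets of a fixed category are automatically Morita equivalent. The free-module functor $F:\cX\to\cX_A$, $x\mapsto x\otimes A$, is dominant (every object of $\cX_A$ is a summand of some $F(x)$, by the idempotent-completion description of $\cX_A$ recalled after Definition of $\cX_A$). This dominance is exactly the hypothesis one needs: it means the collection $\{x\otimes A\}_{x\in\Irr(\cX)}$ generates $\cX_A$ under retracts, so the annular algebra built from these (possibly non-simple) objects is Morita equivalent to the annular algebra built from $\Irr(\cX_A)$, which is $\Tube(\cX_A)$.

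\textbf{Second} I would identify the annular algebra built from the free modules $\{x\otimes A\}$ with $\Tube_A(\cX)$ itself. This is where the explicit diagrammatics of Definition~\ref{def:tubeA} enter: a morphism $cx\to U(A)yc$ is precisely an annular ($c$-decorated) morphism between the free modules $x\otimes A$ and $y\otimes A$ once one uses the adjunction $\cX(x, U(A)y)\cong \cX_A(x\otimes A, y\otimes A)$ and folds the internal $A$-strand into the module structure. The multiplication in Definition~\ref{def:tubeA}, with its summation over $f$ and the red $A$-strand threading through, should match the annular composition in $\cX_A$ after the free/forgetful adjunction is applied; the module-relation \eqref{eq:tubeARel} is exactly the statement that we pass to $A$-module maps (i.e.\ that the $A$-action can slide through), which is what distinguishes $\Tube_A(\cX)$ from the larger $\PreTubeAC$. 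So the bimodule implementing the equivalence is the rectangular ``change of object'' space $\bigoplus_{M\in\Irr(\cX_A),\,x\in\Irr(\cX)}\cX_A(M\to x\otimes A)$ decorated by a circumferential strand, with the two algebras acting on opposite sides; fullness of $F$ onto retracts gives that this bimodule is an imprimitivity (invertible) bimodule.

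\textbf{The main obstacle} I anticipate is bookkeeping rather than conceptual: verifying that the multiplication and $*$-structure in Definition~\ref{def:tubeA}, together with the quotient relation \eqref{eq:tubeARel}, really do reproduce the annular composition and adjoint in $\cX_A$ under the free/forgetful adjunction, \emph{including} the half-braiding data on $A$ that makes the red $A$-strand behave correctly when it crosses the circumferential $c$-strand. The half-braiding is precisely what lets the $A$-strand pass under the tube's circumferential strand, and one must check that \eqref{eq:tubeARel} is the faithful shadow of the relation ``$A$ acts through module maps'' after this braiding is accounted for. Concretely, I would prove Morita equivalence by exhibiting the bimodule $P=\bigoplus \cX_A(M\to x\otimes A)$ above, checking it is a $(\Tube(\cX_A),\Tube_A(\cX))$-bimodule with $P\otimes_{\Tube_A(\cX)}\overline{P}\cong\Tube(\cX_A)$ and $\overline{P}\otimes_{\Tube(\cX_A)}P\cong\Tube_A(\cX)$ as bimodules; both isomorphisms reduce to the fact that $F$ is dominant (surjective on isoclasses of simples after taking retracts), which guarantees the two ``resolutions'' of the identity annular $2$-morphism agree. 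Since both are finite-dimensional $\Cstar$-algebras, Morita equivalence is equivalent to having the same number of simple summands matched up, so as a sanity check I would confirm that $\Irr(\Rep(\Tube_A(\cX)))$ and $\Irr(Z(\cX_A))$ have the same cardinality via the already-cited equivalence $Z(\cX)_A^{\loc}\cong Z(\cX_A)$ of \cite[Thm.~3.20]{MR3039775}.
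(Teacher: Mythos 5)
Your proposal is correct and follows essentially the same route as the paper: both rest on the annular-algebra machinery of \cite{MR3447719} (Morita equivalence of annular algebras over different full/dominant weight sets, which the paper cites as Theorem 4.2 there), combined with dominance of the free module functor and the free-forgetful adjunction $\cX(cx\to Ayc)\cong\cX_A({}_AAc\otimes_A Ax\to {}_AAy\otimes_A Ac)$ to identify $\Tube_A(\cX)$ with the annular algebra of $\cX_A$ computed over free modules, with relation \eqref{eq:tubeARel} playing the role of the ``pull around the back of the tube'' relation \eqref{eq:aroundBack}. The paper organizes the bookkeeping you flag as the main obstacle through an explicit chain of intermediate algebras ($\widetilde{T_1}\supseteq\widetilde{T_2}\to T_2\cong\PreTubeAC$, then imposing the around-the-back relation to get $T_1\cong\Tube_A(\cX)$) rather than via an explicit imprimitivity bimodule, but the mathematical content is the same.
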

\begin{proof}
 An algebra $T_1$ which is Morita equivalent to $\Tube(\mathcal{X}_A)$ can be obtained by replacing the underlying vector space with $\bigoplus_{x,y\in\Irr(\mathcal{X}),{}_AM\in\Irr(\mathcal{X}_A)}\mathcal{X}_A({}_AM\otimes_AAx\to {}_AAy\otimes_AM)$, while keeping the same diagrammatic multiplication from \ref{def:tubeA}.
 That $T_1$ is Morita equivalent to $\Tube(\mathcal{X}_A)$ follows from \cite[Theorem 4.2]{MR3447719}.

 As described at the end of \cite[\S3]{MR3447719}, the annular algebra $T_1$ can be obtained as the quotient of a much larger algebra $\widetilde{T_1}$.
 The underlying vectorspace of $\widetilde{T_1}$ is
 \[\bigoplus_{x,y\in\Irr(\mathcal{X}), {}_AM\in\mathcal{X}_A}\mathcal{X}_A({}_AM\otimes_AAx\to {}_AAy\otimes_AM)\text{,}\]
 with the multiplication
 \[\mathcal{X}_A({}_AN\otimes_AAy,{}_AAz\otimes_AN)\otimes\mathcal{X}_A({}_AM\otimes_AAx,{}_AAy\otimes_AM)\to\mathcal{X}_A({}_AN\otimes_AM\otimes_AAx,{}_AAz\otimes_AN\otimes_AM)\]
 given by joining the ${}_AAy$ strings and tensoring the others.
 To obtain $T_1$ from $\widetilde{T_1}$, we impose relation \eqref{eq:aroundBack}, which is reproduced from \cite[p.10]{MR3447719}, allowing morphisms to be pulled around the back of the tube.
 \begin{equation}
  \label{eq:aroundBack}
  \begin{tikzpicture}[baseline=-.1cm]
   \draw(-.35,-1.27) -- node[right]{$\scriptstyle xA$} (-.35,-.6);
   \draw(-.35,.73) -- node[right]{$\scriptstyle yA$} (-.35,-.05);
   \node at (0,-.2) {$\scriptstyle M$};
   \node at (.75,-.45) {$\scriptstyle N$};
   \draw[thick] (-.9,-1) -- (-.9,1);
   \draw[thick] (.9,-1) -- (.9,1);
   \draw[very thick] (0,1) ellipse (.9cm and .3cm);
   \halfDottedEllipse[very thick]{(-.9,-1)}{.9}{.3}
   \halfDottedEllipse[]{(-.9,-.1)}{.9}{.3}
   \roundNbox{unshaded}{(-.4,-.35)}{.3}{-.1}{-.1}{$f$}
   \roundNbox{unshaded}{(.4,-.35)}{.3}{-.1}{-.1}{$g$}
  \end{tikzpicture}
  =
  \begin{tikzpicture}[xscale=-1, baseline=-.1cm]
   \draw(-.35,-1.27) -- node[left]{$\scriptstyle xA$} (-.35,-.6);
   \draw(-.35,.73) -- node[left]{$\scriptstyle yA$} (-.35,-.05);
   \node at (0,-.2) {$\scriptstyle N$};
   \node at (.75,-.45) {$\scriptstyle M$};
   \draw[thick] (-.9,-1) -- (-.9,1);
   \draw[thick] (.9,-1) -- (.9,1);
   \draw[very thick] (0,1) ellipse (.9cm and .3cm);
   \halfDottedEllipse[very thick]{(-.9,-1)}{.9}{.3}
   \halfDottedEllipse[]{(-.9,-.1)}{.9}{.3}
   \roundNbox{unshaded}{(-.4,-.35)}{.3}{-.1}{-.1}{$f$}
   \roundNbox{unshaded}{(.4,-.35)}{.3}{-.1}{-.1}{$g$}
  \end{tikzpicture}
 \end{equation}
 Every ${}_AM\in\mathcal{X}_A$ is the direct sum of irreducible objects, so we may rewrite $\id_{{}_AM}=\sum_i\pi_i^\dag\circ\pi_i$, where each $\pi_i$ is a projection onto a simple object.
 Pulling each $\pi_i$ around the back via \eqref{eq:aroundBack}
 lets us identify the elements of $\widetilde{T_1}$ with those of $T_1$, and recovers the multiplication for $T_1$.

 Instead of going directly from $\widetilde{T_1}$ to $T_1$, we can also consider an in-between algebra $T_2$, which we will later identify with an algebra containing $\Tube_A(\mathcal{X})$.
 A subalgebra of $\widetilde{T_1}$ is
 \[\widetilde{T_2}=\bigoplus_{x,y\in\Irr(\mathcal{X}), C\in\mathcal{X}}\mathcal{X}_A({}_AAC\otimes_AAx\to {}_AAy\otimes_AAC)\text{.}\]
 A special case of \eqref{eq:aroundBack} occurs when the morphism $g$ is in the image of the free module functor; imposing relation \eqref{eq:aroundBack} for only such $g$ on $\widetilde{T_2}$ produces an algebra
 \[T_2\cong\bigoplus_{x,y,c\in\Irr(\mathcal{X})}\mathcal{X}_A({}_AAc\otimes_AAx\to {}_AAy\otimes_AAc)\text{,}\]
 where the multiplication now involves using fusion channels in $\mathcal{X}$ to decompose the strands wrapping around the tube.
 Fully imposing \eqref{eq:aroundBack} on $T_2$ gives all of $T_1$ as a quotient, since free modules span all of $\mathcal{X}$ under direct sum.

 We can translate hom spaces between free modules in $\mathcal{X}_A$ to hom spaces in $\mathcal{X}$, because the free module functor $x\mapsto {}_AAx$ is monoidal and adjoint to the forgetful functor:
 \[\mathcal{X}_A({}_AAc\otimes_AAx\to {}_AAy\otimes_AAc)\cong\mathcal{X}_A({}_AAcx\to {}_AAyc)\cong\mathcal{X}(cx\to Ayc)\text{.}\]
 This gives an isomorphism of vector spaces from algebra $T_2$ to $\PreTubeAC$:
\[
\tikzmath{
\draw[thick, red] (-.4,-.7) node[below]{$\scriptstyle A$} -- (-.4,.7) node[above]{$\scriptstyle A$};
\draw (-.2,-.7) node[below]{$\scriptstyle c$} -- (-.2,.7) node[above]{$\scriptstyle y$};
\draw[thick, red] (.2,-.7) node[below]{$\scriptstyle A$} -- (.2,.7) node[above]{$\scriptstyle A$};
\draw (.4,-.7) node[below]{$\scriptstyle x$} -- (.4,.7) node[above]{$\scriptstyle c$};
\roundNbox{fill=white}{(0,0)}{.3}{.3}{.3}{$f$}
}
\longmapsto
\tikzmath{
\draw[thick, red] (-.4,-1.1) -- (-.4,1.3) node[above]{$\scriptstyle A$};
\draw[thick, red] (-.4,.9) arc (90:0:.6cm) -- (.2,-.3) arc (0:-90:.6cm);
\filldraw[red] (-.4,-1.1) circle (.05cm);
\draw[knot] (-.2,-1.3) node[below]{$\scriptstyle c$} -- (-.2,1.3) node[above]{$\scriptstyle y$};
\draw (.4,-1.3) node[below]{$\scriptstyle x$} -- (.4,1.3) node[above]{$\scriptstyle c$};
\roundNbox{fill=white}{(0,0)}{.3}{.3}{.3}{$f$}
}
\qquad\qquad
\text{and}
\qquad\qquad
\tikzmath{
\draw[thick, red] (-.4,.3) -- (-.4,1.1) node[above]{$\scriptstyle A$};
\draw[thick, red] (-.4,.5) arc (-90:0:.6cm) node[above]{$\scriptstyle A$};
\draw[thick, red] (-.4,.8) arc (90:180:.5cm) (-.9,.3) -- (-.9,0) arc (180:270:.5cm) (-.4,-.5) arc (90:0:.6cm) node[below]{$\scriptstyle A$};
\draw[thick, red] (-.4,-.5) -- (-.4,-1.1) node[below]{$\scriptstyle A$};
\draw[knot] (-.2,-1.1) node[below]{$\scriptstyle c$} -- (-.2,1.1) node[above]{$\scriptstyle y$};
\draw (.4,-1.1) node[below]{$\scriptstyle x$} -- (.4,1.1) node[above]{$\scriptstyle c$};
\roundNbox{fill=white}{(0,0)}{.3}{.3}{.3}{$g$}
}
\longmapsfrom
\tikzmath{
\draw[thick, red] (-.4,0) -- (-.4,.7) node[above]{$\scriptstyle A$};
\draw (-.2,-.7) node[below]{$\scriptstyle c$} -- (-.2,.7) node[above]{$\scriptstyle y$};
\draw (.4,-.7) node[below]{$\scriptstyle x$} -- (.4,.7) node[above]{$\scriptstyle c$};
\roundNbox{fill=white}{(0,0)}{.3}{.3}{.3}{$g$}
}
\]
 Carrying over the multiplication from $T_2$ gives the algebra structure on $\PreTubeAC$ described in Definition \ref{def:tubeA}, and imposing \eqref{eq:aroundBack} on both algebras gives the quotient $T_1\cong\Tube_A(\mathcal{X})$.
 Since $\Tube(\cX_A)$ was Morita equivalent to $T_1$, this shows that $\Tube(\cX_A)$ is Morita equivalent to $\Tube_A(\cX)$.
\end{proof}

We make one more preliminary observation: that, as previously described, $\PreTubeAC$ is generated by two subalgebras.
One is $\Tube(\mathcal{X})$, which lives inside $\PreTubeAC$ as a subalgebra, along the unit map $1\to A$.
The other (nonunital) subalgebra of $\widetilde{\Tube_A(\mathcal{X})}$ is
\[\Absorb(A)=\bigoplus_{x,y}\mathcal{X}(1x\to U(A)y1)\text{,}\]
the subalgebra where no string runs around the circumference of the tube.
Since elements of $\Absorb(A)$ do not have a string running around the back of the tube, it is straightforward to compute $f\cdot\phi$ where $f\in\Absorb(A)$ and $\phi\in\Tube(\cX)$, and therefore to see that indeed, $\PreTubeAC\cong\Absorb(A)\Tube(\cX)$.

\subsubsection{String Operators when \texorpdfstring{$t=1$}{t=1}}
\label{sssec:newStrings}
The string operators described in \S~\ref{ssec:stringOperators} can be extended to the models of \S~\ref{ssec:condensationModel} in a straightforward way.
Namely, when a string crosses a vertical link inside a plaquette, we must apply the half-braiding of the algebra $A$, as we did when defining $B_p$.
Thus, the string operators $\sigma^s_p(\phi,\psi)$ previously defined still make sense as operators which commute with $A_\ell$ and $B_q$ terms as long as $\ell$ and $q$ are far from the endpoints of $p$.

However, in the condensed phase, the data needed to terminate a string operator is different.
Because each plaquette now contains a vertical link supporting a copy of the condensate $A$, rather than pick elements $\phi,\psi\in\bigoplus_{x\in\Irr(\cX)}\cX(s\to x)$, we define a string operator $\sigma^s_p(\phi,\psi)$ where $\phi,\psi\in\bigoplus_{x\in\Irr(\cX)}\cX(s\to Ax)$.
The $x$ factor remains at the endpoint of $p$, as before, while the $A$-factor is multiplied into a nearby vertical link.
Of course, a morphism $\psi:s\to Ax$, can be factorized as $(1_A\otimes eta)\circ a$, where $a:s\to At$ for some $t\in\Irr(Z(\cX))$ and $\eta:t\to x$.
Thus, the end of a string operator now takes the form
\begin{equation}
 \label{eq:condensedStringEnding}
 \tikzmath{
 \coordinate (a) at (1.5,-.25);
 \coordinate (b) at (1,-.12);
 \coordinate (c) at ($(20:.9cm)+(0,.2)$);
 \coordinate (g) at ($(c)+(0,-.4)$);
 \coordinate (z) at (-20:1.8cm);
 \filldraw[orange] (c) circle (.05cm);
 \draw[orange,thick] (z) .. controls ++(120:.2cm) and ++(-60:.2cm) .. (a) .. controls ++(120:.2cm) and ++(0:.2cm) .. (b) .. controls ++(180:.2cm) and ++(-80:.2cm)   .. (c);
 \draw[red,thick] (g) .. controls ++(150:.3cm) and ++(-45:.2cm) .. ($(-30:.866)+(-.345,.285)$); 
 \filldraw (g) circle (.05cm);
 \fill[white] (0:.87) circle (.05cm); 
 \levinHexOpen[red, thick]{0}{0}{.75}{black}{1}
 \levinHexOpen[red, thick]{1.125}{.6495}{.75}{black}{4}
 \node at (210:1.299) {$\scriptstyle r$};
 \node at (210:.8) {$\scriptstyle k$};
 \node at (0:0) {$\scriptstyle p$};
 \node at (1.125,.6495) {$\scriptstyle q$};
 }
\end{equation}
where the black vertex is a morphism $s\to At$, and the orange vertex is the morphism $\eta:t\to x$, as previously.

An important consequence is that the string operators $\sigma^s$ should no longer be viewed as a string operator for the anyon $s\in\Irr(Z(\cX))$, but instead as a string operator for the free module $sA_A\in Z(\cX)_A$.
This is because, in the ground state of all $D_{p,q}$ terms, the trivalent vertex $a\in Z(\cX)(s\to At)$ is able to slide topologically along the string.
Suppose $\sigma^t_r(\phi,\psi)$ is the following string operator.
\begin{equation*}
 \tikzmath{ 
  \coordinate(c2) at (1.125*1.333,.6495*1.333);
  \coordinate (e) at ($(c2)+(150:.7)$);
  \draw[orange,thick] ($(-120:1.5)+(-30:.2)$) .. controls ++(60:.4) and ++(-30:.2) .. (-120:.8) .. controls ++(150:.2) and ++(90:-.2) .. (180:.8) .. controls ++(90:.2) and ++(30:-.2) .. (120:.8) .. controls ++(30:.2) and ++(-30:-.2) .. (60:.8) .. controls ++(-30:.2) and ++(60:-.4) .. (e);
  \filldraw[orange] (e) circle (.05cm);
  \filldraw[white] ($(-120:1)+(0:.17)$) circle (.05cm); 
  \filldraw[white] ($(60:1)+(-60:.17)$) circle (.05cm); 
  \levinHex[red, thick]{0}{0}{1}{black}
  \levinHex[red, thick]{1.125*1.333}{.6495*1.333}{1}{black}
  \node at (0,0) {$\scriptstyle p$};
  \node at (c2) {$\scriptstyle q$};
  \node at (-30:1.2) {$\scriptstyle v$};
  \node at ($(c2)+(-30:1.2)$) {$\scriptstyle w$};
 }
\end{equation*}
Now suppose that $f\in Z(\cX)(s\to At)$.
We can modify $\sigma^t_r(\phi,\psi)$ to obtain a different operator by applying $f$ at various possible locations along $p$, and then multiplying the resulting $A$ string into the vertical link in the plaquette where $f$ was applied.
By construction of our string operators, on local ground states, we can clearly slide $f$ topologically along the string within each plaquette.
However, moving from one plaquette to another will change which vertical link $f$ interacts with.
The key point is that on ground states of relevant $D_{p,q}$ terms, this makes no difference, essentially by Corollary~\ref{cor:condensationFoamNE}.
For example, observe that
\begin{align*}
 D_{p,q}
 \tikzmath{
  \coordinate(c2) at (1.125*1.333,.6495*1.333);
  \coordinate (e) at ($(c2)+(150:.7)+(60:.2)$);
  \coordinate (g2) at ($(e)+(60:-.4)$);
  \coordinate (m) at ($(c2)+(-30:1*.866)+(-45:-.2)$);
  \draw[DarkGreen,thick] ($(-120:1.5)+(-30:.2)$) .. controls ++(60:.4) and ++(-30:.2) .. (-120:.8) .. controls ++(150:.2) and ++(90:-.2) .. (180:.8) .. controls ++(90:.2) and ++(30:-.2) .. (120:.8) .. controls ++(30:.2) and ++(-30:-.2) .. (60:.8) .. controls ++(-30:.2) and ++(60:-.4) .. (g2);
  \draw[orange,thick] (g2) -- (e);
  \draw[red,thick] (g2) .. controls ++(-30:.4) and ++(-30:-.2) .. ($(c2)+(-120:.8)$) .. controls ++(-30:.2) and ++(30:-.2) .. ($(c2)+(-60:.8)$) .. controls ++(30:.2) and ++(-135:.3) .. (m);
  \filldraw[black] (g2) circle (.05cm);
  \filldraw[orange] (e) circle (.05cm);
  \filldraw[white] ($(-120:1)+(0:.17)$) circle (.05cm); 
  \filldraw[white] ($(60:1)+(-60:.17)$) circle (.05cm); 
  \levinHex[red, thick]{0}{0}{1}{black}
  \levinHex[red, thick]{1.125*1.333}{.6495*1.333}{1}{black}
  \node at (0,0) {$\scriptstyle p$};
  \node at (c2) {$\scriptstyle q$};
  \node at (-30:1.2) {$\scriptstyle v$};
  \node at ($(c2)+(-30:1.2)$) {$\scriptstyle w$};
 }
 =&
 D_{p,q}
 \tikzmath{
  \coordinate(c2) at (1.125*1.333,.6495*1.333);
  \coordinate (e) at ($(c2)+(150:.7)+(60:.2)$);
  \coordinate (g2) at ($(e)+(60:-.4)$);
  \coordinate (m) at ($(c2)+(-30:1*.866)+(-45:-.2)$);
  \draw[DarkGreen,thick] ($(-120:1.5)+(-30:.2)$) .. controls ++(60:.4) and ++(-30:.2) .. (-120:.8) .. controls ++(150:.2) and ++(90:-.2) .. (180:.8) .. controls ++(90:.2) and ++(30:-.2) .. (120:.8) .. controls ++(30:.2) and ++(-30:-.2) .. (60:.8) .. controls ++(-30:.2) and ++(60:-.4) .. (g2);
  \draw[orange,thick] (g2) -- (e);
  \draw[red,thick] (g2) .. controls ++(-30:.4) and ++(0:.2) .. (60:.6) .. controls ++(0:-.2) and ++(30:.2) .. (120:.6) .. controls ++(30:-.2) and ++(90:.2) .. (180:.6) .. controls ++(90:-.2) and ++(-30:-.2) .. (-120:.6) .. controls ++(-30:.2) and ++(30:-.2) .. (-60:.8);
  \draw[red,thick] (0:.8) .. controls ++(90:.2) and ++(0:-.2) .. ($(c2)+(-120:.8)$) .. controls ++(0:.2) and ++(30:-.2) .. ($(c2)+(-60:.8)$) .. controls ++(30:.2) and ++(-135:.3) .. (m);
  \filldraw[black] (g2) circle (.05cm);
  \filldraw[orange] (e) circle (.05cm);
  \filldraw[white] ($(-120:1)+(0:.17)$) circle (.05cm); 
  \filldraw[white] ($(60:1)+(-60:.17)$) circle (.05cm); 
  \filldraw[white] ($(60:1)+(-60:.27)$) circle (.05cm); 
  \filldraw[white] ($(0:1)+(120:.19)$) circle (.05cm); 
  \levinHex[red, thick]{0}{0}{1}{black}
  \levinHex[red, thick]{1.125*1.333}{.6495*1.333}{1}{black}
  \draw[red,thick,knot] (-60:.8) .. controls ++(30:.2) and ++(-90:.2) .. (0:.8); 
  \node at (0,0) {$\scriptstyle p$};
  \node at (c2) {$\scriptstyle q$};
  \node at (-30:1.2) {$\scriptstyle v$};
  \node at ($(c2)+(-30:1.2)$) {$\scriptstyle w$};
 }
 \displaybreak[1]\\=&
 \tikzmath{
  \coordinate(c2) at (1.125*1.333,.6495*1.333);
  \coordinate (e) at ($(c2)+(150:.7)+(60:.2)$);
  \coordinate (g2) at ($(e)+(60:-.4)$);
  \coordinate (m) at ($(c2)+(-30:1*.866)+(-45:-.2)$);
  \coordinate (loffs) at ($(-30:1*.866)+(-.333,.333)$);
  \coordinate (moffs) at (-.2,.1);
  \coordinate (mout) at ($(c2)+(loffs)+(moffs)$);
  \draw[DarkGreen,thick] ($(-120:1.5)+(-30:.2)$) .. controls ++(60:.4) and ++(-30:.2) .. (-120:.8) .. controls ++(150:.2) and ++(90:-.2) .. (180:.8) .. controls ++(90:.2) and ++(30:-.2) .. (120:.8) .. controls ++(30:.2) and ++(-30:-.2) .. (60:.8) .. controls ++(-30:.2) and ++(60:-.4) .. (g2);
  \draw[orange,thick] (g2) -- (e);
  \draw[red,thick] (g2) .. controls ++(-30:.4) and ++(0:.2) .. (60:.6) .. controls ++(0:-.2) and ++(30:.2) .. (120:.6) .. controls ++(30:-.2) and ++(90:.2) .. (180:.6) .. controls ++(90:-.2) and ++(-30:-.2) .. (-120:.6) .. controls ++(-30:.2) and ++(30:-.2) .. (-60:.8);
  \draw[red,thick] (0:.8) .. controls ++(90:.2) and ++(0:-.2) .. ($(c2)+(-120:.8)$) .. controls ++(0:.2) and ++(30:-.2) .. ($(c2)+(-60:.8)$) .. controls ++(30:.2) and ++(-135:.3) .. (m);
  \draw[red,thick] (loffs) .. controls ++(135:.2) and ++(0:-.2) .. ($(c2)+(-120:.6)$) .. controls ++(0:.2) and ++(30:-.2) .. ($(c2)+(-60:.6)$) .. controls ++(30:.2) and ++(45:-.2) .. (mout) .. controls ++(45:.1) .. ($(c2)+(loffs)$);
  \draw[red,thick] (mout) -- ++(-45:-.2);
  \draw[red,thick] ($(mout)+(-45:-.2)$) arc (-45:45:.2);
  \draw[red,thick] ($(mout)+(-45:-.2)$) arc (-45:-135:.2) .. controls ++(135:.4) and ++(20:.2) .. ($(c2)+(-140:.4)$) .. controls ++(20:-.2) and ++(135:-.3) .. ($(loffs)+(-45:-.4)$);
  \filldraw[black] (g2) circle (.05cm);
  \filldraw[orange] (e) circle (.05cm);
  \filldraw[white] ($(-120:1)+(0:.17)$) circle (.05cm); 
  \filldraw[white] ($(60:1)+(-60:.17)$) circle (.05cm); 
  \filldraw[white] ($(60:1)+(-60:.27)$) circle (.05cm); 
  \filldraw[white] ($(0:1)+(120:.19)$) circle (.05cm); 
  \filldraw[white] ($(0:1)+(120:.29)$) circle (.05cm); 
  \filldraw[white] ($(0:1)+(120:.42)$) circle (.05cm); 
  \levinHex[red, thick]{0}{0}{1}{black}
  \levinHex[red, thick]{1.125*1.333}{.6495*1.333}{1}{black}
  \draw[red,thick,knot] (-60:.8) .. controls ++(30:.2) and ++(-90:.2) .. (0:.8); 
  \node at (0,0) {$\scriptstyle p$};
  \node at (c2) {$\scriptstyle q$};
  \node at (-30:1.2) {$\scriptstyle v$};
  \node at ($(c2)+(-30:1.2)$) {$\scriptstyle w$};
 }
 \displaybreak[1]\\\overset{\text{Cor.~\ref{cor:condensationFoamNE}}}{=}&
 \tikzmath{
  \coordinate (c2) at (1.125*1.333,.6495*1.333);
  \coordinate (e) at ($(c2)+(150:.7)+(60:.2)$);
  \coordinate (g2) at ($(e)+(60:-.4)$);
  \coordinate (m1) at ($(-30:1*.866)+(-45:-.2)$);
  \coordinate (loffs) at ($(-30:1*.866)+(-.333,.333)$);
  \coordinate (moffs) at (-.2,.1);
  \coordinate (mout) at ($(c2)+(loffs)+(moffs)$);
  \draw[DarkGreen,thick] ($(-120:1.5)+(-30:.2)$) .. controls ++(60:.4) and ++(-30:.2) .. (-120:.8) .. controls ++(150:.2) and ++(90:-.2) .. (180:.8) .. controls ++(90:.2) and ++(30:-.2) .. (120:.8) .. controls ++(30:.2) and ++(-30:-.2) .. (60:.8) .. controls ++(-30:.2) and ++(60:-.4) .. (g2);
  \draw[orange,thick] (g2) -- (e);
  \draw[red,thick] (g2) .. controls ++(-30:.4) and ++(0:.2) .. (60:.6) .. controls ++(0:-.2) and ++(30:.2) .. (120:.6) .. controls ++(30:-.2) and ++(90:.2) .. (180:.6) .. controls ++(90:-.2) and ++(-30:-.2) .. (-120:.6) .. controls ++(-30:.2) and ++(30:-.2) .. (-60:.8) .. controls ++(30:.2) and ++(-135:.3) .. (m1);
  \draw[red,thick] (loffs) .. controls ++(-45:-.2) and ++(0:-.2) .. ($(c2)+(-120:.8)$) .. controls ++(0:.2) and ++(30:-.2) .. ($(c2)+(-60:.8)$) .. controls ++(30:.2) and ++(45:-.2) .. (mout) .. controls ++(45:.1) .. ($(c2)+(loffs)$);
  \draw[red,thick] (mout) -- ++(-45:-.2);
  \draw[red,thick] ($(mout)+(-45:-.2)$) arc (-45:45:.2);
  \draw[red,thick] ($(mout)+(-45:-.2)$) arc (-45:-135:.2) .. controls ++(135:.4) and ++(20:.2) .. ($(c2)+(-140:.6)$) .. controls ++(20:-.2) and ++(135:-.3) .. ($(loffs)+(-45:-.4)$);
  \filldraw[black] (g2) circle (.05cm);
  \filldraw[orange] (e) circle (.05cm);
  \filldraw[white] ($(-120:1)+(0:.17)$) circle (.05cm); 
  \filldraw[white] ($(60:1)+(-60:.17)$) circle (.05cm); 
  \filldraw[white] ($(60:1)+(-60:.27)$) circle (.05cm); 
  \filldraw[white] ($(0:1)+(120:.185)$) circle (.05cm); 
  \filldraw[white] ($(0:1)+(120:.4)$) circle (.05cm); 
  \levinHex[red, thick]{0}{0}{1}{black}
  \levinHex[red, thick]{1.125*1.333}{.6495*1.333}{1}{black}
  \node at (0,0) {$\scriptstyle p$};
  \node at (c2) {$\scriptstyle q$};
  \node at (-30:1.2) {$\scriptstyle v$};
  \node at ($(c2)+(-30:1.2)$) {$\scriptstyle w$};
 }
 \displaybreak[1]\\&=
 \tikzmath{
  \coordinate (c2) at (1.125*1.333,.6495*1.333);
  \coordinate (e) at ($(c2)+(150:.7)$);
  \coordinate (g1) at (-120:.8);
  \coordinate (m1) at ($(-30:1*.866)+(-45:-.2)$);
  \coordinate (loffs) at ($(-30:1*.866)+(-.333,.333)$);
  \coordinate (moffs) at (-.2,.1);
  \coordinate (mout) at ($(c2)+(loffs)+(moffs)$);
  \draw[DarkGreen,thick] ($(-120:1.5)+(-30:.2)$) .. controls ++(60:.4) and ++(-30:.2) .. (g1);
  \draw[orange,thick] (g1) .. controls ++(150:.2) and ++(90:-.2) .. (180:.8) .. controls ++(90:.2) and ++(30:-.2) .. (120:.8) .. controls ++(30:.2) and ++(-30:-.2) .. (60:.8) .. controls ++(-30:.2) and ++(60:-.4) .. (e);
  \draw[red,thick] (g1) .. controls ++(60:.3) and ++(10:-.2) .. (-60:.8) .. controls ++(10:.2) and ++(-135:.3) .. (m1);
  \draw[red,thick] (loffs) .. controls ++(-45:-.2) and ++(0:-.2) .. ($(c2)+(-120:.8)$) .. controls ++(0:.2) and ++(30:-.2) .. ($(c2)+(-60:.8)$) .. controls ++(30:.2) and ++(45:-.2) .. (mout) .. controls ++(45:.1) .. ($(c2)+(loffs)$);
  \draw[red,thick] (mout) -- ++(-45:-.2);
  \draw[red,thick] ($(mout)+(-45:-.2)$) arc (-45:45:.2);
  \draw[red,thick] ($(mout)+(-45:-.2)$) arc (-45:-135:.2) .. controls ++(135:.4) and ++(20:.2) .. ($(c2)+(-140:.6)$) .. controls ++(20:-.2) and ++(135:-.3) .. ($(loffs)+(-45:-.4)$);
  \filldraw[black] (g1) circle (.05cm);
  \filldraw[orange] (e) circle (.05cm);
  \filldraw[white] ($(-120:1)+(0:.17)$) circle (.05cm); 
  \filldraw[white] ($(60:1)+(-60:.17)$) circle (.05cm); 
  \filldraw[white] ($(60:1)+(-60:.27)$) circle (.05cm); 
  \filldraw[white] ($(0:1)+(120:.185)$) circle (.05cm); 
  \filldraw[white] ($(0:1)+(120:.4)$) circle (.05cm); 
  \levinHex[red, thick]{0}{0}{1}{black}
  \levinHex[red, thick]{1.125*1.333}{.6495*1.333}{1}{black}
  \node at (0,0) {$\scriptstyle p$};
  \node at (c2) {$\scriptstyle q$};
  \node at (-30:1.2) {$\scriptstyle v$};
  \node at ($(c2)+(-30:1.2)$) {$\scriptstyle w$};
 }
 \displaybreak[1]\\&=D_{p,q}
 \tikzmath{
  \coordinate (c2) at (1.125*1.333,.6495*1.333);
  \coordinate (e) at ($(c2)+(150:.7)$);
  \coordinate (g1) at (-120:.8);
  \coordinate (m1) at ($(-30:1*.866)+(-45:-.2)$);
  \coordinate (loffs) at ($(-30:1*.866)+(-.333,.333)$);
  \coordinate (moffs) at (-.2,.1);
  \coordinate (mout) at ($(c2)+(loffs)+(moffs)$);
  \draw[DarkGreen,thick] ($(-120:1.5)+(-30:.2)$) .. controls ++(60:.4) and ++(-30:.2) .. (g1);
  \draw[orange,thick] (g1) .. controls ++(150:.2) and ++(90:-.2) .. (180:.8) .. controls ++(90:.2) and ++(30:-.2) .. (120:.8) .. controls ++(30:.2) and ++(-30:-.2) .. (60:.8) .. controls ++(-30:.2) and ++(60:-.4) .. (e);
  \draw[red,thick] (g1) .. controls ++(60:.3) and ++(10:-.2) .. (-60:.8) .. controls ++(10:.2) and ++(-135:.3) .. (m1);
  \filldraw[black] (g1) circle (.05cm);
  \filldraw[orange] (e) circle (.05cm);
  \filldraw[white] ($(-120:1)+(0:.17)$) circle (.05cm); 
  \filldraw[white] ($(60:1)+(-60:.17)$) circle (.05cm); 
  \filldraw[white] ($(60:1)+(-60:.27)$) circle (.05cm); 
  \levinHex[red, thick]{0}{0}{1}{black}
  \levinHex[red, thick]{1.125*1.333}{.6495*1.333}{1}{black}
  \node at (0,0) {$\scriptstyle p$};
  \node at (c2) {$\scriptstyle q$};
  \node at (-30:1.2) {$\scriptstyle v$};
  \node at ($(c2)+(-30:1.2)$) {$\scriptstyle w$};
 }
\end{align*}

Recall that $Z(\cX)(s\to At)\cong Z(\cX)_A(sA_A\to tA_A)$.
Consequently, one example of such a trivalent vertex is the projection onto a simple summand of $sA_A$, i.e., onto a single anyon in the condensed phase.
In other words, to obtain a string operator for an anyon $M_A\in\Irr(Z(\mathcal{X})_A)$, we pick a simple object $s\le M$, which determines an inclusion $M_A\to sA_A$, and apply a linear combination of string operators $\sigma^s$ which absorb the projection $\pi_{M_A}\in\Tube_A(\mathcal{X})$ onto a summand of type $M_A$ at the endpoint of the string.
Different choices of inclusion $M_A\to sA_A$ produce the same superselection sector because they are related by $A$-module endomorphisms of $sA_A$, which can also slide along the string operator.
For the same reason, the projections in $Z(\cX)(sA_A\to sA_A)$ onto each isotypic component also slide topologically along the string operator.
Thus, the operator $\sigma^s$ can be written as a direct sum of operators, one for each type of simple object in $Z(\cX)_A$ which appears as a summand of $sA_A$.

When we condense the algebra $A$, we expect that objects which are not transparent to $A$ become confined.
To see this, we will investigate when string operators $\sigma^s_r$ commute with $D_{p,q}$ terms.
Suppose $s\in\Irr(Z(\cX))$ is an anyon in the uncondensed phase, and let $|\Omega\rangle$ be a local ground state near $r$ and on the support of $D_{p,q}$.
If $r$ does not cross the path from $v$ to $w$ chosen when defining $D_{p,q}$, then $\sigma^s_p$ and $D_{p,q}$ obviously commute.
If the paths do cross, then we have, for example,
\begin{align*}
 D_{p,q}\sigma^s_r|\Omega\rangle &=
 \tikzmath{
  \coordinate (c2) at (1.125*1.333,.6495*1.333);
  \coordinate (c3) at ($(c2)+(0,1.732)$); 
  \coordinate (g1) at (-120:.8);
  \coordinate (m1) at ($(-30:1*.866)+(-45:-.2)$);
  \coordinate (loffs) at ($(-30:1*.866)+(-.333,.333)$);
  \coordinate (moffs) at (-.2,.1);
  \coordinate (mout) at ($(c2)+(loffs)+(moffs)$);
  \draw[red,thick] (loffs) .. controls ++(135:.3) and ++(-30:-.2) .. ($(c2)+(-120:.6)$) .. controls ++(-30:.2) and ++(30:-.2) .. ($(c2)+(-60:.8)$) .. controls ++(30:.2) and ++(-135:.2) .. (mout) .. controls ++(-135:-.2) and ++(-45:-.1) .. ($(c2)+(loffs)$);
  \draw[red,thick] (mout) -- ++(-45:-.2);
  \draw[red,thick] ($(mout)+(-45:-.2)$) arc (-45:45:.2);
  \draw[red,thick] ($(mout)+(-45:-.2)$) arc (-45:-135:.2) .. controls ++(135:.4) and ++(20:.2) .. ($(c2)+(-140:.4)$) .. controls ++(20:-.2) and ++(135:-.3) .. ($(loffs)+(-45:-.4)$);
  \filldraw[white] ($(0:1)+(120:.35)$) circle (.05cm); 
  \filldraw[white] ($(0:1)+(120:.41)$) circle (.05cm); 
  \draw[orange,thick] ($(-120:1.5)+(-30:.2)$) .. controls ++(60:.3) and ++(0:-.2) .. ($(-120:.8)+(.2,0)$);
  \draw[orange,thick,knot] (0:.8) .. controls ++(60:.2) and ++(120:-.2) .. ($(c2)+(-120:.8)+(120:.2)$) .. controls ++(120:.2) and ++(90:-.2) .. ($(c2)+(180:.8)$) .. controls ++(90:.2) and ++(30:-.2) .. ($(c2)+(120:.8)$) .. controls ++(30:.2) and ++(30:-.2) .. ($(c3)+(-120:.8)+(.2,0)$) .. controls ++(30:.2) and ++(60:-.2) .. ($(c3)+(-60:.8)$) -- ($(c2)+(60:1.5)+(150:.2)$); 
  \filldraw[white] ($(-120:1)+(0:.17)$) circle (.05cm); 
  \filldraw[white] ($(0:1)+(120:.15)$) circle (.05cm); 
  \filldraw[white] ($(c2)+(120:1)+(0:.21)$) circle (.05cm); 
  \levinHex[red,thick]{0}{0}{1}{black}
  \levinHex[red, thick]{1.125*1.333}{.6495*1.333}{1}{black}
  \draw[orange,thick,knot] ($(-120:.8)+(.2,0)$) .. controls ++(0:.2) and ++(30:-.2) .. (-60:.8) .. controls ++(30:.2) and ++(60:-.2) .. (0:.8); 
  \node at (-30:1.2) {$\scriptstyle v$};
  \node at ($(c2)+(-30:1.2)$) {$\scriptstyle w$};
 }
 =
 \tikzmath{
  \coordinate (c2) at (1.125*1.333,.6495*1.333);
  \coordinate (c3) at ($(c2)+(0,1.732)$); 
  \coordinate (g1) at (-120:.8);
  \coordinate (m1) at ($(-30:1*.866)+(-45:-.2)$);
  \coordinate (loffs) at ($(-30:1*.866)+(-.333,.333)$);
  \coordinate (moffs) at (-.2,.1);
  \coordinate (mout) at ($(c2)+(loffs)+(moffs)$);
  \draw[red,thick] (-30:.8660) -- ++(135:.2) .. controls ++(135:.2) and ++(-30:-.2) .. ($(c2)+(-120:.8)$) .. controls ++(-30:.2) and ++(30:-.2) .. ($(c2)+(-60:.8)$) .. controls ++(30:.2) and ++(-135:.2) .. (mout) .. controls ++(-135:-.2) and ++(-45:-.1) .. ($(c2)+(loffs)$);
  \draw[red,thick] (mout) -- ++(-45:-.2);
  \draw[red,thick] ($(mout)+(-45:-.2)$) arc (-45:45:.2);
  \draw[red,thick] ($(mout)+(-45:-.2)$) arc (-45:-135:.2) .. controls ++(135:.4) and ++(20:.2) .. ($(c2)+(-140:.4)$) .. controls ++(20:-.2) and ++(135:-.3) .. ($(loffs)+(-45:-.4)$);
  \filldraw[white] ($(0:1)+(120:.18)$) circle (.05cm); 
  \filldraw[white] ($(0:1)+(120:.41)$) circle (.05cm); 
  \draw[orange,thick] ($(-120:1.5)+(-30:.2)$) .. controls ++(60:.5) and ++(0:-.2) .. ($(-120:.8)+(.3,0)$) .. controls ++(0:.2) and ++(40:-.2) .. (-60:.8) .. controls ++(40:.2) and ++(60:-.2) .. (0:.6); 
  \draw[orange,thick,knot] (0:.6) .. controls ++(60:.2) and ++(120:-.2) .. ($(c2)+(-120:.6)+(120:.2)$) .. controls ++(120:.2) and ++(90:-.2) .. ($(c2)+(180:.8)$) .. controls ++(90:.2) and ++(30:-.2) .. ($(c2)+(120:.8)$) .. controls ++(30:.2) and ++(30:-.2) .. ($(c3)+(-120:.8)+(.2,0)$) .. controls ++(30:.2) and ++(60:-.2) .. ($(c3)+(-60:.8)$) -- ($(c2)+(60:1.5)+(150:.2)$); 
  \filldraw[white] ($(-120:1)+(0:.17)$) circle (.05cm); 
  \filldraw[white] ($(0:1)+(120:.26)$) circle (.05cm); 
  \filldraw[white] ($(c2)+(120:1)+(0:.21)$) circle (.05cm); 
  \levinHex{0}{0}{1}{black} 
  \levinHex[red, thick]{1.125*1.333}{.6495*1.333}{1}{black}
  \node at (-30:1.2) {$\scriptstyle v$};
  \node at ($(c2)+(-30:1.2)$) {$\scriptstyle w$};
 }
 \,.
\end{align*}
Meanwhile,
\[
 \sigma^s_rD_{p,q}|\Omega\rangle =
 \tikzmath{
  \coordinate (c2) at (1.125*1.333,.6495*1.333);
  \coordinate (c3) at ($(c2)+(0,1.732)$); 
  \coordinate (g1) at (-120:.8);
  \coordinate (m1) at ($(-30:1*.866)+(-45:-.2)$);
  \coordinate (loffs) at ($(-30:1*.866)+(-.333,.333)$);
  \coordinate (moffs) at (-.2,.1);
  \coordinate (mout) at ($(c2)+(loffs)+(moffs)$);
  \draw[orange,thick] ($(-120:1.5)+(-30:.2)$) .. controls ++(60:.5) and ++(0:-.2) .. ($(-120:.8)+(.3,0)$) .. controls ++(0:.2) and ++(40:-.2) .. (-60:.8) .. controls ++(40:.2) and ++(45:-.2) .. ($(loffs)+(45:-.2)+(-45:-.4)$) .. controls ++(45:.2) and ++(60:-.2) .. ($(0:.6)+(-60:-.3)$) .. controls ++(60:.2) and ++(120:-.2) .. ($(c2)+(-120:.6)+(120:.3)$) .. controls ++(120:.2) and ++(90:-.2) .. ($(c2)+(180:.8)$) .. controls ++(90:.2) and ++(30:-.2) .. ($(c2)+(120:.8)$) .. controls ++(30:.2) and ++(30:-.2) .. ($(c3)+(-120:.8)+(.2,0)$) .. controls ++(30:.2) and ++(60:-.2) .. ($(c3)+(-60:.8)$) -- ($(c2)+(60:1.5)+(150:.2)$); 
  \draw[red,thick] (-30:.8660) -- ++(135:.2) .. controls ++(135:.2) and ++(-30:-.2) .. ($(c2)+(-120:.8)$) .. controls ++(-30:.2) and ++(30:-.2) .. ($(c2)+(-60:.8)$) .. controls ++(30:.2) and ++(-135:.2) .. (mout) .. controls ++(-135:-.2) and ++(-45:-.1) .. ($(c2)+(loffs)$);
  \draw[red,thick] (mout) -- ++(-45:-.2);
  \draw[red,thick] ($(mout)+(-45:-.2)$) arc (-45:45:.2);
  \draw[red,thick,knot] ($(mout)+(-45:-.2)$) arc (-45:-135:.2) .. controls ++(135:.4) and ++(20:.2) .. ($(c2)+(-130:.4)$) .. controls ++(20:-.2) and ++(135:-.2) .. ($(loffs)+(-45:-.2)$) -- ++(-45:-.3);
  \filldraw[white] ($(0:1)+(120:.18)$) circle (.05cm); 
  \filldraw[white] ($(0:1)+(120:.34)$) circle (.05cm); 
  \filldraw[white] ($(-120:1)+(0:.17)$) circle (.05cm); 
  \filldraw[white] ($(0:1)+(120:.48)$) circle (.05cm); 
  \filldraw[white] ($(c2)+(120:1)+(0:.21)$) circle (.05cm); 
  \levinHex{0}{0}{1}{black} 
  \levinHex[red, thick]{1.125*1.333}{.6495*1.333}{1}{black}
  \node at (-30:1.2) {$\scriptstyle v$};
  \node at ($(c2)+(-30:1.2)$) {$\scriptstyle w$};
 }
\]
Thus, $D_{p,q}$ commutes with $\sigma^s(\phi,\psi)$ up to the double braiding between $A$ and $s$.
On the other hand, as we saw above, the string operator $\sigma^s$ now corresponds to the object $sA_A\in Z(\cX)_A$, so in general $\sigma^s$ splits as a direct sum of string operators associated to the summands of $sA_A$.
As we will see in Section~\ref{sssec:tubeARep} below, this means that $D_{p,q}$ commutes with string operators for precisely those excitations which are representations of $\Tube_A(\cX)$, i.e.~objects in $\Irr(Z(\cX)_A^{\loc})$.
Thus, summands of string operators $\sigma^s$ corresponding to nonlocal summands of $sA_A$ pay an energy cost proportional to the length of the string, the excitations at the end of these strings are confined.

\subsubsection{Tube algebra representations from excitations at \texorpdfstring{$t=1$}{t=1}}
\label{sssec:tubeARep}
We will now explain how states containing an isolated low-energy (i.e.~only the $A_\ell$ and $B_p$ terms of the Hamiltonian are violated) excitation at $t=1$ give representations of $\Tube_A(\mathcal{X})$.
As before, we begin by defining an abstract representation $\rho_{M_A}$ of $\PreTubeAC$ for each $M_A\in Z(\cX)_A$.
We will then define an action of $\PreTubeAC$ as local operators in our lattice model, such that a variant of Proposition~\ref{prop:tubeActionCorrect} holds.

Our definition of $\rho_{M_A}$ is a small modification of the previous definition \eqref{eq:tubeRep}.
First, if $H\in\Irr(Z(\cX))$, then we define a representation $\rho_{HA_A}$ of $\PreTubeAC$ on the Hilbert space $\bigoplus_{x\in\Irr(\cX)}\cX(H\to U(A)x)$ by
\begin{equation}
 \label{eq:preTubeARep}
 \rho_H(\phi)m=\delta_{x,y}\cdot\,
 \tikzmath{
 \draw (.3,-2.2) node[below]{$\scriptstyle H$} -- (.3,-.7) -- node[right]{$\scriptstyle x$} (.3,-.3);
 \draw (-.2,.3) -- (-.2,.7) node[above]{$\scriptstyle z$};
 \draw[red,thick] (.1,-1) .. controls ++(90:.4cm) and ++(90:-.4cm) .. (-.75,-.3) -- (-.75,.3) arc (180:0:.2);
 \draw[red,thick] (-.55,.475) -- (-.55,.7) node[above]{$\scriptstyle A$};
 \draw[knot] (.3,.3) node[left, yshift=.15cm]{$\scriptstyle c$} arc (180:0:.3cm) -- node[right]{$\scriptstyle \overline{c}$} (.9,-1.3) arc (0:-180:.6cm) -- node[left]{$\scriptstyle c$} (-.3,-.3);
 \roundNbox{fill=white}{(0,0)}{.3}{.3}{.3}{$\phi$}
 \roundNbox{fill=white}{(.3,-1.3)}{.3}{0}{0}{$m$}
 }
 \qquad\qquad\qquad
 \forall\,m\in \cX(H\to U(A)x).
\end{equation}

Just as we computed an equivalence between $\Rep(\Tube(\cX))$ and $Z(\cX)$, there is also an equivalence $\Rep(\PreTubeAC))\cong Z(\cX)_A$.
One direction is provided by $sA_A\to \rho_{sA_A}$.
As for the other, suppose $\rho$ is a representation of $\PreTubeAC$.
Since $\Tube(\cX)\subseteq\Rep(\PreTubeAC)$ as a subalgebra, we can still recover an object $M\in Z(\cX)$ from the $\Tube(\cX)$ action, obtaining the half-braiding by equation \eqref{eq:tubeRepToZC}.
We can define an action $AM\to M$ using the data $\rho(\Absorb(A))$ by the following equation.
\[
 \tikzmath{
  \draw[red,thick] (-.3,.3) -- (-.3,1) arc (180:90:.6cm);
  \draw (0,-.6) node[below]{$\scriptstyle x$} --  (0,0);
  \draw (.3,.3) -- node[right]{$\scriptstyle y$} (.3,.7) -- (.3,1.3) -- node[right]{$\scriptstyle M$} (.3,2) -- (.3,2.9) node[above]{$\scriptstyle x$};
  \roundNbox{fill=white}{(0,0)}{.3}{.3}{.3}{$a$}
  \roundNbox{fill=white}{(.3,1)}{.3}{0}{0}{$n^\dag$}
  \roundNbox{fill=white}{(.3,2.2)}{.3}{0}{0}{$m$}
 }
 :=\langle n|a\lhd m\rangle\id_x.
\]
This action morphism lives in $\cX$, but the compatibility between the action of the subalgebras $\Tube(\cX)$ and $\Absorb(A)$ ensures that the action $AM\to M$ is lives in $Z(\cX)$, i.e. $M_A\in Z(\cX)_A$.
In particular, the representation $\rho_{sA_A}$ decomposes as a direct sum just as $sA_A\in Z(\cX)_A$ does.
An arbitrary choice of inclusion $M_A\to sA_A$ therefore lets us define $\rho_{M_A}$.
Because of the equivalence of categories $\Rep(\PreTubeAC)\cong Z(\cX)_A$ we have obtained, this definition is independent off the choice of inclusion, up to unique isomorphism.

Because $\widetilde{\Tube_A(\mathcal{X})}=\Absorb(A)\Tube(\mathcal{X})$, we can define the action of $\widetilde{\Tube_A(\mathcal{X})}$ on each subalgebra and then check the relations defining $\Tube_A(\mathcal{X})$.
As before, suppose that $|\phi\rangle$ is a state with an isolated excitation at the plaquette $p$ and link $\ell$.
Much as before, we define the action of $f\in\Tube(\mathcal{X})$ by
\[
 \tikzmath{
 \coordinate (a) at (1.5,-.25);
 \coordinate (b) at (1,-.12);
 \coordinate (c) at (20:.9cm);
 \coordinate (z) at (-20:1.8cm);
 \filldraw[orange] (c) circle (.05cm);
 \draw[orange,thick] (z) .. controls ++(120:.2cm) and ++(-60:.2cm) .. (a) .. controls ++(120:.2cm) and ++(0:.2cm) .. (b) .. controls ++(180:.2cm) and ++(-80:.2cm)   .. (c);
 \fill[white] (0:.87) circle (.07cm);
 \levinHexOpen[red, thick]{0}{0}{.75}{black}{1}
 \levinHexOpen[red, thick]{1.125}{.6495}{.75}{black}{4}
 \node at (210:1.299) {$\scriptstyle r$};
 \node at (210:.8) {$\scriptstyle k$};
 \node at (0:0) {$\scriptstyle p$};
 \node at (1.125,.6495) {$\scriptstyle q$};
 }
 |\Omega\rangle
 \qquad
 \overset{\phi \rhd -}{\longmapsto}
 \qquad
 \tikzmath{
 \coordinate (a) at (1.5,-.25);
 \coordinate (b) at (1.05,-.12);
 \coordinate (cp) at (30:.9cm);
 \coordinate (x) at (13.5:.925cm);
 \coordinate (z) at (-20:1.8cm);
 \filldraw[orange] (cp) circle (.05cm);
 \draw[orange,thick] (z) .. controls ++(120:.2cm) and ++(-60:.2cm) .. (a) .. controls ++(120:.2cm) and ++(0:.2cm) .. (b) .. controls ++(180:.2cm) and ++(-110:.2cm)   .. (x) -- (cp);
 \fill[white] (0:.87) circle (.07cm);
 \arHexKnot[1]{0}{0}{.5}{blue}{3}{0}
 \arHexKnot[4]{1.125}{.6495}{.5}{blue}{3}{0}
 \draw[blue] (60:.5) -- (.625,.6495);
 \draw[blue] (0:.5) -- (0.875,.2165);
 \filldraw[black] (x) circle (.05cm);
 \levinHexOpen[red, thick]{0}{0}{.75}{black}{1}
 \levinHexOpen[red, thick]{1.125}{.6495}{.75}{black}{4}
 \node at (210:1.299) {$\scriptstyle r$};
 \node at (210:.8) {$\scriptstyle k$};
 \node at (0:0) {$\scriptstyle p$};
 \node at (1.125,.6495) {$\scriptstyle q$};
}
|\Omega\rangle\text{,}
\]
where $|\Omega\rangle$ is a ground state.
However, this definition can be significantly simplified.
The proofs and results of Section \ref{ssec:tubeImplementation}, and in particular Proposition \ref{prop:tubeActionCorrect}, apply equally well in our new lattice model, as long as we use the half-braiding on $A$ whenever we cross the vertical $A$-strand inside a plaquette.
Therefore, we can apply Proposition \ref{prop:tubeActionCorrect} to rewrite the action as
\[
 \tikzmath{
 \coordinate (a) at (1.5,-.25);
 \coordinate (b) at (1,-.12);
 \coordinate (c) at (20:.9cm);
 \coordinate (z) at (-20:1.8cm);
 \filldraw[orange] (c) circle (.05cm);
 \draw[orange,thick] (z) .. controls ++(120:.2cm) and ++(-60:.2cm) .. (a) .. controls ++(120:.2cm) and ++(0:.2cm) .. (b) .. controls ++(180:.2cm) and ++(-80:.2cm)   .. (c);
 \fill[white] (0:.87) circle (.07cm);
 \levinHexOpen[red, thick]{0}{0}{.75}{black}{1}
 \levinHexOpen[red, thick]{1.125}{.6495}{.75}{black}{4}
 \node at (210:1.299) {$\scriptstyle r$};
 \node at (210:.8) {$\scriptstyle k$};
 \node at (0:0) {$\scriptstyle p$};
 \node at (1.125,.6495) {$\scriptstyle q$};
 }
 |\Omega\rangle
 \qquad
 \overset{\phi \rhd -}{\longmapsto}
 \qquad
 \tikzmath{
 \coordinate (a) at (1.5,-.25);
 \coordinate (b) at (1,-.12);
 \coordinate (c) at (20:.9cm);
 \coordinate (d) at ($(c)+(-.1,.2)$);
 \coordinate (z) at (-20:1.8cm);
 \filldraw[orange] (d) circle (.05cm);
 \draw[orange,thick] (z) .. controls ++(120:.2cm) and ++(-60:.2cm) .. (a) .. controls ++(120:.2cm) and ++(0:.2cm) .. (b) .. controls ++(180:.2cm) and ++(-80:.2cm)  .. (c) -- (d);
 \filldraw[white] ($(c)+(0,-.16)$) circle (.04cm); 
 \draw[blue,thick] (c) -- ++(.2,0) arc (90:-90:.08cm) -- ++(-.4,0) arc (270:90:.08cm) -- (c);
 \filldraw[black] (c) circle (.05cm);
 \fill[white] (0:.87) circle (.07cm);
 \levinHexOpen[red, thick]{0}{0}{.75}{black}{1}
 \levinHexOpen[red, thick]{1.125}{.6495}{.75}{black}{4}
 \node at (210:1.299) {$\scriptstyle r$};
 \node at (210:.8) {$\scriptstyle k$};
 \node at (0:0) {$\scriptstyle p$};
 \node at (1.125,.6495) {$\scriptstyle q$};
}
|\Omega\rangle
\]
as in equation~\eqref{eq:tubeActionAtVertex}.

The action of $g\in\Absorb(A)$ is given by
\[
 \tikzmath{
 \coordinate (a) at (1.5,-.25);
 \coordinate (b) at (1,-.12);
 \coordinate (c) at (20:.9cm);
 \coordinate (z) at (-20:1.8cm);
 \filldraw[orange] (c) circle (.05cm);
 \draw[orange,thick] (z) .. controls ++(120:.2cm) and ++(-60:.2cm) .. (a) .. controls ++(120:.2cm) and ++(0:.2cm) .. (b) .. controls ++(180:.2cm) and ++(-80:.2cm)   .. (c);
 \fill[white] (0:.87) circle (.07cm);
 \levinHexOpen[red, thick]{0}{0}{.75}{black}{1}
 \levinHexOpen[red, thick]{1.125}{.6495}{.75}{black}{4}
 \node at (210:1.299) {$\scriptstyle r$};
 \node at (210:.8) {$\scriptstyle k$};
 \node at (0:0) {$\scriptstyle p$};
 \node at (1.125,.6495) {$\scriptstyle q$};
 }
 |\Omega\rangle
 \qquad
 \overset{g\rhd-}{\longmapsto}
 \qquad
 \tikzmath{
 \coordinate (a) at (1.5,-.25);
 \coordinate (b) at (1.05,-.12);
 \coordinate (c) at (20:.9cm);
 \coordinate (x) at (15:.92cm);
 \coordinate (z) at (-20:1.8cm);
 \coordinate (w) at (.433,-.2145);
 \coordinate (cp) at (30:.9cm);
 \filldraw[orange] (cp) circle (.05cm);
 \draw[orange,thick] (z) .. controls ++(120:.2cm) and ++(-60:.2cm) .. (a) .. controls ++(120:.2cm) and ++(0:.2cm) .. (b) .. controls ++(180:.2cm) and ++(-110:.2cm)   .. (x) -- (cp);
 \fill[white] (0:.87) circle (.07cm);
 \draw[red,thick] (x) -- (w);
 \filldraw[black] (x) circle (.05cm);
 \levinHexOpen[red, thick]{0}{0}{.75}{black}{1}
 \levinHexOpen[red, thick]{1.125}{.6495}{.75}{black}{4}
 \node at (210:1.299) {$\scriptstyle r$};
 \node at (210:.8) {$\scriptstyle k$};
 \node at (0:0) {$\scriptstyle p$};
 \node at (1.125,.6495) {$\scriptstyle q$};
}
|\Omega\rangle
\]
Note that the action of $g\in\Absorb(A)$ commutes with all $A_\ell$ and $B_p$ terms away from $p\vee q$, as well as all $D_{p,q}$ terms,
including the case where $v\in p$ and $w\in q$.
Thus, $\Absorb(A)$ is also an algebra of local operators acting on the space of localized excitations which do not violate any $D_{p,q}$ terms, i.e. low energy excitations.

These actions assemble to an action of $\widetilde{\Tube_A(\cX)}=\Absorb(A)\Tube(\cX)$: if $\psi\in\Absorb(A)$ and $\phi\in\Tube(\cX)$, then we define $(\psi\phi)\rhd:=(\psi\rhd)\circ(\phi\rhd)$.
To check that this action is well-defined, we also need to check that $(\phi\psi)\rhd=(\phi\rhd)\circ(\psi\rhd)$.
Suppose $\phi\in\cX(cy\to zc)$ and $\psi\in\cX(x\to Ay)$.
Then we have
\[\phi\cdot\psi=\tikzmath{
 \draw (0,0) -- (0,2);
 \draw[thick,red] (0,.5) -- (-.7,1.2) .. controls ++(135:.2cm) .. (-.9,2);
 \filldraw[white] (-.5,1) circle (.07cm); 
 \draw (0,1.5) -- (-.7,.8) .. controls ++(225:.2cm) .. (-.9,0);
 \draw (0,1.5) -- (.5,2);
 \roundNbox{fill=white}{(0,.5)}{.3}{0}{0}{$\scriptstyle\psi$}
 \roundNbox{fill=white}{(0,1.5)}{.3}{0}{0}{$\scriptstyle\phi$}
 \node at (0,-.2) {$\scriptstyle x$};
 \node at (0,2.2) {$\scriptstyle z$};
 \node at (.2,1) {$\scriptstyle y$};
 \node at (-.9,-.2) {$\scriptstyle c$};
 \node at (-.9,2.2) {$\scriptstyle A$};
 \node at (.6,2.2) {$\scriptstyle c$};
}\]
Meanwhile,
\[
 \tikzmath{
  \coordinate (a) at (1.5,-.25);
  \coordinate (b) at (1,-.12);
  \coordinate (x) at (.8,.16);
  \coordinate (c) at (35:.9cm);
  \coordinate (d) at ($(c)+(-.1,.2)$);
  \coordinate (z) at (-20:1.8cm);
  \filldraw[orange] (d) circle (.05cm);
  \draw[orange,thick] (z) .. controls ++(120:.2cm) and ++(-60:.2cm) .. (a) .. controls ++(120:.2cm) and ++(0:.2cm) .. (b) .. controls ++(180:.2cm) and ++(-80:.2cm) .. (x) -- (c) -- (d);
  \draw[red,thick] (x) .. controls ++(165:.3cm) and ++(-30:-.2cm) .. ($(-30:.866*.75)+(-.25,.25)$);
  \draw[red,thick] ($(-30:.866*.75)+(-.25,.25)$) -- ++(-45:-.35cm); 
  \draw[blue,thick] (c) -- ++(.2,0) arc (90:-90:.08cm) -- ++(-.4,0) arc (270:90:.08cm) -- (c);
  \filldraw[black] (c) circle (.05cm);
  \filldraw[black] (x) circle (.05cm);
  \fill[white] (0:.87) circle (.07cm); 
  \levinHexOpen[red, thick]{0}{0}{.75}{black}{1}
  \levinHexOpen[red, thick]{1.125}{.6495}{.75}{black}{4}
  \node at (210:1.299) {$\scriptstyle r$};
  \node at (210:.8) {$\scriptstyle k$};
  \node at (0:0) {$\scriptstyle p$};
  \node at (1.125,.6495) {$\scriptstyle q$};
 }
 |\Omega\rangle
 =
 \tikzmath{
  \coordinate (a) at (1.5,-.25);
  \coordinate (b) at (1,-.12);
  \coordinate (x) at (.74,.28);
  \coordinate (c) at (35:.9cm);
  \coordinate (d) at ($(c)+(-.1,.2)$);
  \coordinate (z) at (-20:1.8cm);
  \filldraw[orange] (d) circle (.05cm);
  \draw[orange,thick] (z) .. controls ++(120:.2cm) and ++(-60:.2cm) .. (a) .. controls ++(120:.2cm) and ++(0:.2cm) .. (b) .. controls ++(180:.2cm) and ++(-80:.2cm) .. (x) -- (c) -- (d);
  \draw[red,thick] (x) .. controls ++(165:.3cm) and ++(-30:-.2cm) .. ($(-30:.866*.75)+(-.25,.25)$);
  \draw[red,thick] ($(-30:.866*.75)+(-.25,.25)$) -- ++(-45:-.35cm); 
  \filldraw[white] ($(x)+(.04,-.17)$) circle (.04cm); 
  \filldraw[white] ($(x)+(-.28,-.04)$) circle (.05cm); 
  \draw[blue,thick] (c) -- ++(.2,0) arc (90:0:.08cm) -- ++(0,-.25) arc (0:-90:.08cm) -- ++(-.4,0) arc (270:180:.08cm) -- ++(0,.25) arc (180:90:.08cm) -- (c);
  \filldraw[black] (c) circle (.05cm);
  \filldraw[black] (x) circle (.05cm);
  \fill[white] (0:.87) circle (.07cm); 
  \levinHexOpen[red, thick]{0}{0}{.75}{black}{1}
  \levinHexOpen[red, thick]{1.125}{.6495}{.75}{black}{4}
  \node at (210:1.299) {$\scriptstyle r$};
  \node at (210:.8) {$\scriptstyle k$};
  \node at (0:0) {$\scriptstyle p$};
  \node at (1.125,.6495) {$\scriptstyle q$};
 }
 |\Omega\rangle
\]
because the morphisms being applied locally at the vertex are equal,
showing that $(\phi\lhd)\circ(\psi\lhd)=(\phi\psi)\lhd$, as desired.

Now that we have an action of $\PreTubeAC$, we will check that this action satisfies \eqref{eq:tubeARel} on states which correspond to local $A$-modules.
Suppose $\phi\in\cX(cx\to Ayd)$ and $\psi\in\cX(d\to Ac)$.
Condition \eqref{eq:tubeARel} then becomes
\begin{equation}
 \label{eq:tubeARelActs}
 \tikzmath{
  \coordinate (a) at (1.75,-.25);
  \coordinate (b) at (1.25,-.12);
  \coordinate (c) at ($(20:.9cm)+(.25,.1)$);
  \coordinate (d) at ($(c)+(0,.4)$);
  \coordinate (z) at (-20:2cm);
  \coordinate (g) at ($(c)+(-.3,0)$);
  \filldraw[orange] (d) circle (.05cm);
  \draw[orange,thick] (z) .. controls ++(120:.2cm) and ++(-60:.2cm) .. (a) .. controls ++(120:.2cm) and ++(0:.2cm) .. (b) .. controls ++(180:.2cm) and ++(-80:.2cm)  .. (c) -- (d);
  \filldraw[white] ($(c)+(0,-.3)$) circle (.04cm); 
  \draw[DarkGreen,thick] (c) -- ++(.3,0) arc (90:-90:.15cm) -- ++(-.6,0) arc (270:90:.15cm); 
  \draw[blue,thick] (g) -- (c);
  \draw[red,thick] (g) arc (180:0:.15cm);
  \draw[red,thick] ($(g)+(.15,.15)$) .. controls ++(120:.5cm) and ++(135:.2cm) .. ($(-30:.866)+(-.25,.25)$) -- ++(-45:-.35cm);
  \filldraw (c) circle (.05cm);
  \filldraw (g) circle (.05cm);
  \fill[white] (0:1.12) circle (.05cm); 
  \levinHexOpen[red, thick]{0}{0}{1}{black}{1}
  \levinHexOpen[red, thick]{1.125*1.333}{.6495*1.333}{1}{black}{4}
  \node at (210:1.299*1.333) {$\scriptstyle r$};
  \node at (210:.8*1.333) {$\scriptstyle k$};
  \node at (0:0) {$\scriptstyle p$};
  \node at (1.125*1.333,.6495*1.333) {$\scriptstyle q$};
 }
 =
 \tikzmath{
  \coordinate (a) at (1.75,-.25);
  \coordinate (b) at (1.25,-.12);
  \coordinate (c) at ($(20:.9cm)+(.25,.1)$);
  \coordinate (d) at ($(c)+(0,.4)$);
  \coordinate (z) at (-20:2cm);
  \coordinate (g) at ($(c)+(.3,0)$);
  \coordinate (m) at ($(c)+(-.15,.15)$);
  \draw[red,thick] (g) .. controls ++(120:.2cm) and ++(45:.2cm) .. (m) .. controls ++(45:-.1cm) and ++(150:.2cm) .. (c);
  \draw[red,thick] (m) .. controls ++(120:.4cm) and ++(135:.2cm) .. ($(-30:.866)+(-.25,.25)$) -- ++(-45:-.35cm);
  \filldraw[white] ($(c)+(0,.21)$) circle (.05cm); 
  \filldraw[orange] (d) circle (.05cm);
  \draw[orange,thick] (z) .. controls ++(120:.2cm) and ++(-60:.2cm) .. (a) .. controls ++(120:.2cm) and ++(0:.2cm) .. (b) .. controls ++(180:.2cm) and ++(-80:.2cm)  .. (c) -- (d);
  \filldraw[white] ($(c)+(0,-.3)$) circle (.04cm); 
  \draw[blue,thick] (g) arc (90:-90:.15cm) -- ++(-.6,0) arc (270:90:.15cm) -- (c);
  \draw[DarkGreen,thick] (g) -- (c);
  \filldraw (c) circle (.05cm);
  \filldraw (g) circle (.05cm);
  \fill[white] (0:1.12) circle (.05cm); 
  \levinHexOpen[red, thick]{0}{0}{1}{black}{1}
  \levinHexOpen[red, thick]{1.125*1.333}{.6495*1.333}{1}{black}{4}
  \node at (210:1.299*1.333) {$\scriptstyle r$};
  \node at (210:.8*1.333) {$\scriptstyle k$};
  \node at (0:0) {$\scriptstyle p$};
  \node at (1.125*1.333,.6495*1.333) {$\scriptstyle q$};
 }
\end{equation}
where the black dots are $\phi$ and $\psi$.
We can manipulate the morphism appearing on the right-hand side as follows.
\[
 \tikzmath{
  \draw (.3,-2) node[below]{$\scriptstyle s$} -- (.3,-.7) -- node[right]{$\scriptstyle x$} (.3,-.3);
  \draw[thick,red] (-.6,0) -- (-.6,1.9) node[above]{$\scriptstyle A$};
  \draw[thick,red] (.3,1) .. controls ++(90:.5cm) and ++(110:-.2cm) .. (-.6,1.5);
  \draw[knot] (-.3,.3) -- (-.3,1.9) node[above]{$\scriptstyle y$};
  \draw[knot] (.3,.3) -- node[right]{$\scriptstyle d$} (.3,.7) -- (.3,1.3) node[left, yshift=.15cm]{$\scriptstyle c$} arc (180:0:.3cm) -- node[right]{$\scriptstyle \overline{c}$} (.9,-1) arc (0:-180:.6cm) -- node[left]{$\scriptstyle c$} (-.3,-.3);
  \roundNbox{fill=white}{(0,0)}{.3}{.5}{.3}{$\phi$}
  \roundNbox{fill=white}{(.3,-1)}{.3}{0}{0}{$m$}
  \roundNbox{fill=white}{(.3,1)}{.3}{0}{0}{$\psi$}
 }
 =
 \tikzmath{
  \draw (.3,-2) node[below]{$\scriptstyle s$} -- (.3,-.7) -- node[right]{$\scriptstyle x$} (.3,-.3);
  \draw[thick,red] (.3,1) .. controls ++(90:.5cm) and ++(90:-.2cm) .. (-.8,1.5) .. controls ++(90:.2cm) and ++(90:.2cm) .. (-.6,1.7) ;
  \filldraw[white] (-.6,1.41) circle (.05cm);
  \draw[thick,red] (-.6,0) -- (-.6,1.9) node[above]{$\scriptstyle A$};
  \draw[knot] (-.3,.3) -- (-.3,1.9) node[above]{$\scriptstyle y$};
  \draw[knot] (.3,.3) -- node[right]{$\scriptstyle d$} (.3,.7) -- (.3,1.3) node[left, yshift=.15cm]{$\scriptstyle c$} arc (180:0:.3cm) -- node[right]{$\scriptstyle \overline{c}$} (.9,-1) arc (0:-180:.6cm) -- node[left]{$\scriptstyle c$} (-.3,-.3);
  \roundNbox{fill=white}{(0,0)}{.3}{.5}{.3}{$\phi$}
  \roundNbox{fill=white}{(.3,-1)}{.3}{0}{0}{$m$}
  \roundNbox{fill=white}{(.3,1)}{.3}{0}{0}{$\psi$}
 }
 =
 \tikzmath{
  \draw (.3,-2.8) node[below]{$\scriptstyle s$} -- (.3,-.7) -- node[right]{$\scriptstyle x$} (.3,-.3);
  \draw[thick,red,knot] (-.6,-.8) arc (0:180:.15cm) -- (-.9,-1.3) arc (-180:0:1.15cm) -- (1.4,1) .. controls ++(90:.2cm) and ++(90:-.2cm) .. (-.8,1.5) .. controls ++(90:.2cm) and ++(90:.2cm) .. (-.6,1.7); 
  \filldraw[white] (-.6,1.41) circle (.05cm);
  \draw[thick,red] (-.6,0) -- (-.6,1.9) node[above]{$\scriptstyle A$};
  hdraw[knot] (-.3,.3) -- (-.3,1.9) node[above]{$\scriptstyle y$};
  \draw[knot] (.3,.3) node[left,yshift=.15cm]{$\scriptstyle d$} arc (180:0:.3cm) -- node[right]{$\scriptstyle\overline{d}$} (.9,-1.5) arc (0:-180:.7cm) -- (-.5,-.7) -- node[right]{$\scriptstyle c$} (-.5,-.3);
  \roundNbox{fill=white}{(0,0)}{.3}{.5}{.3}{$\phi$}
  \roundNbox{fill=white}{(.3,-1)}{.3}{0}{0}{$m$}
  \roundNbox{fill=white}{(-.5,-1)}{.3}{0}{0}{$\psi$}
 }
 =
 \tikzmath{
  \draw[thick,red] (.6,-2.7) .. controls ++(90:.4cm) and ++(-45:.2cm) .. (0,-2.4) .. controls ++(-45:.2cm) and ++(90:-.2cm) .. (-1.3,-1.5) -- (-1.3,-.4) .. controls ++(90:.2cm) and ++(90:-.2cm) .. (-.9,-.1) -- (-.9,1.2) arc (180:90:.3cm); 
  \draw[knot] (.3,-3.2) node[below]{$\scriptstyle s$} -- (.3,-.7) -- node[right]{$\scriptstyle x$} (.3,-.3);
  \draw[thick,red,knot] (-.7,-.7) arc (0:180:.15cm) -- (-1,-3.1) arc (-180:0:.3cm) .. controls ++(90:.4cm) and ++(90:-.4cm) .. (.6,-2.7);
  \draw[thick,red] (-.6,0) -- (-.6,1.9) node[above]{$\scriptstyle A$};
  \draw[knot] (-.3,.3) -- (-.3,1.9) node[above]{$\scriptstyle y$};
  \draw[knot] (.3,.3) node[left,yshift=.15cm]{$\scriptstyle d$} arc (180:0:.3cm) -- node[right]{$\scriptstyle\overline{d}$} (.9,-1.5) arc (0:-180:.7cm) -- (-.5,-.7) -- node[right]{$\scriptstyle c$} (-.5,-.3);
  \roundNbox{fill=white}{(0,0)}{.3}{.5}{.3}{$\phi$}
  \roundNbox{fill=white}{(.3,-1)}{.3}{0}{0}{$m$}
  \roundNbox{fill=white}{(-.5,-1)}{.3}{0}{0}{$\psi$}
 }
\]
In case the excitation lives in a local summand of $sA_A$,
the double braid is equal to the identity, and we have
\[
 \tikzmath{
  \draw (.3,-2) node[below]{$\scriptstyle s$} -- (.3,-.7) -- node[right]{$\scriptstyle x$} (.3,-.3);
  \draw[thick,red] (-.6,0) -- (-.6,1.9) node[above]{$\scriptstyle A$};
  \draw[thick,red] (.3,1) .. controls ++(90:.5cm) and ++(110:-.2cm) .. (-.6,1.5);
  \draw[knot] (-.3,.3) -- (-.3,1.9) node[above]{$\scriptstyle y$};
  \draw[knot] (.3,.3) -- node[right]{$\scriptstyle d$} (.3,.7) -- (.3,1.3) node[left, yshift=.15cm]{$\scriptstyle c$} arc (180:0:.3cm) -- node[right]{$\scriptstyle \overline{c}$} (.9,-1) arc (0:-180:.6cm) -- node[left]{$\scriptstyle c$} (-.3,-.3);
  \roundNbox{fill=white}{(0,0)}{.3}{.5}{.3}{$\phi$}
  \roundNbox{fill=white}{(.3,-1)}{.3}{0}{0}{$m$}
  \roundNbox{fill=white}{(.3,1)}{.3}{0}{0}{$\psi$}
 }
 =
 \tikzmath{
  \draw (.3,-2.5) node[below]{$\scriptstyle s$} -- (.3,-.7) -- node[right]{$\scriptstyle x$} (.3,-.3);
  \draw[thick,red] (-1.5,.6) .. controls ++(0:0) and ++(-45:-.6cm) .. (-1.1,.4) .. controls ++(-45:.2cm) and ++(90:.2cm) .. (-.9,0) -- (-.9,-.3) .. controls ++(90:-.2cm) and ++(90:.2cm) .. (-.7,-.7);
  \draw[thick,red,knot] (-.6,1.5) arc (90:180:.3cm) .. controls ++(90:-.2cm) and ++(45:.2cm) .. (-1.1,.8) .. controls ++(45:-.6cm) and ++(0:0) .. (-1.5,.6);
  \draw[thick,red] (-.6,0) -- (-.6,1.9) node[above]{$\scriptstyle A$};
  \draw[knot] (-.3,.3) -- (-.3,1.9) node[above]{$\scriptstyle y$};
  \draw[knot] (.3,.3) node[left,yshift=.15cm]{$\scriptstyle d$} arc (180:0:.3cm) -- node[right]{$\scriptstyle\overline{d}$} (.9,-1.5) arc (0:-180:.7cm) -- (-.5,-.7) -- node[right]{$\scriptstyle c$} (-.5,-.3);
  \roundNbox{fill=white}{(0,0)}{.3}{.5}{.3}{$\phi$}
  \roundNbox{fill=white}{(.3,-1)}{.3}{0}{0}{$m$}
  \roundNbox{fill=white}{(-.5,-1)}{.3}{0}{0}{$\psi$}
 }
 =
 \tikzmath{
  \draw (.3,-2.5) node[below]{$\scriptstyle s$} -- (.3,-.7) -- node[right]{$\scriptstyle x$} (.3,-.3);
  \draw[thick,red,knot] (-.6,1.5) arc (90:180:.3cm) -- (-.9,-.3) .. controls ++(90:-.2cm) and ++(90:.2cm) .. (-.7,-.7);
  \draw[thick,red] (-.6,0) -- (-.6,1.9) node[above]{$\scriptstyle A$};
  \draw[knot] (-.3,.3) -- (-.3,1.9) node[above]{$\scriptstyle y$};
  \draw[knot] (.3,.3) node[left,yshift=.15cm]{$\scriptstyle d$} arc (180:0:.3cm) -- node[right]{$\scriptstyle\overline{d}$} (.9,-1.5) arc (0:-180:.7cm) -- (-.5,-.7) -- node[right]{$\scriptstyle c$} (-.5,-.3);
  \roundNbox{fill=white}{(0,0)}{.3}{.5}{.3}{$\phi$}
  \roundNbox{fill=white}{(.3,-1)}{.3}{0}{0}{$m$}
  \roundNbox{fill=white}{(-.5,-1)}{.3}{0}{0}{$\psi$}
 }
\]
The last equality leaves us with the morphism on the left-hand side of \eqref{eq:tubeARelActs}, and follows from the fact that the twist of a condensable algebra is always equal to the identity \cite[Prop.~2.25]{MR2187404}.
Thus, under the equivalence $\Rep(\PreTubeAC)\cong Z(\cX)_A$ we defined in and after equation \eqref{eq:preTubeARep}, imposing relation \eqref{eq:tubeARel} selects exactly those objects in $Z(\mathcal{X})_A^{\loc}\cong Z(\mathcal{X}_A)$.

Taken together, the results we have accumulated in this section are sufficient to prove the following result.
\begin{thm}
 \label{thm:correctMTC}
 When $t=1$, the low-energy excitations of the lattice model of \S\ref{ssec:condensationModel} are classified by $Z(\mathcal{X})_A^{\loc}$.
\end{thm}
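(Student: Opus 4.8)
The plan is to assemble the pieces developed throughout \S\ref{ssec:condensedPhase} into the classification, following the skeleton of the proof of Proposition \ref{prop:tubeActionCorrect}. First I would establish that the space of states carrying a single isolated low-energy excitation at a fixed site $(p,\ell)$ — those violating only the $A_\ell$ and $B_p$ terms — is a representation of $\PreTubeAC$. This is precisely the content of the local action constructed above: the action of $\Tube(\cX)$ exactly as in the uncondensed model, justified because Proposition \ref{prop:tubeActionCorrect} holds verbatim once one uses the half-braiding of $A$ whenever a strand crosses a vertical $A$-link, together with the action of $\Absorb(A)$, which commutes with all $A_\ell$, $B_p$, and $D_{p,q}$ terms. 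These assemble, via the check that $(\phi\psi)\rhd = (\phi\rhd)\circ(\psi\rhd)$ performed above, into a well-defined action of $\widetilde{\Tube_A(\cX)} = \Absorb(A)\Tube(\cX)$.

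Next I would run the transitivity and nonvanishing argument of Proposition \ref{prop:tubeActionCorrect} in the condensed setting. The string operators $\sigma^s_p$ with the endings \eqref{eq:condensedStringEnding} span the excited states at the endpoint, so $\psi \mapsto \sigma^s_p(\phi,\psi)|\Omega\rangle$ is a surjective homomorphism of $\PreTubeAC$-representations onto $\rho_{sA_A}$, and the string-fusion argument rules out the zero representation. Under the equivalence $\Rep(\PreTubeAC) \cong Z(\cX)_A$ established in and after \eqref{eq:preTubeARep}, the isolated low-energy excitations are therefore labelled by simple objects of $Z(\cX)_A$.

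The decisive step is to separate the unconfined topological excitations from the confined ones. Here I would invoke the confinement analysis: $D_{p,q}$ commutes with $\sigma^s_r$ only up to the double braiding between $A$ and $s$, so exactly those summands of $sA_A$ on which this double braiding acts trivially — equivalently, the summands lying in $Z(\cX)_A^{\loc}$ — carry no length-proportional energy cost. Algebraically, this is the condition that the $\PreTubeAC$-representation factors through the quotient $\Tube_A(\cX)$, which was identified with relation \eqref{eq:tubeARel} by the computation culminating in \eqref{eq:tubeARelActs}, where triviality of the double braid follows from the twist of a condensable algebra being the identity \cite[Prop.~2.25]{MR2187404}. Thus the low-energy topological excitations are precisely the representations of $\Tube_A(\cX)$. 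Finally, Lemma \ref{lem:tubeAMoritaClass} gives a Morita equivalence between $\Tube_A(\cX)$ and $\Tube(\cX_A)$, so their representation categories agree, and $\Rep(\Tube(\cX_A)) \cong Z(\cX_A) \cong Z(\cX)_A^{\loc}$ by \cite[Thm.~3.20]{MR3039775}, yielding the claim.

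I expect the main obstacle to be making rigorous the passage between the \emph{physical} condition of being an unconfined low-energy excitation (commuting with every $D_{p,q}$ term) and the \emph{algebraic} condition of being a representation of the quotient $\Tube_A(\cX)$ rather than of all of $\PreTubeAC$. The diagrammatic computation \eqref{eq:tubeARelActs} establishes that local summands of $sA_A$ satisfy \eqref{eq:tubeARel}, but one must also argue the converse direction — that no additional relations are imposed and that the summands of $sA_A$ not in $Z(\cX)_A^{\loc}$ genuinely pay an extensive energy cost — so that the quotient $\Tube_A(\cX)$ captures exactly the unconfined sector and nothing more.
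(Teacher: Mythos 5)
Your proposal is correct and takes essentially the same route as the paper, whose own proof is precisely the accumulation of the results of \S\ref{ssec:condensedPhase} that you assemble: the local $\PreTubeAC$-action on isolated low-energy excitations, the confinement analysis identifying unconfined excitations with representations satisfying relation \eqref{eq:tubeARel} (i.e.\ representations of the quotient $\Tube_A(\cX)$), the Morita equivalence of Lemma \ref{lem:tubeAMoritaClass}, and the equivalences $\Rep(\Tube(\cX_A))\cong Z(\cX_A)\cong Z(\cX)_A^{\loc}$. The residual issue you flag---rigorously establishing the converse, that summands of $sA_A$ outside $Z(\cX)_A^{\loc}$ genuinely incur an extensive energy cost so that the quotient captures exactly the unconfined sector---is likewise left at the same informal level in the paper's treatment.
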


\subsection{Ground State Degeneracy}
\label{ssec:GSD}
Apart from characterizing anyons at times $t=0$ and $t=1$, one might also wish to understand the phase transition in terms of the effect on the space of ground states of the $A$ and $B$ terms.
We restrict our attention to the case where $\mathcal{X}$ admits monoidal fiber functor to $\Hilb$, so that a condensable algebra $A$ in $Z(\mathcal{X})$ becomes an actual $\Cstar$-algebra.
We will also not give a full analysis of the phase transition, but instead sketch a relationship to a more well-understood phase transition.

Since the space of ground states of a Levin-Wen model is locally (on an open disk) $1$-dimensional \cite{PhysRevB.85.075107,1106.6033}, the ground-states of the $A$ and $B$ terms of our Hamiltonian are parameterized by the labels of vertical edges and the flux between nearby pairs of vertical edges.
Therefore, at low energy, our model of anyon condensation is equivalent to another simpler lattice model, which generalizes the 2D-transverse field Ising model, which we call the \textbf{transverse field nearest-neighbor} model associated to the algebra $A$.

The first step in defining our simplified lattice model is describing the lattice.
Vertices of our new lattice correspond to vertical edges in the model of anyon condensation, with an edge between any pair two vertices which support a common $D$-term.
Our models for anyon condensation add one vertical edge extending above each plaquette, and $D_{p,q}$ terms for pairs of vertical edges above neighboring plaquettes, so we obtain the dual lattice of the square or hexagonal lattices we began with.
Therefore, from the honeycomb lattice described in Section \ref{ssec:condensationModel}, we obtain a regular triangular lattice, and from the square lattice, we obtain another square lattice.

To each vertex of the new model, we associate the same Hilbert space used in the old model, which is isomorphic to $A$ as an $A$-representation.
To each edge, we associate the term $\widetilde{D_{p,q}}=m^\dag m$, where $m$ is the multiplication of $A$, viewed as an operator on $A$.
To each vertex, we associate the term $C_v=uu^\dag$, where $u:1\to A$ is the unit map of $A$, viewed as an operator $\mathbb{C}\to A$.
Finally, on a closed lattice, we add a single nonlocal term $F$, projecting onto the states where $\prod_v\phi_v=1_A$, where $\phi_v$ is the label of the vertex $v$.
One should think of this term as picking out the ground states of the condensed $Z(\cX)_A^{\loc}$ theory as a superselection sector.

Notice that, in the case $\mathcal{X}=\Hilb[\Z/2]$ and $A=\C[\Z/2]$, i.e.~condensing $1\oplus e$ in toric code, we recover the Ising model from the $C$ and $D$ terms, because the $C_v$ term is just $Z_v$, while $\widetilde{D_{p,q}}$ simplifies to $X_vX_w$.

The general case is morally similar, but a full analysis would amount to deriving the TQFT structure from the lattice model, as in \cite{1106.6033}.
Such an argument is beyond the scope of this work.

\section{Examples}
\label{sec:examples}

In this section, we give worked examples including $\bbZ/2$-Toric Code, $\bbZ/n$-Toric Code, doubled semion, and doubled $\Fib$.
These examples can be substantially simplified as they are \emph{multiplicity free}, i.e., $\dim(\cX(xy\to z)) \in \{0,1\}$ for all $x,y,z\in \Irr(\cX)$.
In this setting, on a trivalent lattice, we can push the degrees of freedom from the vertices onto the links, as in the original Levin-Wen string-net model \cite{PhysRevB.71.045110}.
Above, we made frequent use of the fact that the morphism labeling a vertex determines an object labeling each adjacent link; the multiplicity free case is simply the situation where the labels on adjacent links determine (up to scalar) the morphism labeling the vertex.

To each ordinary link of our lattice, we assign the Hilbert space
\[
\mathcal{H}_\ell=\mathbb{C}^{\Irr(\mathcal{X})}
=
\bigoplus_{c\in \Irr(\cX)} \bbC |c\rangle.
\]
The Hamiltonian includes vertex terms and plaquette terms.
If all edges at a vertex $v$ are oriented away from $v$, the vertex term $A_v$ penalizes states where the labels on trivalent vertices are not admissible (the clockwise tensor product of the labels does not contain $1_\cX$).
If some edges at $v$ are oriented towards $v$, we take the dual of the simple object labeling those edges when defining $A_v$.

The plaquette term $B_p$ for a plaquette $p$ is
\[
 \frac{1}{D}\sum_{x\in\Irr(\cX)}d_xB_p^x\text{,}
\]
where $B_p^x$ has the effect of inserting a counterclockwise loop labeled by $x$ inside the plaquette.
We use the conventions of \cite{MR2726654}.
From the associator of $\mathcal{X}$, we may calculate an $F$-symbol, defined by
\[
\tikzmath{
\draw[->, semithick, >=stealth] (-.3,0) -- (0,.6);
\draw[->, semithick, >=stealth] (0.9,0)-- (.6,.6);
\draw[->, semithick, >=stealth] (-.3,1.2) -- (0,.6);
\draw[->, semithick, >=stealth] (0.9,1.2) -- (.6,.6);
\draw[->, semithick, >=stealth] (0,.6) node[below right] {$\scriptstyle h$}-- (.6,.6);
\node at (0,1) {$\scriptstyle f$};
\node at (0,.2) {$\scriptstyle c$};
\node at (.6,.2) {$\scriptstyle d$};
\node at (.6,1) {$\scriptstyle g$};
}
=\sum_{k\in\Irr(\mathcal{X})}F_{c,d}^{f,g}[h,k]
\tikzmath{
\draw[->, semithick, >=stealth] (0,0) -- (.6,.3);
\draw[->, semithick, >=stealth] (.6,.3) node[above right] {$\scriptstyle k$} -- (.6,.9);
\draw[->, semithick, >=stealth] (1.2,0) -- (.6,.3);
\draw[->, semithick, >=stealth] (0,1.2) -- (.6,.9);
\draw[->, semithick, >=stealth] (1.2,1.2) -- (.6,.9);
\node at (.2,.3) {$\scriptstyle c$};
\node at (1,.3) {$\scriptstyle d$};
\node at (.1,.9) {$\scriptstyle f$};
\node at (1,.9) {$\scriptstyle g$};
}\,.
\]
The operator $B_p^x$ is then given as follows on a hexagonal lattice, where the orientations are always from \emph{left to right}.
\begin{equation}
 \label{eq:Bpc}
 B_p^x\Bigg\lvert
 \tikzmath{
 \levinHex[]{0}{0}{.5}{black}
 \foreach \i in {1,...,6} {
 \node at (180+60*\i:.9cm) {$\scriptstyle g_{\i}$};
 }
 \foreach \i in {1,...,6} {
 \node at ($ (60*\i+150:.65cm) $) {$\scriptstyle d_{\i}$};
 }}
 \Bigg\rangle
 =
 \sum_{f_1,\dots,f_6}T(x,\vec{d},\vec{f},\vec{g})
 \Bigg\lvert
 \tikzmath{
 \levinHex[]{0}{0}{.5}{black}
 \foreach \i in {1,...,6} {
 \node at (180+60*\i:.9cm) {$\scriptstyle g_{\i}$};
 }
 \foreach \i in {1,...,6} {
 \node at ($ (60*\i+150:.65cm) $) {$\scriptstyle f_{\i}$};
 }}
 \Bigg\rangle
\end{equation}
where
\[
T(x,\vec{d},\vec{f},\vec{g})=F_{\overline{x},\overline{d_6}}^{f_1,\overline{g_6}}[\overline{f_6},\overline{f_1}]F_{x,d_1}^{f_2,\overline{g_1}}[f_1,\overline{d_2}]F_{x,d_2}^{f_3,g_2}[f_2,\overline{d_3}]F_{x,d_3}^{\overline{f_4},g_3}[f_3,d_4]F_{d_5,\overline{x}}^{f_4,g_4}[f_5,d_4]F_{d_6,\overline{x}}^{f_5,\overline{g_5}}[f_6,d_5].
\]
This complicated definition of $T$ comes from the need to reconcile the orientation of the loop of type $x$ inserted and the chosen orientation of our hexagonal lattice which was used to define the vertex terms.

\begin{rem}
 \label{rem:groupCase}
 When $\cX=\fdHilb(G,\omega)$ for a group $G$, for any choice of  $a_1,\ldots,a_n,b_1,\ldots,b_n\in G$,
 $\dim(\cX(a_1\cdots a_m \to b_1 \cdots b_n)) \in \{0,1\}$.
 Conversely, this condition on simple objects if $\cX$ is equivalent to $\cX$ being unitarily equivalent to some $\fdHilb(G,\omega)$.
 Indeed, under this condition, $\Irr(\cX)$ forms a group under tensor product, and the associator determines a $3$-cocycle $\omega$. 
 In this situation, we may put degrees of freedom on the links of any lattice (not just a trivalent lattice), in particular the square lattice as in Kitaev's quantum double model \cite{MR1951039}.

In this group case, we can adapt our general model to a rectangular lattice by adding one vertical edge for our condensable algebra $A$ emanating from each vertex out of the 2D plane, as follows.
$$
\tikzmath{
\draw[step=1.0,black,thin] (0.5,0.5) grid (3.5,3.5);
\draw [->, line join=round,
decorate, decoration={
    zigzag,
    segment length=4,
    amplitude=.9,post=lineto,
    post length=2pt
}]  (4,2) -- (5,2);
\draw[step=1.0,black,thin] (5.5,.5) grid (8.5,3.5);
\foreach \i in {6,7,8} {
\foreach \j in {1,2,3} {
        \draw[thick, red] (\i,\j) -- ($ (\i,\j) + (-.3,.3)$);
}}
}
$$
We can use the same Hilbert space $\mathbb{C}^{\Irr(\cX)}$ as before on black links, and use the Hilbert space $\bigoplus_{x\in\Irr(\cX)}(x,U(A))$ on red links.
This model will be applicable for the $\bbZ/2$-Toric Code, $\bbZ/n$-Toric Code, and doubled semion models.
\end{rem}

\subsection{\texorpdfstring{$\bbZ/2$}{Z/2} Toric code}

We begin by describing an anyon condensation in the simplest example of a Levin-Wen model, the $\bbZ/2$ toric code \cite{MR1951039}.
We denote by $X,Z$ the Pauli matrices
$$
X=
\begin{pmatrix}
0 & 1 \\
1 & 0
\end{pmatrix}
\qquad
\qquad
Z=
\begin{pmatrix}
1 & 0 \\
0 & -1
\end{pmatrix}.
$$
The system is defined on a square grid on a plane with an edge rising vertically from each vertex.
$$
\tikzmath{
\draw[step=1.0,black,thin] (0.5,0.5) grid (3.5,3.5);
\foreach \i in {1,2,3} {
\foreach \j in {1,2,3} {
        \draw (\i,\j) -- ($ (\i,\j) + (-.3,.3)$);
}}
}
$$
To each link $\ell$ of the lattice,
we associate the Hilbert space $\cH_\ell = \bbC^2$.
The spaces $\cH_\ell$ on the 2D lattice are a direct sum $\bbC|0_\ell\rangle \oplus \bbC |1_\ell\rangle$.
Here, we view the state $|0_\ell\rangle$ as `off', `vacuum', or $0 \in \bbZ/2$, and $|1_\ell\rangle$ as `on' or $1\in \bbZ/2$.

Our preferred bases of the Hilbert spaces assigned to the vertical links depend on the particle to be condensed, and will be given in the subsequent sections.
In all cases, the full Hilbert space is the tensor product
$\cH = \bigotimes_{\ell} \cH_\ell$.

\subsubsection{Condensing \texorpdfstring{$e$}{e}}
\label{sssec:Z2ToricCode-Condensing-e}

We use $e$ to denote the vertex excitations in toric code; in other words, we say that there is an $e$-particle at a vertex $v$ in a state $\psi$ when $A_v|\psi\rangle=-1|\psi\rangle$.

The Hilbert spaces $\cH_\ell$ which we assign to vertical links are a direct sum $\bbC|\mathbbm{1}_\ell\rangle \oplus \bbC|e_\ell\rangle$, where $|\mathbbm{1}_\ell\rangle$ represents the unit of $D(\bbZ/2)$ and $|e_\ell\rangle$ represents the $e$-particle from $D(\bbZ/2)$.

We now describe the Hamiltonian for our lattice model.
For each vertex $v$, we define a \textbf{vertex term} $A_v$ by
$$
A_v:=
\tikzmath{
\draw[step=1.0, black, dotted] (-0.5,-0.5) grid (2.5,2.5);
\foreach \i in {0,1,2} {
\foreach \j in {0,1,2} {
        \draw[dotted] (\i,\j) -- ($ (\i,\j) + (-.3,.3)$);
}}
\draw[thick] (.7,1.3) node[above] {$\scriptstyle Z$} -- (1,1) node[below, xshift=.2cm] {$\scriptstyle v$};
\draw[thick] (0,1) node[left]{$\scriptstyle Z$} -- (2,1) node[right]{$\scriptstyle Z$};
\draw[thick] (1,0) node[below]{$\scriptstyle Z$} -- (1,2) node[above]{$\scriptstyle Z$};
}
$$
Here, $-A_v$ ensures that an even number of edges adjacent to $v$ are in the `on' position by imposing an energy penalty for an odd number of edges in the `on' position.

For each plaquette/face $p$, we define a \textbf{plaquette term} $B_p$ by
$$
B_p :=
\tikzmath{
\draw[step=1.0, black, dotted] (0.5,0.5) grid (2.5,2.5);
\foreach \i in {1,2} {
\foreach \j in {1,2} {
        \draw[dotted] (\i,\j) -- ($ (\i,\j) + (-.3,.3)$);
}}
\draw[thick] (1,1) -- node[left] {$\scriptstyle X$} (1,2) -- node[above] {$\scriptstyle X$} (2,2) -- node[right] {$\scriptstyle X$} (2,1) -- node[below] {$\scriptstyle X$} (1,1);
\node at (1.5,1.5) {$\scriptstyle p$};
}
$$
Here, $-B_p$ averages over states that are `off' and `on' by imposing an energy penalty.
It also ensures that a string to one side of the plaquette $p$ can be isotoped over $p$ to the other side at no energy cost in the ground state.

For each vertex $v$, we also have a new term, called the \textbf{unit term}:
$$
C_v:=
\tikzmath{
\draw[step=1.0, black, dotted] (0.5,0.5) grid (1.5,1.5);
\foreach \i in {1} {
\foreach \j in {1} {
        \draw[dotted] (\i,\j) -- ($ (\i,\j) + (-.3,.3)$);
}}
\draw[thick] (.7,1.3) node[above] {$\scriptstyle Z$} -- (1,1) node[below, xshift=.2cm] {$\scriptstyle v$};
}
$$
Here, $-C_v$ turns off the vertical edges by imposing an energy penalty on the state $|e_\ell\rangle$.

For each link $\ell$ on the 2D plane, we have a new term, the \textbf{condensation term}:
$$
D_\ell :=
\tikzmath{
\draw[step=1.0, black, dotted] (0.5,0.5) grid (2.5,1.5);
\foreach \i in {1,2} {
\foreach \j in {1} {
        \draw[dotted] (\i,\j) -- ($ (\i,\j) + (-.3,.3)$);
}}
\draw[thick] (.7,1.3) node[above] {$\scriptstyle X$} -- (1,1) -- node[above, xshift=-.2cm] {$\scriptstyle X$} node[below, xshift=-.2cm] {$\scriptstyle \ell$} (2,1) -- (1.7,1.3) node[above] {$\scriptstyle X$};
}
\quad
\text{or}
\quad
\tikzmath{
\draw[step=1.0, black, dotted] (0.5,0.5) grid (1.5,2.5);
\foreach \i in {1} {
\foreach \j in {1,2} {
        \draw[dotted] (\i,\j) -- ($ (\i,\j) + (-.3,.3)$);
}}
\draw[thick] (.7,1.3) node[left] {$\scriptstyle X$} -- (1,1) -- node[left, yshift=.2cm] {$\scriptstyle X$} node[right, yshift=.2cm] {$\scriptstyle \ell$} (1,2) -- (.7,2.3) node[left] {$\scriptstyle X$};
}.
$$
Here, $-D_\ell$ implements $m^*m$, where $m$ is the multiplication of the \'etale algebra $A=\mathbbm{1}\oplus e$, which we mean to condense.
Thus, the vertical links at each vertex and the horizontal links between two vertices each support a two-dimensional Hilbert space, but for different reasons: if $\ell$ is a horizontal link, the states $|0_\ell\rangle$ and $|1_\ell\rangle$ correspond to the elements $0$ and $1$ of $\mathbb{Z}/2$, while if $\ell$ is vertical, the states $|0_\ell\rangle$ and $|1_\ell\rangle$ correspond to summands $\mathbbm{1}$ and $e$ of $A$, respectively.

We define the Hamiltonian of the system for $t\in [0,1]$ to be
\begin{equation}
\label{eq:ToricCodeHamiltonian}
H_t
:=
-V\left(\sum_v A_v +\sum_p B_p \right)
-K\left((1-t)\sum_v C_v + t\sum_\ell D_\ell\right),
\end{equation}
where
$V>0$ is a constant
and
$K\gg V$ is a large constant.
Observe that the terms $A_v, B_p$ commute with one another, as well as with the $C_v$ and $D_\ell$ terms.
However, the $C_v$ and $D_\ell$ terms do not commute when $v$ is a source or target of $\ell$.

When $t=0$, the low energy physics of our model is equivalent to that of \cite{MR1951039}.
The fact that $K$ is large forces us into the ground state of the $C_v$ terms to analyze the low energy physics of the model.
In the ground state of the $C_v$ terms, each vertical edge is in the state $|\mathbbm{1}\rangle$.
Since $Z|\mathbbm{1}\rangle=|\mathbbm{1}\rangle$, on ground states of $C_v$, the $A_v$ terms agrees with
$$
\tikzmath{
\draw[step=1.0, black, dotted] (-0.5,-0.5) grid (2.5,2.5);
\foreach \i in {0,1,2} {
\foreach \j in {0,1,2} {
        \draw[dotted] (\i,\j) -- ($ (\i,\j) + (-.3,.3)$);
}}
\draw[thick] (0,1) node[left]{$\scriptstyle Z$} -- (2,1) node[right]{$\scriptstyle Z$};
\draw[thick] (1,0) node[below]{$\scriptstyle Z$} -- (1,2) node[above]{$\scriptstyle Z$};
}\text{.}
$$
Hence, on the ground state of the $C_v$ terms, our Hamiltonian agrees exactly with the Hamiltonian for toric code given in \cite{MR1951039}, up to exchanging Pauli $X$ and $Z$.

When $t=1$, we may locally create and destroy individual $e$-particles at any vertex by applying the Pauli $X$ operator to the vertical edge, which commutes with $A_v$, $B_p$, and $D_\ell$ terms:
$$
\tikzmath{
\draw[step=1.0, black, dotted] (0.5,0.5) grid (1.5,1.5);
\foreach \i in {1} {
\foreach \j in {1} {
        \draw[dotted] (\i,\j) -- ($ (\i,\j) + (-.3,.3)$);
}}
\draw[thick] (.7,1.3) node[above] {$\scriptstyle X$} -- (1,1);
}\text{.}
$$
Hence, when $t=1$, the $e$-particle is condensed.

Now suppose that we drop the assumption that $K\gg V$, and instead send $V\to\infty$ while $K$ remains fixed.
This pushes us into the ground state space of the usual Levin-Wen commuting projector Hamiltonian on the 2D lattice, leaving only degrees of freedom on the additional vertical edges.
The Hamiltonian on this reduced Hilbert space is then
$$
\widetilde{H}_t
:=
-K\left((1-t)\sum_v C_v + t\sum_\ell D_\ell\right)
=
-K\left((1-t)\sum_v Z_v + t\sum_{\ell} X_{s(\ell)}X_{t(\ell)}\right).
$$
Here, $s(\ell), t(\ell)$ denote the source and target of the edge $\ell$.
This Hamiltonian is just the 2D transverse-field Ising model \cite{PhysRev.65.117}, so tuning $t$ from $0$ to $1$ drives the system through a well-studied quantum phase transition \cite{1811.09275,1901.00278}.
Of course, when $K\gg V$ and $K$ and $V$ remain constant, the overall story is more complicated.
However, this special case motivates the analogy to the 2D transverse field Ising model made in \S~\ref{ssec:GSD}.

\subsubsection{Condensing \texorpdfstring{$m$}{m}}
\label{sssec:Z2ToricCode-Condensing-m}

In order to condense $m$ excitations, we slightly alter the previous model,  adapting certain terms in the Hamiltonian.
We relabel our preferred basis for $\mathcal{H}_\ell$ when $\ell$ is vertical link, replacing $|e_\ell\rangle$ with $|m_\ell\rangle$.
The operators $C_v$ remain unchanged, while $A_v$, $B_p$, and $D_\ell$ are adjusted as follows, with the changes highlighted in \textcolor{blue}{blue}.
\begin{align*}
A_v &:=
\tikzmath{
\draw[step=1.0, black, dotted] (-0.5,-0.5) grid (2.5,2.5);
\foreach \i in {0,1,2} {
\foreach \j in {0,1,2} {
        \draw[dotted] (\i,\j) -- ($ (\i,\j) + (-.3,.3)$);
}}
\draw[dotted, thick, blue] (1,1) -- (.7,1.3);
\node at (1,1) [below, xshift=.2cm] {$\scriptstyle v$};
\draw[thick] (0,1) node[left]{$\scriptstyle Z$} -- (2,1) node[right]{$\scriptstyle Z$};
\draw[thick] (1,0) node[below]{$\scriptstyle Z$} -- (1,2) node[above]{$\scriptstyle Z$};
}
\\
B_p &:=
\tikzmath{
\draw[step=1.0, black, dotted] (0.5,0.5) grid (2.5,2.5);
\foreach \i in {1,2} {
\foreach \j in {1,2} {
        \draw[dotted] (\i,\j) -- ($ (\i,\j) + (-.3,.3)$);
}}
\draw[thick] (1,1) -- node[left] {$\scriptstyle X$} (1,2) -- node[above] {$\scriptstyle X$} (2,2) -- node[right] {$\scriptstyle X$} (2,1) -- node[below] {$\scriptstyle X$} (1,1);
\node at (1.3,1.5) {$\scriptstyle p$};
\draw[thick, blue] (2,1) -- (1.7,1.3) node[above] {$\scriptstyle Z$};
}
\\
D_\ell &:=
\tikzmath{
\draw[step=1.0, black, dotted] (0.5,0.5) grid (2.5,2.5);
\foreach \i in {1,2} {
\foreach \j in {1,2} {
        \draw[dotted] (\i,\j) -- ($ (\i,\j) + (-.3,.3)$);
}}
\draw[thick] (2,1) -- (1.7,1.3) node[above] {$\scriptstyle X$};
\draw[thick] (2,2) -- (1.7,2.3) node[above] {$\scriptstyle X$};
\draw[thick, blue] (1,2) -- node[above, xshift=-.2cm] {$\scriptstyle Z$} node[below, xshift=-.2cm] {$\scriptstyle \ell$} (2,2);
}
\quad
\text{or}
\quad
\tikzmath{
\draw[step=1.0, black, dotted] (0.5,0.5) grid (2.5,2.5);
\foreach \i in {1,2} {
\foreach \j in {1,2} {
        \draw[dotted] (\i,\j) -- ($ (\i,\j) + (-.3,.3)$);
}}
\draw[thick] (1,1) -- (.7,1.3) node[left] {$\scriptstyle X$};
\draw[thick] (2,1) -- (1.7,1.3) node[left] {$\scriptstyle X$};
\draw[thick, blue] (1,1) -- node[left, yshift=.2cm] {$\scriptstyle Z$} node[right, yshift=.2cm] {$\scriptstyle \ell$} (1,2);
}.
\end{align*}

Note that the operator $D_\ell$ in this example becomes the $D_\ell$ of the previous example, if we pass to the dual lattice and apply a change of basis exchanging $X$ and $Z$ operators to each link of the square grid.
Since the vertical spaces $\mathcal{H}_\ell$ have a basis $\{|1_\ell\rangle,|m_\ell\rangle\}$, and passage to the dual lattice exchanges $e$- and $m$-particles, we apply the same operator to the vertical links in both examples.
Consequently, we obtain toric code at $t=0$, $m$ is condensed at $t=1$, and the phase transition maps onto a 2D transverse-field Ising model, as in the previous example.

\subsection{\texorpdfstring{$\bbZ/n$}{Z/n} Toric code}

The next simplest example of a Levin-Wen model is the $\bbZ/n$ toric code.
The matrices $X$ and $Z$ are replaced by $n \times n$ Pauli matrices $X_n$ and $Z_n$.
For example, when $n=3$,
$$
X_3=
\begin{pmatrix}
0 & 0 & 1 \\
1 & 0 & 0 \\
0 & 1 & 0
\end{pmatrix}
\qquad
\text{ and }
\qquad
Z_3=
\begin{pmatrix}
1 & 0 & 0 \\
0 & e^{\frac{2\pi i}{3}} & 0 \\
0 & 0 & e^{\frac{4\pi i}{3}}
\end{pmatrix}
=
\begin{pmatrix}
\zeta^0 & 0 & 0 \\
0 & \zeta & 0 \\
0 & 0 & \zeta^2
\end{pmatrix}
$$
where $\zeta=e^{\frac{2\pi i}{3}}$ is a primitive cube root of unity.

The Hilbert space on each link is $H_\ell = \bbC^n$.
The spaces $H_\ell$ on the 2D lattice are direct sums $\bbC|0_\ell\rangle \oplus \bbC |1_\ell\rangle \oplus \cdots \oplus |(n-1)_\ell\rangle$
where $|0_\ell\rangle$ is `off' or `vacuum', while the remaining $n-1$ states are distinct `on' states in $\bbZ/n$.

$$
\tikzmath{
\draw[step=1.0] (0.5,0.5) grid (4.5,4.5);
\filldraw[white] (1,1) rectangle (4,4);
\foreach \i in {1,3} {
\foreach \j in {1,3} {
        \draw[late>] (\i,\j) -- ($ (\i,\j) + (-.3,.3)$);
}}
\foreach \i in {2,4} {
\foreach \j in {2,4} {
        \draw[late>] (\i,\j) -- ($ (\i,\j) + (-.3,.3)$);
}}
\foreach \i in {1,3} {
\foreach \j in {2,4} {
        \draw[late<] (\i,\j) -- ($ (\i,\j) + (-.3,.3)$);
}}
\foreach \i in {2,4} {
\foreach \j in {1,3} {
        \draw[late<] (\i,\j) -- ($ (\i,\j) + (-.3,.3)$);
}}
\foreach \i in {1,3} {
\foreach \j in {1,3} {
        \draw[mid>] (\i,\j) -- ($ (\i+1,\j) $); 
        \draw[mid>] (\i,\j) -- ($ (\i,\j+1) $); 
        \draw[mid>] ($ (3,\j) $) -- ($ (2,\j) $); 
        \draw[mid>] ($ (\i,3) $) -- ($ (\i,2) $); 
        \draw[mid>] ($ (\i+1,\j+1) $) -- ($ (\i+1,\j) $); 
        \draw[mid>] ($ (2,\j+1) $) -- ($ (3,\j+1) $); 
        \draw[mid>] ($ (\i+1,2) $) -- ($ (\i+1,3) $); 
        \draw[mid>] ($ (\i+1,\j+1) $) -- ($ (\i,\j+1) $); 
}}
}
$$

\subsubsection{Condensing \texorpdfstring{$e^k$}{ek}}
\label{sec:ZnToricCode-Condensing-e}

Now fix $k \mid n$. We will condense the algebra $A = \mathbbm{1} + e^k + \cdots e^{k(n/k - 1)}$.
The vertical spaces $\cH_\ell=\bbC^{n/k}$ are therefore a direct sum
$\bbC|\mathbbm{1}_\ell\rangle \oplus \bbC |e^k_\ell\rangle \oplus \cdots \oplus |e^{k(n/k-1)}_\ell\rangle$ where $|\mathbbm{1}_\ell\rangle$ represents the unit of $D(\bbZ/n)$ and the states $|e^{jk}_\ell\rangle$ represent powers of the $e^k$-particle from $D(\bbZ/n)$.

We now modify the four types of operators $A_v, B_p, C_v, D_\ell$ from \S\ref{sssec:Z2ToricCode-Condensing-e}, and the Hamiltonian $H_t$ has the same formula \eqref{eq:ToricCodeHamiltonian}.
We define the vertex term by
$$
A_v :=
\tikzmath{
\draw[step=1.0, black, dotted] (-0.5,-0.5) grid (2.5,2.5);
\foreach \i in {0,1,2} {
\foreach \j in {0,1,2} {
        \draw[dotted] (\i,\j) -- ($ (\i,\j) + (-.3,.3)$);
}}
\draw[thick] (.7,1.3) node[above, xshift=-.1cm] {$\scriptstyle Z_{n/k}$} -- (1,1) node[below, xshift=.2cm] {$\scriptstyle v$};
\draw[thick] (0,1) node[left]{$\scriptstyle Z_n$} -- (2,1) node[right]{$\scriptstyle Z_n$};
\draw[thick] (1,0) node[below]{$\scriptstyle Z_n$} -- (1,2) node[above]{$\scriptstyle Z_n$};
}
+
\tikzmath{
\draw[step=1.0, black, dotted] (-0.5,-0.5) grid (2.5,2.5);
\foreach \i in {0,1,2} {
\foreach \j in {0,1,2} {
        \draw[dotted] (\i,\j) -- ($ (\i,\j) + (-.3,.3)$);
}}
\draw[thick] (.7,1.3) node[above, xshift=-.1cm] {$\scriptstyle Z^{\dag}_{n/k}$} -- (1,1) node[below, xshift=.2cm] {$\scriptstyle v$};
\draw[thick] (0,1) node[left]{$\scriptstyle Z^{\dag}_n$} -- (2,1) node[right]{$\scriptstyle Z^{\dag}_n$};
\draw[thick] (1,0) node[below]{$\scriptstyle Z^{\dag}_n$} -- (1,2) node[above]{$\scriptstyle Z^{\dag}_n$};
}
$$
Observe that given a simple tensor $x$ in $\otimes_{\ell \sim v}\cH_{\ell}$ in the standard basis, $A_v$ preserves $x$ if and only if the legs of $x$ sum to $0\mod n$.
Thus, rather than ensure an even number of links $\ell$ are in the state $|1_\ell\rangle$ at a given vertex,
$-A_v$ now ensures that the links surrounding $v$ sum to $0\mod n$.

The plaquette term $-B_p$ averages over all $n$ possible states on each link rather than only two; we define $B_p$ explicitly by the following:
$$
B_p:=
\tikzmath{
\draw[step=1.0, black, dotted] (0.5,0.5) grid (2.5,2.5);
\foreach \i in {1,2} {
\foreach \j in {1,2} {
        \draw[dotted] (\i,\j) -- ($ (\i,\j) + (-.3,.3)$);
}}
\draw[thick] (1,1) -- node[left] {$\scriptstyle X_n$} (1,2) -- node[above] {$\scriptstyle X_n^\dag$} (2,2) -- node[right] {$\scriptstyle X_n$} (2,1) -- node[below] {$\scriptstyle X_n^\dag$} (1,1);
\node at (1.5,1.5) {$\scriptstyle p$};
}
+
\tikzmath{
\draw[step=1.0, black, dotted] (0.5,0.5) grid (2.5,2.5);
\foreach \i in {1,2} {
\foreach \j in {1,2} {
        \draw[dotted] (\i,\j) -- ($ (\i,\j) + (-.3,.3)$);
}}
\draw[thick] (1,1) -- node[left] {$\scriptstyle X^\dag_n$} (1,2) -- node[above] {$\scriptstyle X_n$} (2,2) -- node[right] {$\scriptstyle X^\dag_n$} (2,1) -- node[below] {$\scriptstyle X_n$} (1,1);
\node at (1.5,1.5) {$\scriptstyle p$};
}
$$

The unit term is given by
$$
C_v:=
\tikzmath{
\draw[step=1.0, black, dotted] (0.5,0.5) grid (1.5,1.5);
\foreach \i in {1} {
\foreach \j in {1} {
        \draw[dotted] (\i,\j) -- ($ (\i,\j) + (-.3,.3)$);
}}
\draw[thick] (.7,1.3) node[above] {$\scriptstyle Z_{n/k}$} -- (1,1) node[below, xshift=.2cm] {$\scriptstyle v$};
}
+
\tikzmath{
\draw[step=1.0, black, dotted] (0.5,0.5) grid (1.5,1.5);
\foreach \i in {1} {
\foreach \j in {1} {
        \draw[dotted] (\i,\j) -- ($ (\i,\j) + (-.3,.3)$);
}}
\draw[thick] (.7,1.3) node[above] {$\scriptstyle Z^\dag_{n/k}$} -- (1,1) node[below, xshift=.2cm] {$\scriptstyle v$};
}
$$
Here, $-C_v$ turns off the vertical edges by imposing an energy penalty on every state except $|\mathbbm{1}_\ell\rangle$.

Finally, the condensation term is given by
$$
D_\ell:=
\tikzmath[scale=1.5]{
\draw[step=1.0, black, dotted] (0.5,0.5) grid (2.5,1.5);
\foreach \i in {1,2} {
\foreach \j in {1} {
        \draw[dotted] (\i,\j) -- ($ (\i,\j) + (-.3,.3)$);
}}
\draw[thick] (.7,1.3) node[above] {$\scriptstyle X_{n/k}$} -- (1,1) -- node[above, xshift=-.2cm] {$\scriptstyle(X_n^k)^\dag$} node[below, xshift=-.2cm] {$\scriptstyle \ell$} (2,1) -- (1.7,1.3) node[above] {$\scriptstyle X_{n/k}$};
}
+
\tikzmath[scale=1.5]{
\draw[step=1.0, black, dotted] (0.5,0.5) grid (2.5,1.5);
\foreach \i in {1,2} {
\foreach \j in {1} {
        \draw[dotted] (\i,\j) -- ($ (\i,\j) + (-.3,.3)$);
}}
\draw[thick] (.7,1.3) node[above] {$\scriptstyle X_{n/k}^\dag$} -- (1,1) -- node[above, xshift=-.2cm] {$\scriptstyle X_n^k$} node[below, xshift=-.2cm] {$\scriptstyle \ell$} (2,1) -- (1.7,1.3) node[above] {$\scriptstyle X_{n/k}^\dag$};
}
\qquad
\text{or}
\qquad
\tikzmath[scale=1.5]{
\draw[step=1.0, black, dotted] (0.5,0.5) grid (1.5,2.5);
\foreach \i in {1} {
\foreach \j in {1,2} {
        \draw[dotted] (\i,\j) -- ($ (\i,\j) + (-.3,.3)$);
}}
\draw[thick] (.7,1.3) node[left] {$\scriptstyle X_{n/k}$} -- (1,1) -- node[left, yshift=.2cm] {$\scriptstyle (X_n^k)^\dag$} node[right, yshift=.2cm] {$\scriptstyle \ell$} (1,2) -- (.7,2.3) node[left] {$\scriptstyle X_{n/k}$};
}
+\quad
\tikzmath[scale=1.5]{
\draw[step=1.0, black, dotted] (0.5,0.5) grid (1.5,2.5);
\foreach \i in {1} {
\foreach \j in {1,2} {
        \draw[dotted] (\i,\j) -- ($ (\i,\j) + (-.3,.3)$);
}}
\draw[thick] (.7,1.3) node[left] {$\scriptstyle X_{n/k}^\dag$} -- (1,1) -- node[left, yshift=.2cm] {$\scriptstyle X_n^k$} node[right, yshift=.2cm] {$\scriptstyle \ell$} (1,2) -- (.7,2.3) node[left] {$\scriptstyle X_{n/k}^\dag$};
}.
$$
Here, $D_\ell$ condenses $e^k$ by implementing the self-adjoint operator $m^*m + mm^*$, where $m$ is the algebra multiplication $AA \to A$.

As in previous examples, when $t=0$, we obtain a usual model \cite[\S5.3]{MR3157248} for $\mathbb{Z}/n$ toric code.
When $t=1$, our system exhibits $D(\mathbb{Z}/k)$ topological order.
By locally applying the operator $X_{n/k}$ (or $X_{n/k}^\dag$, depending on parity) on a vertical edge, we may create an $e^k$ particle at any vertex.
Therefore, while the charge particles $e^j$ still exist, the value of $j$ is only well-defined modulo $k$.
Similarly, applying the operator $Z_n$ (or $Z_n^\dag$, depending on parity) along an edge in the square grid still lets us excite two adjacent plaquettes, so the flux particle $m$ still exists.
However, $m$ anyons are now confined, since applying $Z_n$ on a link $\ell$ only commutes with $D_\ell$ up to a phase of $\zeta^{\pm k}$; $m$ is no longer a topological excitation.
In fact, $(Z_n^j)_{\ell}$ commutes with $D_\ell$ if and only if $n|jk$, i.e. if $j$ is a multiple of $n/k$.
In other words, the flux particles which are not confined are just powers of $\widetilde{m}:=m^{n/k}$.
In terms of fusion rules, $e$ and $\widetilde{m}$ are both bosons, and $e^k$ and $\widetilde{m}^k$ are the vacuum.
Also, $S_{e\widetilde{m}}=\zeta^{1\cdot{n/k}}$ is a primitive $k$-th root of unity.
This shows that $e$ and $\widetilde{m}$, which generate the topologically mobile vertex and plaquette excitations at $t=1$, generate the braided tensor category $D(\mathbb{Z}/k)$.

\subsubsection{Condensing \texorpdfstring{$m^k$}{mk}}

In \S\ref{sssec:Z2ToricCode-Condensing-m}, we made some basic changes in our model for condensing $e$ in \S\ref{sssec:Z2ToricCode-Condensing-e} for $\bbZ/2$-toric code.
Using similar modifications to \S\ref{sec:ZnToricCode-Condensing-e} above for condensing $e^k$, one gets a model for condensing $m^k$ for any $k\mid n$.
We leave the details to the interested reader.


\subsection{Doubled semion - condensing \texorpdfstring{$b$}{b}}
Our doubled semion model uses a modified hexagonal lattice.
Similar to the terminology of \cite{PhysRevB.71.045110}, given a hexagonal face/plaquette of our lattice, we call the six links bounding the plaquette \emph{edges} and the six links emanating outward from the vertices of the plaquette \emph{legs}:
$$
\tikzmath{
\levinHex{0}{0}{.5}{black}
}
$$
We now add vertical links on the face of the plaquette, close to the corner of the rightmost vertex of the plaquette.
We draw this extra vertical link approximately parallel to the north-east edge of the plaquette.
$$
\tikzmath{
\levinHexGrid[black]{0}{0}{2}{3}{.5}{black}
}
$$
As in $\bbZ/2$ toric code, we define the space $H_\ell=\bbC^2=\bbC |0_\ell\rangle + \bbC |1_\ell\rangle$ on each link in the plane.

The ``doubled semion'' modular tensor category is the Drinfel'd center of the fusion category $\Vec[\mathbb{Z}/2,-1]$, the category of $\mathbb{Z}/2$-graded vectorspaces where $\alpha_{g,g,g}=-1$, where $g\in\mathbb{Z}/2$ is the nontrivial group element.
Excitations in the usual doubled semion model come in three forms: the semions $\sigma,\overline{\sigma}$ and the boson $b$.
The boson $b$ is an excitation of the plaquette operator, which exists in a state $\psi$ when $B_p|\psi\rangle=-|\psi\rangle$, but $\psi$ is in the ground state of $A_v$ for each vertex $v$ of $p$.
The semions $\sigma$ and $\overline{\sigma}$ are both excitations of vertex terms, and correspond to the simple objects in $Z(\Vec[\mathbb{Z}/2,-1])$ with underlying object $g$ and half-braiding with $g$ given by $\pm i$.


In our model, the requisite operators $A_v, B_p, C_{p'}$ and $D_\ell$ are given by
\footnote{In defining $B_p$, we extend the definition originally appearing in \cite[Section VI A]{PhysRevB.71.045110}, rather than implementing the general definition we give in \S~\ref{ssec:lwModel}.
This emphasizes the fact that the strategy for modifying lattice models to accomplish anyon condensation described in Section~\ref{sec:condensationModels} is based on features of the topological order, and therefore robust to small changes in the details of the original model.}
\begin{align*}
A_v &:=\quad
\tikzmath{
\levinHex[black, dotted]{0}{0}{.5}{black, dotted}
\levinHex[black, dotted]{.75}{.433}{.5}{black, dotted}
\levinHex[black, dotted]{.75}{-.433}{.5}{black, dotted}
\draw (0:.5cm) node[below, xshift=.1cm] {$\scriptstyle v$} -- (60:.5cm) node[above] {$\scriptstyle Z$};
\draw (0:.5cm) -- (-60:.5cm) node[below] {$\scriptstyle Z$};
\draw (0:.5cm) -- (0:1cm) node[right] {$\scriptstyle Z$} ;
}
\displaybreak[1]\\
B_p &:=\quad
\tikzmath{
\nhex[black]{0}{0}{.5}{black}
\levinHex[black, dotted]{-.75}{.433}{.5}{black, dotted}
\levinHex[black, dotted]{-.75}{-.433}{.5}{black, dotted}
\levinHex[black, dotted]{.75}{.433}{.5}{black, dotted}
\levinHex[black, dotted]{.75}{-.433}{.5}{black, dotted}
\levinHex[black, dotted]{0}{.866}{.5}{black, dotted}
\levinHex[black, dotted]{0}{-.866}{.5}{black, dotted}
\foreach \i in {0,60,120,180,240,300} {
\draw (\i:.5cm) -- (\i:1cm);
\node at (\i:1.2cm) {$\scriptstyle W$};
}
\foreach \i in {0,60,180,240,300} {
\node at ($ (\i+30:.633cm) $) {$\scriptstyle X$};
}
\node at (150:.3) {$\scriptstyle X$};
\node at (.075,-.2) {$\scriptstyle Z$};
}\cdot \mathbf{P} \quad \text{ where } \quad W := \begin{bmatrix}1 &0 \\ 0 & i \end{bmatrix}
\displaybreak[1]\\
C_p &:=\quad
\tikzmath{
\levinHex[black]{0}{0}{.5}{black, dotted}
\node at (.075,-.2) {$\scriptstyle Z$};
}
\displaybreak[1]\\
D_\ell &:=\quad
\tikzmath{
\levinHex[black]{0}{0}{.5}{black, dotted}
\levinHex[black]{.75}{.433}{.5}{black, dotted}
\draw (0:.5cm) -- node[left] {$\scriptstyle \ell$} node[right] {$\scriptstyle Z$} (60:.5cm);
\node at (.075,-.2) {$\scriptstyle X$};
\node at (1,.5) {$\scriptstyle X$};
}
\quad,\quad
\tikzmath{
\levinHex[black]{0}{0}{.5}{black, dotted}
\levinHex[black]{0}{.866}{.5}{black, dotted}
\draw (60:.5cm) -- node[below] {$\scriptstyle \ell$} node[above] {$\scriptstyle Z$} (120:.5cm);
\node at (.075,-.2) {$\scriptstyle X$};
\node at (.075,1) {$\scriptstyle X$};
}
\quad,\text{ or}\quad
\tikzmath{
\levinHex[black]{0}{0}{.5}{black, dotted}
\levinHex[black]{.75}{-.433}{.5}{black, dotted}
\draw (0:.5cm) -- node[left] {$\scriptstyle \ell$} node[right] {$\scriptstyle Z$} (-60:.5cm);
\node at (.2,.2) {$\scriptstyle X$};
\node at (.9,-.7) {$\scriptstyle X$};
}
\end{align*}

$A_v$ imposes a zero-sum around a given vertex, while $B_p$ averages over a state and its opposite up to a phase imposed by the number of outgoing legs which are in the state $|1\rangle$.
The term $B_p$ inserts a loop around the plaquette $p$, as in toric code, while applying a phase to account for the nontrivial associator.
Because inserting a loop only makes sense in the context of diagrammatic calculus, we precompose with $\mathbf{P}$, the projection into the ground state of $\prod_v A_v$.
Thus, $B_p$ only inserts a loop when no vertex of $p$ contains an excitation; otherwise, $B_p$ is a scalar multiple of the identity.
Since $-\mathbf{P}$ (and hence $-B_p$) favors a lack of vertex excitations, and $A_v$ and $B_p$ terms still commute, this does not affect the physics, as asserted in \cite{PhysRevB.71.045110}.

As defined above, an excitation of type $b$ occurs on a plaquette $p$ when $B_p$ has the eigenvalue $-1$.
Thus, when $t=1$, we may locally apply the operator $X$ to a vertical edge, locally creating or destroying a $b$-excitation on any plaquette, just as when we condensed $m$ in $\mathbb{Z}/2$ toric code in \S\ref{sssec:Z2ToricCode-Condensing-m}.
Similarly, the action of $D_\ell$ on a state containing a $\sigma$ excitation will produce the superposition $\sigma\oplus\overline{\sigma}$.
String operators attempting to move a $\sigma$ particle more than a single link will anticommute with some $D_\ell$ term, so the semions are now confined, leaving no nontrivial topological excitations.

\subsection{Doubled Fibonacci}
The unitary fusion category $\Fib$ has simple objects $1$ and $\tau$, and the fusion rule $\tau\otimes\tau\cong1\oplus\tau$.
This makes $\Fib$ the smallest fusion category (in several senses) which is not pointed, i.e. where the simple objects do not form a group.
$\Fib$ can be constructed in several ways, including as the semisimple quotient of the Temperly-Lieb category at index $\phi$ \cite{MR1280463}.
In \cite{MR2889539}, the associator and braidings on categories with these fusion rules are determined algebraically; see also \cite{10.1143/PTPS.176.384}, although they pick the opposite braiding.
The fusion category $\Fib$ consists of the following data.
\begin{itemize}
 \item Simple objects (edge labels): $\{1,\tau\}$.
 \item Quantum dimension $d_1=1$, $d_\tau=\phi=\frac{1+\sqrt{5}}{2}$.
 \item Fusion rules $1\otimes 1\cong 1$, $1\otimes\tau\cong\tau\otimes 1\cong\tau$, $\tau\otimes\tau\cong1\oplus\tau$.
 \item $F$-symbols:
  \begin{align*}
   F_{\tau,\tau}^{\tau,\tau}[\tau,\tau] &= -\phi^{-1}\\
   F_{\tau,\tau}^{\tau,\tau}[1,1] &= \phi^{-1}\\
   F_{\tau,\tau}^{\tau,\tau}[\tau,1] &= \phi^{-1/2}\\
   F_{\tau,\tau}^{\tau,\tau}[1,\tau] &= \phi^{-1/2}
  \end{align*}
  with all other $F$-symbols being $1$.
  Here, $\phi=\frac{1+\sqrt{5}}{2}$ is the golden ratio, and $-\phi^{-1}=\frac{1-\sqrt{5}}{2}$ is the other root of $x^2-x-1$.
\end{itemize}

  The fusion category $\Fib$ can be made into a UMTC with two different braidings, which are reverse to one another; we arbitrarily denote one by $\Fib$, so that the other is $\overline{\Fib}$, where $\overline{\cdot}$ refers to the fact that the braiding is reversed.
  The UMTC $\Fib$ has the following additional data.
  \begin{itemize}
   \item Braiding and $R$-symbols:
    The braiding between $\tau$ and $\tau$ is given by
    \[\beta_{\tau,\tau}=q^3~\tikzmath[scale=.33]{
     \draw (-.5,1)--(0,.5);
     \draw (.5,1)--(0,.5);
     \draw (-.5,-1)--(0,-.5);
     \draw (.5,-1)--(0,-.5);
     \draw (0,.5)--(0,-.5);
     }+q^6\phi^{-1}~\tikzmath[scale=.33]{
     \draw (-.5,1)--(0,.5);
     \draw (.5,1)--(0,.5);
     \draw (-.5,-1)--(0,-.5);
     \draw (.5,-1)--(0,-.5);}\text{,}
    \]
    where $q=e^{2\pi i/10}$ is a primitive $10$th-root of unity such that $\phi=q+q^{-1}$.
    Equivalently, we list the $R$-symbols
    $
    \tikzmath{
    \draw (0,0) -- (0,-.4) node[below]{$\scriptstyle c$};
    \draw (-.2,.4) node[above]{$\scriptstyle a$} .. controls +(-90:.2) and +(45:.4) .. (0,0);
    \draw[knot] (.2,.4) node[above]{$\scriptstyle b$} .. controls +(-90:.2) and +(135:.4) .. (0,0);
    \filldraw (0,0) circle (.05cm);
    }
    =
    R_{c}^{ab}
    \tikzmath{
    \draw (0,0) -- (0,-.4) node[below]{$\scriptstyle c$};
    \filldraw (0,0) circle (.05cm);
    \draw (-.2,.4) node[above]{$\scriptstyle a$} -- (0,0) -- (.2,.4) node[above]{$\scriptstyle b$};
    }
    $.
    For $\Fib$, $R_{\tau}^{\tau,\tau}=q^3$, $R_1^{\tau,\tau}=q^6$, and all other $R$-symbols are $1$.
   \item $S$-matrix: $S=\frac{1}{\sqrt{1+\phi^2}}\left(\begin{array}{cc}1&\phi\\\phi&-1\end{array}\right)$
   \item $T$-matrix: $T=\left(\begin{array}{cc}1&0\\0&q^4\end{array}\right)$
  \end{itemize}
  To obtain the UMTC $\overline{\Fib}$, we replace $q$ with $q^{-1}$, to obtain the following.
  \begin{itemize}
   \item $R$-symbols: $R_{\tau}^{\tau,\tau}=q^7$, $R_1^{\tau,\tau}=q^4$, all other $R$ symbols are $1$.
   \item $S$-matrix: identical to $\Fib$.
   \item $T$-matrix: $T=\left(\begin{array}{cc}1&0\\0&q^6\end{array}\right)$.
  \end{itemize}

  By \cite[Rem.~4.3]{MR1990929}, we have a braided tensor equivalence $Z(\Fib)\cong\Fib\boxtimes\overline{\Fib}$, allowing us to derive the data of $Z(\Fib)$ from that of $\Fib$ and $\overline{\Fib}$.
  Explicitly, the UMTC $Z(\Fib)$ has the following data.
  \begin{itemize}
   \item Simple objects (anyon types): $\{1\boxtimes 1,\tau\boxtimes 1,1\boxtimes\tau,\tau\boxtimes\tau\}$.
    For brevity, we rename the anyons as follows: $1:=1\boxtimes 1$, $\tau:=\tau\boxtimes1$, $\overline{\tau}:=1\boxtimes\tau$, $b:=\tau\boxtimes\tau$.
    Here, $\overline{\tau}$ does not denote the dual of $\tau$ but rather the fact that $\overline{\tau}\in\overline{\Fib}\subseteq Z(\Fib)$ carries the inverse half-braiding.
    We choose the name $b$ for $\tau\boxtimes\overline{\tau}$ because, as we will see, $b$ is a non-Abelian boson.
   \item Fusion rules:
    \[\begin{array}{c|c|c|c|c}
     \otimes & 1 & \tau & \overline{\tau} & b
     \\\hline
     1 & 1 & \tau & \overline{\tau} & b
     \\\hline
     \tau & \tau & 1\oplus\tau & b & \overline{\tau}\oplus b
     \\\hline
     \overline{\tau} & \overline{\tau} & b & 1\oplus\overline{\tau} & \tau\oplus b
     \\\hline
     b & b & \overline{\tau}\oplus b & \tau\oplus b & 1\oplus\tau\oplus\overline{\tau}\oplus b
    \end{array}\]
   \item The $F$-symbols, $R$-symbols, $S$-matrix and $T$-matrix can be obtained by tensoring the matrices for $\Fib$ and $\overline{\Fib}$.
    In particular, we can see that $b$ is a boson.
    First, $b$ has trivial twist $\theta_b=1$, because
    \[\theta_b=\theta_\tau\theta_{\overline{\tau}}\beta_{\tau,\overline{\tau}}\beta_{\overline{\tau},\tau}=(q^4\cdot q^6)\id_{\tau\overline{\tau}}=\id_b\text{,}\]
    since the $\tau$ and $\overline{\tau}$ particles are transparent to one another.
    We can also compute the $R$-symbols $R_b^{b,b}=R_{\tau}^{\tau,\tau}R_{\overline{\tau}}^{\overline{\tau},\overline{\tau}}=1$ and $R_1^{b,b}=R_1^{\tau,\tau}R_1^{\overline{\tau},\overline{\tau}}=1$.
    Note that $b$ is not bosonic in all fusion channels: $R_\tau^{b,b}=R_\tau^{\tau,\tau}R_1^{\overline{\tau},\overline{\tau}}=q^3$, and $R_{\overline{\tau}}^{b,b}=R_1^{\tau,\tau}R_{\overline{\tau}}^{\overline{\tau},\overline{\tau}}=q^7$.
  \end{itemize}

The double $Z(\Fib)$ contains a single nontrivial connected \'etale algebra $A=1\oplus b$, which is the canonical Lagrangian algebra \cite[\S3.2]{MR3022755} $A=\textbf{1}\oplus\tau\boxtimes\overline{\tau}$ in $\Fib\boxtimes\overline{\Fib}$.
Since $A$ is Lagrangian, $Z(\Fib)_A^{\loc}\cong\Vec$, and the condensed phase has trivial topological order.
However, this $A$ is a minimal example where the underlying object of $A$ is not just the direct sum of invertible objects.

Since all fusion spaces in $\Fib$ are $1$-dimensional, i.e. $\Fib$ is multiplicity free, we can implement the condensation of $A$ on the hexagonal lattice, with spins associated to each link.
(One could instead associate a spin to each vertex, but this would lead to higher dimensional local Hilbert spaces, which would make the Hamiltonian more complicated to write down).
Each ordinary link receives a $2$-dimensional Hilbert space, with basis vectors $|1\rangle$ and $|\tau\rangle$ labeled by the simple objects of $\Fib$, while the vertical red link receives a $3$-dimensional Hilbert space, with basis vectors $|1\rangle$, $|\iota\rangle$, and $|\theta\rangle$, where $\iota$ and $\theta$ are basis vectors for the $1$-dimensional spaces $\Fib(1\to F(\tau\boxtimes\overline{\tau}))$ and $\Fib(\tau\to F(\tau\boxtimes\overline{\tau}))$ respectively.

The original Levin-Wen Hamiltonian (without the added vertical links) associated to $\Fib$ is described in \cite[\S~\RN{6}.B]{PhysRevB.71.045110}, and our Hamiltonian will be a modification.
The $A_v$ term projects onto the subspace where $0$, $2$, or $3$ of the links which meet at $v$ are labelled by $|\tau\rangle$ or $|\theta\rangle$.
For a vertical link $\ell$, $C_\ell$ projects onto the subspace spanned by $|1\rangle$.

The $B_p$ and $D_{p,q}$ terms are more complicated.
As in \cite{PhysRevB.71.045110}, we will not write out the $B_p$ term explicitly, because, as is generally the case for objects in a fusion category with dimension greater than $1$, $B_p^\tau$ does not factor as a tensor product of operators local to a smaller region, meaning that the final description is not more concise or enlightening than \eqref{eq:Bpc}.
For the same reasons, we will not write out the entire $D_{p,q}$ term.
However, we have given definitions of $B_p$ and $D_{p,q}$ as linear combinations of tensor products of operators local at each vertex, and we will explicitly compute those local operators which involve the half-braiding, multiplication, and separator of $A$.

First, we begin with the $B_p^\tau$ term, which involves the half-braiding of $A$ when the inserted $\tau$-loop crosses under the vertical link of $p$.
When resolving the $B_p^\tau$ term \eqref{eq:newBps} as in \eqref{eq:bpExpansion}, we end up with local operators of the following form at the vertex $v$ incident to the vertical link.
\[\tikzmath{
 \draw (60:-.8) -- (60:.8);
 \draw[thick,red] (0:0) -- (150:.8);
 \draw[knot] (60:-.4) arc (240:60:.4cm);
 \node at (-.53,-.05) {$\scriptstyle\tau$};
 \node at (.2,0) {$\scriptstyle v$};
 \draw[fill=blue] (60:-.4) circle (.05cm);
 \draw[fill=DarkGreen] (60:.4) circle (.05cm);
}\]
The action of this operator on the local Hilbert space $\cH_v=\cC(x\to Ay)$ is given by
\[
 f\mapsto
 \tikzmath{
  \draw (0,0) -- (.4,0);
  \draw (2.5,.5) -- (.9,.5) arc (90:270:.5cm);
  \node at (.53,.55) {$\scriptstyle\tau$};
  \draw[fill=blue] (.4,0) circle (.07cm);
  \draw[red,thick] (1.7,-.1) -- (2.5,-.1);
  \draw (2.9,1,-.1) -- (2.5,-.1);
  \draw (1.1,-.5) -- (2.5,-.5);
  \roundNbox{fill=white}{(1.3,-.3)}{.4}{0}{0}{$f$}
  \draw[red,thick] (3.3,.1) .. controls ++(0:.2cm) and ++(0:-.2cm) .. (4.1,.5);
  \node at (3.5,.7) {$\scriptstyle\tau$};
  \draw[knot] (3.3,.5) .. controls ++(0:.2cm) and ++(0:-.2cm) .. (4.1,.1);
  \draw (3.3,-.5) -- (4.1,-.5);
  \roundNbox{fill=white}{(2.9,0)}{1}{-.6}{-.6}{$\alpha$}
  \draw (4.9,0) arc (90:-90:.3cm); 
  \node at (5.3,0) {$\scriptstyle\tau$};
  \draw (5.2,-.3) -- (5.5,-.3);
  \draw[fill=DarkGreen] (5.2,-.3) circle (.07cm);
  \draw[red,thick] (4.9,.5) -- (5.5,.5);
  \roundNbox{fill=white}{(4.5,0)}{1}{-.6}{-.6}{$\alpha^{-1}$}
 }
\]

If the vertical link is labelled by $|1\rangle$, i.e. on states in the image of $C_\ell$, these operators trivialize to $\delta_{\mBasisCircle{.07}{blue}=\mBasisCircle{.07}{DarkGreen}}$.
The subspace where the on the vertical link is in $\spann\{|\iota\rangle,|\theta\rangle\}$, corresponding to the summand $b\subseteq A$, is also preserved.
For each choice of $\mBasisCircle{.07}{blue}$ and $\mBasisCircle{.07}{DarkGreen}$, we thus get a matrix $M[\mBasisCircle{.07}{DarkGreen},\mBasisCircle{.07}{blue}]\in\cC(A,A)\cong M_2(\mathbb{C})$.

We can compute the half-braiding of $b$ under $\tau$ using the hexagon equation.
In the same basis as the associator/$F$-symbols, the half-braiding $e_\tau:\tau b\to b\tau$ is given by
\[e_\tau^\tau=\left(\begin{array}{cc}-\phi^{-2}&q^{-3/2}\sqrt{1-\phi^{-4}}\\q^{3/2}\sqrt{1-\phi^{-4}}&\phi^{-2}\end{array}\right),\qquad\qquad e_\tau^1=\left(1\right)\] 
where $q^{1/2}=e^{2\pi i/20}$.

There are $3$ possible choices for each of the vertices $\mBasisCircle{.07}{blue}$ and $\mBasisCircle{.07}{DarkGreen}$:
$\gamma_\tau^{\tau,\tau}=\tikzmath{\draw(0:0)--(30:.4);\draw(0:0)--(150:.4);\draw(0:0)--(-90:.4);}:\tau\to\tau\tau$,
$\gamma_\tau^{\tau,1}=\tikzmath{\draw[dotted](0:0)--(30:.4);\draw(0:0)--(150:.4);\draw(0:0)--(-90:.4);}:\tau\to\tau1$, and
$\gamma_1^{\tau,\tau}=\phi^{-1/2}\tikzmath{\draw(0:0)--(30:.4);\draw(0:0)--(150:.4);\draw[dotted](0:0)--(-90:.4);}:1\to\tau\tau$.
Thus, there are $9$ matrices to compute. We list each in the basis $\{|\theta\rangle,|\iota\rangle\}$.
\begin{align*}
 M\left[\tvttt,\tvttt\right] &= \left(\begin{array}{cc}2\phi^{-3}&q^{7/2}\sqrt{7\phi^{-1}-4}\\q^{-7/2}\sqrt{7\phi^{-1}-4}&\phi^{-2}\end{array}\right)\\
 M\left[\tvttt,\tvitt\right] &= \left(\begin{array}{cc}\sqrt{5\phi^5}&q^{-3/2}\sqrt{\phi^{-2}+\phi^{-4}}\\0&0\end{array}\right)
 \displaybreak[1]\\
 M\left[\tvttt,\tvtti\right] &= \left(\begin{array}{cc}-\phi^{-2}&0\\q^{3/2}\sqrt{1-\phi^{-4}}&0\end{array}\right)
 \displaybreak[1]\\
 M\left[\tvtti,\tvttt\right] &= \left(\begin{array}{cc}\sqrt{5\phi^{-5}}&0\\q^{3/2}\sqrt{\phi^{-2}+\phi^{-4}}&0\end{array}\right)
 \displaybreak[1]\\
 M\left[\tvitt,\tvttt\right] &= \left(\begin{array}{cc}-\phi^{-2}&0\\0&0\end{array}\right)
 \displaybreak[1]\\
 M\left[\tvtti,\tvitt\right] &= \left(\begin{array}{cc}\phi^{-4}&0\\0&0\end{array}\right)
 \displaybreak[1]\\
 M\left[\tvtti,\tvtti\right] &= \left(\begin{array}{cc}0&q^{-3/2}\sqrt{1-\phi^{-4}}\\0&\phi^{-2}\end{array}\right)
 \displaybreak[1]\\
 M\left[\tvitt,\tvitt\right] &= \left(\begin{array}{cc}0&0\\q^{3/2}\sqrt{1-\phi^{-4}}&\phi^{-2}\end{array}\right)
 \displaybreak[1]\\
 M\left[\tvitt,\tvtti\right] &= \left(\begin{array}{cc}1&0\\0&0\end{array}\right)
\end{align*}
Each $0$ in these matrices appears because of the requirement that $B_p^\tau=(B_p^\tau)^\dag=0$ on states which excite $A_v$.

The other difficulty is in resolving the $D_{p,q}$ term.
Most of the term is analogous to string operators, and can be written as local operators in a similar way, as shown in \eqref{eq:DTerm}.
The novel ingredient is the condensation morphism
\[\tikzmath{\draw[red,thick](0,0) arc (180:0:.3cm);\draw[red,thick](.3,.3)--(.3,.6);\draw[red,thick](0,.9) arc (-180:0:.3cm);}=m^\dag m:AA\to AA\text{,}\]
viewed as a morphism in $\cX$.
In other words, we want the corresponding matrices $\mu^\tau\in\End(\cX(\tau\to AA))\cong M_5(\mathbb{C})$ and $\mu^1\in\End(\cX(1\to AA))\cong M_5(\mathbb{C})$.

To give a basis for $\cX(\tau\to AA)$, we consider that any morphism $\tau\to AA$ factors as $(f\otimes g)\circ\gamma$, where $\gamma:\tau\to xy$ is a fusion channel, $xy\in\Irr(\cX)$ are edge labels, and $f:x\to A$ and $g:y\to A$.
Since there is at most one fusion channel $\tau\to xy$, it suffices to pick $f$ and $g$ in the basis $\{|1\rangle,|\iota\rangle,|\theta\rangle\}$ of $\cX(1\to A)\oplus\cX(\tau\to A)$ previously chosen.
This gives the basis
\[\{|\theta\theta\rangle,|\theta\iota\rangle,|\iota\theta\rangle,|\theta1\rangle,|1\theta\rangle\}\subseteq\cX(\tau\to AA)\text{.}\]
In the same way, we obtain a basis
\[\{|\theta\theta\rangle,|\iota\iota\rangle,|\iota1\rangle,|1\iota\rangle,|11\rangle\}\subseteq\cX(1\to AA)\text{.}\]
In the above bases, the matrices are as follows.
\begin{align*}
 \mu^\tau &= \frac{1}{3}\left(
 \begin{array}{*5{>{\scriptstyle}c}}
  \phi^{-4}+\phi^{-2} & q^{-3/2}\sqrt{\phi^{-5}+\phi^{-7}} & q^{-3/2}\sqrt{\phi^{-5}+\phi^{-7}} & q^{7/2}\sqrt{\phi^{-2}+\phi^{-4}} & q^{7/2}\sqrt{\phi^{-2}+\phi^{-4}}\\
  q^{3/2}\sqrt{\phi^{-5}+\phi^{-7}} & \phi^{-3} & \phi^{-3} & -\phi^{-3/2} & -\phi^{-3/2}\\
  q^{3/2}\sqrt{\phi^{-5}+\phi^{-7}} & \phi^{-3} & \phi^{-3} & -\phi^{-3/2} & -\phi^{-3/2}\\
  q^{-7/2}\sqrt{\phi^{-2}+\phi^{-4}} & -\phi^{-3/2} & -\phi^{-3/2} & 1 & 1\\
  q^{-7/2}\sqrt{\phi^{-2}+\phi^{-4}} & -\phi^{-3/2} & -\phi^{-3/2} & 1 & 1
 \end{array}\right)\\
 \mu^1 &= \frac{1}{6}\left(
 \begin{array}{ccccc}
  \phi^2 & q^2\phi^{-3/2} & q^22\phi^{-1} & q^22\phi^{-1} & q^23\phi^{-1/2}\\
  q^{-2}\phi^{-3/2} & 3-\phi^{-1} & -2\phi^{-1/2} & -2\phi^{-1/2} & 3\phi^{-1}\\
  q^{-2}2\phi^{-1} & -2\phi^{-1/2} & 2 & 2 & 0\\
  q^{-2}2\phi^{-1} & -2\phi^{-1/2} & 2 & 2 & 0\\
  q^{-2}3\phi^{-1/2} & 3\phi^{-1} & 0 & 0 & 3
 \end{array}\right)
\end{align*}

We now consider the fates of anyons in the condensed phase, i.e.~at $t=1$.
We begin by computing the category $Z(\Fib)_A$, making use of the free-forgetful adjunction.
Because $A\cong1\oplus b$, we have $\tau A\cong \tau\oplus\overline{\tau}\oplus b\cong\overline{\tau}A$.
Therefore, the free modules $\tau A_A$ and $\overline{\tau}A_A$ are simple and isomorphic.
Finally, $bA_A\cong b\oplus1\oplus\tau\oplus\overline{\tau}\oplus b$, so $bA_A\cong A_A\oplus\tau A_A$.
Thus, $\Irr(Z(\Fib)_A)$ contains two simple objects: the vacuum $A_A$, and a single species of excitation $\tau A_A$.

The module $\tau A_A$ is not local, and hence does not correspond to a topological excitation at $t=1$.
By the $R$-symbols $R_b^{b,\tau}=R_b^{\tau,b}=q^3$ and $R_{\overline{\tau}}^{b,\tau}=R_{\overline{\tau}}^{\tau,b}=q^6$, we can see that the double-braiding between $b$ and $\tau$ is given by
 \[\tikzmath[scale=.375]{
  \draw (.5,0) node[below]{$\scriptstyle b$} -- (-.5,1);
  \draw[knot] (-.5,0) node[below]{$\scriptstyle\tau$} -- (.5,1) -- (-.5,2);
  \draw[knot] (-.5,1) -- (.5,2);
 }
 =q^6~
 \tikzmath[scale=.4]{
  \draw (-.5,1) --(0,.5);
  \draw (.5,1)--(0,.5);
  \draw (-.5,-1) node[below]{$\scriptstyle\tau$} --(0,-.5);
  \draw (.5,-1) node[below]{$\scriptstyle b$} --(0,-.5);
  \draw (0,.5)-- node[right]{$\scriptstyle b$} (0,-.5);
 }
 +q^2~
 \tikzmath[scale=.4]{
  \draw (-.5,1)--(0,.5);
  \draw (.5,1)--(0,.5);
  \draw (-.5,-1) node[below]{$\scriptstyle\tau$} --(0,-.5);
  \draw (.5,-1) node[below]{$\scriptstyle b$} --(0,-.5);
  \draw (0,.5)-- node[right]{$\scriptstyle\overline{\tau}$} (0,-.5);
 }
 \]
 where the trivalent vertices on the right-hand side are chosen so that
 \[\id_{\tau b}=
 \tikzmath[scale=.4]{
  \draw (-.5,1) --(0,.5);
  \draw (.5,1)--(0,.5);
  \draw (-.5,-1) node[below]{$\scriptstyle\tau$} --(0,-.5);
  \draw (.5,-1) node[below]{$\scriptstyle b$} --(0,-.5);
  \draw (0,.5)-- node[right]{$\scriptstyle b$} (0,-.5);
 }
 +~
 \tikzmath[scale=.4]{
  \draw (-.5,1)--(0,.5);
  \draw (.5,1)--(0,.5);
  \draw (-.5,-1) node[below]{$\scriptstyle\tau$} --(0,-.5);
  \draw (.5,-1) node[below]{$\scriptstyle b$} --(0,-.5);
  \draw (0,.5)-- node[right]{$\scriptstyle\overline{\tau}$} (0,-.5);
 }
 \]
 Thus, the defect operators $\sigma^\tau_r$ will not commute with $D_{p,q}$ terms which cross $r$, and hence create an excitation which is not topologically mobile.

\bibliographystyle{amsalpha}
{\footnotesize{
\bibliography{../../bibliography/bibliography}
}}
\end{document}